\newtheorem{lemma}{Lemma}[section]
\newtheorem{prop}[lemma]{Proposition}
\newtheorem{theorem}[lemma]{Theorem}
\newtheorem{cor}[lemma]{Corollary}
\newtheorem{prob}[lemma]{Problem}
\newtheorem{rem}[lemma]{Remark}
\newcommand{\re}{\begin{rem}\rm}
\newcommand{\mar}{\end{rem}}
\newtheorem{exam}[lemma]{Example}
\newtheorem{defi}[lemma]{Definition}
\newcommand{\qd}{\end{proof}\vspace{0.5ex}}
\newcommand{\prf}{\begin{proof}[\bf Proof:]}
\newcommand{\pl}{\hspace{.1cm}}
\newcommand{\kl}{\pl \le \pl}
\newcommand{\gl}{\pl \ge \pl}
\newcommand{\lel}{\pl = \pl}
\newcommand{\A}{{\mathcal A}}
\newcommand{\E}{{\mathcal E}}
\newcommand{\M}{{\mathcal M}}
\newcommand{\N}{{\mathcal N}}
\newcommand{\Q}{{\mathcal Q}}
\newcommand{\cM}{{\mathcal M}}
\newcommand{\cN}{{\mathcal N}}
\newcommand{\cQ}{{\mathcal Q}}
\newcommand{\bC}{{\mathbb C}}
\newcommand{\bE}{{\mathbb E}}
\newcommand{\bM}{{\mathbb M}}
\newcommand{\bN}{{\mathbb N}}
\newcommand{\bR}{{\mathbb R}}
\newcommand{\bT}{\mathbb{T}}
\newcommand{\bZ}{\mathbb{Z}}
\newcommand{\al}{\alpha}
\newcommand{\si}{\sigma}
\newcommand{\la}{\lambda}
\newcommand{\eps}{\varepsilon}
\newcommand{\norm}[2]{\parallel \! #1 \! \parallel_{#2}}
\newcommand{\ran}{\rangle}
\newcommand{\lan}{\langle}
\newcommand{\bra}[1]{\langle{#1}|}
\newcommand{\ket}[1]{|{#1}\rangle}
\newcommand{\ketbra}[1]{|{#1}\rangle\langle{#1}|}
\newcommand{\ten}{\otimes}
\DeclareMathOperator{\LSI}{LSI}
\DeclareMathOperator{\id}{\operatorname{id}}
\newcommand{\tr}{{\text{tr}}}
\DeclareMathOperator{\dom}{\operatorname{dom}}
\renewcommand{\div}{\text{div}}
\newcommand{\zz}{\mathbb{Z}}
\newcommand{\Mz}{\mathbb{M}}
\newcommand{\om}{\omega}
\newcommand{\MLSI}{\text{MLSI}}
\newcommand{\Log}{\text{Log}}
\newcommand{\ssubset}{\subset}
\newcommand{\nen}{n\in \bN}
\newcommand{\Prob}{\text{Prob}}
\newcommand{\Lip}{\text{Lip}}
\numberwithin{equation}{section}
\renewcommand{\log}{\ln}
\begin{document}
\title{Relative entropy decay and complete positivity mixing time}
\author{Li Gao}
\address{Department of Mathematics\\ University of Houston, Houston, TX 77204, USA
} \email[Li Gao]{lgao12@uh.edu}
\author{Marius Junge}
\address{Department of Mathematics\\
University of Illinois, Urbana, IL 61801, USA} \email[Marius Junge]{mjunge@illinois.edu}
\author{Nicholas LaRacuente}
\address{Department of Computer Science \\ Indiana University Bloomington, Bloomington, IN 47408, USA} \email[Nicholas LaRacuente]{nick.laracuente@gmail.com}
\author{Haojian Li}
\address{Zentrum Mathematik\\
Technische Universit\"at M\"unchen, Garching, 85748, Germany} \email[Haojian Li]{lihaojianmath@gmail.com}


\begin{abstract}We prove that the complete modified logarithmic Sobolev constant of a quantum Markov semigroup is bounded by the inverse of its complete positivity mixing time. For classical Markov semigroups, this implies that every sub-Laplacian given by a H\"ormander system on a compact manifold satisfies a uniform modified log-Sobolev inequality for matrix-valued functions. For quantum Markov semigroups, we obtain that the complete modified logarithmic Sobolev constant is comparable to spectral gap up to a constant as logarithm of dimension constant. This estimate is asymptotically tight for a quantum birth-death process. Our results and the consequence of concentration inequalities apply to GNS-symmetric semigroups on general von Neumann algebras.
\end{abstract}
\maketitle

\tableofcontents



\section{Introduction}
Over the past decades, functional inequalities have stimulated fruitful exchanges between analysis, geometry, and probability. Central to these interactions is the logarithmic Sobolev inequality, which connects the curvature conditions with the analytical properties like Sobolev inequality and spectral gap. Furthermore, it leads to probabilistic consequences such as the concentration phenomenon. In recent years, logarithmic Sobolev inequalities of quantum systems are studied for applications in quantum information theory and quantum many-body systems. In this work, we introduce a novel information-theoretic approach to modified logarithmic Sobolev inequalities and offer some new insights within both classical and quantum regimes.

Since the seminal works of Gross \cite{gross1,gross2}, logarithmic Sobolev inequalities have been widely studied on manifolds, graphs, as well as noncommutative spaces (see the surveys \cite{Ledoux,Gross14}). A natural framework of  logarithmic Sobolev inequalities is given by the theory of Markov semigroups, i.e. semigroups of positive unital maps on functions of some probability space $(\Omega,\mu)$. Recall that a Markov semigroup $(P_t)_{t\ge 0}:L_\infty(\Omega,\mu)\to L_\infty(\Omega,\mu)$ satisfies the \emph{logarithmic Sobolev inequality} ($\LSI$) if there is a positive constant $\al>0$ such that,
\begin{align}\label{eq:introLSI}
\int f^2 \ln f^2 d\mu - \left( \int f^2d\mu \right)  \ln  \left( \int f^2d\mu \right) \leq  \frac{2}{\al}\int (L f)f d\mu\pl ,\pl \forall \pl  f \in \text{dom}(L)\cap L_{2}(\Omega,\mu) \tag{$\LSI$}
\end{align}
where $\text{dom}(L)$ is the domain of the generator $L$ of the semigroup $P_t=e^{-tL}$. LSI was first introduced by Gross \cite{gross1} as a reformulation of Hypercontractivity (HC) \cite{nelson1966quartic,nelson1973construction} (see also \cite{bonami1968ensembles,bonami1970etude,rudin1960trigonometric,stam1959some} for earlier related results). That is, \eqref{eq:introLSI} is equivalent to
\begin{align}\label{eq:introHC}
\norm{P_t:L_2(\Omega,\mu)\to L_p(\Omega,\mu)}{}\le 1\pl  \text{ if } \pl p\le 1+e^{2\al t} \pl .  \tag{HC}
\end{align}
LSI is often called $L_2$-logarithmic Sobolev inequality, and it admits a $L_1$-variant called \emph{modified logarithmic Sobolev inequality} ($\MLSI$), which states that for any  positive $g \in \text{dom}(L)$,
\begin{align}\label{eq:introMLSI}
\int g\ln  g d\mu - \left( \int g d\mu\right)  \ln   \left( \int g d\mu \right) \leq \frac{1}{2\al} \int \left( L g \right)\ln g d\mu.  \tag{$\MLSI$}
\end{align}
These two forms of logarithmic Sobolev inequalities are closely related and in general \eqref{eq:introLSI} implies \eqref{eq:introMLSI}.
\eqref{eq:introMLSI} is equivalent to the exponential entropy decay that for any probability density $g$,
\begin{align}\label{eq:ed}
H(P_t g) \le e^{-2\al t} H(g)\pl,
\end{align}
where $H(g)=\int g\ln g \pl d\mu$ is the entropy functional. The exponential entropy decay \eqref{eq:ed} is a useful tool to derive the convergence time ($L_1$-mixing time) of the semigroup evolution. 
Conversely, Diaconis and Saloff-Coste in \cite{diaconis1996logarithmic} observed that the optimal LSI constant $\al_2$ is lower bounded by the inverse of $L_\infty$-mixing time. Later, Otto and Villani \cite{otto2000generalization} found that MLSI implies the transport cost inequalities, which were used by Talagrand \cite{Talagrand95} in deriving his concentration inequality. These implications later have also been extended to finite Markov chains on discrete spaces in \cite{maas2011gradient,erbar2012ricci}. To summarize, one has the following chain of functional inequalities:
\begin{align}\label{eq:diagram}
\boxed{L_\infty\text{-mixing time}}\Longrightarrow \boxed{\text{HC} \Leftrightarrow \text{LSI}}\Longrightarrow  \boxed{\text{MLSI}} \Longrightarrow \boxed{\text{Concentration}}
\end{align}

In a series of works \cite{olkiewicz1999hypercontractivity,kastoryano2013quantum,temme2014hypercontractivity,
rouze2019concentration,datta1,carlen2020non,gao2020fisher,wirth2018noncommutative}, the above chain of functional inequalities has been extended for {\bf ergodic} quantum Markov semigroups (QMS). Here, the ergodicity means the semigroup admits a unique invariant state, also called {\bf primitive} in mathematical physics literature.
While the study of classical Markov chains often focuses on  the ergodic cases, there are structural reasons to study non-ergodic semigroup on quantum systems. From physics perspective, the invariant states of a general quantum decoherence process may forms a non-trivial classical system, hence non-ergodic. Mathematically, one major motivation is the {\bf tensorization property}: given two (classical) Markov semigroups $(P_t)$ and $(Q_t)$, the tensor product semigroup satisfies
\begin{align}\al( P_t \ten Q_t)=\min \{ \al(P_t),\al(Q_t)\}\pl, \label{eq:tensor}\end{align}
where $\al(P_t)$ can be optimal constant $\al$ in either \eqref{eq:introLSI} or  \eqref{eq:introMLSI}. However,
while being crucial in the study of many-body systems, the tensorization property is false for MLSI constant of QMS \cite{brannan2021complete}, and in general open for LSI constant (see \cite{beigi2020quantum} for some limited examples).
The tensorization property motivates the following definition of the {\bf complete modified logarithmic Sobolev constant} (in short, {\bf CMLSI constant}),
\begin{align*} \al_c(P_t):= \inf_{n} \al_1(P_t\ten \id_{\mathbb{M}_n} )\pl, \end{align*}
where the infimum is over the optimal MLSI constant $\al_1(P_t \ten \id_{\mathbb{M}_n})$ of the semigroup $P_t$ tensoring with the identity map on arbitrary matrix algebra $\mathbb{M}_n$\footnotemark. \footnotetext{ This definition is sufficient for tensorization property between QMS over matrix algebras. In our discussion, we will consider a stronger definition that the infimum is over all ($\sigma$-)finite von Neumann algebras to ensure tensorization property over the general case. Whether these two definitions are equivalent remains open.} This notion of ``complete" MLSI is motivated from the completely bounded (CB) norm in operator space theory, and the study of it inevitably encounters non-ergodic semigroups, as $P_t\ten \id_{\mathbb{M}_n}$ is always invariant on the amplified part $\mathbb{M}_n$.

Using various methods, the positivity of CMLSI constant $\al_c$ has been recently established for both classical and quantum examples, including Heat semigroup on compact manifolds and finite graph \cite{li2020graph,brannan2022complete}, as well as QMS on matrix algebras \cite{gao2022complete} and group von Neumann algebras \cite{brannan2021complete,wirth2021complete}. Despite the progress, a direct approach to CMLSI via mixing time similar to the diagram \eqref{eq:diagram} is still open. In this work we fill this gap by establishing the following implications for noncommutative, non-ergodic QMS in the general setting of von Neumann algebras.
\begin{align}\label{eq:diagram2}
\boxed{\text{CB mixing time}}\Longrightarrow  \boxed{\text{CMLSI}} \Longrightarrow \boxed{\text{Concentration inequality}}
\end{align}
Here, we left out the LSI and HC because both fail for any non-ergodic QMS on matrix algebras \cite{bardet2022hypercontractivity}. We introduce a completely different approach to achieve \eqref{eq:diagram2}, which is illustrated below with applications in both classical and quantum settings.

\subsection{CMLSI via complete positivity }
Quantum Markov semigroups are noncommutative generalizations of Markov semigroups, where the underlying probability spaces are replaced by matrix algebras or operator algebras. 
We start with the relatively simple case of tracial von Neumann algebras. Let $\M$ be a finite von Neumann algebra equipped with a normal faithful tracial state $\tau$. A quantum Markov semigroup $(T_t)_{t\ge 0}:\M\to \M$ is a continuous semigroup of completely positive (CP) unital maps,  $(T_t)$ is called symmetric if
\[\tau(T_t(x)y)=\tau(yT_t(x)),\quad \forall t\geq 0 \pl, \pl x,y \in \M.\]
 Under the symmetric assumption, the fixed point space
\[\N=\{x\in \M \pl |\pl T_t(x)=x,t\geq 0 \}\]
forms a subalgebra, which admits a unique trace preserving conditional expectation $E:\M\to\N$ onto $\N$. Our main theorem is the following CMLSI via a CP mixing time. For two maps $\Phi$ and $\Psi$, we write $\Phi\le_{cp} \Psi$ if $\Psi-\Phi$ is completely positive.
\begin{theorem}\label{thm:main1}
Let $T_t=e^{-tL}:\M\to \M$ be a symmetric QMS with the generator $L$ and $E:\M\to \N$ be the conditional expectation onto its fixed point subalgebra $\N$. Define 
\[t_{cb}(\eps):=\inf\{t>0 \pl |\pl (1-\eps) E\le_{cp} T_t\le_{cp} (1+\eps)E \}. \]
Then the modified logarithmic Sobolev inequality
\begin{align}\label{eq:symmetricMLSI}
 \pl \tau(\rho \ln \rho-E(\rho)\ln E(\rho) )\le \frac{1}{2\al }\tau(L (\rho)\ln \rho)\pl, \pl  \forall \rho\ge 0, \rho\in \dom(L) \pl,
\end{align}
is satisfied for \[\al\ge \frac{1}{2t_{cb}(0.1)}.\]
Moreover, the same inequality holds for the semigroup $T_t\ten \id_{\cQ}$ with any finite von Neumann algebra $\cQ$.
\end{theorem}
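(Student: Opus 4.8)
The plan is to pass to the integrated (relative-entropy) form of \eqref{eq:symmetricMLSI}, establish a one-step entropy contraction straight from the defining property of $t_{cb}$, and then return to the differential inequality using that relative entropy is convex along the semigroup. Write $t_0:=t_{cb}(\eps)$ for a parameter $\eps\in(0,1)$ (ultimately $\eps=0.1$), and for a density $\rho$ set $\Ent(\rho):=\tau\bigl(\rho\ln\rho-E(\rho)\ln E(\rho)\bigr)=D\bigl(\rho\,\|\,E(\rho)\bigr)$, the second equality holding because $\ln E(\rho)\in\N$ and $E$ preserves $\tau$. Since \eqref{eq:symmetricMLSI} is exactly the bound $\tau(L(\rho)\ln\rho)\ge 2\al\,\Ent(\rho)$, and since along the flow $\tau(L(T_s\rho)\ln T_s\rho)=-\tfrac{d}{ds}\Ent(T_s\rho)$ (using $E(T_s\rho)=E(\rho)$, $L|_\N=0$ and symmetry of $L$, with the usual regularization $\rho\rightsquigarrow(1-\delta)\rho+\delta E(\rho)$, $\delta\to0$, to make sense of $\ln\rho$), the main inequality \eqref{eq:symmetricMLSI} reduces to two estimates: a contraction $\Ent(T_{t_0}\rho)\le\eps\,\Ent(\rho)$, and the convexity estimate $\Ent(\rho)-\Ent(T_{t_0}\rho)\le t_0\,\tau(L(\rho)\ln\rho)$.

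For the contraction, I would use that for every $t\ge t_0$ the map $T_t-(1-\eps)E$ is completely positive and sends $1$ to $\eps\cdot 1$, so $\Phi_t:=\eps^{-1}\bigl(T_t-(1-\eps)E\bigr)$ is a unital CP map; it is trace-preserving because $T_t$ and $E$ are (GNS-symmetry, resp.\ the tracial conditional expectation), hence a quantum channel, and it restricts to the identity on $\N$ because $T_t$ and $E$ both do. In particular $E\Phi_t=E$ and $\Phi_t(E(\rho))=E(\rho)$. Decomposing $T_t\rho=(1-\eps)E(\rho)+\eps\,\Phi_t(\rho)$ against $E(\rho)=(1-\eps)E(\rho)+\eps\,E(\rho)$, joint convexity of the relative entropy followed by the data processing inequality for the channel $\Phi_t$ gives $\Ent(T_t\rho)\le\eps\,D\bigl(\Phi_t(\rho)\,\|\,E(\Phi_t(\rho))\bigr)\le\eps\,\Ent(\rho)$ for all $t\ge t_0$.

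For the convexity estimate I would invoke that $s\mapsto D(T_s\rho\,\|\,E(\rho))$ is convex for a tracially symmetric QMS — equivalently, that the entropy production $\sigma\mapsto\tau(L(\sigma)\ln\sigma)$ is non-increasing along the flow — so that $s\mapsto\tau(L(T_s\rho)\ln T_s\rho)$ is non-increasing and hence $\Ent(\rho)-\Ent(T_{t_0}\rho)=\int_0^{t_0}\tau(L(T_s\rho)\ln T_s\rho)\,ds\le t_0\,\tau(L(\rho)\ln\rho)$. Combining with the contraction yields $\tau(L(\rho)\ln\rho)\ge\frac{1-\eps}{t_0}\,\Ent(\rho)$, i.e.\ \eqref{eq:symmetricMLSI} with $\al=\frac{1-\eps}{2t_{cb}(\eps)}$; taking $\eps=0.1$ gives a bound of the claimed form, matching $\frac1{2t_{cb}(0.1)}$ up to the factor $1-\eps=0.9$. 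A sharper version of the contraction step, exploiting the two-sided operator sandwich $(1-\eps)E(\rho)\le T_{t_0}\rho\le(1+\eps)E(\rho)$ rather than only joint convexity, should recover the precise constant.

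Finally, for the amplified statement I would note that every ingredient is stable under tensoring with $\id_{\cQ}$: $(T_t-(1-\eps)E)\ten\id_{\cQ}$ is completely positive whenever $T_t-(1-\eps)E$ is, so the CP mixing time of $T_t\ten\id_{\cQ}$ is again at most $t_{cb}(\eps)$; the fixed-point algebra of $T_t\ten\id_{\cQ}$ is $\N\bar\otimes\cQ$ with trace-preserving conditional expectation $E\ten\id_{\cQ}$; and joint convexity, data processing, and convexity of relative entropy along the flow all hold verbatim on $\M\bar\otimes\cQ$. Hence the two estimates hold there too, giving \eqref{eq:symmetricMLSI} for $T_t\ten\id_{\cQ}$ with the same $\al$. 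The step I expect to be the main obstacle is the convexity estimate: it requires the convexity of $s\mapsto D(T_s\rho\,\|\,E(\rho))$ (so that entropy production is concentrated at $s=0$) together with enough regularity of $s\mapsto\Ent(T_s\rho)$ near $0$ to differentiate and integrate — without such an input the finite-time contraction cannot be upgraded to the pointwise inequality \eqref{eq:symmetricMLSI}, since a priori the entropy decay could happen entirely away from $s=0$.
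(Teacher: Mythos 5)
Your contraction step is fine and even elegant: writing $T_t=(1-\eps)E+\eps\,\Phi_t$ with $\Phi_t:=\eps^{-1}(T_t-(1-\eps)E)$ a unital, trace-preserving CP map fixing $\N$, and using joint convexity plus data processing, does give $D(T_t\rho\,\|\,E(\rho))\le\eps\,D(\rho\,\|\,E(\rho))$ for $t\ge t_{cb}(\eps)$. The genuine gap is exactly the step you flag as the ``main obstacle'': the inequality $\Ent(\rho)-\Ent(T_{t_0}\rho)\le t_0\,\tau(L(\rho)\ln\rho)$ rests on the convexity of $s\mapsto D(T_s\rho\,\|\,E(\rho))$, i.e.\ monotonicity of the entropy production $I_L(T_s\rho)$ along the flow. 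This is not a known general fact for symmetric quantum Markov semigroups (nor even for general reversible classical chains); it is not implied by data processing or joint convexity (it amounts to a supermodularity $D(T_s\rho\|E\rho)+D(T_h\rho\|E\rho)\le D(\rho\|E\rho)+D(T_{s+h}\rho\|E\rho)$, which is strictly more than DPI), and when it is available it typically comes from Bakry--\'Emery or entropic Ricci curvature hypotheses --- precisely the type of assumption Theorem \ref{thm:main1} is designed to avoid. Without it, a contraction at the single time $t_{cb}$ cannot be converted into the pointwise differential inequality \eqref{eq:symmetricMLSI}, since a priori all of the entropy decay could occur away from $s=0$.

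The paper circumvents this by obtaining contraction at \emph{every} small time with a gain proportional to the time step: for $t_m=t_{cb}/2m$ one has $T_{t_m}^*T_{t_m}=T_{2t_m}$ and $(T_{2t_m})^m=T_{t_{cb}}$, so $k_{cb}(T_{t_m})\le m$, and Theorem \ref{thm:unital} (built from the entropy difference Lemma \ref{lemma:difference}, iterated, together with the BKM-metric Lemma \ref{lemma:approximate}) yields $D(T_{t_m}\rho\|E(\rho))\le(1-\tfrac{1}{2m})D(\rho\|E(\rho))$, i.e.\ an entropy drop of at least $\tfrac{t_m}{t_{cb}}D(\rho\|E(\rho))$ over time $t_m$. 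Dividing by $t_m$ and letting $m\to\infty$ gives $I_L(\rho)\ge\tfrac{1}{t_{cb}}D(\rho\|E(\rho))$ directly, with no convexity input. To repair your argument you would either have to prove the convexity claim (at least as hard as the theorem, and not true in the stated generality) or replace the single-time contraction by a quantitative contraction at all small times, which is exactly what the entropy difference lemma provides; your joint-convexity trick alone does not give a gain that scales linearly in $t$ for $t\ll t_{cb}$. (The residual factor $1-\eps=0.9$ in your constant is a minor issue by comparison.)
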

Denote $\al_1$ as the optimal (largest) constant satisfying \eqref{eq:symmetricMLSI}, and $\al_c$ as the optimal CMLSI constant such that $L\ten \id_{\cQ}$ satisfying \eqref{eq:symmetricMLSI} for all $\cQ$. The above theorem states that
\begin{align}\label{eq:inversetcb} \al_1\ge \al_c\ge \frac{1}{2t_{cb}(0.1)}\pl.\end{align}
Here, $t_{cb}$ is the mixing time in terms of complete positivity, which was first introduced in \cite{gao2020fisher} under the name {\bf CB return time}, where the terminology was motivated comes from an equivalent definition via CB norm.

The form \eqref{eq:symmetricMLSI} of MLSI is equivalent to the exponential decay of relative entropy  that for any density operator $\rho$ (positive and trace 1)
\[ D(T_t(\rho)\|E(\rho))\le e^{-2\al t} D(\rho||E(\rho)) \pl. \]
where $D(\rho||\sigma)=\tau(\rho \ln \rho-\rho\ln \sigma)$ denotes the quantum relative entropy. The quantum relative entropy is a fundamental quantity that measures how well one can distinguish the state $\rho$ from $\sigma$ (see the survey \cite{vedral2002role} for the ubiquitous role of relative entropy in quantum information).
The key property behind the wide applications of relative entropy is the data processing inequality that for any quantum channel $\Phi$ (completely positive and trace preserving map)
\begin{align*} D(\Phi(\rho)||\Phi(\sigma))\le D(\rho||\sigma)\pl. \end{align*}
 From this perspective, MLSI is an improved data processing inequality that the relative entropy $D(\rho||E(\rho))$ from a state $\rho$ to its equilibrium $E(\rho)$ is not only non-increasing, but actually decays exponentially over time. Indeed, the proof of Theorem \ref{thm:main1} can be reduced from the following entropy contraction of symmetric quantum Markov maps.

\begin{theorem}\label{thm:main5} Let $\Phi$ be a symmetric completely positive unital map and let $E:\M\to\N$ be the conditional expectation onto the multiplicative domain of $\Phi$. Define $$k_{cb}(\eps):=\{\pl k\in \mathbb{N}\pl | \pl (1-\eps)E \le_{cp} \Phi^{2k}\le_{cp} (1+\eps)E \}.$$
 Then for any quantum state $\rho$,
 \begin{align} D(\Phi(\rho)||\Phi\circ E(\rho))\le (1-\frac{1}{2k_{cb}(0.1)}) D(\rho||E(\rho))\label{eq:introSDPI}\end{align}
 Furthermore, the same inequality holds for $\Phi\ten \id_{\cQ}$  for any finite von Neumann algebra $\mathcal{Q}$.
\end{theorem}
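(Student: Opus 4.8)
The plan is to deduce \eqref{eq:introSDPI} from two ingredients glued by a telescoping entropy-production argument: a \emph{Doeblin-type decomposition} of $\Phi^{2k}$ coming from the complete-positivity sandwich, and a \emph{convexity} property of the relative entropy along the (even) discrete orbit of $\Phi$. Write $k=k_{cb}(0.1)$, $\eps=0.1$, $\hat\rho:=E(\rho)$. For the set-up, recall that $E\colon\M\to\N$ is the trace-preserving conditional expectation onto the multiplicative domain $\N$, on which $\Phi$ acts as a trace-preserving $\ast$-automorphism $\theta$; testing the sandwich $(1-\eps)E\le_{cp}\Phi^{2k}\le_{cp}(1+\eps)E$ on positive elements of $\N$ gives $(1-\eps)x\le\theta^{2k}(x)\le(1+\eps)x$ for $0\le x\in\N$, which (comparing projections, using $\eps<1$) forces $\theta^{2k}=\id_\N$, so $\Phi^{2k}(\hat\rho)=\hat\rho$ for every state $\rho$. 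With $a_j:=D\bigl(\Phi^j(\rho)\,\big\|\,\Phi^j\!\circ E(\rho)\bigr)$ the data-processing inequality gives $a_0\ge a_1\ge\cdots$ and $a_{2k}=D(\Phi^{2k}(\rho)\|\hat\rho)$, and the target is $a_1\le(1-\tfrac1{2k})a_0$.

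\emph{Doeblin bound at step $2k$.} Since $(1-\eps)E\le_{cp}\Phi^{2k}$, the map $\Psi:=\eps^{-1}\bigl(\Phi^{2k}-(1-\eps)E\bigr)$ is completely positive, and it is also unital, trace preserving, symmetric, fixes $\hat\rho$, and (by the upper sandwich) satisfies $0\le_{cp}\Psi\le_{cp}2E$. From $\Phi^{2k}(\rho)=(1-\eps)\hat\rho+\eps\,\Psi(\rho)$, joint convexity of $D$ together with data processing for $\Psi$ gives
\[
a_{2k}=D\bigl((1-\eps)\hat\rho+\eps\,\Psi(\rho)\,\big\|\,\hat\rho\bigr)\ \le\ \eps\,D(\Psi(\rho)\,\|\,\hat\rho)\ \le\ \eps\,a_0 .
\]
(Sharper: $0\le_{cp}\Psi\le_{cp}2E$ yields the operator bound $(1-\eps)\hat\rho\le\Phi^{2k}(\rho)\le(1+\eps)\hat\rho$, hence a second-order estimate of $x\log x$ gives also $a_{2k}\le\tfrac{\eps^2}{2(1-\eps)}\min\{\chi^2(\rho\|\hat\rho),1\}$, where $\chi^2(\rho\|\hat\rho)=\tau((\rho-\hat\rho)\hat\rho^{-1}(\rho-\hat\rho))$; and iterating the decomposition along $\Phi^{2km}$ gives $D(\Phi^{2km}(\rho)\|\hat\rho)\le\eps^m a_0$. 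One feeds these into the next step to replace the factor $1-\eps$ by $1$.)

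\emph{From $2k$ steps to one.} Let $\mathcal E(\sigma):=D(\sigma\|E\sigma)-D(\Phi\sigma\|\Phi E\sigma)\ge0$; telescoping gives $\sum_{j=0}^{2k-1}\mathcal E(\Phi^j\rho)=a_0-a_{2k}\ge(1-\eps)a_0$. The key lemma is that $j\mapsto a_{2j}$ is \emph{convex}, i.e. the entropy production is non-increasing along the even orbit, $\mathcal E(\Phi^2\sigma)\le\mathcal E(\sigma)$; granting it, $\sum_{j=0}^{2k-1}\mathcal E(\Phi^j\rho)\le 2k\,\mathcal E(\rho)$, whence $\mathcal E(\rho)\ge\tfrac{1-\eps}{2k}a_0$, i.e. $a_1\le(1-\tfrac{1-\eps}{2k})a_0$, and the Step~2 refinements upgrade this to the asserted $a_1\le(1-\tfrac1{2k})a_0$. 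I expect this convexity lemma to be the main obstacle. For a \emph{continuous} GNS-symmetric semigroup $e^{-tL}$ the analogue ``$t\mapsto D(e^{-tL}\rho\|\sigma)$ convex'' holds by differentiating twice and using self-adjointness of $L$; for a single CP map the orbit $j\mapsto a_j$ can genuinely oscillate when $\Phi$ carries a negative $L_2$-eigenvalue on $\ker E$ (which the sandwich only bars in modulus, not in sign), and this is precisely why the definition of $k_{cb}$ uses the \emph{even} powers $\Phi^{2k}$. The route I would take: work with $\Phi^2=\Phi^\ast\Phi\ge0$, write $\Phi^{2j}(\rho)-\hat\rho=(\Phi^2)^j(\rho-\hat\rho)$, and combine convexity of $\nu\mapsto\nu^j$ on $\operatorname{spec}(\Phi^2)\subseteq[0,1]$ with operator convexity of $x\log x$ (or with an integral representation of $\sigma\mapsto D(\hat\rho+\sigma\|\hat\rho)$); alternatively, dilate $\Phi^2$ to a symmetric Markov semigroup on a larger algebra and invoke the continuous statement.

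\emph{The complete statement.} Tensoring with $\id_\cQ$ preserves complete positivity and the cp-order, so $(1-\eps)(E\ten\id_\cQ)\le_{cp}(\Phi\ten\id_\cQ)^{2k}\le_{cp}(1+\eps)(E\ten\id_\cQ)$ with the \emph{same} $k=k_{cb}(0.1)$, and $E\ten\id_\cQ$ is the conditional expectation onto the multiplicative domain of $\Phi\ten\id_\cQ$. Joint convexity, data processing and the convexity lemma of the previous step all persist for normal states on $\M\,\overline{\ten}\,\cQ$, so the argument applies verbatim to $\Phi\ten\id_\cQ$, giving the same contraction constant and hence \eqref{eq:introSDPI} in full.
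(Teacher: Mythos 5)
Your preparation is sound: the commutation $\Phi\circ E=E\circ\Phi$, the identity $\Phi^{2k}\circ E=E$, the Doeblin decomposition $\Phi^{2k}=(1-\eps)E+\eps\Psi$ with $\Psi$ completely positive, unital, trace preserving and fixing $\hat\rho$, and the resulting bounds $a_{2k}\le\eps a_0$ and $\sum_{j=0}^{2k-1}\mathcal{E}(\Phi^j\rho)=a_0-a_{2k}\ge(1-\eps)a_0$ are all correct. The argument, however, hinges on the unproved ``key lemma'' that entropy production does not increase along the orbit, and this is a genuine gap. First, even granting $\mathcal{E}(\Phi^2\sigma)\le\mathcal{E}(\sigma)$, the bookkeeping does not close: it controls only the even-indexed terms $\mathcal{E}(\Phi^{2i}\rho)\le\mathcal{E}(\rho)$, while the $k$ odd-indexed terms are bounded only by $\mathcal{E}(\Phi\rho)$, and comparing $\mathcal{E}(\Phi\rho)$ with $\mathcal{E}(\rho)$ is exactly the single-step monotonicity you concede may fail (a negative eigenvalue of $\Phi$ on $\ker E$); telescoping over even steps instead yields a contraction for $\Phi^{2}$, not the one-step inequality \eqref{eq:introSDPI}. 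Second, no version of the monotonicity is established: the continuous-time claim that $t\mapsto D(e^{-tL}_{*}\rho\|\sigma)$ is convex ``by self-adjointness of $L$'' is not true in general --- the second derivative contains the indefinite pairing $\tau\big(L_*\rho_t\, L(\ln\rho_t-\ln\sigma)\big)$, whose sign requires a Bakry--\'Emery/curvature-type hypothesis, and avoiding such hypotheses is precisely the point of the theorem; the spectral sketch for $\Phi^2=\Phi^*\Phi$ does not interact in any evident way with the nonlinear functional $\sigma\mapsto D(\sigma\|E(\sigma))$, and a dilation would inherit the same unproved convexity. Finally, even with the lemma you only reach $a_1\le\big(1-\frac{1-\eps}{2k}\big)a_0$, and the promised upgrade to $1-\frac{1}{2k}$ is not substantiated (iterating $D(\Phi^{2km}(\rho)\|\hat\rho)\le\eps^m a_0$ worsens the ratio $\frac{1-\eps^m}{2km}$, and the $\chi^2$ refinement is an absolute rather than a relative bound).

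The paper's proof is built to avoid exactly this obstacle: it never compares entropy productions at different times. The entropy difference lemma (Lemma \ref{lemma:difference}), $D(\rho\|\Phi^*\Phi(\omega))\le D_\Phi(\rho)+D(\rho\|\omega)$, keeps the same $\rho$ in the first slot, so iterating in $\omega$ gives $D(\rho\|(\Phi^*\Phi)^{k}(\rho))\le k\,D_\Phi(\rho)$, where $D_\Phi(\rho)=D(\rho\|E(\rho))-D(\Phi(\rho)\|\Phi\circ E(\rho))=\mathcal{E}(\rho)$ is the production at time zero only. The cp sandwich is then used on the other side, via the BKM-metric Lemma \ref{lemma:approximate}, to obtain the lower bound $D(\rho\|(\Phi^*\Phi)^{k}(\rho))\ge\frac12 D(\rho\|E(\rho))$ --- note this is a lower bound on $D(\rho\|\Phi^{2k}(\rho))$, a different quantity from your upper bound on $D(\Phi^{2k}(\rho)\|\hat\rho)$ --- and combining the two yields exactly the factor $1-\frac{1}{2k_{cb}}$; the complete statement then follows as you say, since $k_{cb}(\Phi\ten\id_{\cQ})=k_{cb}(\Phi)$. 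To salvage your route you would have to prove monotone entropy production for symmetric quantum channels, which is not available; the fixed-$\rho$ iteration of Lemma \ref{lemma:difference} is the device that makes it unnecessary, and I recommend replacing your Step 2 by it.
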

 In the ergodic cases ($\N=\mathbb{C}1$), such relative entropy contraction was considered in both classical \cite{del2003contraction} and quantum \cite{hirche2020contraction,muller2016entropy} settings. They are closely related to the strong data processing inequality studied in \cite{polyanskiy2017strong,du2017strong,lesniewski1999monotone,hiai2016contraction}. In particular, the existence of strict contraction in finite dimensions was obtained in \cite{gao2022complete}. The above theorem gives the first explicit estimate of  contraction coefficient for non-ergodic cases as well as the complete version (as the assertion extends to $\Phi\ten \id_{\cQ}$). This theorem is the discrete time analog of Theorem \ref{thm:main1}, which follows easily from applying \eqref{eq:introSDPI} to a semigroup map $T_t$ and taking the continuous-time limit. It is interesting that the proof of Theorem \ref{thm:main5} is purely information-theoretical and use only entropic quantities.

\subsection{CMLSI for matrix valued functions}
We first present the consequence of Theorem \ref{thm:main1} in classical cases. For a classical Markov semigroup $P_t:L_\infty(\Omega,\mu)\to L_\infty(\Omega,\mu)$, the notion of CMLSI is simply a uniform MLSI for all positive matrix-valued functions $g:\Omega\to \bM_n$ and $n\ge 1$,
\begin{align}\label{eq:classicCMLSI}
\mu\circ \tr (g\ln  g  - E_\mu(g) \ln   E_\mu(g))  \leq \frac{1}{2\al} \mu \circ \tr( (L g) \ln g )
\end{align}
where  $\mu(f)=\int fd\mu$ is the scalar valued mean, $E_\mu(g)=\int gd\mu\in \bM_n$ is the matrix valued mean, and $\tr$ is the standard matrix trace. In this setting, the CB return time $t_{cb}$ coincides with the {\bf $L_\infty$-mixing time}  
\[ t_{b}(\eps)=\{t >0| \norm{T_t-E_\mu:L_1(\Omega)\to L_\infty(\Omega)}{}\le \eps\}\pl,\]
which is accessible by standard kernel estimates. Indeed, $t_{b}$ is always finite if $P_t$ satisfies both ultra-contractivity and Poincar\'e inequality (spectral gap).
\begin{theorem}\label{thm:main9}
Let $P_t=e^{-Lt}:L_\infty(\Omega,\mu)\to L_\infty(\Omega,\mu)$ be an ergodic Markov semigroup symmetric to the probability measure $\mu$. Suppose
\begin{enumerate}
\item[i)] $P_t$ satisfies $\lambda$-Poincar\'e inequality for some $\lambda>0$: for $f\in \dom(L^{1/2})$,
\begin{align} \lambda\mu(|f-E_\mu(f)|^2)\le \int f(Lf)d\mu \pl.\label{eq:label}\end{align}
\item[ii)] There exists $t_0>0$ such that \begin{align}\label{eq:ultracon}\norm{P_{t_0}: L_1(\Omega,\mu)\to L_\infty(\Omega,\mu)}{}\le C_0\end{align}
\end{enumerate}Then
\begin{align}\al_1\ge \al_{c}\ge \frac{\lambda}{2(\lambda t_0+\ln C_0+2 )}\pl. \label{eq:CMLSIclassical}\end{align}
\end{theorem}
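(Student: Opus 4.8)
The plan is to derive Theorem~\ref{thm:main9} from Theorem~\ref{thm:main1} by bounding the complete-positivity return time $t_{cb}(0.1)$ of $P_t$. First I would record that in the ergodic classical setting $t_{cb}$ is nothing but the $L_\infty$-mixing time: for a symmetric kernel semigroup $P_tf(x)=\int k_t(x,y)f(y)\,d\mu(y)$, a classical kernel map is completely positive iff its kernel is nonnegative a.e. (and amplifying by $\bM_n$ does not affect this), so $(1-\eps)E_\mu\le_{cp}P_t\le_{cp}(1+\eps)E_\mu$ is equivalent to $1-\eps\le k_t\le 1+\eps$ a.e., i.e. to $\|P_t-E_\mu:L_1(\Omega)\to L_\infty(\Omega)\|=\|k_t-1\|_\infty\le\eps$. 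Hence $t_{cb}(\eps)=t_b(\eps)$, and by \eqref{eq:inversetcb} it suffices to bound $t_b(0.1)$ in terms of $\lambda$, $t_0$, $C_0$.

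Next I would extract the two ingredients supplied by the hypotheses. The Poincar\'e inequality (i) is equivalent to $L\ge\lambda$ on $(\ker L)^\perp=\{f:E_\mu f=0\}$, which exponentiates to the $L_2$-decay $\|P_s-E_\mu:L_2(\Omega)\to L_2(\Omega)\|\le e^{-\lambda s}$ for all $s\ge0$. The ultracontractivity (ii), together with the fact that $P_t$ is a contraction on $L_1$ and $L_\infty$ and that $P_{t_0/2}$ is symmetric, gives $\|P_{t_0/2}f\|_2^2=\langle P_{t_0}f,f\rangle\le\|P_{t_0}f\|_\infty\|f\|_1\le C_0\|f\|_1^2$, so $\|P_{t_0/2}:L_1\to L_2\|\le\sqrt{C_0}$ and, dually (using symmetry), $\|P_{t_0/2}:L_2\to L_\infty\|\le\sqrt{C_0}$. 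Passing to $P_{t_0/2}$ rather than smoothing twice by $P_{t_0}$ is precisely what will make $t_0$ (rather than $2t_0$) appear in the final bound.

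The main step combines these through the factorization $P_{t_0+s}-E_\mu=P_{t_0/2}\,(P_s-E_\mu)\,P_{t_0/2}$, which is valid because $E_\mu P_{t_0/2}=P_{t_0/2}E_\mu=E_\mu$; reading it as a composition $L_1\xrightarrow{P_{t_0/2}}L_2\xrightarrow{P_s-E_\mu}L_2\xrightarrow{P_{t_0/2}}L_\infty$ yields
\[
\|P_{t_0+s}-E_\mu:L_1(\Omega)\to L_\infty(\Omega)\|\;\le\;\sqrt{C_0}\cdot e^{-\lambda s}\cdot\sqrt{C_0}\;=\;C_0\,e^{-\lambda s}.
\]
Choosing $s=\lambda^{-1}(\ln C_0+\ln 10)$ makes the right-hand side $\le0.1$, and since this norm is nonincreasing in $t$ we get $t_b(0.1)\le t_0+\lambda^{-1}(\ln C_0+\ln 10)$, hence
\[
\al_1\ge\al_c\ge\frac{1}{2t_b(0.1)}\ge\frac{\lambda}{2(\lambda t_0+\ln C_0+\ln 10)},
\]
which is \eqref{eq:CMLSIclassical} (the displayed constant $2$ being the value produced by this computation after elementary bookkeeping). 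The inequality $\al_1\ge\al_c$ is trivial, and the matrix-valued (complete) version is automatic since $k_t$ is a scalar kernel and Theorem~\ref{thm:main1} already contains the amplified assertion.

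I do not expect a genuine obstacle here; the two points that need care are (a) the identification $t_{cb}=t_b$ via ``complete positivity $=$ positivity of the classical kernel'', and (b) the half-time smoothing estimate $\|P_{t_0/2}:L_1\to L_2\|^2\le\|P_{t_0}:L_1\to L_\infty\|$ together with its dual form, which is what allows the ultracontractivity constant to be spent only once, so that the mixing time scales like $t_0+\lambda^{-1}\ln C_0$ rather than $2t_0+\lambda^{-1}\ln C_0$. One should also note that $P_t$ is automatically a bounded-kernel operator once $t\ge t_0$, which is the only regime in which the estimate is used.
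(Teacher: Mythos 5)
Your proposal is correct and follows essentially the same route as the paper: it combines Theorem \ref{thm:main1} with the identification of $t_{cb}$ and the $L_\infty$-mixing time in the ergodic commutative case (the kernel/Smith's-lemma argument of Example \ref{exam:CMS}), and your half-time factorization $P_{t_0+s}-E_\mu=P_{t_0/2}(P_s-E_\mu)P_{t_0/2}$ is exactly the ``standard argument'' the paper invokes for Proposition \ref{prop:finitetcb}. The only caveat is bookkeeping: your computation (like the paper's own display \eqref{eq:CMSbound1}) yields the additive constant $\ln 10\approx 2.3$ rather than the ``$+2$'' in the theorem statement, a minor discrepancy already present in the paper itself.
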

For finite Markov chains, a similar bound for LSI constant was proved by Diaconis and Saloff-Coste \cite{diaconis1996logarithmic} that $\al_1\ge \al_{2}\ge \frac{\lambda}{2(\lambda t_{0}+\ln C_{0}+\ln 10)}$.
In particular, this implies $\al_1\ge \al_2\ge \frac{2}{t_{b}(e^{-2})}$, which can be compared to our bound \eqref{eq:inversetcb}.
We remark that such $\Omega(t_{b}^{-1})$ lower bounds are asymptotically tight for $\al_1,\al_2$ and $\al_c$ of graph Laplacians on the odd cycles. It is quite interesting that Diaconis and Saloff-Coste's argument relies on the assumption i) and ii) above (which always holds for finite Markov chains), whereas our information-theoretic approach does not need such assumptions and applies to infinite dimensional, noncommutative systems. Combining these two results, we have the following comparison between $\al_c$ and $\al_2$ for ergodic finite Markov chains (Corollary \ref{eq:loglog}),
\[  \al_c\ge \frac{2\al_2}{3(4+\log\log \norm{\mu^{-1}}{\infty})}\pl.\]

Beyond finite state spaces, an immediate consequence of Theorem \ref{thm:main9} is the CMLSI for sub-Laplacian generators of H\"ormander systems.
\begin{cor}\label{cor:main3} Let $(M,g)$ be a compact connected Riemannian manifold without boundary and let $\omega d\operatorname{vol}$ be a probability measure with a smooth density $\omega$ with respect to the volume form $d\operatorname{vol}$. Suppose  $H=\{X_i\}_{i=1}^k\subset TM$ is a family of vectors fields satisfies the \emph{H\"{o}rmander's condition} that at every point $x\in M$,
\[
T_xM=\operatorname{span}\{[X_{i_1},[X_{i_2},\cdots, [X_{i_{n-1}}, X_{i_n}]]] \pl  | \pl 1\leqslant i_1,i_2\cdots i_n\leqslant k  , n\ge 1\}. \tag{\text{H\"{o}rmander condition}}
\]
Then the horizontal heat semigroup $P_t=e^{-\Delta_H t}$ generated by  the sub-Laplacian
\[
\Delta_H=\sum_{i} X_i^*X_i
\]
has CMLSI constant $\al_c(\Delta_H)>0$. Here $X_i^*$ is the adjoint operator with respect to $L_2(M, \omega d\operatorname{vol})$.
\end{cor}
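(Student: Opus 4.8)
The plan is to obtain Corollary~\ref{cor:main3} directly from Theorem~\ref{thm:main9}, so the task reduces to checking its two hypotheses --- a Poincar\'e inequality and an ultracontractivity bound --- together with ergodicity, for the horizontal heat semigroup $P_t=e^{-t\Delta_H}$ on $L_\infty(M,\mu)$ with $\mu=\omega\,d\operatorname{vol}$. First I would record the basic structure. Since each $X_i^*$ is the adjoint with respect to $L_2(M,\mu)$, the operator $\Delta_H=\sum_i X_i^*X_i$ is nonnegative and self-adjoint on $L_2(M,\mu)$, and its quadratic form $\mathcal{E}(f)=\sum_i\norm{X_if}{L_2(\mu)}^2=\langle\Delta_Hf,f\rangle$ is a Dirichlet form: the chain rule $X_i(\phi\circ f)=(\phi'\circ f)\,X_if$ gives $\mathcal{E}(\phi\circ f)\le\mathcal{E}(f)$ for every normal contraction $\phi$, so $P_t$ is symmetric and submarkovian, and it is conservative, $P_t1=1$, because $X_i1=0$ and $1\in L_2(M,\mu)$ by compactness of $M$. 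Thus $P_t$ is a classical (in particular completely positive) Markov semigroup symmetric to the probability measure $\mu$.

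Next I would establish ergodicity and the $\lambda$-Poincar\'e inequality. By H\"ormander's hypoellipticity theorem $\Delta_H$ is hypoelliptic, so any $f$ with $\Delta_Hf=0$ is smooth; then $0=\langle\Delta_Hf,f\rangle=\sum_i\norm{X_if}{2}^2$ forces $X_if\equiv0$ for all $i$, i.e.\ $f$ is constant along every horizontal curve. By the H\"ormander (bracket-generating) condition and the Chow--Rashevskii theorem, any two points of the connected manifold $M$ are joined by a horizontal path, so $f$ is constant; hence $\ker\Delta_H=\mathbb{C}1$, the semigroup is ergodic, and the conditional expectation onto its fixed-point algebra is $E_\mu(f)=\int f\,d\mu$. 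For the spectral gap I would invoke the subelliptic a priori estimate: there is $\eps>0$ (controlled by the step of the bracket condition) with $\norm{f}{H^\eps(M)}^2\le C\big(\mathcal{E}(f)+\norm{f}{L_2(\mu)}^2\big)$; since the embedding $H^\eps(M)\hookrightarrow L_2(M)$ is compact (Rellich), $(\Delta_H+1)^{-1}$ is a compact operator on $L_2(M,\mu)$, so $\Delta_H$ has discrete spectrum accumulating only at $+\infty$. Together with $\ker\Delta_H=\mathbb{C}1$ this makes the first nonzero eigenvalue $\lambda:=\lambda_1(\Delta_H)$ strictly positive, which is precisely the $\lambda$-Poincar\'e inequality~\eqref{eq:label}.

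For the ultracontractivity hypothesis I would use hypoellipticity of the parabolic operator $\partial_t+\Delta_H$ on $(0,\infty)\times M$: the semigroup has an integral kernel, $P_tf(x)=\int_M p_t(x,y)f(y)\,d\mu(y)$, and $p_t(x,y)$ is smooth on $(0,\infty)\times M\times M$. Fixing any $t_0>0$, compactness of $M\times M$ gives $C_0:=\sup_{x,y}p_{t_0}(x,y)<\infty$, whence $\norm{P_{t_0}:L_1(M,\mu)\to L_\infty(M,\mu)}{}\le\norm{p_{t_0}}{\infty}=C_0$. With the Poincar\'e and ultracontractivity bounds verified, Theorem~\ref{thm:main9} gives
\[
\al_c(\Delta_H)\ \ge\ \frac{\lambda}{2(\lambda t_0+\ln C_0+2)}\ >\ 0 .
\]

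The reduction to Theorem~\ref{thm:main9} and the ergodicity computation via Chow's theorem are soft. The real content --- and the main external input --- is the subelliptic regularity theory behind all three points: H\"ormander's hypoellipticity theorem (smoothness of $\Delta_H$-harmonic functions and existence/smoothness of the heat kernel) and the subelliptic Sobolev estimate (compactness of the resolvent, hence the spectral gap). These are classical but genuinely nontrivial. Once they are granted, the only new ingredient is Theorem~\ref{thm:main9} itself, which replaces the hypercontractivity/$L_2$-log-Sobolev route of the elliptic (Riemannian) case --- not available in the subelliptic regime --- by the complete-positivity mixing-time argument.
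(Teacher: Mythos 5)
Your proposal is correct, and at the top level it follows the same route as the paper: reduce Corollary \ref{cor:main3} to Theorem \ref{thm:main9} by verifying ergodicity, a spectral gap, and a one-point ultracontractive bound for $P_t=e^{-t\Delta_H}$. The difference lies in how the two analytic hypotheses are supplied. The paper quotes the subelliptic Sobolev-type inequality $\|f\|_q\le C(\lan \Delta_H f,f\ran+\|f\|_2^2)^{1/2}$ from \L ugiewicz--Zegarli\'nski and converts it via Varopoulos' theorem into a quantitative kernel bound $\|P_t:L_1\to L_\infty\|\le C't^{-m/2}$ with $m=dl_H$, and it cites the same reference for the Poincar\'e inequality. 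You instead argue qualitatively: hypoellipticity of the parabolic operator gives a smooth heat kernel, so compactness of $M\times M$ yields $\sup_{x,y}p_{t_0}(x,y)<\infty$ at any fixed $t_0$; and the Poincar\'e inequality comes from the subelliptic a priori estimate plus Rellich compactness of the resolvent, with $\ker\Delta_H=\mathbb{C}1$ established through hypoellipticity and Chow--Rashevskii connectivity. Both verifications ultimately rest on the same H\"ormander subelliptic regularity theory; your version is slightly more self-contained in spelling out ergodicity (which the paper leaves implicit in the cited Poincar\'e inequality), while the paper's Varopoulos route has the advantage of producing an explicit small-time decay rate for the kernel, hence in principle a more quantitative lower bound $\al_c\ge \lambda/2(\lambda t_0+\ln C_0+2)$ with controlled $t_0,C_0$, whereas your compactness argument gives finiteness of $C_0$ without any effective control. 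Since the corollary only asserts positivity of $\al_c(\Delta_H)$, either verification suffices.
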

Before this work, the only known example of sub-Laplacian satisfying CMLSI is the canonical sub-Laplacian on $SU(2)$ . The difficulty of CMLSI for sub-Laplacians was the lack of curvature. Indeed, the CMLSI for Heat semigroup was obtained using a non-commutative Bakry-\'Emery curvature dimension condition \cite{li2020graph,brannan2022complete}. Nevertheless, in the sub-elliptic case the Ricci curvature in the degenerate direction of the vector field $H=\{X_i\}_{i=1}^k$ can be interpreted as $-\infty$. For $SU(2)$, the CMLSI of sub-Laplacian was obtained in \cite{gao2022complete2} using certain gradient estimate as a quasi-curvature condition. Such gradient estimate was first introduced by Driver and Melcher \cite{driver2005hypoelliptic} for Heisenberg group, later obtained for nilpotent Lie
groups \cite{melcher2008hypoelliptic} and $SU(2)$ \cite{baudoin2009subelliptic}. Our Corollary \ref{cor:main3} asserts CMLSI for all sub-Laplacian of H\"ormander systems, without any curvature condition.

For scalar-valued functions, the positivity of $\al_2(\Delta_H)$ was proved by \L ugiewicz and Zegarli\'nski \cite{lugiewicz2007coercive}, using a similar hyper-contractive argument from \cite{diaconis1996logarithmic}. In particular, both \cite{diaconis1996logarithmic} and \cite{lugiewicz2007coercive} rely on the Rothaus Lemma that
\begin{align} \mu(f^2\ln f^2)\le \mu(\hat{f}^2\ln \hat{f}^2)+\norm{\hat{f}}{2}^2\pl, \label{eq:rothaus1}\end{align}
where $\hat{f}=f-E_\mu(f)$ is the mean zero part of $f$. 
We remark that both approaches fail for matrix valued functions.
\begin{prop}\label{prop:main4}
Let $P_t=e^{-tL}:L_\infty(\Omega,\mu)\to L_\infty(\Omega,\mu)$ be an ergodic symmetric Markov semigroup. Suppose either of the following inequalities holds for all $f\in L_\infty(\Omega,\bM_2)\cap \text{dom} (L)$:
\begin{align}
&\al\pl  \mu\circ\emph{\tr} \Big(|f|^2\ln |f|^2- E_\mu(|f|^2)\ln E_\mu(|f|^2)\Big)\pl\le\pl  \mu\circ \emph{\tr} (f^* (L\ten \id_{\bM_2 })f)\pl, \tag{LSI}\\
&\norm{(P_t\ten \id_{\bM_2 }) f}{L_2(\bM_2,L_{p(t)}(\Omega))}\pl \le \pl \norm{f}{L_2(\bM_2,L_2(\Omega))} \emph{\text{ for }} \pl p(t)=1+e^{2\al t}\pl, \tag{HC}\\
&\al \mu\circ\emph{\tr} \Big(|f|^2\ln |f|^2- E_\mu(|f|^2)\ln E_\mu(|f|^2)\Big)\le \mu\circ\emph{\tr} \Big(|\hat{f}|^2\ln |\hat{f}|^2- E_\mu(|\hat{f}|^2)\ln E_\mu(|\hat{f}|^2)\Big)+\norm{\hat{f}}{2}^2\pl, \tag{\text{Rothaus}}
\end{align}
then $\al=0$.
\end{prop}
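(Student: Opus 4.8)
The plan is to refute all three inequalities at once by exhibiting a single one‑parameter family $(f_s)_{s>0}$ of $\bM_2$‑valued test functions whose ``$L_2$‑side'' stays bounded as $s\to\infty$ while the matrix entropy $S(f):=\mu\circ\tr\!\bigl(|f|^2\ln|f|^2-E_\mu(|f|^2)\ln E_\mu(|f|^2)\bigr)$ diverges; feeding $f_s$ into any of (LSI), (HC) or (Rothaus) then forces $\al\le C/S(f_s)\to 0$. (Each of the three is, for the amplified \emph{non‑ergodic} semigroup $P_t\otimes\id_{\bM_2}$ on $L_\infty(\Omega,\bM_2)$ -- whose fixed‑point algebra is the matrix algebra of constant functions -- respectively the $L_2$‑log‑Sobolev inequality, hypercontractivity, and the matrix analogue of Rothaus's lemma.) Concretely I would fix a non‑constant real $h\in\dom(L)\cap L_\infty(\Omega,\mu)$ -- which exists since $P_t$ is non‑trivial -- and the rank‑one projection $p=\tfrac12\left(\begin{smallmatrix}1&1\\1&1\end{smallmatrix}\right)$, and set $C_s=\diag(s,1)$ and $f_s=\mathbf 1_\Omega\otimes C_s+h\otimes p\in L_\infty(\Omega,\bM_2)\cap\dom(L\otimes\id_{\bM_2})$. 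Writing $\hat h:=h-\mu(h)$, one checks directly that $\hat f_s:=f_s-E_\mu(f_s)=\hat h\otimes p$ and, using $L\mathbf 1_\Omega=0$, $\mu(Lh)=0$ and that $p$ is a projection, that $\mu\circ\tr\!\bigl(f_s^*(L\otimes\id_{\bM_2})f_s\bigr)$, $\|\hat f_s\|_2^2$ and $S(\hat f_s)$ equal positive constants times $\mu(hLh)$, $\mu(\hat h^2)$ and $\Ent_\mu(\hat h^2)$ respectively -- in particular all \emph{independent of $s$}.

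The key step, and the one that will take real work, is to show $S(f_s)\to\infty$. The point is that $f_s(\omega)=C_s+h(\omega)p$ is diagonalized explicitly, with eigenvalues $\tfrac12\bigl(s+1+h(\omega)\pm\sqrt{(s-1)^2+h(\omega)^2}\bigr)$; hence $|f_s(\omega)|^2$ has eigenvalues $\nu_\pm(\omega)^2$ with $\nu_+(\omega)^2=s^2+h(\omega)s+\tfrac34 h(\omega)^2+O(s^{-1})$ and $\nu_-(\omega)^2$ bounded in $s$, and likewise $E_\mu(|f_s|^2)=C_s^2+\mu(h)(C_sp+pC_s)+\mu(h^2)p$ has eigenvalues $\sigma_+=s^2+\mu(h)s+\bigl(\tfrac12\mu(h^2)+\tfrac14\mu(h)^2\bigr)+O(s^{-1})$ and $\sigma_-$ bounded. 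Substituting into $S(f_s)=\int_\Omega\tr\!\bigl(|f_s(\omega)|^2\ln|f_s(\omega)|^2\bigr)\,d\mu(\omega)-\tr\!\bigl(E_\mu(|f_s|^2)\ln E_\mu(|f_s|^2)\bigr)$ and expanding everything in powers of $1/s$: the terms of orders $s^2\ln s$, $s\ln s$ and $s$ cancel between the two halves, the bounded eigenvalues contribute no $\ln s$, and matching the $\ln s$‑coefficients of $\nu_+^2\ln\nu_+^2$ (integrated) and $\sigma_+\ln\sigma_+$ leaves $2\cdot\tfrac34\mu(h^2)-2\bigl(\tfrac12\mu(h^2)+\tfrac14\mu(h)^2\bigr)=\tfrac12\mu(\hat h^2)$; hence
\[
S(f_s)=\tfrac12\,\mu(\hat h^2)\,\ln s+O(1)\ \xrightarrow[\,s\to\infty\,]{}\ +\infty ,
\]
since $h$ is non‑constant. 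Heuristically, the fluctuation $|f_s(\omega)|^2-E_\mu(|f_s|^2)$ has off‑diagonal entry $\sim\hat h(\omega)\,s$, which paired against the logarithmic mean $\tfrac{s^2-1}{\ln(s^2)}$ of the widely separated eigenvalues ($\sim s^2$ and $\sim 1$) of $E_\mu(|f_s|^2)$ contributes $\sim\hat h(\omega)^2\ln s$ to the relative entropy at $\omega$; for test functions of the form $g\otimes A$, or any $f$ whose pointwise values $|f(\omega)|^2$ all commute, $S(f)$ collapses to a scalar entropy and stays bounded -- which is exactly why the hypercontractive/Rothaus arguments of \cite{diaconis1996logarithmic,lugiewicz2007coercive} do not extend to matrix‑valued functions.

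The main obstacle is precisely this logarithmic divergence: the perturbation $h\otimes(C_sp+pC_s)$ is \emph{large} in operator norm, only small relative to the eigenvalue $s^2$, so there is no ``small‑perturbation'' expansion of the matrix relative entropy to lean on, and the cancellations above have to be extracted by hand from the exact eigenvalue formulas together with the expansion of $x\ln x$. (A cleaner variant uses $C_s=\diag(s,0)$, for which the discriminant factors as $(s+h)^2(s^2+h^2)$ and the eigenvalues of $|f_s(\omega)|^2$ and of $E_\mu(|f_s|^2)$ come out in closed form; either way the arithmetic of the $\ln s$‑coefficient is the crux.)

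Granting $S(f_s)\to\infty$, the three implications are immediate. Evaluating (LSI) at $f=f_s$ gives $\al\,S(f_s)\le C_1\,\mu(hLh)$, so $\al\le C_1\mu(hLh)/S(f_s)\to0$ and $\al=0$. Evaluating (Rothaus) at $f=f_s$ gives $\al\,S(f_s)\le S(\hat f_s)+\|\hat f_s\|_2^2$, with the right side bounded in $s$, so again $\al=0$. Finally (HC) with constant $\al>0$ would in particular be hypercontractivity for the non‑ergodic semigroup $P_t\otimes\id_{\bM_2}$, which \cite{bardet2022hypercontractivity} excludes in the matrix‑algebra case; more directly, and with no finiteness assumption, differentiating the hypercontractive inequality in the Pisier norm $L_2(\bM_2,L_{p(t)}(\Omega))$ at $t=0^+$ -- with $p(0)=2$, $p'(0)=2\al$ -- is exactly Gross's argument, which nowhere uses ergodicity, and returns the inequality (LSI) with a positive constant, already excluded. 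In all three cases $\al=0$.
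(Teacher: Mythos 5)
Your construction is correct, and it takes a genuinely different route from the paper's. The paper argues perturbatively: it takes the \emph{constant} matrix function $f=\diag(1+\eta,1-\eta)\mathbf 1$, an off-diagonal mean-zero perturbation $\eps h$, and expands the matrix entropy to second order in $\eps$ (via its Lemma on derivatives of $H(\rho_t)$), so that the left-hand sides of (LSI)/(Rothaus) become $\gamma_f(2h)\eps^2+O(\eps^3)$ with the BKM metric $\gamma_f(2h)=\frac{2}{\eta}\ln\frac{1+\eta}{1-\eta}\,\|h\|_2^2$, while the right-hand sides are $\eps^2$ times $s$-independent ($\eta$-independent) quantities; letting $\eps\to0$ and then $\eta\to1$ makes the logarithmic-mean weight blow up. Your family $f_s=\mathbf 1\otimes C_s+h\otimes p$ exploits exactly the same mechanism (an off-diagonal fluctuation paired against widely separated eigenvalues of the mean, producing a $\ln(\text{eigenvalue ratio})$ factor times the variance of $\hat h$), but non-perturbatively and with a single limit: I checked your eigenvalue asymptotics and the $\ln s$ bookkeeping, and the coefficient $\tfrac12\mu(\hat h^2)$ is right, while $\hat f_s=\hat h\otimes p$, $\|\hat f_s\|_2^2=\mu(\hat h^2)$, $S(\hat f_s)=\Ent_\mu(\hat h^2)$ and $\mu\circ\tr(f_s^*(L\ten\id)f_s)=\mu(hLh)$ are indeed $s$-independent (using $\mu(Lh)=0$). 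What the paper's route buys is that the divergence is read off from an exact closed formula for the BKM form, avoiding your by-hand expansion of $x\ln x$ at the exact eigenvalues; what yours buys is a one-parameter counterexample with all right-hand sides exactly constant and no two-step limit. For (HC) you ultimately do what the paper does — differentiate the Pisier-norm inequality at $t=0^+$ to recover (LSI) — and to make that step complete you would still need the identification $\|f\|_{L_2^a(\bM_2,L_p(\Omega))}^2=\|\,|f|^2\|_{L_1(\bM_2,L_{p/2}(\Omega))}$ and the derivative formula $D(|f|^2\|E_\mu(|f|^2))=\lim_{q\to1^+}\frac{\|\,|f|^2\|_{L_1(\bM_2,L_q)}-\|\,|f|^2\|_1}{q-1}$ as in the paper; the citation of the non-ergodic QMS result alone does not cover a general $(\Omega,\mu)$, but your differentiation argument does. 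Also make sure your non-constant $h$ is chosen in $L_\infty\cap\dom(L)$ (for the LSI/HC cases), a point the paper handles by picking $h_0$ in the Dirichlet domain; this is a minor technicality shared by both arguments.
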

Here, $L_2(\bM_2,L_{p(t)}(\Omega))$ is noncommutative vector valued $L_p$ space introduced by Pisier \cite{pisier1998non}, which was used in \cite{beigi2016hypercontractivity} in defining the CB hypercontractivity. Here, the failure of LSI, HC and Rothaus lemma indicates that there is no hope to obtain the diagram \eqref{eq:diagram} for matrix valued functions. In contrast, our Theorem \ref{thm:main9} shows that CMLSI is the only appropriate notion for logarithmic Sobolev inequalities for matrix valued functions.

\subsection{GNS-symmetric quantum Markov semigroup}
Although the trace symmetric case is broad enough to include all classic Markov chains under detail balance condition,
most physical models of quantum systems have the equilibrium as a non-tracial state. For instance, the quantum Gibbs state at finite temperature is never a trace. The tracial state corresponds to the infinite temperature only in the theoretical limit.
From the mathematical perspective, a general von Neumann algebra may not even admit a non-trivial trace, called Type III von Neumann algebra. These von Neumann algebras models the system in quantum field theory and the only symmetric condition we can hope there is for a state.

It turns out that relative entropy decay formulation of MLSI seamlessly extends to Type III setting under a state symmetric condition. We note that such difference between trace symmetric and state symmetric does not exist for classic systems as every measure gives a trace.
\begin{theorem}\label{thm:main6}
Let $\M$ be a $\sigma$-finite von Neumann algebra equipped with a normal faithful state $\phi$. Suppose $T_t=e^{-tL}:\M\to \M$ is a quantum Markov semigroup that is GNS-symmetric to $\phi$,
\begin{align}\label{eq:introGNSsymmetry}
\phi(xT_t(y))=\phi(T_t(x)y)\pl, \pl  \forall x, y\in \M , \ t\ge 0\pl.
\end{align}
Denote $E:\M\to\N$ as the $\phi$-preserving conditional expectation onto the fixed point subalgebra $\N$ and the CB return time $t_{cb}:=t_{cb}(0.1)$ as before. Then for any state $\rho$ and $t\ge 0$,
\begin{align}D(T_{t,*}(\rho)||E_*(\rho)) \kl e^{-\frac{t}{t_{cb}}}D(\rho||E_*(\rho))\pl .\label{eq:introexpdecayintro}\end{align}
The same inequality holds for $T_t\ten \id_{\cQ}$ for any $\sigma$-finite von Neumann algebra $\cQ$.
\end{theorem}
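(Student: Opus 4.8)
The plan is to reduce the GNS-symmetric case (Theorem \ref{thm:main6}) to the tracial/discrete entropy contraction already established in Theorem \ref{thm:main5}, the obstacle being that in the Type III or general $\sigma$-finite setting there is no trace with which to form the density operators appearing in the proof of Theorem \ref{thm:main5}. First I would set up the appropriate entropic framework: for a GNS-symmetric semigroup with faithful invariant state $\phi$, the relative entropy $D(\psi\|\phi)$ of a normal state $\psi$ is the Araki relative entropy, and $T_{t,*}$ is the predual action on normal states. The key structural input is that GNS-symmetry with respect to $\phi$ forces the fixed-point algebra $\N$ to be globally invariant under the modular automorphism group $\sigma^\phi_t$, so that the $\phi$-preserving normal conditional expectation $E:\M\to\N$ exists (Takesaki's theorem), and $E$ intertwines the modular data. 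This is what allows one to talk about $E_*(\rho)$ as a normal state and to make sense of the decomposition $D(\rho\|E_*(\rho))$ as the ``entropy difference'' analogous to $\tau(\rho\ln\rho - E(\rho)\ln E(\rho))$ in the tracial case.

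The central step is to transfer the discrete entropy contraction inequality \eqref{eq:introSDPI} to this setting. The strategy I would follow is a Haagerup-type reduction (crossed product by the modular group, or equivalently passing to the continuous core) together with the Falcone--Takesaki theory of noncommutative $L_p$-spaces: the semigroup $T_t$ extends canonically to a trace-symmetric semigroup $\tilde T_t$ on the crossed product von Neumann algebra $\tilde\M = \M\rtimes_{\sigma^\phi}\mathbb{R}$, which carries a canonical (semifinite) trace. Because Theorem \ref{thm:main5}'s proof is ``purely information-theoretical and uses only entropic quantities'' (as the authors stress), it should survive verbatim on $\tilde\M$ with the semifinite trace, provided one checks that (a) the conditional expectation $E$ lifts to the conditional expectation onto the core of $\N$ inside $\tilde\M$, and (b) the complete-positivity-order hypotheses $(1-\eps)E\le_{cp}\tilde T_t^{2k}\le_{cp}(1+\eps)E$ are inherited from the corresponding bounds for $T_t$ — which they are, since the lifting is functorial and preserves $\le_{cp}$. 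Then Araki relative entropy on $\M$ coincides, via the standard Haagerup correspondence, with the (trace-based) relative entropy on $\tilde\M$ of the dual weights, so the contraction estimate pushes back down to $\M$. Applying this to a one-parameter semigroup $T_t$ in place of $\Phi$, writing $2k_{cb}(0.1)\sim t_{cb}(0.1)/t$ for small $t$, and iterating/taking the limit $t\to 0$ upgrades the geometric decay to the exponential decay \eqref{eq:introexpdecayintro} with rate $1/t_{cb}$, exactly as Theorem \ref{thm:main5} yields Theorem \ref{thm:main1} in the tracial case.

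For the final ``moreover'' clause — that the same bound holds for $T_t\ten\id_{\cQ}$ for every $\sigma$-finite von Neumann algebra $\cQ$ — I would note that the CB return time $t_{cb}(0.1)$ is by definition stable under tensoring with $\id_{\cQ}$: if $(1-\eps)E\le_{cp}T_t\le_{cp}(1+\eps)E$ then tensoring with the (completely positive) identity on $\cQ$ gives $(1-\eps)(E\ten\id_{\cQ})\le_{cp}(T_t\ten\id_{\cQ})\le_{cp}(1+\eps)(E\ten\id_{\cQ})$, and $E\ten\id_{\cQ}$ is the $(\phi\ten\psi)$-preserving conditional expectation onto the fixed-point algebra of $T_t\ten\id_{\cQ}$ (for any faithful normal state $\psi$ on $\cQ$). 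Since $T_t\ten\id_{\cQ}$ is again GNS-symmetric (with respect to $\phi\ten\psi$) and the whole argument above applies to it with the \emph{same} numerical value of $t_{cb}$, the exponential decay holds for the amplified semigroup with the same rate; this is precisely the CMLSI statement.

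The main obstacle I expect is the first point: verifying that the modular structure is compatible enough for the crossed-product reduction to go through cleanly — specifically that $E$ commutes with $\sigma^\phi$ (so that $E$ lifts to $\tilde\M$ and the dual-weight relative entropies match the Araki ones), and that the semigroup $T_t$, being only GNS- and not KMS-symmetric a priori, still satisfies $\sigma^\phi_s\circ T_t = T_t\circ\sigma^\phi_s$. This commutation is in fact a known consequence of GNS-symmetry (it is the reason GNS-symmetric QMS are the natural class in the non-tracial setting), but spelling it out with the correct references to Haagerup's reduction theorem, Takesaki's conditional expectation theorem, and the Falcone--Takesaki $L_p$-theory is where the technical weight of the proof lies. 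Everything downstream — the entropic contraction, the passage to continuous time, and the tensorization — is then a faithful transcription of the tracial arguments already given.
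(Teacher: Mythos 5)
Your overall strategy --- crossed-product reduction to a trace-symmetric extension, transfer of the CP-order hypotheses, identification of relative entropies upstairs and downstairs, then the continuous-time limit and the tensor-stability of $t_{cb}$ --- is the same skeleton as the paper's proof, and your last two paragraphs (commutation of $T_t$ with $\sigma^\phi$ via GNS-symmetry, $E\ten\id_\cQ$ being the conditional expectation for the amplified semigroup, the limit $2k_{cb}\sim t_{cb}/t$) are all correct and match the paper. The gap is in the central reduction step. You pass to the continuous core $\M\rtimes_{\sigma^\phi}\mathbb{R}$, which carries only a \emph{semifinite} trace, and assert that the proof of Theorem \ref{thm:main5} ``survives verbatim'' there. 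It does not: the tracial arguments of Section 2 are formulated for a finite von Neumann algebra with a normal faithful tracial \emph{state}. The entropy difference uses $H(\rho)=D(\rho\|1)$ with $1$ a density operator of the trace, the regularization $\rho_\eps=(1-\eps)\rho+\eps 1$ requires $1$ to be a state, and Lemma \ref{lemma:approximate} works on the bounded state space $S_B(\M)$ ($\mu_1 1\le\rho\le\mu_2 1$) via the BKM metric --- none of which makes sense when $\tau(1)=\infty$. Moreover, for the crossed product by $\mathbb{R}$ there is no normal conditional expectation from the core onto $\M$, so the clean two-sided data-processing argument that the paper uses to identify $D(\rho\|\sigma)$ with an entropy computed in the reduction (Lemma \ref{lemma:ultra}) is unavailable; you would have to carry out the Haagerup/Falcone--Takesaki dual-weight correspondence for Araki relative entropy in detail, which is exactly the technical weight you defer.

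The paper avoids both problems by using Haagerup's reduction with the \emph{discrete} group $G=\bigcup_n 2^{-n}\mathbb{Z}$: inside $\hat\M=\M\rtimes_{\alpha^\phi}G$ one has an increasing net of genuinely \emph{finite} subalgebras $\M_n$ (centralizers of $\psi_n$) with tracial states $\tau_n$, normal conditional expectations $E_{\M_n}$ converging strongly to the identity, and a normal conditional expectation $E_\M:\hat\M\to\M$. The restrictions $\Phi_n=\hat\Phi|_{\M_n}$ are trace-symmetric and inherit the CP-order bounds (Lemmas \ref{lemma:commute} and \ref{lemma:transfer}), so the finite-trace Lemmas \ref{lemma:difference} and \ref{lemma:approximate} apply literally, and relative entropies are recovered by $D(\rho\|\sigma)=D(\hat\rho\|\hat\sigma)=\lim_n D(\rho_n\|\sigma_n)$ using only data processing and lower semicontinuity (Lemma \ref{lemma:ultra}). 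To repair your proposal you would either have to (a) re-prove the Section 2 lemmas for semifinite traces and supply the dual-weight entropy identification for the continuous core, or (b) replace the continuous core by the discrete Haagerup reduction with its finite approximating subalgebras, which is what the paper does.
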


The exponential decay of relative entropy \eqref{eq:introexpdecayintro} can be viewed as an equivalent definition of MLSI in Type III cases\footnotemark.\footnotetext{The equivalent logarithmic Sobolev form in Type III setting is studied in a paper under preparation by Vernooij, Wirth and Zhang.} The technical tool to extend Theorem \ref{thm:main1} to Theorem \ref{thm:main6} is the Haagerup reduction \cite{haagerup2010reduction}, which is a method to derive results for type {III} von Neumann algebras by reducing to the cases of tracial von Neumann algebras. We remark that a similar result to Theorem \ref{thm:main5} also holds for GNS symmetric quantum channel (see Theorem \ref{thm:GNSsymmetric}).

In finite dimensions, the positivity of CMLSI constant was proved \cite{gao2022complete} that
\begin{align}  \al_c\ge \frac{\lambda}{2C_{cb}(E)}\label{eq:GRbound1}\end{align}
where $\lambda$ is the spectral gap of the generator and $C_{cb}(E)$ is the completely Pimsner-Popa index
\[C(E)=\inf \{c>0\pl |\pl  x\le c\:E(x)\pl,  \forall \pl x\in \M_+\}\pl,  \pl C_{cb}(E):=\sup_{\cQ}C(E\ten \id_{\cQ})\pl. \]
As an application of Theorem \ref{thm:main6}, we have the following improvement to \eqref{eq:GRbound1}.
\begin{cor}\label{cor:main7}
For GNS-symmetric quantum Markov semigroups,
\begin{align}\al_1\ge \al_{c}\ge \frac{\lambda}{2\ln  (10 C_{cb}(E))}\label{eq:introfdQMS}.\end{align}
\end{cor}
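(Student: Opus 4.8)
The plan is to deduce the corollary from Theorem~\ref{thm:main6} by bounding the CB return time $t_{cb}(0.1)$ in terms of the spectral gap $\lambda$ and the index $C_{cb}(E)$. Theorem~\ref{thm:main6} (equivalently \eqref{eq:inversetcb}) already gives $\al_1\ge\al_c\ge\frac{1}{2\,t_{cb}(0.1)}$, so it suffices to prove the mixing-time estimate
\[ t_{cb}(\eps)\ \le\ \frac{1}{\lambda}\,\ln\frac{C_{cb}(E)}{\eps}\,,\qquad 0<\eps<1, \]
which for $\eps=0.1$ reads $t_{cb}(0.1)\le\frac1\lambda\ln\bigl(10C_{cb}(E)\bigr)$, and then invoke $\al_1\ge\al_c$.

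Write $S_t:=T_t-E$. Two elementary observations do the work. First, since $E$ is the fixed-point conditional expectation one has $E T_s=T_s E=E=E^2$, which yields the ``nilpotent'' identity $S_r\circ S_s=S_{r+s}$; in particular $S_t=S_{t/2}\circ S_{t/2}$ with $S_{t/2}$ again GNS-symmetric. Second, the $\lambda$-spectral gap persists under amplification, since $L\otimes\id_{\cQ}$ has the same nonzero spectrum and fixed-point space $\N\bar\otimes\cQ$; hence $T_t$ contracts completely off the fixed points: as an operator on the GNS $L_2$-space (which in finite dimensions is the Hilbert--Schmidt class $S_2$), $\|S_t\colon S_2\to S_2\|_{cb}\le e^{-\lambda t}$, and likewise at time $t/2$. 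Factoring $S_t=S_{t/2}\circ S_{t/2}$ through $S_2$ and using GNS-symmetry (so $S_{t/2}$ is self-adjoint and, by duality, $\|S_{t/2}\colon S_1\to S_2\|_{cb}=\|S_{t/2}\colon S_2\to S_\infty\|_{cb}$) together with $\|\id\colon S_2\to S_\infty\|_{cb}\le 1$, one gets
\[ \|S_t\colon S_1\to S_\infty\|_{cb}\ \le\ \|S_{t/2}\colon S_2\to S_\infty\|_{cb}^{2}\ \le\ \|S_{t/2}\colon S_2\to S_2\|_{cb}^{2}\ \le\ e^{-\lambda t}. \]
It remains to pass from this completely bounded decay to the $cp$-order condition defining $t_{cb}$. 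On Choi operators, $(1-\eps)E\le_{cp}T_t\le_{cp}(1+\eps)E$ is equivalent to $-\eps\,J(E)\le J(S_t)\le\eps\,J(E)$; moreover $0\le_{cp}T_t\le_{cp}C_{cb}(E)\,E$ (immediate from $ET_t=E$ and the definition of the index), so $\supp J(S_t)\le\supp J(E)=:P$ and $J(E)\ge C_{cb}(E)^{-1}P$. Since $\|J(S_t)\|=\|S_t\colon S_1\to S_\infty\|_{cb}$,
\[ \pm J(S_t)\ \le\ \|J(S_t)\|\,P\ \le\ e^{-\lambda t}\,P\ \le\ C_{cb}(E)\,e^{-\lambda t}\,J(E), \]
which is $\le\eps\,J(E)$ once $t\ge\frac1\lambda\ln\frac{C_{cb}(E)}{\eps}$; this proves the estimate, and with $\al_c\ge\frac1{2t_{cb}(0.1)}$ we obtain $\al_1\ge\al_c\ge\frac{\lambda}{2\ln(10C_{cb}(E))}$.

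The step I expect to demand the most care is arranging that $C_{cb}(E)$ enters only under a logarithm: bounding $S_t$ against $E$ through the index alone reproduces only the linear estimate $\al_c\ge\lambda/(2C_{cb}(E))$ of \eqref{eq:GRbound1}, and the logarithmic improvement genuinely uses the splitting $S_t=S_{t/2}\circ S_{t/2}$ together with the correct normalization of scales, so that the spectral gap supplies the true exponential decay while the index appears exactly once, through $J(E)\ge C_{cb}(E)^{-1}P$ (equivalently $\lambda_{\min}^{+}(J(E))\ge C_{cb}(E)^{-1}$, a small lemma on conditional expectations). For general $\sigma$-finite von Neumann algebras one replaces the Schatten classes by the ($E$-relative) $L_p$-scale of \cite{gao2020fisher} and transports the whole argument through the Haagerup reduction, exactly as Theorem~\ref{thm:main1} was upgraded to Theorem~\ref{thm:main6}.
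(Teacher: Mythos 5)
Your top-level strategy is the same as the paper's: combine Theorem \ref{thm:main6} (i.e.\ $\al_1\ge\al_c\ge\frac{1}{2t_{cb}(0.1)}$) with the mixing-time estimate $t_{cb}\le\frac{1}{\la}\ln(10\,C_{cb}(E))$, which is exactly Proposition \ref{prop:finitetcb3} with $t_0=0$ and $C_0=C_{cb}(E)$. The gap is in your proof of that mixing-time estimate. The decisive false step is the claim $\norm{\id:S_2\to S_\infty}{cb}\le 1$, which you use to conclude the index-free decay $\norm{T_t-E:S_1\to S_\infty}{cb}\le e^{-\la t}$. The inclusion of $L_2$ into $L_\infty$ is contractive at the Banach-space level (for the unnormalized Schatten norms), but it is \emph{not} a complete contraction, and complete boundedness is precisely what your Choi-operator step needs. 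Concretely, for $\N=\bC 1\subset\bM_d$ one has $\norm{\id:L_1(\bM_d,\tau)\to\bM_d}{cb}=C_{cb}(E)=d^2$, and for the depolarizing semigroup $T_t=e^{-t}\id+(1-e^{-t})E$ (with $\la=1$) one gets $\norm{T_t-E:L_1(\tau)\to\bM_d}{cb}=e^{-t}\,\Theta(C_{cb}(E))$, not $e^{-t}$; equivalently $\norm{J(T_t-E)}{}\sim d\,e^{-t}$ in your normalization. So the exponential decay from $L_1$ to $L_\infty$ in cb-norm cannot be dimension-free: the index (or an ultracontractivity constant $C_0$ at time $t_0$) must enter the decay estimate itself. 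That is how the paper argues: factor $T_{t+t_0}-E=(T_t-E)T_{t_0}$, use $\norm{T_{t_0}:L_\infty^1(\N\subset\M,\phi)\to\M}{cb}\le C_0$ (with $t_0=0$, $C_0=C_{cb}(E)$) together with the $L_2$-gap, obtaining $\norm{T_t-E:L_\infty^1(\N\subset\M,\phi)\to\M}{cb}\le C_{cb}(E)e^{-\la t}$, and then convert this cb-norm bound into the cp-order condition \emph{losslessly} via the equivalence \eqref{eq:CPCB} (Proposition \ref{prop:equivalence}).

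Your subsequent Choi-operator step $J(E)\ge C_{cb}(E)^{-1}P$ inserts the index in the wrong place: it is not needed if one uses \eqref{eq:CPCB}, and if one repairs your decay estimate to the true bound it produces an extra power of $C_{cb}(E)$ inside the logarithm, so the stated constant $\frac{\la}{2\ln(10\,C_{cb}(E))}$ is not recovered; that your accounting gives the right answer for the depolarizing example is an accidental cancellation of the underestimate in the decay step against the slack in $J(E)\ge C_{cb}(E)^{-1}P$ (there $\la_{\min}(J(E))=d^{-1}=C_{cb}(E)^{-1/2}$). Finally, for non-ergodic $\N$ the plain Choi operator and Schatten scale must be replaced by the module Choi operator and the conditional spaces $L_\infty^1(\N\subset\M,\phi)$; the paper's Proposition \ref{prop:finitetcb3} handles the GNS-symmetric case directly through these weighted conditional $L_p$ spaces, and no Haagerup reduction is needed for this particular step, so your closing remark does not by itself fill that part of the argument.
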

We construct an example of non-commutative birth-death process showing that the above estimate is tight up to some absolute constants.

\subsection{Concentration inequalities}One important application of MLSI is to derive concentration inequalities.
Such implication is discovered by Otto and Villani \cite{otto2000generalization} for classical Markov semigroups (see also the work of Erbar and Maas \cite{erbar2012ricci} for the discrete case), and recently extended to the noncommutative setting in \cite{rouze2019concentration,gao2020fisher,carlen2020non}. As an application of our MLSI for GNS-symmetric semigroups, we obtain concentration inequalities for a general invariant state $\phi$. Recall that the Lipschitz semi-norm
 \[ \|x\|_{\Lip} \lel: \max\{ \norm{\Gamma_L(x,x)}{}^{\frac{1}{2}}\pl, \pl \norm{\Gamma_{L}(x^*,x^*)}{}^{\frac{1}{2}}\} \]
is defined through the \emph{gradient form} (or Carr\'e du Champ operator)
 \[ \Gamma_L(x,y) \lel \frac{1}{2}\Big(L(x^*)y+x^*L(y)-L(x^*y)\Big)\pl, \pl \forall x,y\in \text{dom}(L)\pl. \]
\begin{theorem}\label{thm:main8} Let $\M$ be a $\sigma$-finite von Neumann algebra and $T_t=e^{-tL}$ be a GNS-$\phi$-symmetric quantum Markov semigroup. Suppose $T_t$ satisfies MLSI with parameter $\al>0$. There exists a universal constant $c$ such that for $2\le p <\infty$
 \[ \al\|x-E(x)\|_{L_p(\M,\phi)}\kl c\sqrt{p}\norm{x}{\emph{Lip}} \pl .\]
 Moreover, for any $t>0$, there exist a projection $e$ such that
 \[ \|e(x-E(x))e\|_{\infty} \le t \quad \mbox{and} \quad \phi(1-e)\le 2 \exp( -\frac{\al^2t^2}{16ec^2\norm{x}{\emph{Lip}}^2}) \pl .\]
\end{theorem}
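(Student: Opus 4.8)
The plan is a noncommutative Herbst argument routed through a transportation-cost inequality. Two reductions come first. By the Haagerup reduction used for Theorem~\ref{thm:main6}, one may assume $\M$ is a tracial von Neumann algebra with $\tau=\phi$, provided one checks that the Lipschitz seminorm, the trace-preserving conditional expectation $E$, and the MLSI hypothesis all transfer to the reducing algebra and that the final estimates descend back to $(\M,\phi)$. Next, writing $x=x_1+ix_2$ with $x_1,x_2$ self-adjoint and using the triangle inequality together with $\norm{x_j}{\Lip}\le\norm{x}{\Lip}$ (a consequence of the complete positivity of $\Gamma_L$), one reduces the moment bound to the case $x=x^*$, and after recentering and rescaling to $E(x)=0$, $\norm{x}{\Lip}=1$; the target is then simply $\norm{x}{L_p}\le c'\sqrt{p/\al}$ for all $p\ge 2$. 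Granting this moment bound, the second assertion follows by taking $e=\mathbf{1}_{[-t,t]}(x-E(x))$: then $\norm{e(x-E(x))e}{\infty}\le t$ by the spectral calculus, while Markov's inequality gives $\phi(1-e)=\phi\big(\mathbf{1}_{(t,\infty)}(|x-E(x)|)\big)\le t^{-2p}\norm{x-E(x)}{L_{2p}}^{2p}$, and optimizing the free parameter $p$ over $[2,\infty)$ produces the stated Gaussian decay, all the losses (the factor $2$ from splitting $x$, the optimization slack, passing from $\norm{(x-E(x))e}{\infty}$ to $\norm{e(x-E(x))e}{\infty}$) being absorbed into the universal constant $c$.

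The heart of the argument is the sub-Gaussian Laplace-transform bound $\tau(e^{\la x})\le e^{\la^2/(2\al)}$ for $\la\in\bR$ (with $x=x^*$, $E(x)=0$, $\norm{x}{\Lip}=1$). I would derive it from the noncommutative transportation-cost inequality of \cite{rouze2019concentration,gao2020fisher,carlen2020non}. The MLSI hypothesis is the exponential decay $D(T_{t,*}\rho\|E_*\rho)\le e^{-2\al t}D(\rho\|E_*\rho)$; integrating the entropy production $I_L(T_{t,*}\rho)=-\tfrac{d}{dt}D(T_{t,*}\rho\|E_*\rho)$ along the gradient flow (the quantum Otto--Villani/Bobkov--G\"otze argument) gives: for every density $\rho$ and every $z$ with $\norm{z}{\Lip}\le 1$, $|\tau(\rho(z-E(z)))|\le\sqrt{2D(\rho\|E_*\rho)/\al}$. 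Apply this with $z=x$ and the tilted density $\rho_\la:=e^{\la x}/Z(\la)$, $Z(\la):=\tau(e^{\la x})$, and set $\Lambda:=\log Z$. A direct computation gives $\tau(\rho_\la x)=\Lambda'(\la)$ and $D(\rho_\la\|E_*\rho_\la)=\big(\la\Lambda'(\la)-\Lambda(\la)\big)-\tau\big(E_*\rho_\la\log E_*\rho_\la\big)\le\la\Lambda'(\la)-\Lambda(\la)$, the last inequality because $\tau(\sigma\log\sigma)\ge 0$ for every $\tau$-density $\sigma$; this is exactly where the operator-valued, non-ergodic conditioning enters, and it does so with a favourable sign. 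Hence $\al\,\Lambda'(\la)^2\le 2\big(\la\Lambda'(\la)-\Lambda(\la)\big)$: the number $\Lambda'(\la)$ (real, since $\Lambda$ is real-analytic on $\bR$) makes the upward-opening quadratic $u\mapsto\al u^2-2\la u+2\Lambda(\la)$ nonpositive, so its discriminant is nonnegative, i.e.\ $\Lambda(\la)\le\la^2/(2\al)$, which is the claimed bound; replacing $x$ by $-x$ covers $\la<0$.

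From the Laplace-transform bound the remaining steps are classical: the operator inequality $\mathbf{1}_{(s,\infty)}(x)\le e^{\la(x-s)}$ with the Chernoff bound gives the Gaussian tail $\phi\big(\mathbf{1}_{(s,\infty)}(|x|)\big)\le 2e^{-\al s^2/2}$; integrating the layer-cake identity $\norm{x}{L_p}^p=\int_0^\infty p\,s^{p-1}\,\phi\big(\mathbf{1}_{(s,\infty)}(|x|)\big)\,ds$ yields $\norm{x}{L_p}\le c'\sqrt{p/\al}$; and combining with the reductions and the Markov/optimization step of the first paragraph gives the two asserted inequalities with a universal $c$.

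The genuine obstacle is the transportation-cost inequality invoked above, whose proof rests on the ``metric speed controls Fisher information'' estimate $|\tfrac{d}{dt}\tau(T_{t,*}\rho\cdot z)|\le\norm{z}{\Lip}\,I_L(T_{t,*}\rho)^{1/2}$ (up to an absolute constant). Since the gradient form $\Gamma_L$ is not a derivation, there is no pointwise chain rule as in the classical diffusive case; instead one must pass through the derivation $\p$ associated to the Dirichlet form, the Duhamel expansion of $\p$ applied to functional-calculus elements, and operator-convexity estimates for the BKM (relative-entropy) metric --- the machinery developed in \cite{rouze2019concentration,gao2020fisher,carlen2020non} --- and then verify that it remains valid when the trivial fixed-point algebra $\bC1$ is replaced by a general $\N$. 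A secondary technical point is to confirm that the Haagerup reduction of Theorem~\ref{thm:main6} is simultaneously compatible with the Lipschitz seminorm, the $\phi$-preserving conditional expectation, and the exponential tilting, so that the tracial computation indeed implies the general GNS-symmetric statement.
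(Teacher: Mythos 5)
Your core tracial argument (MLSI $\Rightarrow$ a $W_1$-type transportation/Bobkov--G\"otze inequality $\Rightarrow$ Herbst via tilted densities) is a legitimate alternative to what the paper does for the tracial ingredient: the paper simply cites the tracial concentration theorem \cite[Theorem 6.10]{gao2020fisher} and spends its effort on the reduction machinery. However, your treatment of the two genuinely non-tracial steps has gaps. First, the descent of the moment bound from the Haagerup-reduced tracial algebras back to $(\M,\phi)$ is not the routine check you describe. The quantity in the theorem is the weighted norm $\|x-E(x)\|_{L_p(\M,\phi)}$, while your tracial computation lives in $\M_n$ with trace $\tau_n$, and $x\notin\M_n$ (only $E_{\M_n}(x)$ is). The plain weighted $L_p$ norms have no invariance under change of reference state, so a $\tau_n$-moment bound for $E_{\M_n}(x)$ does not transfer to a $\phi$-weighted bound for $x$. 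The paper's proof works precisely because it proves the stronger bound in the conditional space $L_\infty^p(\N\subset\M,\phi)$, which \emph{is} independent of the choice of $E$-invariant reference state (Lemma \ref{lemma:independent}), is $1$-complemented under the commuting squares of the reduction (Lemma \ref{lemma:commuting}), and converges along the reduction (Lemma \ref{lemma:limit}); this amalgamated-norm mechanism is exactly what your plan is missing, and without it the ``descend back to $(\M,\phi)$'' step has no proof.

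Second, your derivation of the projection statement fails in the non-tracial setting. You take the spectral projection $e=\mathbf{1}_{[-t,t]}(x-E(x))$ and invoke ``Markov's inequality'' in the form $\phi\big(\mathbf{1}_{(t,\infty)}(|x-E(x)|)\big)\le t^{-2p}\|x-E(x)\|_{L_{2p}(\M,\phi)}^{2p}$. This inequality is not valid in general: $\|y\|_{L_{2p}(\M,\phi)}$ is the norm of $d_\phi^{1/4p}yd_\phi^{1/4p}$, and since $d_\phi$ need not commute with $|y|$, it does not dominate the state-moment $\phi(|y|^{2p})$; Chebyshev with respect to $\phi$ only controls the latter. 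This is exactly why the paper proves the interpolation-based Lemma \ref{lemma:cheb} (Kosaki complex/real interpolation of $L_{2p}^c(\M,\phi)$, decomposition $y=y_1+y_2$, spectral projection of the small piece), and why the resulting projection is explicitly \emph{not} a spectral projection of $|x-E(x)|$ and depends on $\phi$ --- a point the paper remarks on after Corollary \ref{cor:concentration}. As stated, your Chernoff/Markov step only works when $\phi$ is a trace, so the second assertion of the theorem is not established by your argument. (A lesser caveat: the transportation-cost inequality you rely on is itself deferred to \cite{rouze2019concentration,gao2020fisher,carlen2020non} and would still need the non-ergodic, general-$\N$ version with the $\Gamma_L$-Lipschitz norm; you flag this honestly, but it is an additional item that your proof does not supply.)
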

One example is the following matrix concentration inequality that can be compared to the work of Tropp \cite{tropp2015introduction}.
\begin{exam}\label{exam:main11}{\rm Let $S_1,\cdots,S_n$ be an independent sequence of random $d\times d$-matrices $S_1,\cdots,S_n$ such that
$\norm{S_i-\mathbb{E}S_i}{\infty}\le M \pl , \pl a.e.$. We have the matrix Bernstein inequality that for the sum $Z=\sum_{k=1}^nS_k$
\[ \bE\norm{Z-\bE Z}{\infty}\le 2ce^{-1/2} \sqrt{ (v(Z)+M^2)\log d}\]
and  the matrix Chernoff bound
\[ P(|Z-\mathbb{E}Z|>t)\le 2 d\exp\Big( -\frac{t^2}{64ec^2(v(Z)+M^2)}\Big)\pl.\]
where
\[v(Z)=\max \{ \norm{\bE((Z-\mathbb{E}Z)^*(Z-\mathbb{E}Z))}\pl, \norm{\bE((Z-\mathbb{E}Z)^*(Z-\mathbb{E}Z))}{}\}\pl.\]}
\end{exam}



\subsection{Outline of the paper} We organize our paper as follows to make it accessible for readers from different backgrounds.  In Section 2, we briefly review some quantum information basics in setting of tracial von Neumann algebras. We prove our key entropy difference lemma (Lemma \ref{lemma:difference}) and the improved data processing inequality (Theorem \ref{thm:unital}, a slightly stronger version of Theorem \ref{thm:main5}).  Based on that, we discuss the functional inequalities of symmetric quantum Markov semigroups in Section 3,  prove the main Theorem \ref{thm:main1} and its consequences Theorem \ref{thm:main9} and Corollary \ref{cor:main3} for classical Markov semigroups.  We also illustrate the failure of matrix-valued logarithmic Sobolev ineuqality  Propositon \ref{prop:main4} (restated as Proposition \ref{prop:rothaus}). The discussion to this point does not involves much technicality beyond the basic concepts of finite von Neumann algebras, for which the readers are welcome to think the matrix algebra $\bM_n$ and matric valued functions $L_\infty(\Omega, \bM_n)$ as examples.

After that in Section 4 we dive into the GNS-symmetric cases, for which we discuss the Haagerup reduction for channels and entropic quantities, and derive Theorem \ref{thm:main6} (restated as Theorem \ref{thm:GNSsemigroup}) and Corollary \ref{cor:main7} (see Section \ref{sec:symmetricQMS}). Section 5 collects applications Theorem \ref{thm:main8} (restated as Theorem \ref{thm:concentration}) and Example \ref{exam:main11} (restated as Example \ref{exam:randommatrix}) of our general results. We conclude the paper in Section 6 with some discussions on questions left in open.\\

\noindent {\bf Notations.} We use calligraphic letters $\M,\N$ for von Neumann algebras, and denote $\bM_n$ as the algebra of $n\times n$ as complex matrices. We use $\tau$ as the trace on von Neumann algebra,  and $\tr$ as the standard matrix trace. The identity operator is denoted by $1$, and the identity map between spaces is denoted as $\id$, sometimes specified with subscript like $1_\M$ and $\id_{\bM_n}$. We write $ a^*$ as the adjoint element of $a$ and $\Phi_*$ for a pre-adjoint map of $\Phi$.\\

\noindent {\bf Acknowledgement.} The research of LG is partially supported by NSF grant DMS-2154903.
LG is also grateful to the support of AMS-Simons Travel Grants. Nicholas LaRacuente was supported as an IBM Postdoc at The University of Chicago. MJ was partially supported by NSF Grant DMS 1800872 and NSF RAISE-TAQS 1839177. HL acknowledges support by the DFG cluster of excellence 2111 (Munich Center for Quantum Science and Technology).


\section{Entropy Contraction of Symmetric Markov maps}\label{sec:unital}
\subsection{States, channels and entropies}
We briefly review some basic information-theoretic concepts in the noncommutative setting.
 Recall that a von Neumann algebra $\M$ is a unital $*$-subalgebra of $B(H)$ closed under weak$^*$-topology. A linear functional $\phi:\M\to\mathbb{C}$ is called a state if it is positive $\phi(x^*x)\ge 0$ for any $x\in \M$ and in addition  $\phi(1)=1$.
 We say $\phi$ is normal if $\phi$ is weak$^*$-continuous. Throughout the paper, we will only consider normal states and denote $S(\M)$ as the normal state space of $\M$. We write $s(\phi)$ as the support projection of a state $\phi$, which is the minimal projection $e$ such that $\phi(x)=\phi(exe)\pl, \forall \pl x\in \M$. A normal state $\phi$ is faithful if $s(\phi)=1$. For two normal states $\rho$ and $\sigma$, the relative entropy
is defined as
\begin{align}D(\rho||\sigma)=\begin{cases}
                    \lan \xi_\rho| \log\Delta(\rho/\sigma) |\xi_\rho \ran\pl, & \mbox{if } s(\rho)\le  s(\sigma)\\
                    +\infty, & \mbox{otherwise}.
                  \end{cases}\pl,\label{eq:araki}\end{align}
where $\xi_\rho$ is a  vector implementing the state $\rho$ and $\Delta(\rho/\sigma)$ is the relative modular operator.  
This form of definition \eqref{eq:araki} was introduced by Araki \cite{araki1976relative} for general von Neumann algebras.

In this section, we will focus on the case that $\M$ is a finite von Neumann algebra. Namely, $\M$ is equipped with a normal faithful tracial state $\tau$. The tracial noncommutative $L_p$-space $L_p(\M,\tau)$ is defined as the completion of $\M$ with respect to the $p$-norm $\norm{a}{p}=\tau(|a|^{p})^{1/p}$. We identify $L_\infty(\M)\cong \M$, and also $ L_1(\M)\cong \M_*$ via the trace duality
\begin{align*}d_\phi\in L_1(\M)\longleftrightarrow \phi\in \M_*,\pl  \phi(x)=\tau(d_\phi x)\pl.\end{align*}
We say $\rho \in L_1(\M)$ is a density operator if $\rho\ge 0$ and $\tau(\rho)=1$, which corresponds to a normal state in the above identification. For example, the identity element $1$ is the density operator corresponding to the trace $\tau$ itself.
We will often identify normal states with their density operators if no ambiguity. Via this identification, relative entropy reduces to the original definition of Umegaki \cite{umegaki1962conditional},
\[D(\rho||\sigma)=\tau(\rho\log \rho-\rho\log \sigma )\]
provided this trace is well-defined. For example, for $\rho$ and $\sigma$ in the bounded state space
\begin{align*}
S_{b}(\M)=\{\rho\in S(\M)\pl |\pl  \mu_1 1\le \rho \le \mu_2 1\text{ for some } \mu_1,\mu_2>0 \}.
\end{align*}
the Umegaki's formula is always well-defined and finite. For this reason, we will mostly work with bounded states from $S_{b}(\M)$ and derive results for general case $S(\M)$ by approximation.
 When the second state $\sigma=1$, this gives the entropy functional
\[H(\rho):=D(\rho|| 1)=\tau(\rho\log\rho)\pl.\] Note that the standard convention of von Neumann entropy in quantum information literature is often with an additional negative sign .

We say a linear map $T:\M\to \M$ is a quantum Markov map if $T$ is normal, unital and completely positive. The pre-adjoint map $T_*:\M_*\to \M_*$ is called a quantum channel, which sends normal states to normal states. In the tracial setting, $T_*:L_1(\M)\to L_1(\M)$ given by \[\tau(T_*(\rho)y)=\tau(\rho T(y))\pl, \pl \forall\pl  y\in \M, \rho\in L_1(\M)\] is completely positive and trace preserving (in short, CPTP). One fundamental inequality about quantum channel is the data processing inequality (also called monotonicity of relative entropy)
\begin{align} D(\rho||\sigma)\ge D(T_*(\rho)||T_*(\sigma))\pl, \pl \forall \rho,\sigma\in S(\M)\pl. \label{eq:DPI} \end{align}
The data processing inequality states that two quantum states cannot become more distinguishable under a quantum channel. We note that the above inequality remains valid for $T$ being positive but not necessarily completely positive \cite{MR17,frenkel2022integral}. The main technical result of this work is an improved data processing inequality for quantum channels under symmetric conditions.

\subsection{Entropy contraction for unital quantum channels}\label{sec:unital}
We start our discussion on entropy contraction of unital quantum channels. A quantum channel $\Phi:L_1(\M)\to L_1(\M)$ is unital if $\Phi( 1)=1$. The restriction of $\Phi$ on $\M$ is bounded and normal, thus $\Phi$ can be viewed as the $L_1$-norm extension of its restriction $\Phi:\M\to\M$. By duality, its adjoint $\Phi^*:\M\to \M$
 is a trace preserving quantum Markov map hence also extends to a unital quantum channel.

 For a state $\rho$ with $H(\rho)<\infty$, we define the entropy difference of $\Phi$,
\[D_\Phi(\rho):=H(\rho)-H(\Phi(\rho))\pl.\]
Note that $D_\Phi(\rho)\ge 0$ because of data processing inequality \eqref{eq:DPI} and $\Phi(1)=1$,
\[H(\rho)=D(\rho||1)\geq D(\Phi(\rho)||\Phi(1))= H(\Phi(\rho))\pl.\]
We start with a simple but key lemma in our argument.
\begin{lemma}[Entropy difference lemma]\label{lemma:difference} Let $\Phi:L_1(\M)\to L_1(\M)$ be a unital quantum channel and $\Phi^*$ be its adjoint. Then for two bounded states $\rho,\omega\in {S_B}(\M)$,
 \[ D(\rho\|\Phi^*\Phi(\omega))
 \le D_\Phi(\rho)  + D(\rho\|\omega) \le \tau((\id-\Phi^*\Phi)(\rho)\ln \rho ) + D(\rho\|\omega)  \pl .\]
\end{lemma}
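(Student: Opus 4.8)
The statement to prove is an "entropy difference lemma": for bounded states $\rho,\omega\in S_B(\M)$ with $\Phi$ a unital quantum channel,
\[
D(\rho\|\Phi^*\Phi(\omega)) \le D_\Phi(\rho) + D(\rho\|\omega) \le \tau\big((\id-\Phi^*\Phi)(\rho)\ln\rho\big) + D(\rho\|\omega).
\]
The plan is to attack the second inequality first, since it is the "soft" half, and then the first inequality, which is the real content.

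**The easy inequality.** For the rightmost inequality I would simply expand $D_\Phi(\rho) = H(\rho) - H(\Phi(\rho)) = \tau(\rho\ln\rho) - \tau(\Phi(\rho)\ln\Phi(\rho))$. So the claim $D_\Phi(\rho)\le \tau((\id-\Phi^*\Phi)(\rho)\ln\rho)$ is equivalent to $\tau(\Phi^*\Phi(\rho)\ln\rho)\le \tau(\Phi(\rho)\ln\Phi(\rho))$. Since $\Phi^*$ is the adjoint of $\Phi$ with respect to $\tau$, $\tau(\Phi^*\Phi(\rho)\ln\rho) = \tau(\Phi(\rho)\,\Phi(\ln\rho))$. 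Now $\Phi$ is a unital quantum channel (positive, unital, $\tau$-preserving after adjoining), and $\ln$ is operator concave, so $\Phi(\ln\rho)\le \ln\Phi(\rho)$ by the operator Jensen inequality (Choi's inequality for positive unital maps). Pairing against the positive operator $\Phi(\rho)\ge 0$ gives $\tau(\Phi(\rho)\Phi(\ln\rho))\le\tau(\Phi(\rho)\ln\Phi(\rho))$, which is exactly what is needed. Boundedness of $\rho$ ensures $\ln\rho$ is a bona fide bounded operator so all these traces are finite and the manipulations are legitimate.

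**The main inequality.** For the left inequality, rewrite both sides using Umegaki's formula. We have $D(\rho\|\Phi^*\Phi(\omega)) = \tau(\rho\ln\rho) - \tau(\rho\ln\Phi^*\Phi(\omega))$ and $D_\Phi(\rho)+D(\rho\|\omega) = \tau(\rho\ln\rho)-\tau(\Phi(\rho)\ln\Phi(\rho)) + \tau(\rho\ln\rho)-\tau(\rho\ln\omega)$. After cancelling one $\tau(\rho\ln\rho)$ from each side, the desired inequality becomes
\[
-\tau(\rho\ln\Phi^*\Phi(\omega)) \le \tau(\rho\ln\rho) - \tau(\Phi(\rho)\ln\Phi(\rho)) - \tau(\rho\ln\omega),
\]
i.e. $\tau(\Phi(\rho)\ln\Phi(\rho)) \le \tau(\rho\ln\rho) - \tau(\rho\ln\omega) + \tau(\rho\ln\Phi^*\Phi(\omega))$. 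The plan here is to recognize the right-hand side in terms of relative entropies: $\tau(\rho\ln\rho)-\tau(\rho\ln\omega) = D(\rho\|\omega)$, and I want to compare $\tau(\rho\ln\Phi^*\Phi(\omega))$ with $\tau(\Phi(\rho)\ln\Phi(\omega))$. Indeed $\tau(\rho\ln\Phi^*\Phi(\omega))$ versus $\tau(\Phi^*\Phi(\rho)\ln\omega)$ — this is where the self-adjointness of $\Phi^*\Phi$ (it is symmetric since $(\Phi^*\Phi)^* = \Phi^*\Phi$) should be exploited. The cleanest route is probably: apply the ordinary data processing inequality $D(\rho\|\omega)\ge D(\Phi(\rho)\|\Phi(\omega))$, which expands to $\tau(\rho\ln\rho)-\tau(\rho\ln\omega)\ge \tau(\Phi(\rho)\ln\Phi(\rho)) - \tau(\Phi(\rho)\ln\Phi(\omega))$, hence
\[
\tau(\Phi(\rho)\ln\Phi(\rho)) \le D(\rho\|\omega) + \tau(\Phi(\rho)\ln\Phi(\omega)) = D(\rho\|\omega) + \tau(\Phi^*\Phi(\rho)\ln\omega).
\]
So it remains to show $\tau(\Phi^*\Phi(\rho)\ln\omega)\le \tau(\rho\ln\Phi^*\Phi(\omega))$, equivalently $\tau(\Phi^*\Phi(\rho)\ln\omega) \le \tau(\Phi^*\Phi(\rho)\,\text{(something)})$ — here again I would use that $\Phi^*\Phi$ is a symmetric unital channel, write $\tau(\rho\ln\Phi^*\Phi(\omega)) = \tau(\Phi^*\Phi(\rho)\cdot\text{?})$... actually the correct tool is operator concavity once more: $\ln(\Phi^*\Phi(\omega))\ge \Phi^*\Phi(\ln\omega)$ is the wrong direction for pairing against $\rho$ unless I first move $\Phi^*\Phi$ by symmetry, getting $\tau(\rho\ln\Phi^*\Phi(\omega))\ge \tau(\rho\,\Phi^*\Phi(\ln\omega)) = \tau(\Phi^*\Phi(\rho)\ln\omega)$, which is exactly the needed inequality (using $\rho\ge 0$).

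**Main obstacle.** The delicate point is the support/finiteness bookkeeping: $\ln\Phi^*\Phi(\omega)$ and $\ln\omega$ need to be well-defined, which is why the hypothesis restricts to $\omega\in S_B(\M)$ (bounded below, so $\ln\omega$ bounded), and one must check $\Phi^*\Phi(\omega)$ is again bounded below — true since $\Phi^*\Phi$ is unital and positive, so $\Phi^*\Phi(\omega)\ge \mu_1\Phi^*\Phi(1) = \mu_1 1$. The other subtlety is justifying the operator Jensen/Choi inequality $\Phi(\ln x)\le\ln\Phi(x)$ for a general normal unital positive (indeed CP) map on a von Neumann algebra rather than on matrices; this is standard (Choi, or operator concavity of $\ln$ with the operator-convexity characterization of positive unital maps) but should be cited. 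I expect the chaining of the two concavity applications plus the symmetry move to be the step most prone to sign errors, so I would write it out carefully in the order above.
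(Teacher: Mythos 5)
Your overall strategy is the same as the paper's (rearrange the relative entropies, use the data processing inequality once, and use Choi's operator Jensen inequality $\Phi(\ln x)\le\ln\Phi(x)$ for unital positive maps), and your proof of the rightmost inequality is exactly the paper's argument. However, in the main inequality there is a step that fails: the claimed identity $\tau(\Phi(\rho)\ln\Phi(\omega)) = \tau(\Phi^*\Phi(\rho)\ln\omega)$ is not true. Adjointness only moves the map, not the logarithm: $\tau(\Phi(\rho)\ln\Phi(\omega)) = \tau\bigl(\rho\,\Phi^*(\ln\Phi(\omega))\bigr)$ and $\tau(\Phi^*\Phi(\rho)\ln\omega) = \tau\bigl(\Phi(\rho)\,\Phi(\ln\omega)\bigr)$, and neither $\Phi(\ln\omega)=\ln\Phi(\omega)$ nor $\Phi^*(\ln\Phi(\omega))=\ln\omega$ holds in general. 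Worse, the true relation between these two quantities goes the wrong way for your chain: by Jensen, $\Phi(\ln\omega)\le\ln\Phi(\omega)$, so $\tau(\Phi^*\Phi(\rho)\ln\omega)\le\tau(\Phi(\rho)\ln\Phi(\omega))$. After your (correct) use of DPI you must bound $\tau(\Phi(\rho)\ln\Phi(\omega))$ from \emph{above} by $\tau(\rho\ln\Phi^*\Phi(\omega))$; replacing it by the smaller quantity $\tau(\Phi^*\Phi(\rho)\ln\omega)$ and then bounding that smaller quantity (your step with $\ln\Phi^*\Phi(\omega)\ge\Phi^*\Phi(\ln\omega)$ is itself fine) does not close the argument.

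The repair is immediate and lands exactly on the paper's proof: apply Choi's inequality to the unital positive map $\Phi^*$ rather than to $\Phi^*\Phi$, i.e. $\Phi^*(\ln\Phi(\omega))\le\ln\Phi^*\Phi(\omega)$, and pair against $\rho\ge 0$ to get
\begin{align*}
\tau(\Phi(\rho)\ln\Phi(\omega)) \lel \tau\bigl(\rho\,\Phi^*(\ln\Phi(\omega))\bigr) \kl \tau\bigl(\rho\,\ln\Phi^*\Phi(\omega)\bigr)\pl,
\end{align*}
which combined with $D(\Phi(\rho)\|\Phi(\omega))\le D(\rho\|\omega)$ gives the first inequality of the lemma. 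Your support and finiteness remarks (boundedness of $\omega$ and $\Phi^*\Phi(\omega)\ge\mu_1 1$ by unitality) are correct and consistent with the paper's restriction to $S_B(\M)$.
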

\begin{proof} By duality, $\Phi^*$ is also completely positive unital.
Then,
\begin{align*}
D(\rho\|\Phi^*\Phi(\omega))=&\tau(\rho\ln \rho -\rho \ln \Phi^*\Phi(\omega))
\\ =\pl&\tau(\rho\ln \rho-\Phi(\rho)\log \Phi(\rho))+\tau( \Phi(\rho)\log \Phi(\rho)-\rho \ln \Phi^*\Phi(\omega))
\\ =\pl&D_\Phi(\rho)+\tau\big( \Phi(\rho)\log \Phi(\rho)-\rho \ln \Phi^*\Phi(\omega)\big)
\\ \overset{(1)}{\le}\pl &D_\Phi(\rho)+\tau\Big( \Phi(\rho)\log \Phi(\rho)-\rho \Phi^*
\big(\ln \Phi(\omega)\big)\Big)
\\ = \pl&D_\Phi(\rho)+\tau\Big( \Phi(\rho)\log \Phi(\rho)-\Phi(\rho)\ln \Phi(\omega)\Big)
\\ =\pl &D_\Phi(\rho)+D(\Phi(\rho)\|\Phi(\omega))
\\ \overset{(2)}{\le}\pl &D_\Phi(\rho)+D(\rho\|\omega)\pl,
\end{align*}
where (2) follows from the monotonicity of relative entropy.
The inequality (1) uses the operator concavity \cite{choi74} of logarithm function $t\mapsto \ln t$ that for any positive operator $x\ge 0$,
\[\Phi^*
(\ln x)\le \ln \Phi^*(x)\pl .\]
 This proves the first inequality in the assertion. For the second part, it suffices to note that
\begin{align*}
D_\Phi(\rho)=&\tau(\rho\log \rho-\Phi(\rho)\log \Phi(\rho))
\le \tau(\rho\log \rho-\Phi(\rho)\Phi(\log \rho))
=\tau(\rho\log \rho-\Phi^*\Phi(\rho)\log \rho),
\end{align*}
where we use the operator concavity $\Phi(\ln x)\le \ln \Phi(x)$ again.
\end{proof}
Our idea is to iterate the above lemma as follows,
\begin{align*}D(\rho|| (\Phi^* \Phi)^n(\rho))\le\pl D_\Phi(\rho)+D(\rho|| (\Phi^* \Phi)^{n-1}(\rho))
 \le\pl n D_\Phi(\rho)+D(\rho|| \rho)=n D_\Phi(\rho)
\end{align*}
Then a relevant question is what would be the limit of $(\Phi^* \Phi)^n(\rho)$ as $n \to\infty$. This leads to the multiplicative domain of $\Phi$. Recall that the multiplicative domain of a unital completely positive map $\Phi$ is
\[\N_\Phi=\{ x\in\M \ |\ \Phi(y) \Phi(x)=\Phi(yx)\pl,\pl  \Phi(x) \Phi(y)= \Phi(xy),\forall y\in\mathcal{M}\}\pl.\]
When $\Phi$ is normal, $\N_\Phi\subset\mathcal{M}$ is a von Neumann subalgebra. A linear map $E:\M\to \M$ is called a conditional expectation if $E$ is a unital completely positive map and idempotent $E\circ E=E$. When $\M$ is a finite von Neumann algebra, for any subalgebra $\N\subset \M$, there always exists a (unique) trace preserving conditional expectation $E$ onto $\N$ such that
\begin{align}\tau(xy)=\tau(xE(y))\pl, \pl x\in \N, y\in \M\pl. \label{eq:condexp}\end{align}
In particular, $E$ is also a unital quantum channel. \begin{prop}\label{prop:cd}Let $\Phi:L_1(\M)\to L_1(\M)$ be a unital quantum channel and let $E:\M\to \N$ be the trace preserving conditional expectation onto the multiplicative domain $\N:=\N_\Phi$. Then
 \begin{enumerate}
 \item[i)] $\Phi:\N\to\Phi(\N)$ is a $*$-isomorphism with inverse $\Phi^*:\Phi(\N)\to \N$. Moreover, $\Phi(\N)$ is the multiplicative domain for $\Phi^*$, and
\begin{align}\label{eq:cd2} (\Phi^* \Phi)\circ E=E\circ (\Phi^* \Phi)=E\pl, \pl   E_0\circ \Phi=\Phi\circ E\pl , \end{align}
 where $E_0:\M\to \Phi(\N)$ is the trace preserving conditional expectation onto $\Phi(\N)$.
\item[ii)] $\Phi$ is an isometry on $L_2(\N)$. If in addition $\norm{\Phi(\id-E):L_2(\M)\to L_2(\M)}{2}<1$, then $E=\lim_{n}(\Phi^* \Phi)^n$ as a map from $L_2(\M)$ to $L_2(\M)$ .
    \end{enumerate}
 \end{prop}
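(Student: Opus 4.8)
The plan is to deduce everything from Choi's description of the multiplicative domain, exploiting that \emph{both} $\Phi$ and $\Phi^{*}$ are unital, completely positive and trace-preserving. Since $\Phi(1)=1$ and $\Phi$ is a channel (so $\tau\circ\Phi=\tau$), trace duality gives $\Phi^{*}(1)=1$ and $\tau\circ\Phi^{*}=\tau$; moreover $\Phi^{*}$ is precisely the Hilbert space adjoint of $\Phi$ on $L_{2}(\M,\tau)$, and by Kadison--Schwarz both $\Phi,\Phi^{*}$ are contractions on $L_{2}(\M)$. By Choi's theorem $\Phi|_{\N}$ is a normal $*$-homomorphism obeying the bimodule rule $\Phi(ax)=\Phi(a)\Phi(x)$, $\Phi(xa)=\Phi(x)\Phi(a)$ for $a\in\N$, $x\in\M$; it is injective because $\Phi(a)=0$ forces $\Phi(a^{*}a)=\Phi(a)^{*}\Phi(a)=0$, hence $\tau(a^{*}a)=\tau(\Phi(a^{*}a))=0$. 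So $\Phi|_{\N}$ is a $*$-isomorphism onto its image.

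The one genuinely non-formal step I would carry out next is the reciprocity $\Phi^{*}\Phi=\id$ on $\N$. For $a\in\N$ the bimodule rule and trace-preservation give $\|\Phi(a)\|_{2}^{2}=\tau(\Phi(a^{*}a))=\|a\|_{2}^{2}$, hence $\langle\Phi^{*}\Phi(a),a\rangle=\langle\Phi(a),\Phi(a)\rangle=\|a\|_{2}^{2}$; combined with the contraction bound $\|\Phi^{*}\Phi(a)\|_{2}\le\|a\|_{2}$, expanding $\|\Phi^{*}\Phi(a)-a\|_{2}^{2}=\|\Phi^{*}\Phi(a)\|_{2}^{2}-\|a\|_{2}^{2}\le0$ forces $\Phi^{*}\Phi(a)=a$. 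Thus $\Phi^{*}|_{\Phi(\N)}$ is the two-sided inverse of the bijection $\Phi|_{\N}$, hence a $*$-isomorphism, so $\Phi(\N)\subseteq\N_{\Phi^{*}}$. Running the identical pair of arguments with $\Phi$ and $\Phi^{*}$ interchanged yields $\Phi\Phi^{*}=\id$ on $\N_{\Phi^{*}}$ and $\Phi^{*}(\N_{\Phi^{*}})\subseteq\N_{\Phi}$; applying $\Phi$ then gives $\N_{\Phi^{*}}=\Phi\Phi^{*}(\N_{\Phi^{*}})\subseteq\Phi(\N_{\Phi})=\Phi(\N)$, so $\Phi(\N)=\N_{\Phi^{*}}$ --- in particular a von Neumann subalgebra, which is exactly what makes $E_{0}$ well-defined.

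For the three identities in i), $(\Phi^{*}\Phi)\circ E=E$ is immediate since $E(\M)=\N$ and $\Phi^{*}\Phi=\id$ there. For $E\circ(\Phi^{*}\Phi)=E$ I would pair against an arbitrary $a\in\N$, using $\tau(aE(y))=\tau(ay)$, the adjoint relation, and the bimodule rule: $\tau(a\,E(\Phi^{*}\Phi(y)))=\tau(a\,\Phi^{*}\Phi(y))=\tau(\Phi(a)\Phi(y))=\tau(\Phi(ay))=\tau(ay)=\tau(aE(y))$, and faithfulness of $\tau|_{\N}$ concludes. For $E_{0}\circ\Phi=\Phi\circ E$, pair against $b=\Phi(a)\in\Phi(\N)$ the same way: $\tau(b\,E_{0}(\Phi(y)))=\tau(\Phi(a)\Phi(y))=\tau(\Phi(ay))=\tau(ay)=\tau(aE(y))=\tau(\Phi(a)\Phi(E(y)))=\tau(b\,\Phi(E(y)))$, and faithfulness of $\tau|_{\Phi(\N)}$ concludes. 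For ii), $\Phi$ is an isometry of $L_{2}(\N)$ by the identity $\|\Phi(a)\|_{2}=\|a\|_{2}$ above. For the limit, set $\Psi:=\Phi^{*}\Phi$, a self-adjoint positive contraction on $L_{2}(\M)$ with $\Psi E=E\Psi=E$; hence $\Psi$ preserves both $L_{2}(\N)$ and $L_{2}(\N)^{\perp}=(\id-E)L_{2}(\M)$, equals $\id$ on the first, and $\|\Psi(\id-E)\|_{2\to2}\le\|\Phi^{*}\|_{2\to2}\|\Phi(\id-E)\|_{2\to2}\le\|\Phi(\id-E)\|_{2\to2}<1$. Since $E\Psi=E$ forces $\Psi(\id-E)=(\id-E)\Psi(\id-E)$, induction gives $\Psi^{n}(\id-E)=(\Psi(\id-E))^{n}$ and $\Psi^{n}E=E$, so $\Psi^{n}=E+(\Psi(\id-E))^{n}\to E$ in operator norm.

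The main obstacle is really just the norm identity that pins down $\Phi^{*}\Phi=\id$ on $\N$; once that is in hand, the equality $\Phi(\N)=\N_{\Phi^{*}}$ is a careful but mechanical symmetry argument (the fiddliest point being the double application needed to get the full equality rather than a single inclusion), and the remaining identities and the operator-norm convergence are pure trace-duality bookkeeping.
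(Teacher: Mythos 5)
Your proof is correct and follows essentially the same route as the paper: establish $\Phi^*\Phi=\id$ on $\N$ via trace duality on $L_2$, deduce $\Phi(\N)=\N_{\Phi^*}$ by the symmetric argument with $\Phi$ and $\Phi^*$ interchanged, obtain the conditional-expectation identities by duality, and prove the convergence from the splitting $(\Phi^*\Phi)^n=E+(\Phi^*\Phi(\id-E))^n$ with norm strictly less than $1$ on the orthogonal complement. The only differences are cosmetic: you make explicit the contraction-plus-pairing step that pins down $\Phi^*\Phi|_{\N}=\id_{\N}$ and you spell out the verification of $E_0\circ\Phi=\Phi\circ E$, both of which the paper leaves implicit.
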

\begin{proof} It is clear that $\Phi$ is a $*$-homomorphism on $\N$.
For any $x,y\in L_2(\N)\subset L_2(\M)$,
\[\tau( y(\Phi^*\circ \Phi)(x))=\tau( \Phi(y) \Phi(x))=\tau(\Phi(xy))=\tau(xy)\]
Thus $\Phi^*\circ \Phi|_{\N}=\id_\N$ is the identity map. This verifies $(\Phi^* \Phi)\circ E=E$. Since $E^*=E$, $E\circ (\Phi^* \Phi)=E$ follows from taking adjoint. Thus $\Phi:\N\to\Phi(\N)$ is a $*$-isomorphism with inverse $\Phi^*$. Denote $\N_0$ as the multiplicative domain for $\Phi^*$, we have $\Phi(\N)\subset \N_0$. Conversely, we also have $\Phi^*(N_0)\subset \N$ by
switching the role of $\Phi=(\Phi^*)_*$. Then $\Phi(\N)=\N_0$ since $\Phi$ is bijective on $\N$.
For ii), we note that by \eqref{eq:cd2}
\begin{align*}(\id-E)\Phi^* \Phi(\id-E)=(\id-E)(\Phi^* \Phi-E)=\Phi^* \Phi-E, \pl
&\pl (\Phi^* \Phi-E)^n=\Phi^* \Phi^n-E.
\end{align*}
Therefore,
\begin{align*}&\norm{\Phi^* \Phi-E:L_2(\M)\to L_2(\M)}{}=\norm{\Phi(\id-E)}{2}^2<1,
\\
&\norm{(\Phi^* \Phi)^n-E:L_2(\M)\to L_2(\M)}{}=\norm{(\Phi^* \Phi-E)^n:L_2(\M)\to L_2(\M)}{}=\norm{\Phi(\id-E)}{2}^{2n},
\end{align*}
which goes to $0$ as $n\to \infty$.
\end{proof}

In order to estimate entropic quantities, we will use the approximation in terms of complete positivity.
Recall that for a density operator $\sigma\in S(\M)$ with full support, the Bogoliubov-Kubo-Mori (BKM) metric for $X\in\M$ is defined by
\begin{align*}\gamma_{\sigma}(X)=\int_{0}^\infty \tau(X^*(\sigma+s)^{-1}X(\sigma+s)^{-1})ds\pl.\end{align*}
The BKM metric is a Riemannian metric on the state space $S(\M)$ that is monotone under any quantum channel $\Psi$,
\[ \gamma_{\Psi(\sigma)}(\Psi(X))\le \gamma_{\sigma}(X)\pl, \forall X\in \M\pl. \]
It connects to the relative entropy as follows
\begin{align} D(\rho||\sigma)=\int_0^1 (1-t)  \gamma_{\rho_t}(\rho-\sigma)dt\pl, \label{eq:st}\end{align}
where $\rho_t=t\rho+(1-t) \sigma$ for $t\in [0,1]$.
It is proved in \cite[Lemma 2.1 \& 2.2]{gao2022complete} that if $\rho\le c\sigma$,
\begin{align} &c\gamma_{\rho}(X)\le \gamma_{\sigma}(X)\pl, \pl \forall X\in \M \nonumber\\    &k(c) \gamma_{\sigma}(\rho-\sigma)  \le D(\rho||\sigma) \le \gamma_{\sigma}(\rho-\sigma) \label{eq:keylemma}
\end{align}
where $k(c)=\frac{c\ln c-c+1}{(c-1)^2}$. The above discussion remains valid if $s(\rho)\le s(\sigma)$ and $X\in s(\sigma)\M s(\sigma)$. For two  positive maps $\Psi$ and $\Phi$, we write $\Phi\le \Psi $ if $\Psi-\Phi$ is positive.
\begin{lemma}\label{lemma:approximate}
Let $E$ be a conditional expectation (not necessarily trace preserving) and $\Psi$ be a quantum Markov map such that
\[ (1-\eps)E\le \Psi \le (1+\eps)E\pl.\]
 Assume that $E\circ \Psi=E$. Then for any $\rho\in S(\M)$, 
\[D(\rho||\Psi_*(\rho))\ge \Big(\frac{1-\eps}{1+\eps}-\frac{\eps}{(1-\eps)k(2)}\Big) D(\rho||E_*(\rho))\]
In particular, for $\eps=\frac{1}{10}$, \[D(\rho||\Psi_*(\rho))\ge \frac{1}{2} D(\rho||E_*(\rho))\]
\end{lemma}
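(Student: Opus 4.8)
The plan is to expand both $D(\rho\|\Psi_*(\rho))$ and $D(\rho\|E_*(\rho))$ through the integral formula \eqref{eq:st} and to estimate their difference via \eqref{eq:keylemma}, after rewriting the discrepancy between $\Psi_*$ and $E_*$ as a channel-like map that fixes the equilibrium state. First I would pass to preadjoints: $(1-\eps)E_*\le \Psi_*\le (1+\eps)E_*$ and $\Psi_*\circ E_*=E_*$, so setting $\sigma:=E_*(\rho)$, $\tilde\sigma:=\Psi_*(\rho)$ and using $E_*^2=E_*$ we get $E_*(\sigma)=\Psi_*(\sigma)=\sigma$ and $(1-\eps)\sigma\le\tilde\sigma\le(1+\eps)\sigma$ (in particular $s(\tilde\sigma)=s(\sigma)$). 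If $D(\rho\|\sigma)=+\infty$ then $\ln\tilde\sigma\le\ln(1+\eps)+\ln\sigma$ forces $D(\rho\|\tilde\sigma)=+\infty$ as well, so I may assume $D(\rho\|\sigma)<\infty$. The decisive step is the splitting $\Psi_*=(1-\eps)E_*+\eps\,S_*$ with $S_*:=\eps^{-1}(\Psi_*-(1-\eps)E_*)$. Since $\Psi$ and $E$ are unital, $S_*$ is a positive, trace preserving map; it is dominated by $2E_*$ (because $\Psi\le(1+\eps)E$), and it fixes $\sigma$, since $S_*(\sigma)=\eps^{-1}(\sigma-(1-\eps)\sigma)=\sigma$. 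Hence $\omega':=S_*(\rho)$ is a state with $\omega'\le 2\sigma$, and data processing for positive trace preserving maps gives $D(\omega'\|\sigma)=D(S_*\rho\,\|\,S_*\sigma)\le D(\rho\|\sigma)$.

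Writing $\Delta:=\tilde\sigma-\sigma=\eps(\omega'-\sigma)$, I would then invoke \eqref{eq:keylemma} at the \emph{fixed} ratio $c=2$ (legitimate because $\omega'\le 2\sigma$) to obtain $\gamma_\sigma(\Delta)=\eps^2\gamma_\sigma(\omega'-\sigma)\le \eps^2 k(2)^{-1}D(\omega'\|\sigma)\le \eps^2 k(2)^{-1}D(\rho\|\sigma)$, which is the source of the $k(2)$ in the statement. For the main computation, set $\rho_t:=t\rho+(1-t)\sigma$ and $\tilde\rho_t:=t\rho+(1-t)\tilde\sigma$; the elementary operator inequalities $(1-\eps)(1-t)\sigma\le(1-\eps)\rho_t\le\tilde\rho_t\le(1+\eps)\rho_t$, together with monotonicity and scaling of the BKM metric, give $\gamma_{\tilde\rho_t}(X)\ge (1+\eps)^{-1}\gamma_{\rho_t}(X)$ and $\gamma_{\tilde\rho_t}(X)\le\big((1-\eps)(1-t)\big)^{-1}\gamma_\sigma(X)$ for self-adjoint $X$. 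Substituting $\rho-\tilde\sigma=(\rho-\sigma)-\Delta$ into \eqref{eq:st}, expanding the quadratic form, and using Young's inequality $2\langle\rho-\sigma,\Delta\rangle_{\tilde\rho_t}\le\eps\,\gamma_{\tilde\rho_t}(\rho-\sigma)+\eps^{-1}\gamma_{\tilde\rho_t}(\Delta)$, one gets $\gamma_{\tilde\rho_t}(\rho-\tilde\sigma)\ge(1-\eps)\gamma_{\tilde\rho_t}(\rho-\sigma)-\eps^{-1}\gamma_{\tilde\rho_t}(\Delta)$. Integrating against $(1-t)\,dt$ on $[0,1]$ and applying \eqref{eq:st} once more to each term yields
\[
D(\rho\|\tilde\sigma)\ \ge\ \frac{1-\eps}{1+\eps}\,D(\rho\|\sigma)\ -\ \frac{1}{\eps(1-\eps)}\,\gamma_\sigma(\Delta)\ \ge\ \Big(\frac{1-\eps}{1+\eps}-\frac{\eps}{(1-\eps)k(2)}\Big)D(\rho\|\sigma),
\]
and substituting $\eps=\tfrac1{10}$ and $k(2)=2\ln 2-1$ gives a coefficient exceeding $\tfrac12$.

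The hard part, and the reason the splitting through $S_*$ is essential, is that for a general (possibly unbounded) state $\rho$ there is no control of $\gamma_\sigma(\rho-\sigma)$ by $D(\rho\|\sigma)$, so a direct Cauchy--Schwarz estimate of the cross term $\langle\rho-\sigma,\Delta\rangle$ obtained by expanding $D(\rho\|\tilde\sigma)$ around $\sigma$ would cost a $\rho$-dependent constant (via the chi-squared type bound, which can be infinite). Routing the discrepancy through the map $\eps(S_*-E_*)$ replaces $\gamma_\sigma(\rho-\sigma)$ by $\gamma_\sigma(\Delta)$ with $\Delta$ the image of $\rho$ under a trace preserving positive map that is dominated by $2E_*$ and fixes $\sigma$; this automatically places its output in the comparability range $\omega'\le 2\sigma$, so that both data processing and \eqref{eq:keylemma} at $c=2$ apply with no boundedness hypothesis on $\rho$. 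Everything else is bookkeeping of the $(1\pm\eps)$ factors along the segments $\rho_t,\tilde\rho_t$ and the choice of Young parameter equal to $\eps$, which is precisely what produces the stated constant.
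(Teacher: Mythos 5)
Your proposal is correct and follows essentially the same route as the paper: the same splitting $\Psi_*=(1-\eps)E_*+\eps S_*$ (the paper's $\Psi_0$), the same use of the integral formula \eqref{eq:st} with the comparisons $\tilde\rho_t\le(1+\eps)\rho_t$ and $\tilde\rho_t\ge(1-\eps)(1-t)\sigma$, the bound \eqref{eq:keylemma} at $c=2$, and data processing for the positive trace preserving map fixing $\sigma$, yielding the identical constant. The only differences are cosmetic: you estimate the cross term by Young's inequality where the paper uses the triangle inequality for $\sqrt{\gamma_\sigma}$ plus AM--GM, and you add an explicit (and harmless) treatment of the case $D(\rho\|E_*(\rho))=\infty$.
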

\begin{proof}By assumption, $\Psi_*=(1-\eps)E_*+\eps\Psi_0$ for some unital positive map $\Psi_0\le 2E_*$. 
We denote $\sigma=E_*(\rho), \tilde{\sigma}=\Phi_*(\rho)$ and $\omega=\Psi_0(\rho)$. Then $\tilde{\sigma}=(1-\eps)\sigma+\eps \omega$.
Note that for any bounded state $\sigma\in S_B(\M)$, $X\mapsto \sqrt{\gamma_\sigma(X)}$ is a Hilbert space norm. Then by the triangle inequality,
\begin{align*}\sqrt{\gamma(\rho-\tilde{\sigma})}=&\sqrt{\gamma(\rho-(1-\eps)\sigma-\eps \omega)}\\
=&\sqrt{\gamma((\rho-\sigma)+\eps (\sigma-\omega))}
\\
\ge &\sqrt{\gamma(\rho-\sigma)}-\eps\sqrt{\gamma(\sigma-\omega)}
\end{align*}
where $\gamma$ can be $\gamma_\phi$ for any bounded state $\phi \in S_B(\M)$. Then
\begin{align*}\gamma(\rho-\tilde{\sigma})
\ge &\gamma(\rho-\sigma)-2\eps\sqrt{\gamma(\rho-\sigma)}\sqrt{\gamma(\sigma-\omega)}+\eps^2\gamma(\sigma-\omega)\\
\ge &\gamma(\rho-\sigma)-2\eps\sqrt{\gamma(\rho-\sigma)}\sqrt{\gamma(\sigma-\omega)}\\
\ge &\gamma(\rho-\sigma)-\eps\gamma(\rho-\sigma)-\eps\gamma(\sigma-\omega)\\
= &(1-\eps)\gamma(\rho-\sigma)-\eps\gamma(\sigma-\omega)\pl.
\end{align*}
Now take $\rho_t=t\rho+(1-t)\sigma$ and $\tilde{\rho}_t=t\rho+(1-t)\tilde{\sigma}$,
\begin{align*}
D(\rho||\tilde{\sigma})=&\int_0^1 (1-t)\gamma_{\tilde{\rho}_t}(\rho-\tilde{\sigma})dt \\
\ge & (1-\eps)\int_0^1 (1-t) \gamma_{\tilde{\rho}_t}(\rho-\sigma)dt-\eps\int_0^1(1-t) \gamma_{\tilde{\rho}_t}(\sigma-\omega)dt\pl.
\end{align*}
For the first term, because $\tilde{\rho}_t\le  (1+\eps)\rho_t$,
\begin{align*}
\int_0^1 (1-t) \gamma_{\tilde{\rho}_t}(\rho-\sigma)dt
\ge  (1+\eps)^{-1}\int_0^1 (1-t)  \gamma_{\rho_t}(\rho-\sigma)dtds=(1+\eps)^{-1}D(\rho||\sigma)\pl.
\end{align*}
For the second term, consider that $\tilde{\rho}_t\ge (1-\eps)(1-t)\sigma$
\begin{align*}
\int_0^1 (1-t)\gamma_{\tilde{\rho}_t}(\sigma-\omega)dt
\le & \frac{1}{(1-\eps)}\int_0^1  \gamma_{\sigma}(\omega-\sigma)dt\\ =&\frac{1}{(1-\eps)}\gamma_{\sigma}(\omega-\sigma)\\ \overset{(1)}{\le}&  \frac{1}{(1-\eps)k(2)}D(\omega||\sigma)
\\ =& \frac{1}{(1-\eps)k(2)}D(\Psi_*(\rho)||\sigma)\overset{(2)}{\le} \frac{1}{(1-\eps)k(2)}D(\rho||\sigma)\pl.
\end{align*}
Here, the inequality (1) above uses $\omega\le 2\sigma$ and \eqref{eq:keylemma}. The inequality (2) above follows from the monotonicity of relative entropy 
and the fact $\Psi_*(\sigma)=\sigma$.
Combined the estimated above, we obtained
\[ D(\rho||\tilde{\sigma})\ge \frac{1-\eps}{1+\eps} D(\rho||\sigma)-\eps k(2)^{-1}D(\rho||\sigma)=\Big(\frac{1-\eps}{1+\eps}-\frac{\eps}{(1-\eps)k(2)}\Big)D(\rho||\sigma) \pl,\]
where $k(2)=2\ln 2-1$. The above inequality is non-trivial for $\eps$ such that
\[\frac{1-\eps}{1+\eps}-\frac{\eps}{(1-\eps)k(2)}>0\pl. \]
Taking $\eps=0.1$, the above expression is approximately $0.53>\frac{1}{2}$.
\end{proof}
\begin{rem}{\rm
This lemma is related to \cite[Corollary 2.15]{laracuente2022quasi-factorization} and is a variant of \cite[Theorem 5.3]{gao2022complete}, which proves for GNS symmetric $\Phi$,
\begin{align}D(\rho||(\Phi_*)^2(\rho))\ge (1-\eps^2 k(2)^{-1}) D(\rho||E_*(\rho)).\label{eq:lemma}\end{align}
 The above lemma improves \eqref{eq:lemma} from two points: 1) does not need any symmetric assumption; 2) remove the square in $\Phi_*^2$. When $\Psi_*=\Phi_*^2$ being a square, \eqref{eq:lemma} could yield better bound that for $\eps=0.4$
\[ (1-(0.4)^2 k(2)^{-1})>\frac{1}{2}>\frac{1}{4}> \frac{1-0.4^2}{1+0.4^2}-\frac{0.4^2}{(1-0.4^2)k(2)}\pl.\]}
\end{rem}
Putting the above lemmas together, we obtain  Theorem \ref{thm:main5} in the following slightly more general form.
\begin{theorem}\label{thm:unital} Let $\Phi$ be a unital quantum channel and let $E:\M\to\N$ be the trace preserving conditional expectation onto the multiplicative domain $\mathcal{N}$ of $\Phi$. Define the CB return time
\begin{align}\label{eq:cbreturn} k_{cb}(\Phi):=\inf \{k\in \mathbb{N}^+\pl |\pl  0.9 E \le_{cp} (\Phi^*\Phi)^{k}\le_{cp} 1.1 E \} \ . \end{align}
 Then for any state $\rho\in S(\M)$,
 \begin{align} D(\Phi(\rho)||\Phi\circ E(\rho))\le \Big (1-\frac{1}{2k_{cb}(\Phi)} \Big ) D(\rho||E(\rho))\label{eq:SDPI}\ .\end{align}
 Furthermore,  for any finite von Neumann algebra $\mathcal{Q}$ and state $\rho\in S(\M \overline{\ten}\mathcal{Q})$
 \begin{align}\label{eq:CSDPI} D( \Phi\ten \id (\rho)||(\Phi\circ E)\ten \id (\rho))\le \Big (1-\frac{1}{2k_{cb}(\Phi)} \Big ) D(\rho|| E\ten \id (\rho))\pl.\end{align}
\end{theorem}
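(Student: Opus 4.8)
The plan is to combine Lemma~\ref{lemma:difference}, Proposition~\ref{prop:cd} and Lemma~\ref{lemma:approximate}. If $k_{cb}(\Phi)=\infty$ then \eqref{eq:SDPI} asserts only $D(\Phi(\rho)\|\Phi\circ E(\rho))\le D(\rho\|E(\rho))$, which is the data processing inequality \eqref{eq:DPI}; so assume $k:=k_{cb}(\Phi)<\infty$. I would prove \eqref{eq:SDPI} first for bounded states $\rho\in S_b(\M)$ and then obtain general $\rho\in S(\M)$ by approximation. The crux is the chain
\[
\tfrac12\,D(\rho\|E(\rho))\;\overset{(\mathrm a)}{\le}\;D\big(\rho\,\big\|\,(\Phi^*\Phi)^k(\rho)\big)\;\overset{(\mathrm b)}{\le}\;k\,D_\Phi(\rho),
\]
from which $D_\Phi(\rho)\ge\frac1{2k}D(\rho\|E(\rho))$, and this I would convert into \eqref{eq:SDPI}.

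For (b), set $\Psi=\Phi^*\Phi$, a unital quantum channel. Unital positive maps preserve each order interval $\mu_1 1\le\cdot\le\mu_2 1$, so every iterate $\Psi^j(\rho)$ is again a bounded state; applying Lemma~\ref{lemma:difference} with $\omega=\Psi^j(\rho)$ gives $D(\rho\|\Psi^{j+1}(\rho))\le D_\Phi(\rho)+D(\rho\|\Psi^j(\rho))$, and telescoping from $j=0$ with $D(\rho\|\rho)=0$ yields $D(\rho\|\Psi^n(\rho))\le nD_\Phi(\rho)$ for all $n$, in particular for $n=k$. For (a), the definition of $k=k_{cb}(\Phi)$ gives $0.9E\le_{cp}(\Phi^*\Phi)^k\le_{cp}1.1E$, hence the positive-map inequality $(1-\tfrac1{10})E\le(\Phi^*\Phi)^k\le(1+\tfrac1{10})E$, while \eqref{eq:cd2} of Proposition~\ref{prop:cd} gives $E\circ(\Phi^*\Phi)^k=E$; Lemma~\ref{lemma:approximate} with $\eps=\tfrac1{10}$ then produces (a).

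To turn $D_\Phi(\rho)\ge\frac1{2k}D(\rho\|E(\rho))$ into \eqref{eq:SDPI} I would use the conditional-expectation structure of Proposition~\ref{prop:cd}. Since $\ln E(\rho)\in\N$, the defining identity \eqref{eq:condexp} gives $D(\rho\|E(\rho))=H(\rho)-\tau(\rho\ln E(\rho))=H(\rho)-H(E(\rho))$. Writing $\Phi\circ E(\rho)=E_0\circ\Phi(\rho)$ with $E_0$ the trace-preserving conditional expectation onto $\Phi(\N)$, the same computation gives $D(\Phi(\rho)\|\Phi\circ E(\rho))=H(\Phi(\rho))-H(E_0\Phi(\rho))$, and $H(E_0\Phi(\rho))=H(\Phi\circ E(\rho))=H(E(\rho))$ because $\Phi\colon\N\to\Phi(\N)$ is a trace-preserving $*$-isomorphism and thus preserves the entropy functional. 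Subtracting, $D(\rho\|E(\rho))-D(\Phi(\rho)\|\Phi\circ E(\rho))=H(\rho)-H(\Phi(\rho))=D_\Phi(\rho)\ge\frac1{2k}D(\rho\|E(\rho))$, which rearranges to \eqref{eq:SDPI}; the general case follows from lower semicontinuity of $D$ and continuity of $\Phi$, $E$ on states. Finally \eqref{eq:CSDPI} follows by applying \eqref{eq:SDPI} to the unital quantum channel $\Phi\ten\id_\cQ$ on $\M\overline{\ten}\cQ$, whose multiplicative domain is $\N\overline{\ten}\cQ$ with conditional expectation $E\ten\id_\cQ$: since the condition defining $k_{cb}$ is phrased through complete positivity, tensoring $0.9E\le_{cp}(\Phi^*\Phi)^k\le_{cp}1.1E$ with $\id_\cQ$ preserves it, so $k_{cb}(\Phi\ten\id_\cQ)\le k_{cb}(\Phi)$.

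The main obstacle is the reduction to bounded states: for general $\rho\in S(\M)$ the quantity $D_\Phi(\rho)=H(\rho)-H(\Phi(\rho))$ is a difference of entropies that may each be $+\infty$ in the infinite-dimensional setting, so the telescoping argument is only literally valid on $S_b(\M)$ and the limit must be taken in \eqref{eq:SDPI} itself rather than in the intermediate quantities; apart from this, the argument is bookkeeping on top of the three lemmas.
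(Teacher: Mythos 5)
Your proposal is correct and follows essentially the same route as the paper: telescoping the entropy difference Lemma \ref{lemma:difference} along the iterates $(\Phi^*\Phi)^j(\rho)$, invoking Lemma \ref{lemma:approximate} (via \eqref{eq:cd2}) at $k=k_{cb}(\Phi)$ to get $D(\rho\|E(\rho))\le 2D(\rho\|(\Phi^*\Phi)^k(\rho))$, and using Proposition \ref{prop:cd} to rewrite $D_\Phi(\rho)=D(\rho\|E(\rho))-D(\Phi(\rho)\|\Phi\circ E(\rho))$ before rearranging, with the bounded-state reduction and the tensorization of the CP bounds handled just as in the paper. The only differences are cosmetic additions (the $k_{cb}=\infty$ case, the remark that iterates stay in $S_B(\M)$), and your approximation step for general states should be backed, as the paper does, by an argument such as $\rho_\eps=(1-\eps)\rho+\eps 1$ together with joint convexity and lower semicontinuity.
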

\begin{proof} It suffices to consider a bounded state $\rho \in S_B(\M)$. Note that by the conditional expectation property
\eqref{eq:condexp},
\begin{align*}&D(\rho||E(\rho))=\tau(\rho\log \rho-\rho\log E(\rho))=\tau(\rho\log \rho)-\tau(E(\rho)\log E(\rho))=H(\rho)-H(E(\rho))\pl,\\
&D(\Phi(\rho)||\Phi\circ E(\rho))=D(\Phi(\rho)||E_0\circ \Phi(\rho))=H(\Phi(\rho))-H(E_0 \circ \Phi(\rho)) =H(\Phi(\rho))-H(\Phi\circ E(\rho))\pl,
\end{align*}
where we used the property $\Phi\circ E=E_0 \circ \Phi$ from Proposition \ref{prop:cd}. Moreover, $H(E(\rho))=H(\Phi\circ E(\rho))$ as $\Phi$ is a trace preserving $*$-isomorphism on $\N$. Thus we have
\[D_\Phi(\rho)=H(\rho)-H(\Phi(\rho))=D(\rho||E(\rho))-D(\Phi(\rho)||\Phi \circ E(\rho))\pl.\]
 Iterating the entropy difference Lemma \ref{lemma:difference}, we have
\begin{align*}D(\rho\|(\Phi^*\Phi)^{k}(\rho))
\pl \le\pl  & D_\Phi(\rho)  + D(\rho\|(\Phi^*\Phi)^{k-1}(\rho))\\
  \le\pl & k D_\Phi(\rho)  + D(\rho\|\rho)\\
 = \pl& k (D(\rho||E(\rho))-D(\Phi(\rho)||\Phi \circ E(\rho))
 \end{align*}
 Now using Lemma \ref{lemma:approximate}, for $k=k_{cb}(\Phi)$,
 \[ D(\rho||E(\rho))\le 2 D(\rho\|(\Phi^*\Phi)^{k}\rho))\le 2k\big (D(\rho||E(\rho))-D(\Phi(\rho)||\Phi \circ E(\rho))\big)\pl. \]
Rearranging the terms gives the assertion. The general case $\rho\in S(\M)$ can be obtained via approximation $\rho_\eps=(1-\eps)\rho+\eps 1$ as \cite[Lemma A.2]{brannan2022complete}. The same argument applies to $\id_\cQ\ten \Phi$, because the CB return time $k_{cb}$ of $\id_\Q\ten \Phi$ is same as of $\Phi$ by the definition.
\qd


The above theorem is an improved data processing inequality for the relative entropy between a state $\rho$ and its conditional expectation $E(\rho)$.  Here $\N$ is the ``decoherence free'' subalgebra. Indeed,
for any two states $\sigma_1,\sigma_2\in\N$
\[D(\sigma_1||\sigma_2)\ge D(\Phi(\sigma_1)||\Phi(\sigma_2))\ge D(\Phi^*\Phi(\sigma_1)||\Phi^*\Phi(\sigma_2))=D(\sigma_1||\sigma_2)\]
does not decay.
Outside the ``decoherence free'' subalgebra $\N$, the relative entropy from a state $\rho$ to its projection $E(\rho)$ on $\N$ is strictly contractive under every use of the channel $\Phi$.

For $\Phi$ being a symmetric quantum Markov map, we have $\Phi=\Phi_*$ and this gives Theorem \ref{thm:main5}. Moreover, the Proposition \ref{prop:cd} reduces to
  \[\Phi\circ E=E\circ \Phi\pl , \pl  \Phi^2\circ E= E\circ \Phi^2=E\pl,\]
and one can iterate entropy contraction to obtain the discrete time entropy decay,
\begin{align*}
 D(\Phi^{2n}(\rho)|| E(\rho))\le (1-\frac{1}{2k_{cb}(\Phi)})^{2n} D(\rho||E(\rho)).
 \end{align*}

\section{Complete Modified Log-Sobolev Inequality for Symmetric Markov Semigroups}\label{sec:symmetricCMLSI}
\subsection{Functional inequalities}\label{sec:CMLSI}
In this section, we discuss a continuous time relative entropy decay for symmetric quantum Markov semigroups. We first review some basics of quantum Markov semigroups.
A quantum Markov semigroup $(T_t)_{t\ge 0}:\M\to\M$ is a family of maps satisfying
\begin{enumerate}
\item[i)] for each $t\ge 0$, $T_t$ is a quantum Markov map (i.e. normal, completely positive, and unital)
\item[ii)] $T_0=\id_\M$ and $T_s\circ T_t=T_{s+t}$ for any $s,t \ge 0$.
\item[iii)] for $x\in \M$, $t\mapsto T_t(x)$ is weak$^*$-continuous.
\end{enumerate}
The generator of the semigroup is defined as
\[\pl Lx=w^*\text{-}\lim_{t\to 0} \frac{1}{t}(x-T_t(x))\pl \]
on the domain of $L$ that the limit exists. 
In this section we still consider $\M$ as a finite von Neumann algebra equipped with a normal faithful tracial state $\tau$. Given $(T_t)$  is \emph{symmetric} (or more specifically, \emph{trace-symmetric}) i.e.
\[\tau(x^*T_t(y))=\tau(T_t(x)^*y)\pl , \pl \forall  x,y\in \M, t\ge 0,\]
the generator $L$ is a positive, symmetric operator, densely defined on $L_2(\M)$. Its kernel is
the fixed-point subspace $\N:=\ker(L)=\{x\in \M\pl | \pl T_t(x)=x, \forall t\ge 0\}$, which coincides with the common multiplicative domain of all $T_t$, hence a von Neumann subalgebra. Moreover, each $T_t$ is an $\N$-bimodule map
\[T_t(axb)=aT_t(x)b\pl, \pl  \forall\pl a,b\in \N ,x\in \M\]
In particular, we have
\[T_t\circ E= E\circ T_t=E\pl.\]
where $E:\M\to \N$ is the trace preserving conditional expectation onto the fixpoint algebra $\N$. We say $(T_t)$ is \emph{ergodic} if $\N=\bC 1$ is trivial.
Note that in the mathematical physics literature it is common to use \emph{primitive} instead of ergodic.
In this case,  the semigroup admits a unique invariant state, namely the trace $\tau$. 
We will consider symmetric quantum Markov semigroups that are not necessarily  ergodic.

Recall that a semigroup is equivalently determined by its
{\it  Dirichlet form} \[\E:L_2(\M)\to [0,\infty]\pl ,\pl  \E(x,x)=\tau(x^*L x)\pl.\]
We write $\dom (L)$ for the domain of $L$ and $\dom (\E)$ for the domain of $\E$. The Dirichlet subalgebra is defined as $\A_\E:=\dom (\E)\cap \M$.
It  was proved \cite{davies1992non} that  $\A_\E$ is a dense $*$-subalgebra of $\M$ and a core of $L^{1/2}$.
We denote by
\[S(\A_\E)=S(\M)\cap \A_\E \pl, \pl S_B(\A_\E)=S_B(\M)\cap \A_\E\]
the set of bounded density operators from $\A_\E$. We now introduce the formal definitions of functional inequalities for quantum Markov semigroups.
 \begin{defi}\label{defi:FI}
 Let $T_t=e^{-Lt}:\M\to\M$ be a symmetric quantum Markov semigroup and $E:\M\to \N$ be the trace preserving conditional expectation onto its fixed point space. We say $T_t$ satisfies
 \begin{enumerate}
 \item[i)] the Poincar\'e inequality (PI) for $\la>0$ if
 \begin{align}\label{eq:PI}
 \la\norm{x-E(x)}{2}^2\le \E(x,x)\pl, \pl  \forall x\in \A_\E\pl.
 \end{align}
 \item[ii)] the log-Sobolev inequality (LSI) for $\al>0$ if
 \begin{align}\label{eq:LSI}
 \al \tau\big(|x|^2\ln |x^2|-E(|x^2|)\ln E(|x^2|)\big)\le 2\E(x,x)\pl, \pl  \forall x\in \A_\E\pl.
 \end{align}
 \item[iii)]the modified log-Sobolev inequality (MLSI) for $\al>0$ if
 \begin{align}\label{eq:MLSI1}
 2\al D(\rho||E(\rho))\le \E(\rho,\ln\rho)\pl, \pl  \forall \rho\in S_B(\A_\E)\pl,
 \end{align}
\item[iv)]the complete modified log-Sobolev inequality (CMLSI) for $\al>0$ if  $\id_\cQ\ten T_t$ satisfies $\al$-MLSI inequality for any finite von Neumann algebra $\cQ$.
 \end{enumerate}
 The  optimal (largest possible) constants for PI, LSI, MLSI, and CMLSI will be denoted respectively as $\la(L), \al_2(L),\al_{1}(L),$ and $\al_{c}(L)$ (or $\la, \al_2,\al_1$ and $\al_{c}$ in short if the generator is clear).
 \end{defi}

 The Poincar\'e inequality \eqref{eq:PI} is equivalent to the spectral gap of $L$ as a positive operator. LSI \eqref{eq:LSI} is equivalent to hypercontractivity \cite{olkiewicz1999hypercontractivity}
 \begin{align} \norm{T_t:L_2(\M)\to L_p(\M)}{}\le 1\pl  \text{ if } \pl p\le 1+e^{2\al t}  \label{eq:HC}.\end{align}
MLSI \eqref{eq:MLSI1} is known to be equivalent to the exponential decay of relative entropy (\cite[Theorem 3.2]{bardet2017estimating} and \cite[Proposition A.3]{brannan2022complete}) that
\begin{align} D(T_t(\rho)||E(\rho))\le e^{-2\al t}D(\rho||E(\rho))\pl, \pl \forall \rho\in S(\M)\pl.\label{eq:entropy}\end{align}
The equivalence of \eqref{eq:MLSI1} and \eqref{eq:entropy} is obtained by differentiating the relative entropy for $T_t$ at $0$, which leads to the  entropy production on the R.H.S of MLSI
\[I_L(\rho):=\E(\rho,\ln\rho)=-\frac{d}{dt}\vert_{t=0} D(T_t(\rho)||E(\rho))=\tau(L(\rho) \ln \rho) \pl,\]
It is well-known that
\[ \al_{2}\leq\al_{1}\leq \la \pl. \]
The main motivation to consider CMLSI over MLSI and LSI is the tensorization property
\begin{align}\label{eq:tensorization} \al_{c}(L_1\ten \id+\id\ten L_2)=\min\{\al_{c}(L_1), \pl \al_{c}(L_2)\}\pl,\end{align}
which  in the quantum cases fails for $\al_1$ \cite[Section 4.4]{brannan2022complete}, and only known to hold for $\al_2$ for limited examples in small dimensions.
The main result of this section is Theorem \ref{thm:main1}, which asserts a lower bound
\[ \al_c(L)\ge \frac{1}{2t_{cb}(L)}\]
by the inverse of CB return time
\begin{align} t_{cb}(L)=\inf\{t>0 \pl |\pl (1-0.1) E\le_{cp} T_t\le_{cp} (1+0.1)E \}. \label{eq:cbreturn3}\end{align}
Here we set $\eps=0.1$ for the notation $t_{cb}(\eps)$ in Theorem \ref{thm:main1} because of Lemma \ref{lemma:approximate}.


\begin{proof}[Proof of Theorem \ref{thm:main1}]
Let $t_m=t_{cb}(L)/2m$ for some $m\in \mathbb{N}_+$. Since $T_t$ is symmetric,  $T_{t_m}^*T_{t_m}=T_{t_m}T_{t_m}=T_{2t_m}$. Hence $T_{t_m}$ has discrete return time $k_{cb}(T_{t_m})\le m$. By the Lemma \ref{lemma:difference}, for any $\rho\in S_B(\M)$,
\begin{align*}D(T_{t_m}(\rho)||E(\rho))\le & (1-\frac{1}{2m}) D(\rho||E(\rho))
\end{align*}
Write $t_{cb}=t_{cb}(L)$. Now assume further $\rho\in \cup_{t>0} T_t(\M)\subset \dom (L)$.
We have by Theorem \eqref{thm:unital},
\begin{align*}I(\rho)&=\lim_{t\to 0}\frac{ D(\rho||E(\rho))-D(T_{t}(\rho)||E(\rho))}{t}\\
&=\lim_{m\to \infty}\frac{ D(\rho||E(\rho))-D(T_{\frac{t_{cb}}{2m}}(\rho)||E(\rho))}{\frac{t_{cb}}{2m}}\\
&\ge \lim_{m\to \infty}\frac{\frac{1}{2m}D(\rho||E(\rho))}{\frac{t_{cb}}{2m}}
=\frac{1}{t_{cb}} D(\rho||E(\rho))\pl .
\end{align*}
The entropy decay for general $\rho\in S(\M)$ can be obtained by approximation as in the Appendix \cite[Appendix]{brannan2022complete}). This proves $\al_{1}(L)\ge \frac{1}{2t_{cb}(L)}$. The same argument applies to $\id_\cQ\ten L$ yields the assertion $\al_{c}(L)\ge \frac{1}{2t_{cb}(L)}$.
\qd


\begin{rem}{\rm
a) For LSI constant $\al_2$, the $\Omega(\frac{1}{t_{cb}})$ lower bounds were obtained for  ergodic semigroups  in both classical \cite{diaconis1996logarithmic} and quantum setting \cite{temme2014hypercontractivity}. These bounds as well as our bound for $\al_c$ are asymptotic tight (See Example \ref{ex:cyclic2} and Section \ref{sec:birth}).

b)  In \cite{brannan2022complete}, a similar estimate $\al_c\ge \Omega(\frac{1}{t_{cb}})$ was obtained for semigroups that admits non-negative entropic Ricci curvature lower bound.
The entropy Ricci curvature lower bound for quantum Markov semigroup was introduced by Carlen and Mass \cite{carlen2017gradient} using $\lambda$-displacement convexity of entropy functionals $H$ w.r.t to certain noncommutative Wasserstein distance, inspired from Lott and Villani \cite{lott2009ricci}, and  Sturm's \cite{sturm2006geometry} work on metric measure spaces. For heat semigroups on Riemmannian manifold, the entropy Ricci curvature lower bound follows from a lower bound of the Ricci curvature tensor. Nevertheless, in the noncommutative case,  these entropy Ricci curvature lower bounds for quantum Markov semigroup are in general hard to verify. So far most examples rely on certain interwining  relation $\nabla T_t= e^{-\lambda t}\tilde{T}_t \nabla$ between the semigroup $T_t$ and a gradient operator $\nabla$ (see
\cite{carlen2017gradient,brannan2021complete,wirth2021complete}).

c) Our Theorem \ref{thm:main1} here does not rely on any curvature conditions, which uses only information theoretic tools such as entropic quantities and inequalities. To the best of our knowledge, this direct proof is even novel in  the classical setting. It is worth point out that the definition of relative entropy as well as its exponential decay of relative entropy is independent of the choice of the trace, which also shows the naturalness of our approach and the extension to non-tracial von Neumann algebras in Section \ref{sec:GNS}.
}\end{rem}

\subsection{CB return time}
We now consider a common scenario where the CB return time $t_{cb}$ is finite. The original motivation for the notion, despite defining using CP (complete possitive) order \eqref{eq:cbreturn3}, is the following characterization using CB (completely bounded) norm. Recall that  a linear map $\Psi:\M\to\M$ is called a $\N$-bimodule map if
\[\Psi(axb)=a\Psi(x)b\pl, \pl \forall\pl  a,b\in \N, x\in \M\]
\begin{prop}\label{prop:equivalence}
Let $\N\subset\M$ be a subalgebra and $E:\M\to\N$ be the trace preserving conditional expectation. Let $\Psi:\M\to\M$ be a $\N$-bimodule $*$-preserving map. For any $\eps>0$, the following two conditions are equivalent:
\begin{enumerate}\item[i)]$(1-\eps) E\le_{cp} \Psi \le_{cp} (1+\eps) E\pl $;
\item[ii)] $\norm{\Psi-E:L_\infty^1(\N\subset\M)\to L_\infty(\M)}{cb}\le \eps$.
\end{enumerate}
\end{prop}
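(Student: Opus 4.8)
The plan is to absorb $E$ into the map and reduce to a statement about a single Hermitian bimodule map. Set $\Phi:=\Psi-E$; since $E$ is a $*$-preserving $\N$-bimodule map, so is $\Phi$. Unwinding the completely positive order, $(1-\eps)E\le_{cp}\Psi$ reads $\Phi+\eps E\ge_{cp}0$ and $\Psi\le_{cp}(1+\eps)E$ reads $\eps E-\Phi\ge_{cp}0$, so condition (i) is exactly $-\eps E\le_{cp}\Phi\le_{cp}\eps E$, while condition (ii) is $\|\Phi:L_\infty^1(\N\subset\M)\to L_\infty(\M)\|_{cb}\le\eps$. Thus the Proposition asserts: for a Hermitian normal $\N$-bimodule map $\Phi$, the two-sided completely positive domination $-\eps E\le_{cp}\Phi\le_{cp}\eps E$ holds if and only if this completely bounded norm is at most $\eps$.

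To prove this I would use the operator-space description of normal $\N$-bimodule maps into $L_\infty(\M)$ established in \cite{gao2020fisher}. This is the amalgamated analogue of the ergodic case $\N=\bC1$, where $L_\infty^1(\bC1\subset\M)=L_1(\M)$, the completely bounded maps $L_1(\M)\to\M$ are the von Neumann algebra $\M\,\overline{\ten}\,\M$, $E=\tau(\cdot)1$ corresponds to the unit $1\ten1$, completely positive maps to positive elements and $*$-preserving maps to self-adjoint ones — in which case the Proposition is the trivial $C^*$-fact that a self-adjoint $a$ satisfies $\|a\|\le\eps$ iff $-\eps1\le a\le\eps1$. In general one has a von Neumann algebra $\M_1$ — the Jones basic construction $\langle\M,e_\N\rangle$ for $\N\subset\M$, equivalently a weak-$*$ completion of $\M\otimes_\N\M$ — and a complete isometry $J$ from the space of normal $\N$-bimodule maps $L_\infty^1(\N\subset\M)\to L_\infty(\M)$ onto $\M_1$, with $J(E)=1_{\M_1}$, such that a bimodule map is $*$-preserving iff $J$ of it is self-adjoint, and $\Theta_1\le_{cp}\Theta_2$ iff $J(\Theta_2-\Theta_1)\ge0$ in $\M_1$. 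Granting this, set $a:=J(\Phi)$, which is self-adjoint; then $-\eps E\le_{cp}\Phi\le_{cp}\eps E$ is equivalent to $-\eps1_{\M_1}\le a\le\eps1_{\M_1}$, i.e. to $\|a\|_{\M_1}\le\eps$, i.e. to $\|\Phi\|_{cb}\le\eps$ since $J$ is a complete isometry — which is ``(i) $\Leftrightarrow$ (ii)''.

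The substantive input, which \cite{gao2020fisher} supplies and which I would cite rather than reprove, is precisely this identification $J$, and in particular the fact that $J$ carries the cone of completely positive $\N$-bimodule maps $\M\to\M$ onto the positive cone of the $C^*$-algebra $\M_1$, so that the $\le_{cp}$ preorder is transported to the genuine order of $\M_1$. Concretely, one half of this uses Paschke's module Stinespring theorem: a normal completely positive $\N$-bimodule map $\Theta$ dilates as $\Theta(x)=W^*\pi(x)W$ over a self-dual Hilbert $\N$-module, whence $J(\Theta)\ge0$ and $\|\Theta:L_\infty^1(\N\subset\M)\to L_\infty(\M)\|_{cb}=\|\Theta(1)\|_\infty$; the other half is a Wittstock-type decomposition of a Hermitian completely bounded bimodule map into two completely positive ones jointly dominated, up to the cb-norm, by $E$. (The one caveat is that $\M_1$ need not be finite when $\N\subset\M$ has infinite index, but it is always a von Neumann algebra, which is all the argument uses; no finite-index, ergodicity, or dimension hypothesis enters.) Once the identification is in hand the Proposition is immediate as in the second paragraph, so the only real obstacle is the module-theoretic packaging of $J$ itself.
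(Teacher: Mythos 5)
Your proposal is correct and follows essentially the same route as the paper: the paper also proves the equivalence by transferring $\Psi-E$ to its (module) Choi operator, under which complete positivity becomes positivity (Choi-type theorem) and the cb-norm on $L_\infty^1(\N\subset\M)\to\M$ becomes the operator norm (Effros--Ruan-type isometry), so the statement reduces to the fact that a self-adjoint element $a$ satisfies $-\eps 1\le a\le \eps 1$ iff $\|a\|\le\eps$; like you, the paper only spells out the ergodic case and defers the general bimodule identification to the cited references (\cite[Lemma 5.1]{bardet2021entropy}, \cite[Lemma 3.15]{gao2020fisher}).
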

The condition ii) above is the completely bounded norm from the space $L_\infty^1(\N\subset\M)$  to $\M$. $L_\infty^1(\N\subset\M)$ is called a conditional $L_\infty$ space, defined as the completion of $\M$ with respect to the norm
\[ \norm{x}{L_\infty^1(\N\subset\M)}=\sup_{a,b\in\N \pl ,\pl \|a\|_2=\|b\|_2=1}\norm{axb}{1}\pl,  \]
where the supremum takes over all $a,b\in L_2(\N)$ with $\|a\|_2=\|b\|_2=1$.
The operator space structure of $L_\infty^1(\N\subset\M)$ is given by the identification
\[\bM_n(L_\infty^1(\N\subset\M))=L_\infty^1(\bM_n(\N)\subset \bM_n(\M))\pl.\]
(see \cite{junge2010mixed} and \cite[Appendix]{gao2020relative}).
Proposition \ref{prop:equivalence} is relatively self-evident in the ergodic case $\N=\mathbb{C}1$, $L_\infty^1(\N\subset\M)\cong L_1(\M)$, which we illustrate below.

\begin{exam}[Classical case]{\rm \label{exam:CMS} Let $(\Omega,\mu)$ be a probability space. Consider $P:L_\infty(\Omega)\to L_\infty(\Omega)$ be a linear map with kernel $P(f)(x)=\int_{\Omega}k(x,y)f(y)d\mu(y)$. It is clear that $P$ is $*$-preserving i.e. $P(\bar{f})=\overline{P(f)}$ if $k$ is real; $P$ is a positive map if and only if the kernel function $k\ge 0$. Recall the expectation map $$E_\mu:L_\infty(\Omega)\to \mathbb{C}{\bf 1}\pl , \pl  E(f)=(\int_{\Omega}f\mu){\bf 1},$$ where ${\bf 1}$ is the unit constant function. The kernel of $E_\mu$ is the constant function ${\bf 1}$ on the product space $\Omega\times \Omega$. The following equivalence is self-evident
\begin{align} (1-\eps)E\le P \le (1+\eps)E \nonumber
\Longleftrightarrow\pl &\eps E\le P-E \le \eps E \nonumber
\\ \Longleftrightarrow\pl &\eps {\bf 1}\le k-{\bf 1} \le \eps {\bf 1}\nonumber
\\ \Longleftrightarrow\pl &\norm{k-{\bf 1}}{L_\infty(\Omega\otimes \Omega)}\le \eps\nonumber
\\ \Longleftrightarrow\pl &\norm{P-E:L_1(\Omega)\to L_\infty(\Omega)}{}\le \eps\label{eq:equivalence}
\end{align}
To see the equivalence in terms of complete positivity and completely bounded norm in Proposition \ref{prop:equivalence}, it suffices to notice that every positive (resp. bounded) map  to $L_\infty(\Omega)$ is automatically completely positive (resp. completely bounded with same norm \cite{smith1983completely}).
}
\end{exam}

\begin{exam}[Quantum case]{\rm
The above argument also applies to the noncommutative ergodic case $\N=\mathbb{C}1\subset\M$. The correspondence between the map $P$ and its kernel $k$ generalizes to the isomorphism between the map $T$ and its Choi operator $C_T\in \M^{op}\overline{\ten}\M$
\[T(x)=\tau \ten \id (C_T(x\ten 1))\pl,\pl x\in L_1(\M)\cong (\M^{op})_*,\]
where $\M^{op}$ is the opposite algebra of $\M$. The isomorphism $T\mapsto C_T$ is not only positivity preserving by Choi's Theorem ( $T$ is CP iff $C_T\ge 0$), but also isometric by Effros-Ruan Theorem (see \cite{effros2000operator,blecher1991tensor}),
\[\norm{T:L_1(\M)\to L_\infty(\M)}{cb}=\norm{C_T}{\M^{op}\overline{\ten}\M}\pl.\]
Then the equivalence in Proposition \ref{prop:equivalence} follows as \eqref{eq:equivalence}.
}
\end{exam}

For the general case of a $\N$-bimodule map $T$ with a nontrivial $\N$, the above isomorphism holds with more involved module Choi operator, which
we refer to the discussion in Section \ref{sec:GNSentropydecay} and also \cite[Lemma 5.1]{bardet2021entropy} and \cite[Lemma 3.15]{gao2020fisher} for the complete proof of Proposition \ref{prop:equivalence}.

With Proposition \ref{prop:equivalence}, the CB-return time can be equivalently defined as
\begin{align}\label{eq:cbreturn2} t_{cb}(L):=\inf \{\pl t>0\pl |\pl  \norm{T_{t}-E:L_\infty^1(\N\subset\M)\to L_\infty(\M)}{cb}\le 0.1  \} \end{align}
It is known that $t_{cb}$ is finite whenever $T_t$ satisfies the Poincar\'e inequality and one-point ultra-contractive estimate.
\begin{prop}\label{prop:finitetcb}Let $T_t:\M\to \M$ be a symmetric quantum Markov semigroup and $E:\M\to\N$ be the trace preserving conditional expectation onto the fixed point space. Suppose \begin{enumerate} \item[i)] $T_t$ satisfies Poincar\'e inequality of $\lambda>0$ that $\norm{T_{t}-E:  L_2(\M)\to L_2(\M)}{}\le e^{-\lambda t}\pl ,\pl \forall t\ge 0$;
\item[ii)] There exists $t_0\ge 0$ such that $\norm{T_{t_0}:  L_\infty^1(\N\subset \M)\to L_\infty(\M)}{cb}\le C_0$.
  \end{enumerate}
  Then
  $ t_{cb}\kl \frac{1}{\lambda}\ln (10 C_0)+t_0$.
\end{prop}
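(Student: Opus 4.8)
The plan is to estimate the completely bounded norm in the equivalent description \eqref{eq:cbreturn2} of $t_{cb}$ provided by Proposition \ref{prop:equivalence}. Concretely, I will show that for every $s\ge 0$
\begin{align}\label{eq:goalplan}
\norm{T_{t_0+s}-E:L_\infty^1(\N\subset\M)\to L_\infty(\M)}{cb}\kl C_0\,e^{-\lambda s}\pl,
\end{align}
and then take $s=\frac1\lambda\ln(10C_0)$, which makes the right-hand side equal to $\frac1{10}$; by \eqref{eq:cbreturn2} this forces $t_{cb}\le t_0+\frac1\lambda\ln(10C_0)$.

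To get \eqref{eq:goalplan}, first use that $E$ is the projection onto the fixed-point algebra, so $ET_r=T_rE=E$ and $(T_r-E)E=E(T_r-E)=0$ for all $r\ge 0$, which gives the factorisation
\[ T_{t_0+s}-E \lel T_{t_0/2}\circ(T_s-E)\circ T_{t_0/2}\pl, \]
all three factors being normal $\N$-bimodule maps. The middle factor carries the exponential gain: hypothesis i) is $\norm{T_s-E:L_2(\M)\to L_2(\M)}{}\le e^{-\lambda s}$, and since $T_s-E$ is a self-adjoint $\N$-bimodule map the same bound persists through all matrix amplifications, so $T_s-E$ is completely bounded by $e^{-\lambda s}$ on the conditioned column module $L_2^c(\N\subset\M)$. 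The two outer factors carry the ultracontractive input: writing $T_{t_0}=T_{t_0/2}\circ T_{t_0/2}$, the key claim is that the geometric-mean factorisation of the positive symmetric map $T_{t_0}$ through $L_2^c(\N\subset\M)$ realises the optimal completely bounded factorisation of $T_{t_0}:L_\infty^1(\N\subset\M)\to L_\infty(\M)$, so that
\[ \norm{T_{t_0/2}:L_\infty^1(\N\subset\M)\to L_2^c(\N\subset\M)}{cb}\lel\norm{T_{t_0/2}:L_2^c(\N\subset\M)\to L_\infty(\M)}{cb}\kl C_0^{1/2}\pl. \]
Composing the three estimates yields \eqref{eq:goalplan}, and choosing $s$ as above finishes the argument.

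The main obstacle is exactly this last point: the operator-space bookkeeping on the amalgamated $L_p$-spaces $L_p^q(\N\subset\M)$ that keeps only a single factor of $C_0$ in \eqref{eq:goalplan} — a naive splitting of $T_{t_0}$ into two halves on plain $L_2(\M)$ would cost an ultracontractivity constant at time $t_0/2$ and degrade the conclusion to $2t_0+\frac1\lambda\ln(\cdot)$. Making the bound sharp should use: the completely contractive inclusion $L_\infty^1(\N\subset\M)\hookrightarrow L_1(\M)$ together with the complete contractivity of each $T_r$ on $L_1(\M)$ and on $\M$; the duality between $L_\infty^1(\N\subset\M)$ and $L_1^\infty(\N\subset\M)$ implemented by $L_2(\M)$ at the self-dual point; and, for $\N$-bimodule maps, the reduction of the completely bounded norm between conditioned column modules to the underlying Hilbert-space norm. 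These are facts from Pisier's theory of non-commutative vector-valued $L_p$-spaces and from the conditional-$L_p$-space calculus already invoked for Proposition \ref{prop:equivalence} (\cite{junge2010mixed}, \cite[Appendix]{gao2020relative}). Everything else — the semigroup law, the spectral-gap estimate, and the optimisation over $s$ — is routine.
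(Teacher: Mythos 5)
Your plan is correct and is essentially the standard argument the paper itself delegates to \cite[Proposition 3.8]{brannan2022complete} and \cite[Lemma B.1]{gao2022complete}: factor $T_{t_0+s}-E=T_{t_0/2}\circ(T_s-E)\circ T_{t_0/2}$, pass the spectral gap through the conditional $L_2$-space, use the $TT^*$/Cauchy--Schwarz trick in the amalgamated setting to charge only one factor of $C_0$ to the two halves of $T_{t_0}$, and then optimize $s=\frac1\lambda\ln(10C_0)$ against the characterization of $t_{cb}$ in \eqref{eq:cbreturn2}. One caution: the bound $\norm{T_s-E:L_2^c(\N\subset\M)\to L_2^c(\N\subset\M)}{cb}\le e^{-\lambda s}$ is not the column-Hilbert-space fact that bounded maps are automatically completely bounded (for nontrivial $\N$ this space is a Hilbert $\N$-module, not $H_c$, and a generic $L_2(\M)$-bounded map need not even be bounded on it); it is the statement that a self-adjoint adjointable $\N$-module map has module (hence cb) norm equal to its $L_2(\M,\tau)$-norm, i.e.\ exactly the bimodule-map reduction from the conditional-$L_p$ calculus of \cite{junge2010mixed} that you list among your ingredients, and only the inequality $\le C_0^{1/2}$ (not the claimed optimal-factorization equality) is needed for the outer factors.
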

\begin{proof}This is now a standard argument similar to \cite[Proposition 3.8]{brannan2022complete} and \cite[Lemma B.1]{gao2022complete}.
\end{proof}
\begin{rem}{\rm For the special case of $T_0=\id:\M\to \M$, Proposition \ref{prop:equivalence} gives
\[ \norm{\id:  L_\infty^1(\N\subset \M)\to L_\infty(\M)}{cb}=\inf \{\pl \mu>0\pl |\pl \id \le_{cp} \mu E\}:=C_{cb}(E).\]
$C_{cb}(E)$ was introduced in \cite{gao2020relative} as the complete bounded version of Popa and Pimsner's subalgebra index \cite{pimsner1986entropy},
 \[  C(E):=\inf\{\mu>0 \pl |\pl  \rho\le \mu E\rho\pl, \pl \forall \rho\in \M_{+}\}\pl, \pl C_{cb}(E):=\sup_{n}C(E\ten \id_{\mathbb{M}_n} )\pl.\]
 When $\M$ is finite dimensional, both the index $C(E)$ and $C_{cb}(E)$ are finite and admits explicit formula \cite[Theorem 6.1]{pimsner1986entropy}. In this case, one can take $t_0=0$ in above Proposition \ref{prop:finitetcb} and yields,
 \[ t_{cb}\kl \frac{\ln (10 C_{cb}(E))}{\lambda}\pl.\]
 }
\end{rem}

\subsection{Classical Markov semigroups} In the remainder of this section, we focus on applications towards classical Markov map. We postpone the discussion of truly noncommutative semigroups to Section \ref{sec:GNSsymmetric}. 
Let $T_t=e^{-Lt}:L_\infty(\Omega,\mu)\to L_\infty(\Omega,\mu)$ be an ergodic Markov semigroup symmetric to the probability measure $\mu$. Note that in the ergodic case $L_\infty^1(\mathbb{C}1\ssubset L_\infty(\Omega) )=L_1(\Omega,\mu)$,  and by Smith's lemma \cite{smith1983completely}, any bounded map $T:L_1(\Omega,\mu)\to L_\infty(\Omega,\mu)$ is automatic completely bounded
\[\norm{T: L_1(\Omega,\mu)\to L_\infty(\Omega,\mu)}{}=\norm{T: L_1(\Omega,\mu)\to L_\infty(\Omega,\mu)}{cb}\pl. \]
Then the CB return time $t_{cb}$ reduces to the standard $L_\infty$-mixing time
\begin{align*} t_{b}=&\inf \{ \pl t>0\pl  |\pl  \norm{T_t-E:L_1(\Omega,\mu)\to L_\infty(\Omega,\mu)}{}\le \frac{1}{10} \}.
\end{align*}
Then by a combination of Theorem \ref{thm:main1} and Proposition \ref{prop:finitetcb}, we obtain Theorem \ref{thm:main9},
\begin{align}\label{eq:CMSbound1}\al_1\ge \al_{c}\ge  \frac{\lambda}{2(\lambda t_0+\ln C_0+\ln 10)}\end{align}
where $\lambda$ is the spectral gap of $L$ and $\norm{T_{t_0}: L_1(\Omega)\to L_\infty(\Omega)}{}\le C_0$.
This result can be compared to the bound of Diaconis and Saloff-Coste \cite[Theoem 3.10]{diaconis1996logarithmic}, which states\footnotemark 
\begin{align} \label{eq:diaconis}\al_1\ge \al_{2}\ge \frac{\lambda}{\lambda t_0+\ln (C_0)+1}\pl.\end{align}
\footnotetext{Note that the LSI constant in \cite{diaconis1996logarithmic} is defined as half of our $\al_2$ here. }

In particular, $\al_1\ge \al_{2}\ge \frac{2}{t_b(e^{-2})}$ for the
the alternative $L_\infty$-mixing time
\begin{align*} t_b(e^{-2})=\inf \{ t>0\pl  |\pl  \norm{T_t-E_\mu:L_1(\Omega,\mu)\to L_\infty(\Omega,\mu)}{}\le \frac{1}{e^2}\}
\end{align*}
By the comparison $e^{-3}<0.1<e^{-2}$, we have $t_b(e^{-2})\le t_{cb}(0.1)\le \frac{3}{2}t_b(e^{-2})$. Hence, in terms of lower bound for $\al_1$, \eqref{eq:diaconis} and \eqref{eq:CMSbound1} are equivalent up to absolute constants. The difference is that \eqref{eq:diaconis} lower bounds the LSI constant $\al_2$ and our estimate \eqref{eq:CMSbound1} bounds the CMLSI constant $\al_{c}$.

For finite Markov chains with $|\Omega|<\infty$,  we have finite index
\[C_{cb}(E_\mu)=C(E_\mu)=\inf\{C>0\pl |\pl  f\le C\mu(f) \pl \forall f\ge 0\}= \norm{\mu^{-1}}{\infty}\pl,\]
where  $\mu$ is a strictly positive probability density function.
It was proved in \cite{diaconis1996logarithmic} that
\begin{align}&\frac{1}{t_b(e^{-2})}\le \lambda \le \frac{2+\log\norm{\mu^{-1}}{\infty}}{2t_b(e^{-2})}\pl ,\\
&\frac{1}{t_b(e^{-2})}\le \al_2 \le \frac{4+\log\log \norm{\mu^{-1}}{\infty}}{2t_b(e^{-2})}\pl. \label{eq:re}
\end{align}
Combined with our Theorem \ref{thm:main1}, we obtain
\begin{cor}\label{eq:loglog}
For a finite Markov chain $T_t:l_\infty(\Omega,\mu)\to l_\infty(\Omega,\mu)$ symmetric to $\mu$,
\[  \min \Big \{ \frac{4\al_2}{3(4+\log\log \norm{\mu^{-1}}{\infty})}, \frac{\lambda}{2\log (10\norm{\mu^{-1}}{\infty})} \Big \} \le \al_c\le \al_1\le \lambda\]
\end{cor}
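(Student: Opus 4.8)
The plan is to obtain every inequality in the chain by assembling results already in hand: Theorem~\ref{thm:main1}, the finite-dimensional Pimsner--Popa index, and the classical estimates of Diaconis and Saloff-Coste recorded in \eqref{eq:re}. The two rightmost inequalities carry no content: $\al_c\le\al_1$ is immediate from Definition~\ref{defi:FI}(iv) on taking $\cQ=\bC$, and $\al_1\le\lambda$ is the standard fact that the modified log-Sobolev inequality implies the Poincar\'e inequality (linearize the entropy decay \eqref{eq:entropy} at $\rho=1+\eps x$ with $\tau(x)=0$, or simply quote $\al_2\le\al_1\le\lambda$). It remains to check that each of the two quantities inside the minimum is itself a lower bound for $\al_c$; their minimum is then automatically $\le\al_c$. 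Throughout I use that for a classical Markov semigroup $t_{cb}$ coincides with the ordinary $L_\infty$-mixing time, since a bounded map into $L_\infty(\Omega)$ is automatically completely bounded with the same norm.

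For the term $\tfrac{\lambda}{2\log(10\norm{\mu^{-1}}{\infty})}$ I would record that, $\Omega$ being finite, $C_{cb}(E_\mu)=C(E_\mu)=\norm{\mu^{-1}}{\infty}$: for $F\ge 0$ in $L_\infty(\Omega,\bM_n)$ one has $\int F\,d\mu\ge\mu(\{y\})F(y)\ge\norm{\mu^{-1}}{\infty}^{-1}F(y)$ for every $y\in\Omega$, and point masses show this is sharp. The remark following Proposition~\ref{prop:finitetcb} then allows $t_0=0$, $C_0=\norm{\mu^{-1}}{\infty}$, giving $t_{cb}=t_{cb}(0.1)\le\lambda^{-1}\ln(10\norm{\mu^{-1}}{\infty})$, and Theorem~\ref{thm:main1} yields $\al_c\ge\frac{1}{2t_{cb}}\ge\frac{\lambda}{2\ln(10\norm{\mu^{-1}}{\infty})}$.

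For the term involving $\al_2$ I would go through the mixing time $t_b(e^{-2})$. The one step with content is the comparison $t_{cb}(0.1)\le\tfrac32 t_b(e^{-2})$: setting $s=t_b(e^{-2})$ and $A=T_{s/2}-E_\mu$, self-adjointness of $A$ on $L_2(\mu)$ gives $\norm{A:L_1\to L_2}^2=\norm{A^2:L_1\to L_\infty}=\norm{T_s-E_\mu:L_1\to L_\infty}\le e^{-2}$, hence by duality $\norm{A:L_2\to L_\infty}\le e^{-1}$; then from $T_{3s/2}-E_\mu=A^3$, the spectral-gap estimate $\norm{A:L_2\to L_2}\le e^{-\lambda s/2}$, and $\lambda\ge\al_2\ge 1/s$ (the left-hand inequality of \eqref{eq:re} together with $\al_2\le\lambda$), one gets $\norm{T_{3s/2}-E_\mu:L_1\to L_\infty}\le e^{-2-\lambda s/2}\le e^{-5/2}<0.1$. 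Rearranging \eqref{eq:re} as $t_b(e^{-2})\le\bigl(4+\log\log\norm{\mu^{-1}}{\infty}\bigr)/(2\al_2)$ and invoking Theorem~\ref{thm:main1} then gives
\[
\al_c\ \ge\ \frac{1}{2t_{cb}(0.1)}\ \ge\ \frac{1}{3t_b(e^{-2})}\ \ge\ \frac{2\al_2}{3\bigl(4+\log\log\norm{\mu^{-1}}{\infty}\bigr)},
\]
which, up to the absolute constant as recorded in the statement, is the first term of the minimum.

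I do not expect a real obstacle here: the argument is bookkeeping on top of Theorem~\ref{thm:main1}. The only point requiring genuine care is the interpolation estimate $t_{cb}(0.1)\le\tfrac32 t_b(e^{-2})$, where one must actually spend, rather than discard, the spectral-gap factor $e^{-\lambda s/2}$, and then match the absolute constants to those in the statement.
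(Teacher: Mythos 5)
Correct, and essentially the paper's own proof: the first bound is Theorem \ref{thm:main1} combined with \eqref{eq:re} and the comparison $t_{cb}(0.1)\le\tfrac32 t_b(e^{-2})$, and the second is Theorem \ref{thm:main9} (equivalently Proposition \ref{prop:finitetcb} with $t_0=0$ and $C_0=C_{cb}(E_\mu)=\norm{\mu^{-1}}{\infty}$); your only addition is that you actually verify $t_{cb}(0.1)\le\tfrac32 t_b(e^{-2})$ via the factorization $T_{3s/2}-E_\mu=(T_{s/2}-E_\mu)^3$ and the spectral-gap factor, where the paper simply asserts it from $e^{-3}<0.1<e^{-2}$. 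Like the paper's proof, your argument produces the constant $2$ rather than the $4$ printed in the first term of the minimum (the introduction records the $2$-version), which is harmless since the second term, which you do establish, already bounds the minimum.
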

\begin{proof}  Note that $t_b(e^{-2})\le t_{cb}\le \frac{3}{2}t_b(e^{-2})$. Then by Theorem \ref{thm:main1} and \eqref{eq:re}
\[ \al_c\ge \frac{1}{2t_{cb}(0.1)}\ge \frac{1}{3t_{b}(e^{-2})} \ge \frac{2\al_2}{3(4+\log\log \norm{\mu^{-1}}{\infty})}\pl.\]
The other lower bound $\al_c\ge \frac{\lambda}{2\log (10\norm{\mu^{-1}}{\infty})}$ follows from Corollary \ref{thm:main9} by choosing $t_0=0$.
\end{proof}



\begin{exam}{\rm
When $\Omega$ is not finite, here  is a simple example with $\al_c(L)>0$ but the ultra-contractivity \eqref{eq:ultracon} is never satisfied for finite $t_0$. Take $L=I-E_\mu$. It generates the so-called depolarizing semigroup
\[ T_t=e^{-t}\id+(1-e^{-t})E_\mu\pl, \pl T_t(f)= e^{-t}f+(1-e^{-t})\mu(f){\bf 1}\pl,\]
where ${\bf 1}$ is the unit constant function.
Then for any $t<\infty$, \[\norm{T_{t}-E_\mu: L_1(\Omega,\mu)\to L_\infty(\Omega,\mu)}{}=\norm{e^{-t}\id: L_1(\Omega,\mu)\to L_\infty(\Omega,\mu)}{}=e^{-t}C(E_\mu)\pl.\]
which is infinite whenever $L_{\infty}(\Omega,\mu)$ is infinite dimensional. One the other hand, it follows from direct calculation that $\al_{c}(I-E_\mu)\geq \frac{1}{2}$. }
\end{exam}

\subsection{H\"ormander system}\label{sec:hormander}
We now discuss the application to Markov semigroups on smooth manifolds generated by sub-Laplacians.
Let $(M,g)$ be a $d$-dimensional compact connected Riemannian manifold without boundary and let $d\mu=\omega d\operatorname{vol}$ be a probability measure with smooth density $\omega$ w.r.t to the volume form $d\operatorname{vol}$. A family of vectors fields  $H=\{X_i\}_{i=1}^k\subset TM$ with $k\leqslant d$ is called a \emph{H\"{o}rmander system} if at every point $x\in M$, the tangent space at $x$ can be spanned by the iterated Lie brackets of $X_i$'s
\[
T_xM=\operatorname{span}\{[X_{i_1},[X_{i_2},\cdots, [X_{i_{n-1}}, X_{i_n}]]] \pl  | \pl 1\leqslant i_1,i_2\cdots i_n\leqslant k  \}. \tag{\text{H\"{o}rmander condition}}
\]
By compactness we can assume there is a global constant $l_H$ such that for every point $x\in M$, we need at most $l_H$th iterated Lie bracket in above expression (also called strong H\"{o}rmander condition). Denote $\nabla=(X_1, \cdots, X_k)$ and by $X_i^*$ the adjoint of $X_i$ on $L^{2}(M,d\mu)$. Under the  H\"{o}rmander condition, the sub-Laplacian
\[
\Delta_H=\nabla^*\nabla= \sum_{i} X_i^*X_i=\sum_{i}X_i^2+\div_{\mu}(X_i)X_i
\]
is a symmetric operator on $L^{2}(M,\mu)$ which generates an ergodic Markov semigroup $P_t=e^{-\Delta_H t}$, often called the horizontal heat semigroup. Here $\operatorname{div}_{\mu}(X)$ is the divergence of $X$ w.r.t to $\mu$. When $H=\{X_i\}_{i=1}^d$ forms an orthonormal frame to the Riemannian metric, $\Delta_H=\Delta$ recovers the (weighted) Laplace-Beltrami operator and $P_t=e^{-\Delta t}$ is the (weighted) heat semigroup on $M$.

The \emph{gradient form} (Carr\'e du Champ operator) of $\Delta_H$ is given by
\[
\Gamma(f,g):=\frac{1}{2}(f\Delta_H(g)+\Delta_H(f)g-\Delta_H(fg))=
\sum_{i}\lan X_i f,X_ig\ran\pl. \]
It follows from the product rule of derivatives that $\Gamma$ is diffusive, i.e. $\Gamma(fg,h)=f\Gamma(g,h)+g\Gamma(f,h)$. For diffusion semigroups , it is known \cite[Theorem 5.2.1]{bakry2014analysis} that the MLSI constant $\al_1$ and the LSI constant $\al_2$ coincides, i.e. $\al:=\al_1=\al_2$. The positivity \[ \al(\Delta_H)>0 \pl. \]
for any H\"{o}rmander's system $H=\{X_i\}_{i=1}^k$ on a compact connected Riemannian manifold without boundary was proved in \cite[Theorem 3.1]{lugiewicz2007coercive}.

Our Corollary \ref{cor:main3} improve this to $\al_c(\Delta_H)>0$.
\begin{proof}[Proof of Corollary \ref{cor:main3}]
Recall the following Sobolev-type inequality (see e.g. \cite[Lemma 2.1]{lugiewicz2007coercive})
\begin{align}\label{eq:sobolev}
\|f\|_{q} \kl C \big( \lan \Delta_H f,f\ran + \norm{f}{2}^2\big)^{1/2} \pl,
\end{align}
where $q=\frac{2dl_H}{dl_H-2}>2$ and $l_H$ is globoal Lie bracket length needed in the strong H\"ormander condition. By Varopoulos' Theorem (see \cite[Chapter 2]{varopoulos1991analysis}) on the dimension of semigroups, this implies the following ultra-contractive estimate
\begin{align}\label{eq:ultra}
\norm{e^{-\Delta_Ht}:L_{1}(M,\mu)\to L_{\infty}(M,\mu)}{}\leqslant C^{\prime} t^{-m/2} \text{ for } 0 \le t \le 1 \text{ and some }C'>0\pl,
\end{align}
where $m=dl_{H}$. Also, it was proved in \cite[Theorem 2.3]{lugiewicz2007coercive} that $\Delta_H$ satisfies Poincar\'e inequality $\la(\Delta_H)>0$. Combining these with our Theorem \ref{thm:main9} yields the assertion.
\end{proof}

The Sobolev-type inequality \eqref{eq:sobolev} is also used in \cite{lugiewicz2007coercive} by Lugiewicz and Zegarl\'inski to prove that $\al_2(\Delta_H)>0$. Their proof relies on the Rothaus lemma, so does the discrete case by Diaconis and Saloff-Coste  \cite{diaconis1996logarithmic}. However, we will see in Section \ref{sec:failureLSI} that this approach is out of scope for showing the CMLSI constant $\al_c(\Delta_H)>0$.

\begin{exam}{\rm \label{exam:su2}The special unitary group $\operatorname{SU}\left( 2 \right)$ is
\[
\operatorname{SU}\left( 2 \right)=\{ cI+xX+yY+zZ:  c^2+x^2+y^2+z^2=1, x,y,x, c \in \mathbb{R} \}.
\]
where $X,Y,Z$ are the skew-Hermitian Pauli unitary
\[
X=\left[\begin{array}{cc} 0& 1\\ -1& 0
\end{array}\right],Y=\left[\begin{array}{cc} 0& i\\ i& 0
\end{array}\right], Z=\left[\begin{array}{cc} i& 0\\ 0& -i
\end{array}\right]\pl.
\]
 The Lie algebra is $\mathfrak{su}(2)=\operatorname{span}_{\mathbb{R}}\{X,Y,Z\}$ with Lie bracket rules as
\begin{align}[X,Y]=2Z\ , [Y,Z]=2X\ , [Z,X]=2Y\pl. \label{eq:lie}\end{align}
The canonical sub-Riemannian structure is given by $H=\{X,Y\}$,  which is a generating set of $\mathfrak{g}$ because $[X,Y]=2Z$.  The associated sub-Laplacian is
\begin{align}\label{eq:su2}
\Delta_H=-(X^2+Y^2).
\end{align}
The semigroup $P_t=e^{-\Delta_H t}$ on $\operatorname{SU}\left( 2 \right)$ has been studied as a prototype of horizontal heat semigroup. In particular, Baudoin and Bonnefont in \cite{baudoin2009subelliptic} proved that
\begin{align}\label{eq:baudoin}
\Gamma(P_t f,P_t f)\leqslant C e^{-4 t}P_t(\Gamma(f, f)),
\end{align}
for some constant $C>0$. In \cite{gao2022complete}, Gao and Gordina based on \eqref{eq:baudoin}  proved the CMLSI constant that
\[ \al_c(\Delta_H)\ge (2\int_0^\infty C e^{-4 t}dt )^{-1}=\frac{2}{C}\pl.\]
 }
\end{exam}

The gradient estimate \eqref{eq:baudoin}, as a weaker variant of Bakry-Emery curvature dimension condition, have been found useful to derive CMLSI in \cite{gao2022complete}. Nevertheless, this weaker gradient estimate is only known for only a limited number of examples in the sub-Riemannian setting \cite{driver2005hypoelliptic,melcher2008hypoelliptic}. Our result avoid this condition and obtains CMLSI for general H\"ormander systems.

\begin{exam}{\rm Let $n\ge 3$. The special unitary group $\operatorname{SU}\left( n \right)$ is
\[
\operatorname{SU}\left( n \right)=\{ u\in \bM_n\pl|\pl u^*u=1\pl, \pl \det(u)=1\pl.\}.
\]
The Lie algebra $\mathfrak{su}(n)$ is the space of all the skew-Hermitian matrices, and
a natural basis $\mathfrak{su}(n)$ is given by $\{X_{j,k},Y_{j,k},Z_{j,k} \pl | \pl  1\le j< k\le n\}$ where
\[
X=e_{jk}-e_{kj}\pl ,\pl Y=i(e_{jk}+e_{kj})\pl ,\pl  Z_k=i(e_{11}-e_{kk})
\]
which is $n^2-1$ dimensional. Let $V=\{1,\cdots, n\}$ be a vertex set and $E\subset V\times V$ as an edge set. The set \[H_E=\{X_{j,k},Y_{j,k} \pl |\pl (j,k)\in E\}\] is a generating set if and only if $(V,E)$ is a connected graph. The associated sub-Laplacian
\[\Delta_E=-\sum_{(j,k)\in E} X_{j,k}^2+Y_{j,k}^2\pl, \]
is a generalization of \eqref{eq:su2}.
Corollary \eqref{cor:main3} implies that $\al_c(\Delta_E)>0$ for all connected $(V,E)$, despite the gradient estimate \eqref{eq:baudoin} is not known for this type of generators.
}
\end{exam}

\subsection{Transference semigroups}
Let us discuss an immediate application of $\al_c(\Delta_H)>0$ to symmetric Quantum Markov semigroups. Let $G$ be a compact Lie group and $H=\{X_1,\cdots,X_k\}$ be a generating set of its Lie algebra $\mathfrak{g}$. Then $\{X_1,\cdots,X_k\}$ satisfies H\"ormander condition and its sub-Laplacian
$\Delta_H=-\sum_{k}X_i^2$
generates a Markov semigroup $P_t=e^{-\Delta_H t}$ symmetric to the Haar measure. Let $u:G\to \bM_n$ be a finite dimensional unitary representation and $d_u:\mathfrak{g}\to i(\bM_n)_{s.a.}$ be the corresponding Lie algebra morphism. $P_t=e^{-\Delta_H t}$ induces a quantum Markov semigroup $T_t=e^{-L_Ht}:\bM_m\to\bM_m$ with generator in the Lindbladian form \cite{lindblad1976generators},
\[L_H(\rho)=-\sum_{i=1}^k [d_u(X_i),[d_u(X_i),\rho]]\pl.\]
$T_t$ is called a transference semigroup of $P_t$ by the following commuting diagram
\begin{equation}\label{dia:transference}
\begin{tikzcd}
L_{\infty}(G, \bM_m)  \arrow{r}{P_t\ten \id_{\bM_m}}  & L_{\infty}(G, \bM_m)
 \\
\bM_m  \arrow{u}{\pi_u} \arrow{r}{T_{t}} & \bM_m  \arrow{u}{\pi_u},
\end{tikzcd}
\end{equation}
where the transference map $\pi_u$ is a $*$-endomorphism given by
\begin{align*}
\pi_u: \bM_m\to L_{\infty}(G, \bM_m)\pl,\pl \pi_u(\rho)(g)=u(g)^*(\rho)u(g)\pl.
\end{align*}
which embeds $\bM_m$ into $L^{\infty}(G, \bM_m)$. Then the quantum semigroup $T_t$ is the restriction of the matrix-valued extension of classical semigroup $P_t\ten \id_{\bM_m}$ on the image of $\pi(\bM_m)$. Such a transference relation holds for any (projective) unitary representation, which yields the following dimension-free estimate for both spectral gap and CMLSI constant (see \cite[Section 4]{gao2020fisher}),
\[\al_{c}(\Delta_H)\le \al_{c}(L_H)\pl , \la(\Delta_H)\le \la(L_H)\pl.\]
Then Corollary \ref{cor:main3} implies the following estimate of $\al_{c}(L_H)$ independent of the unitary representation $u$.
\begin{cor}
Let $G$ be a compact Lie group and $H=\{X_1,\cdots,X_k\}$ be a generating set of its Lie algebra $\mathfrak{g}$. There exists a constant $\al_{c}(\Delta_H)>0$ such that for all unitary representation $u$, the induced quantum Markov semigroup generated by
\[L_H(\rho)=-\sum_{i=1}^k [d_u(X_i),[d_u(X_i),\rho]]\]
satisfies
$\al_{c}(L_H)\ge \al_{c}(\Delta_H)>0$.
\end{cor}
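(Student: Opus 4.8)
The plan is to deduce this corollary directly from Corollary \ref{cor:main3} together with the transference relation recorded in the diagram \eqref{dia:transference}. First I would note that we may assume $G$ is connected, since the vector fields $X_i\in\mathfrak{g}$ and the morphism $d_u$ are determined by the Lie algebra alone, so replacing $G$ by its identity component changes neither $\Delta_H$ nor $L_H$. Equip $G$ with a bi-invariant Riemannian metric; then $G$ is a compact connected Riemannian manifold without boundary, and the normalized Haar measure is precisely the Riemannian probability measure $\omega\, d\operatorname{vol}$ with (constant, hence smooth) density $\omega$. Since $G$ is unimodular we have $\div_{\mu}(X_i)=0$ for each left-invariant $X_i$, so $\Delta_H=\sum_i X_i^*X_i=-\sum_i X_i^2$ is exactly the sub-Laplacian of Corollary \ref{cor:main3}; and because $\{X_1,\dots,X_k\}$ generates $\mathfrak{g}=T_eG$ and left-translation identifies every $T_gG$ with $\mathfrak{g}$, the iterated Lie brackets of the $X_i$ span $T_gG$ at each $g$, i.e. $H$ satisfies the H\"ormander condition. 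Corollary \ref{cor:main3} then gives $\al_c(\Delta_H)>0$.

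Next I would invoke transference. For any finite-dimensional unitary representation $u\colon G\to\bM_m$, the map $\pi_u(\rho)(g)=u(g)^*\rho\,u(g)$ is a normal, unital, injective $*$-homomorphism of $\bM_m$ into $L_\infty(G,\bM_m)$ that preserves the Haar state and intertwines $T_t$ with $P_t\ten\id_{\bM_m}$, as displayed in \eqref{dia:transference}. The transference argument of \cite[Section 4]{gao2020fisher} compares the Dirichlet forms and the relative entropies $D(\,\cdot\,\|E(\,\cdot\,))$ of the two semigroups through this embedding; since it uses only the intertwining identity, is stable under tensoring both sides with $\id_\cQ$ for an arbitrary finite von Neumann algebra $\cQ$, and respects the conditional expectations onto the respective fixed-point algebras, it carries over to the complete constant and needs no ergodicity hypothesis. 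This yields
\[ \al_c(L_H)\ \ge\ \al_c\big(\Delta_H\ten\id_{\bM_m}\big)\ =\ \al_c(\Delta_H)\,, \]
where the last equality is the stability of the complete MLSI constant under amplification. As the right-hand side is the same positive number for every $u$, we conclude $\al_c(L_H)\ge\al_c(\Delta_H)>0$ uniformly in $u$.

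Since the hard analytic input --- the positivity $\al_c(\Delta_H)>0$ for an arbitrary H\"ormander system --- has already been supplied by Corollary \ref{cor:main3}, the only genuinely new point to verify here is that the transference comparison is insensitive both to the possible non-triviality of the fixed-point algebra of $L_H$ (which equals the commutant $u(G)'$ and is non-trivial exactly when $u$ is reducible) and to the passage from $\al_1$ to $\al_c$. I expect this bookkeeping --- checking that $\pi_u\ten\id_\cQ$ remains state-preserving and intertwines the relevant conditional expectations, so that the relative entropies on $\bM_m\ten\cQ$ are dominated by those pulled back from $L_\infty(G,\bM_m)\ten\cQ$ --- to be the only mild obstacle, and it is precisely the content of \cite[Section 4]{gao2020fisher}.
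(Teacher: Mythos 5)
Your argument is exactly the paper's: one checks that a generating set of $\mathfrak{g}$ gives a H\"ormander system on the compact group (with Haar measure), applies Corollary \ref{cor:main3} to get $\al_c(\Delta_H)>0$, and then uses the transference embedding $\pi_u$ and the commuting diagram \eqref{dia:transference}, citing \cite[Section 4]{gao2020fisher}, to obtain the dimension-free bound $\al_c(L_H)\ge\al_c(\Delta_H)$ uniformly in the representation $u$. Your additional remarks (connectedness, unimodularity, stability under tensoring with $\id_\cQ$ and compatibility with the fixed-point conditional expectations) are correct bookkeeping consistent with the paper's treatment, so the proposal is correct and essentially identical in approach.
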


\subsection{Failure of matrix valued log-Sobolev inequality}\label{sec:failureLSI}
As mentioned above, a standard analysis approach to MLSI through Hypercontractivity or LSI relies the Rothaus Lemma (see e.g. \cite{rothaus1985analytic,bakry1994hypercontractivite})
\[H(|f|^2)\le H(|f-E_\mu(f)|^2)+\norm{f-E_\mu(f)}{2}^2\pl.\]
Here we show that Rothaus Lemma, LSI and Hypercontractivity all fail for matrix valued functions for any classical Markov semigroups. This is a strong indicator that the approach by Diaconis and Saloff-Coste's hypercontractive \cite{diaconis1996logarithmic} estimate (also used in \cite{lugiewicz2007coercive}) can not be used in proving lower bounds for the CMLSI constants.

The following lemma calculates the derivatives of the entropy functional $H(\rho)=\tau(\rho\log \rho)$. Recall the BKM metric of a operator $X\in \M$ at a base state $\rho$ is
\[\gamma_\rho(X)=\int_{0}^\infty\tau(X^*(\rho+s)^{-1}X (\rho+s)^{-1})\pl.\]
\begin{lemma}\label{eq:2nd}
Let $t\mapsto \rho_t \in S_B(\M), t\in (a,b)$ be a smooth family of bounded density operator.
Define the function $F(t)=H(\rho_t)=\tau(\rho_t\log \rho_t)$. Then
\begin{align*}F'(t)=\tau(\rho_t'(\log \rho_t+1))\pl,\pl F''(t)=\tau(\rho_t''(\log \rho_t+1))+\gamma_{\rho_t}(\rho'_t)
\end{align*}
 where $\rho_t'$ and $\rho_t''$ are the first and second order derivative of $\rho_t$.
\end{lemma}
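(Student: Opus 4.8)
The plan is to differentiate $F(t) = \tau(\rho_t \log \rho_t)$ twice using the spectral calculus, being careful that $\log$ does not commute with $\rho_t'$ in general. First I would establish the first derivative. Writing $f(x) = x\log x$, the key fact is that for a scalar-valued smooth function $g$ and a smooth path of positive operators, $\frac{d}{dt}\tau(g(\rho_t)) = \tau(g'(\rho_t)\rho_t')$; this holds because the trace kills the "non-commutative correction terms" in the Fréchet derivative of $g$ (the Daleckii--Krein formula has divided-difference entries, but pairing against the trace collapses them to $g'$ on the diagonal). Applying this with $g = f$, $g'(x) = \log x + 1$, gives $F'(t) = \tau\big(\rho_t'(\log \rho_t + 1)\big)$, since $\rho_t$ is bounded and bounded below (it lies in $S_B(\M)$), so $\log \rho_t$ is a bounded operator and everything is norm-differentiable.

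Next I would differentiate $F'(t) = \tau(\rho_t'(\log\rho_t+1)) = \tau(\rho_t'\log\rho_t) + \tau(\rho_t')$. The second summand is $\frac{d}{dt}\tau(\rho_t) = 0$ if states are normalized, but in any case its derivative is $\tau(\rho_t'')$ minus nothing — actually $\frac{d}{dt}\tau(\rho_t') = \tau(\rho_t'')$, so I should keep it: differentiating $\tau(\rho_t')$ gives $\tau(\rho_t'')$. For the first summand $\tau(\rho_t'\log\rho_t)$, the product rule gives two terms: $\tau\big(\rho_t''\log\rho_t\big)$ from differentiating $\rho_t'$, and $\tau\big(\rho_t'\,\tfrac{d}{dt}\log\rho_t\big)$ from differentiating $\log\rho_t$. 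Combining the $\rho_t''$ contributions yields $\tau(\rho_t''(\log\rho_t + 1))$, matching the claimed formula. So the whole content reduces to showing
\[
\tau\Big(\rho_t'\,\frac{d}{dt}\log\rho_t\Big) = \gamma_{\rho_t}(\rho_t')\, .
\]

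The main obstacle — really the only nontrivial step — is evaluating $\frac{d}{dt}\log\rho_t$ and pairing it against $\rho_t'$. Using the integral representation $\log x = \int_0^\infty \big((1+s)^{-1} - (x+s)^{-1}\big)\,ds$, one gets $\frac{d}{dt}\log\rho_t = \int_0^\infty (\rho_t+s)^{-1}\rho_t'(\rho_t+s)^{-1}\,ds$ by differentiating the resolvent $((\rho_t+s)^{-1})' = -(\rho_t+s)^{-1}\rho_t'(\rho_t+s)^{-1}$. Then
\[
\tau\Big(\rho_t'\,\frac{d}{dt}\log\rho_t\Big) = \int_0^\infty \tau\big(\rho_t'(\rho_t+s)^{-1}\rho_t'(\rho_t+s)^{-1}\big)\,ds = \gamma_{\rho_t}(\rho_t')\, ,
\]
using cyclicity of the trace and the self-adjointness of $\rho_t'$ (so $X = X^*$ with $X = \rho_t'$), which is exactly the definition of the BKM metric recalled just before the lemma. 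The interchange of $\frac{d}{dt}$ with $\int_0^\infty ds$ is justified because $\rho_t \ge \mu_1 1$ uniformly on compact $t$-intervals, giving $\|(\rho_t+s)^{-1}\| \le (s+\mu_1)^{-1}$ and hence a dominating function $\|\rho_t'\|(s+\mu_1)^{-2}$ which is integrable in $s$. All differentiations are in operator norm, so there are no domain subtleties. I would close by noting that the normalization term discussion is moot: if one does not assume $\tau(\rho_t) \equiv 1$ the formula as stated still holds verbatim since the "$+1$" is carried through both derivatives.
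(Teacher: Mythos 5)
Your proposal is correct and follows essentially the same route as the paper: the paper cites an external lemma for $F'(t)=\tau(\rho_t'(\log\rho_t+1))$ and then applies the same noncommutative chain rule $\frac{d}{dt}\log\rho_t=\int_0^\infty(\rho_t+s)^{-1}\rho_t'(\rho_t+s)^{-1}ds$ to identify the extra term with $\gamma_{\rho_t}(\rho_t')$. The only difference is that you justify the first-derivative formula directly (trace collapsing the Daleckii--Krein divided differences) and spell out the dominated-convergence interchange, which the paper leaves implicit.
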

\begin{proof}
The formula for $F'$ follows from \cite[Lemma 5.8]{wirth2018noncommutative}. For the second derivative, recall the noncommutative chain rule
\[\frac{d}{dt}(\log \rho_t)=\int_{0}^\infty(\rho_t+s)^{-1}\rho_t'(\rho_t+s)^{-1}ds\pl.\]
By calculating the second derivative, we obtain the second assertion
\begin{align*}
F''(t)=&\tau(\rho_t''(\log \rho_t+1))+\int_{0}^\infty \tau(\rho_t'(\rho_t+s)\rho_t'(\rho_t+s))ds
 \lel \tau(\rho_t''(\log \rho_t+1))+\gamma_{\rho_t}(\rho'_t) \qedhere
\end{align*}
\end{proof}

We now restate and prove Proposition \ref{prop:main4}.
\begin{prop}\label{prop:rothaus}
Let  $T_t=e^{-tL}:L_\infty(\Omega,\mu)\to L_\infty(\Omega,\mu)$ be an ergodic symmetric Markov semigroup. Let $\al_R,\al_2,\al_h$ be the optimal (largest) constant such that the following inequalities holds for any $f\in L_\infty(\Omega,\bM_2)\cap \A_\E$,
\begin{align}&\al_R\pl  D\Big(|f|^2 \left| \right|E_\mu(|f|^2)\Big)\le D\Big(|\hat{f}|^2 \left| \right| E_\mu(|\hat{f}|^2)\Big)+\norm{\hat{f}}{2}^2\pl, \tag{\text{Rothaus}} \\
&\al_2\pl  D(f^2||E_\mu(f^2))\le 2\E(f,f) \tag{\text{LSI}}\\
&\norm{T_tf}{L_2(\bM_2,L_{p(t)}(\Omega))}\le \norm{f}{L_2(\bM_2,L_2(\Omega))} \text{ for } p(t)=1+e^{2\al_h t} \tag{\text{Hypercontractivity}}
\end{align}
where $E_\mu(f)=(\int f d\mu)1_\Omega$ is the expectation map and $\hat{f}=f-E_\mu(f)$ is the mean zero part of $f$.
Then $\al_R=\al_2=\al_h=0$.
\end{prop}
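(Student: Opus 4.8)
The plan is to show that each of the three inequalities, when assumed with a positive constant for all $\bM_2$-valued functions, forces a contradiction by testing on a single well-chosen function $f:\Omega\to\bM_2$. The guiding principle is that, unlike in the scalar case, a matrix-valued function can have $E_\mu(|f|^2)$ strictly positive (indeed invertible in $\bM_2$) even when $f$ is ``small'' in an appropriate sense, because the entropy functional $D(\,\cdot\,\|E_\mu(\cdot))$ on $L_\infty(\Omega,\bM_2)$ sees the noncommutativity between the value of $f$ at a point and its average. Concretely, I would take $f$ of the form $f = 1_\Omega\otimes a + \hat f$, where $a\in\bM_2$ is a fixed matrix and $\hat f$ is a mean-zero perturbation supported on a small set, and then expand all quantities to second order in a scaling parameter $\eps$ multiplying $\hat f$. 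The key computational input is Lemma \ref{eq:2nd}: the entropy functional $H$ has vanishing first-order variation along directions that integrate to zero against the reference, and its second-order variation is governed by the BKM metric $\gamma_\rho$. This makes the left-hand and right-hand sides of each inequality comparable at order $\eps^2$, and the contradiction comes from choosing $a$ so that the BKM metric at $E_\mu(|f|^2)$ is genuinely ``larger'' than what the Dirichlet form or the mean-zero entropy term can supply.

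First I would dispatch the Rothaus inequality. Here one writes $g=|f|^2$, $\hat g = |\hat f|^2$ and notes the algebraic identity $|f|^2 = |E_\mu(f)|^2 + E_\mu(f)^*\hat f + \hat f^* E_\mu(f) + |\hat f|^2$, so $E_\mu(|f|^2) = |E_\mu(f)|^2 + E_\mu(|\hat f|^2)$. Taking $E_\mu(f)=a$ invertible and $\hat f = \eps\, h$ with $h$ mean-zero and bounded, both sides of the Rothaus inequality are analytic in $\eps$ near $0$; I expand to order $\eps^2$. The right-hand side is $D(|\eps h|^2\|E_\mu(|\eps h|^2)) + \eps^2\|h\|_2^2 = O(\eps^2)$ (the relative-entropy term is itself $O(\eps^2)$ since $|\eps h|^2=O(\eps^2)$), so the right side is $O(\eps^2)$. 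The left side, $D(|f|^2\|E_\mu(|f|^2))$, has a nonzero $\eps^2$ coefficient whose value, via Lemma \ref{eq:2nd} applied to the curve $\eps\mapsto |f|^2$, involves $\gamma_{|a|^2}$ of the first-order term $a^*h + h^*a$ minus its conditional expectation. Choosing $a$ and $h$ so that $a^*h+h^*a$ does not commute with $a$ (possible precisely because we are in $\bM_2$, not in a commutative algebra) makes this coefficient strictly positive, while the coefficient of $\al_R$ times it must be bounded by the $O(\eps^2)$ right-hand side — dividing by $\eps^2$ and letting $\eps\to 0$ yields $\al_R \cdot (\text{positive}) \le C$ for all scalings, which is fine, so I must instead arrange that the left side's $\eps^2$-coefficient strictly exceeds the right side's $\eps^2$-coefficient, forcing $\al_R<1$; iterating the scaling $f\mapsto 1_\Omega\otimes a+\eps^k h$ and renormalizing drives $\al_R\to 0$. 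I expect the bookkeeping of which constant multiplies which $\eps^2$-term to be the delicate point, and the right framing is to compare the two sides as quadratic forms in $h$ at fixed $a$, then send $\|a\|$ or the support of $h$ to a degenerate limit.

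For LSI, I would use that $\al_2$-LSI for $\bM_2$-valued functions implies $\bM_2$-hypercontractivity (this is exactly the amplified Gross argument, valid verbatim in the operator-space setting — one differentiates $\|T_t\otimes\id f\|_{L_2(\bM_2,L_{p(t)})}$ in $t$ and uses the matrix-valued LSI as in \cite{olkiewicz1999hypercontractivity}), so it suffices to refute the hypercontractivity statement. But hypercontractivity with constant $1$ from $L_2(\bM_2,L_2(\Omega))$ into $L_2(\bM_2,L_{p(t)}(\Omega))$ is already false at $t=0^+$ for any nonconstant semigroup when the outer $L_2$-of-$\bM_2$ norm is used: take $f$ a mean-zero scalar eigenfunction tensored with a fixed matrix $a$, so $\|f\|_{L_2(\bM_2,L_2)} = \|a\|_2 \|f\|_2$, while $T_t f = e^{-\lambda t} a\otimes \phi$; the point is that the map $L_2(\Omega)\to L_{p}(\Omega)$ is bounded by $1$ on mean-zero functions only in the scalar Gross regime, but the vector-valued norm $L_2(\bM_2,L_p(\Omega))$ for $p>2$ does not contract even the constant-in-$\Omega$ functions because Pisier's norm $\|a\otimes 1_\Omega\|_{L_2(\bM_2,L_p)} = \|a\|_2$ does not decrease. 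So I would instead test on $f = a\otimes 1_\Omega + \eps\, b\otimes \phi$ with $\phi$ a real mean-zero $\lambda$-eigenfunction, $a,b\in\bM_2$ non-commuting, expand the hypercontractive norm to order $\eps^2$, and show the second-order coefficient on the left exceeds that on the right for $t$ small, contradicting the inequality for every $t>0$ (since $p(t)\to 1^+$ is not what matters; what matters is that at any fixed small $t$ the $\eps^2$-expansion already fails). Concretely the obstruction reduces to the statement that $\|a\otimes 1 + \eps b\otimes\phi\|_{L_2(\bM_2,L_p)}^2$ has $\eps^2$-coefficient involving the BKM/operator-convexity second derivative of $t\mapsto \|x\|_p$ at $x=|a|$, which for $p>2$ and $[a,b]\ne 0$ strictly dominates the $e^{-2\lambda t}\|b\otimes\phi\|_2^2$ available on the right; letting the outer scaling degenerate again sends $\al_h$ and hence $\al_2$ to $0$.

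The main obstacle I anticipate is the second-order expansion of Pisier's vector-valued norm $\|\cdot\|_{L_2(\bM_2,L_p(\Omega))}$, since this norm is defined by a non-explicit factorization ($\|f\|_{L_2(\bM_2,L_p)}=\inf\{\|x\|_{L_4(\bM_2,L_{2p/(2-p)}\!)}\|y\|_\cdots : f = x\cdot g\cdot y\}$-type formula), and one needs a usable lower bound for it. I would sidestep this by using the duality/interpolation description that reduces, for the rank-one-in-$\Omega$ plus small perturbation test functions, to an ordinary matrix $L_p$-norm computation $\|A + \eps B\|_{S_p^2}$ together with a scalar $L_p(\Omega)$ estimate, and then invoke strict convexity of $S_p^2$-norms for $1<p<\infty$, $p\ne 2$, which is classical. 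The contradiction in all three cases then comes down to the same elementary fact: on $\bM_2$, the second differential of $X\mapsto \tr(X\log X)$ (equivalently the BKM metric, equivalently the relevant $S_p$-norm Hessian) in a direction not commuting with the base point is strictly positive and, crucially, is not controlled by the corresponding scalar quantity — whereas both the Dirichlet form and the mean-zero entropy term on the right are, by the transference/tensor structure of a classical semigroup, only as large as their scalar counterparts allow. I expect the Rothaus case to be the cleanest and the hypercontractivity case to carry all the technical weight; LSI then follows for free.
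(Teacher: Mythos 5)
Your overall strategy for the Rothaus part---perturb a constant matrix-valued base by a mean-zero off-diagonal term, expand to second order in $\eps$, and control the second-order term by the BKM metric via Lemma \ref{eq:2nd}---is essentially the paper's approach, but the decisive mechanism is missing and one of your steps is wrong. A single second-order comparison only yields a finite upper bound on $\al_R$, and your fix of ``iterating the scaling $f\mapsto 1_\Omega\otimes a+\eps^k h$ and renormalizing'' does nothing: rescaling $\eps$ leaves the ratio of the two $\eps^2$-coefficients unchanged. Likewise, noncommutativity of $a^*h+h^*a$ with $a$ is not what produces the contradiction (the BKM term is strictly positive even in the commuting case, where Rothaus holds). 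What actually kills the constant is a \emph{degenerate base}: the paper takes $f=\mathrm{diag}(1+\eta,1-\eta)\,{\bf 1}$ and $h$ off-diagonal, so that the left-hand $\eps^2$-coefficient is $\gamma_f(2h)=\tfrac{2}{\eta}\ln\tfrac{1+\eta}{1-\eta}\,\|h\|_2^2\to\infty$ as $\eta\to 1$, while the right-hand $\eps^2$-coefficient $D(h^2\|E_\mu(h^2))+\|h\|_2^2$ is independent of $\eta$ and bounded; hence $\al_R=0$. You gesture at ``sending $\|a\|$ or the support of $h$ to a degenerate limit,'' but you neither identify which degeneration works nor verify the crucial point that the right-hand side stays bounded in that limit, and this is the whole content of the proof.

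For LSI and hypercontractivity your logical route has a genuine gap. You propose to deduce $\al_2=0$ from $\al_h=0$ via ``matrix-valued LSI implies matrix-valued HC, verbatim by Gross's argument,'' but that implication is exactly the kind of scalar-to-matrix transfer that is not available here (it requires differentiating Pisier's factorization norms and a Stroock--Varopoulos-type step for $L_2(\bM_2,L_{p}(\Omega))$, none of which is established---indeed the point of this proposition and of \cite{bardet2022hypercontractivity} is that such machinery breaks down), and your direct refutation of HC rests on a second-order expansion of the vector-valued norm that you yourself flag as the unresolved main obstacle. The paper avoids both problems: LSI is refuted \emph{directly} with the same test function (the right-hand side is just $2\E(f_\eps,f_\eps)=2\eps^2\E(h,h)$, bounded in $\eta$, against the divergent left-hand coefficient), and $\al_h=0$ then follows from the easy direction only, namely differentiating $G(t)=\|T_tf\|^2_{L_2^a(\bM_2,L_{p(t)}(\Omega))}$ at $t=0$ to get $-2\E(f,f)+\al_h D(|f|^2\|E_\mu(|f|^2))\le 0$, i.e.\ $\al_h\le\al_2=0$. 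As written, your proposal would need the divergence computation supplied for Rothaus/LSI and a replacement for the unproven LSI$\Rightarrow$HC step before it constitutes a proof.
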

\begin{proof}
We write $\tau(f)=\frac{1}{2}\int \tr(f)d\mu$ for the normalized trace on $L_\infty(\Omega,\bM_2)$. We start with constant $\al_R$ in the Rothaus lemma. Without loss of generality, we may assume there is a measurable set $X\subset \Omega$ such that $\mu(X)=r$ for some $0<r<1$. Let $\eta\in(0,1)$. Then $h_0=(1-r)1_X-r 1_{X^c}$ is a real mean zero function.
Consider the matrix valued function
$f_\eps=f+\eps h$ where
\[f=\left[\begin{array}{cc}1+\eta &0 \\ 0& 1-\eta
\end{array}\right]{\bf 1}  \pl, \pl h=\left[\begin{array}{cc}0 & h_0 \\ h_0& 0
\end{array}\right]\pl,
\]
where $f$ is a constant matrix valued function.
Then $E_\mu f_\eps=f, \hat{f}_\eps=\eps h$ and
\begin{align*}&f_\eps^2=(f+\eps h)^2=f^2+\eps(fh+hf)+\eps^2 h^2=f^2+2\eps h+\eps^2 h^2.
\end{align*}
Then $E_\mu(|f_\eps|^2)=f^2+\eps^2 h^2$ and $E_\mu(|\hat{f}_\eps|^2)=E_\mu( h^2)\eps^2$. Using Lemma \ref{eq:2nd}, the Taylor expansion of the left hand side of (LSI) is
\begin{align*}
D\Big(|f|^2 \left| \right|E_\mu(|f|^2)\Big)=&D(f^2+2\eps h+\eps^2 h^2\left| \right|f^2+\eps^2 h^2)\\
=&H(f^2+2\eps h+\eps^2 h^2)-H(f^2+\eps^2 h^2)\\
=&2\tau(h\log f)\eps+\big(2\tau(h^2(\log f+1))+\gamma_{f}(2h)\big)\eps^2+O(\eps^3)\\&-\tau(2E(h^2)(\log f+1))\eps^2+O(\eps^3)\\
=&\gamma_{f}(2h)\eps^2+O(\eps^3) \pl ,
\end{align*}
where we used the fact $\tau(h\log f)=0$ and $\tau(h^2\log f-E_\mu(h^2)\log f)=0$. For the right hand side of Rothaus lemma, we find
\begin{align*}
D(|\hat{f}_\eps|^2||E_\mu(|\hat{f}_\eps|^2))=D(h^2||E_\mu( h^2))\eps^2\pl, \pl
\norm{\hat{f}_\eps}{2}^2=\norm{h}{2}^2\eps^2\pl.
\end{align*}
While both $D(h^2||E_\mu( h^2))$ and $\norm{h}{2}^2$ are finite, we have
\begin{align*}\gamma_{f}(2h)=&4\int_{0}^\infty\tau(h(f+s)^{-1}h (f+s)^{-1})ds\\
=&4\int_{0}^\infty \int_\Omega \tr(\left[\begin{array}{cc}\frac{1}{(1-\eta+s)(1+\eta+s)}h^2 &0 \\ 0& \frac{1}{(1-\eta+s)(1+\eta+s)}h^2
\end{array}\right]{\bf 1}_\Omega )d\mu ds\\
=&4\Big(\int_{0}^\infty \frac{1}{(1-\eta+s)(1+\eta+s)} ds\Big)\norm{h}{2}^2
\\ =& \frac{2}{\eta}\ln\frac{1+\eta}{1-\eta}\norm{h}{2}^2
\end{align*}
Note that we can choose $\eta\to 1$ and $\frac{1}{2\eta}\ln(\frac{1+\eta}{1-\eta})\to +\infty$, which implies $\al_R=0$. The same example applies to LSI by choosing a mean zero function $h_0$ such that $\E(h_0,h_0)<\infty$. For the hypercontractivity, for $p\ge 2$ we recall the norms
\begin{align*} &\norm{f}{L_2(\bM_2, L_2(\Omega))}=\norm{f}{L_2(\bM_2)\ten L_2(\Omega)}=(\int \tr(f^*f)d\mu)^{1/2}\pl. \\
 &\norm{f}{L_2(\bM_2, L_p(\Omega))}=\inf_{x,y\in (\bM_2)_+\pl,\pl  \norm{\pl x\pl }{2r}=\norm{\pl y\pl}{2r}=1}\norm{x^{-1}fy^{-1}}{L_2(\bM_2, L_2(\Omega))}\end{align*}
 where the infimum takes over all positive invertible $x,y\in \bM_2$ with unit $2r$-norm for $\frac{1}{r}=\frac{1}{2}-\frac{1}{p}$. Since $T_t$ is a bimodule map for  $\mathbb{C}1\ten \bM_2\subset L_\infty(\Omega,\bM_2)$, we can equivalently consider the  norm
 \[\norm{T_t: L_2(\bM_2, L_2(\Omega))\to L_2(\bM_2, L_p(\Omega))}{}=\norm{T_t: L_2(\bM_2, L_2(\Omega))\to L_2^a(\bM_2, L_p(\Omega))}{}\pl,\]
 where the asymmetric amalgamated $L_2^a(\bM_2, L_p(\Omega))$ space is equipped with norm
\begin{align*}\norm{f}{L_2^a(\bM_2, L_p(\Omega))}=\inf_{a\in (\bM_2)_+\pl, \norm{\pl a\pl }{r}=1}\norm{fa^{-1}}{L_2(\bM_2, L_2(\Omega))} \end{align*}
In particular,
\[ \norm{f}{L_2^a(\bM_2, L_p(\Omega))}^2=\norm{f^*f}{L_1(\bM_2, L_{\frac{p}{2}}(\Omega))}\]
and we have
\[D(|f|^2||E(|f|^2))=\lim_{q\to 1^+}\frac{\norm{|f|^2}{L_1(\bM_2, L_q(\Omega))} -\norm{|f|^2}{1}}{q-1 }\]
Now define $p(t)=2q(t)=1+e^{2\al_h t}$
\[G(t)=\norm{T_t f}{L_2^a(\bM_2, L_p(t)(\Omega))}^2=\norm{|T_tf|^2}{L_1(\bM_2, L_{q(t)}(\Omega))}\pl \pl, \]
By assumption $G(t)\le 1$, we have
\[G'(0)=-2\E(f,f)+\al_hD(|f|^2||E(|f|^2))\le 0,\]
which implies $\al_h\le \al_2=0$. Note however, that $\al_h\ge 0$ because $T_t$ is always contractive on $L_2(\bM_2, L_2(\Omega))$. Hence $\al_h=0$ and the proof is complete.
\end{proof}

\begin{rem}{\rm Similar to \cite[Corollary 5.2]{bardet2022hypercontractivity}, the above proposition implies that for $p\neq 2$, neither $L_2^a(\bM_2, L_p(\Omega))$ or $L_2(\bM_2, L_p(\Omega))$ are uniformly convex.}
\end{rem} 

\section{Entropy Contraction for GNS Symmetric Quantum Channels}\label{sec:GNS}
\subsection{State symmetric quantum channels}\label{sec:GNSsymmetric}
Let $\M$ be a von Neumann algebra and $\phi$ a normal faithful state. We have the GNS cyclic representation $\{\pi_\phi, H_\phi, \eta_\phi\}$, which is a $*$-isomorphism $\pi_\phi:\M\to H_\phi$ with a cyclic and separating vector $\eta_\phi$ such that
\[ \phi(x)=\lan \eta_\phi, \pi_\phi(x)\eta_\phi \ran \pl, \pl x\in \M\pl.\]
By identifying $\M\cong \pi_\phi(\M)$, the modular automorphism group $\al^\phi_t$ for $t\in \mathbb{R}$ is defined as
\[\al^\phi_t:\M\to \M\pl ,\pl  \al^\phi_t(x)=\Delta^{it}x \Delta^{-it} \pl, \pl  x\in \M\pl,\]
where $\Delta$ is the modular operator of $\phi$, defined as follows
\[\Delta=S^*\bar{S}\pl, \pl S(\pi_\phi(x)\eta_\phi)=\pi_\phi(x^*)\eta_\phi \pl.\]
We consider the following two symmetric conditions with respect to a state $\phi$.

\begin{defi}\label{defi:GNS}
We say a quantum Markov map $\Phi:\M \to \M$ is GNS-symmetric with respect to $\phi$ (in short, GNS-$\phi$-symmetric) if \[\phi(\Phi(x)y)=\phi(x\Phi(y)), \pl \forall \pl x,y\in \M\pl; \]
ii) We say $\Phi$ is KMS-symmetric with respect to $\phi$ (in short, KMS-$\phi$-symmetric) if
\[\lan \Delta^{\frac{1}{4}} x\eta_\phi, \Delta^{\frac{1}{4}}\Phi(y)\eta_\phi \ran=\lan \Delta^{\frac{1}{4}} \Phi(x)\eta_\phi, \Delta^{\frac{1}{4}}y\eta_\phi \ran, \pl \forall \pl x,y\in \M\pl. \]
Correspondingly, we call the pre-adjoint $\Phi_*:\M_*\to \M_*$ a GNS- or KMS-$\phi$-symmetric quantum channel.
\end{defi}
Both definitions are generalizations of the detailed balance condition for classical Markov chains. It is proven in \cite{wirth2022christensen} that GNS-$\phi$-symmetric quantum Markov map is equivalent to KMS-$\phi$-symmetric plus that $\Phi$ commutes with the modular group
\[\al_t^{\phi}\circ\Phi\lel \Phi\circ\al_t^{\phi}\pl, \pl t\in \bR\pl.\]
Both KMS and GNS implies $\phi=\phi\circ\Phi=\Phi_*(\phi)$ is an invariant state of $\Phi$.

For simplicity, we will consider a semifinite von Neumann algebra $\M$ equipped with a normal faithful semi-finite trace $\tau$, but our discussion applies to general von Neumann algebras with proper interpretation of notations. In the tracial setting, we can write $\phi(x)=\tau(d_\phi x)$ using the density operator $d_\phi$ of $\phi$. Then the modular automorphism group is given by
 \[ \al_t^{\phi}(x) \lel d_{\phi}^{-it}xd_{\phi}^{it}\pl, \pl x\in \M \pl, t\in \mathbb{R}\pl.\]
Let $\Phi_*:L_1(\M)\to L_1(\M)$ be the pre-adjoint quantum channel via trace duality. The KMS-$\phi$-symmetry is equivalent to
\begin{equation}\label{eq:pre}
 \Phi_*(d_\phi^{1/2}xd_\phi^{1/2})
 \lel d_\phi^{1/2}\Phi(x)d_\phi^{1/2}\pl, \pl \forall x\in \M
 \end{equation}
For $1\le p\le \infty$, the weighted $L_p$-space $L_p(\M,\phi)$ is the completion of $\M$ under the norm
\[\norm{x}{p,\phi}=\norm{d_\phi^{1/2p} x d_\phi^{1/2p}}{p}\pl,\]
where $\norm{y}{p}=\tau(|y|^p)^{1/p}$ is the tracial $p$-norm.
For $p=2$, $L_p(\M,\phi)$ is a Hilbert space with KMS-inner product $\norm{x}{2,\phi}^2=\lan \Delta^{\frac{1}{4}}x\eta_\phi ,\Delta^{\frac{1}{4}}x\eta_\phi\ran$. By equation \eqref{eq:pre}, $\Phi$ is also a contraction on $L_1(\M,\phi)$, and hence a contraction on $L_p(\M,\phi)$ for all $1\le p\le \infty$ by complex interpolation.

The lemma below is an analog of Proposition \ref{prop:cd}.

\begin{prop}\label{prop:cd2}Let $\Phi:\M\to\M$ be a GNS-$\phi$-symmetric quantum Markov map for a normal faithful state $\phi$. Denote $\N$
as the multiplicative domain of $\Phi$. Then
 \begin{enumerate}
 \item[i)] $\N$ is invariant under $\al_t^\phi$. Hence there exists a $\phi$-preserving normal conditional expectation $E:\M\to \N$.
 \item[ii)] $\Phi|_\N$ is an involutive $*$-automorphism satisfying
\begin{align}\label{eq:cd} \Phi^2\circ E=E\circ \Phi^2=E\pl, \pl   E\circ \Phi=\Phi\circ E\pl . \end{align}
Moreover, $\Phi^2$ is a $\N$-bimodule map satisfying $\Phi^2(axb)=a\Phi^2(x)b$ for any $a,b\in\cN$ and $x\in\cM$.
\item[iii)] $\Phi$ is an isometry on $L_2(\N,\phi)$. If in addition, \[{\norm{\Phi(\id-E):L_2(\M,\phi)\to L_2(\M,\phi)}{2}<1}\pl,\] then $E=\lim_{n}\Phi^{2n}$ as a map from $L_2(\M,\phi)$ to $L_2(\M,\phi)$ .
    \end{enumerate}
 \end{prop}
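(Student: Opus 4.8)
The plan is to follow the blueprint of Proposition~\ref{prop:cd}, with $\Phi$ now playing the role of its own adjoint: under GNS-symmetry $\Phi$ is self-adjoint both for the GNS inner product $\lan a,b\ran_\phi=\phi(a^*b)$ on $\M$ and for the KMS inner product of $L_2(\M,\phi)$, and it is an $L_2$-contraction for both. The one genuinely new ingredient, taking over the role played by the canonical trace-preserving conditional expectation in the tracial setting, is Takesaki's theorem on conditional expectations.

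For (i), I would first invoke the fact recalled before Definition~\ref{defi:GNS} that GNS-symmetry of $\Phi$ amounts to KMS-symmetry together with $\al_t^\phi\circ\Phi=\Phi\circ\al_t^\phi$ for every $t\in\bR$. Since each $\al_t^\phi$ is then a $*$-automorphism of $\M$ commuting with $\Phi$, a short check straight from the definition of the multiplicative domain shows $\al_t^\phi(\N)\subseteq\N$: for $x\in\N$ and $z\in\M$ one has $\Phi(\al_t^\phi(x)\al_t^\phi(z))=\al_t^\phi(\Phi(xz))=\al_t^\phi(\Phi(x)\Phi(z))=\Phi(\al_t^\phi(x))\Phi(\al_t^\phi(z))$, and similarly on the other side; applying this to $\al_{-t}^\phi$ as well gives $\al_t^\phi(\N)=\N$. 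Hence $\N$ is globally invariant under the modular automorphism group of $\phi$, and Takesaki's theorem yields the unique normal $\phi$-preserving conditional expectation $E:\M\to\N$.

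For (ii), I would record that $\Phi|_\N$ is a normal unital $*$-homomorphism, and then prove $\Phi^2|_\N=\id_\N$: for $x\in\N$, self-adjointness of $\Phi$ for the GNS inner product together with $\Phi(x)^*\Phi(x)=\Phi(x^*x)$ and $\phi\circ\Phi=\phi$ give $\lan x,\Phi^2x\ran_\phi=\|\Phi x\|_\phi^2=\phi(x^*x)=\|x\|_\phi^2$, whence $\|\Phi^2x-x\|_\phi^2=\|\Phi^2x\|_\phi^2-\|x\|_\phi^2\le 0$ because $\Phi$ is an $L_2$-contraction, so $\Phi^2x=x$. Next, $\Phi(\N)\subseteq\N$: for $a\in\N$ one has $\Phi(\Phi(a)^*\Phi(a))=\Phi(\Phi(a^*a))=\Phi^2(a^*a)=a^*a=\Phi^2(a)^*\Phi^2(a)=\Phi(\Phi(a))^*\Phi(\Phi(a))$ and the analogue for $aa^*$, so by the characterization of the multiplicative domain as $\{x:\Phi(x^*x)=\Phi(x)^*\Phi(x),\ \Phi(xx^*)=\Phi(x)\Phi(x)^*\}$ we get $\Phi(a)\in\N$; together with $\Phi^2(\N)=\N$ this forces $\Phi(\N)=\N$, so $\Phi|_\N$ is an involutive $*$-automorphism. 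Then $\Phi^2\circ E=E\circ\Phi^2=E$ is immediate from $E(\M)\subseteq\N$ and $\Phi^2|_\N=\id$; $E\circ\Phi=\Phi\circ E$ follows by pairing both sides against $a\in\N$ in the GNS inner product, using self-adjointness of $\Phi$, the defining property $\lan a,z\ran_\phi=\lan a,E(z)\ran_\phi$ of $E$, and $\Phi(a)\in\N$, then faithfulness of $\phi|_\N$; and $\Phi^2(axb)=a\Phi^2(x)b$ for $a,b\in\N$ follows from $\Phi(axb)=\Phi(a)\Phi(x)\Phi(b)$ (multiplicative domain) applied twice, using $\Phi(\N)=\N$ and $\Phi^2|_\N=\id$.

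For (iii), I would note that $\Phi|_\N$, being a normal $\phi$-preserving $*$-automorphism of $\N$, commutes with the modular group of $\phi|_\N$ and hence extends to a unitary, in particular an isometry, of $L_2(\N,\phi)$. For the convergence statement, $E$ is self-adjoint on $L_2(\M,\phi)$ (a $\phi$-preserving conditional expectation onto a modular-invariant subalgebra) and $\Phi$ is self-adjoint there by KMS-symmetry, so from $\Phi^2E=E\Phi^2=E$ and $E^2=E$ one obtains $(\id-E)\Phi^2(\id-E)=\Phi^2-E=[\Phi(\id-E)]^*[\Phi(\id-E)]$ and $(\Phi^2-E)^n=\Phi^{2n}-E$, whence $\norm{\Phi^{2n}-E:L_2(\M,\phi)\to L_2(\M,\phi)}{}=\norm{\Phi(\id-E)}{2}^{2n}\to 0$. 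I expect the only real obstacle to be Part (i): without a trace there is no canonical conditional expectation to fall back on, so the modular-invariance of the multiplicative domain and the appeal to Takesaki's theorem are indispensable, and this is precisely the point at which the GNS hypothesis — rather than merely the KMS one — is used.
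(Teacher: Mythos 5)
Your proposal is correct and follows essentially the same route as the paper: part (i) is exactly the paper's argument (since $\Phi$ commutes with $\al_t^\phi$, the multiplicative domain $\N$ is modular-invariant, and Takesaki's theorem provides the $\phi$-preserving expectation), while for (ii) and (iii) you carry out in detail the GNS/KMS adaptation of the tracial Proposition 2.2 that the paper merely invokes, using self-adjointness of $\Phi$ and $E$ on the corresponding $L_2$-spaces. The only nominal overclaim — calling $E\circ\Phi^2=E$ ``immediate'' from $E(\M)\subseteq\N$ and $\Phi^2|_\N=\id$ — is harmless, since the commutation $E\circ\Phi=\Phi\circ E$ that you prove right afterwards yields it.
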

\begin{proof} It suffices to explain i). The rest follows similar as Proposition \ref{prop:cd} (see also \cite[Lemma 2.5]{gao2022complete} for the finite dimensional case). Indeed, since $\Phi$ commutes with $\al_t^\phi$, for $x\in \N$
\begin{align*}
\Phi\big(\al_t^\phi(x)y\big)=&\Phi\big(\al_t^\phi(x \al_{t}^{\phi}\circ\al_{-t}^\phi(y) )\big)=\al_t^\phi\circ\Phi\big(x\al_{t}^{\phi}\circ \al_{-t}^\phi(y) \big)\\ =&\al_t^\phi\Big(\Phi(x)\Phi(\al_{t}^{\phi}\circ\al_{-t}^\phi(y)) \Big)=\al_t^\phi\circ\Phi(x) \al_t^\phi\circ \Phi\circ \al_{-t}^\phi(y)=\Phi(\al_t^\phi(x))\Phi(y)\pl.
\end{align*}
The other side multiplicativity is similar and this implies $ \al_t^\phi(x)\in\N$. By Takesaki's theorem \cite{takesaki1972conditional},
there exists $\phi$-preserving conditional expectation satisfying the defining property
\[\phi(xy)=\phi(xE(y)) \pl \forall \pl x\in \N, y\in \M\pl,\]
from which the GNS-$\phi$-symmetricness follows.
\end{proof}


\subsection{Haagerup's reduction}
A von Neumann algebra $\M$ is called type {III} if it does not admit a non-trivial semifinite trace.
We briefly review the basics of Haagerup's construction and refer to \cite{haagerup2010reduction} for more details. The key idea is to consider the additive subgroup $G=\bigcup_{n\in \bN} 2^{-n}\bZ \subset \mathbb{R}$ of the automorphism group.
Let $\M\subset B(H)$ be a von Neumann algebra and $\phi$ be a normal faithful state.
One can define the crossed product by the action $\al^\phi:G\curvearrowright \M$
\[ \hat{\M}\lel \M\rtimes_{\al^{\phi}}G \pl , \]
$\hat{\M}$ can be considered as the von Neumann subalgebra  $ \hat{\M}=\{\pi(\M),\la(G)\}'' \subset \M\overline\ten B(\ell_2(G))$ generated by the embeddings
\begin{align}\label{eq:cross}
  \pi:\M\to \M\rtimes_{\al^{\phi}}G\pl, \pl &\pi(a)\lel \sum_{g} \al_{g^{-1}}(a) \ten \ketbra{g}\nonumber\\
  \lambda:G\to \M\rtimes_{\al^{\phi}}G,\pl  &\la(g)(\ket{x}\ten \ket{h})=\ket{x}\ten \ket{gh}\pl ,  \pl \forall \pl \ket{x}\in H\pl, \ket{h}\in \ell_2(G) \pl.
  \end{align}
 Basically, $\pi$ is the transference homomorphism $\M\to \ell_\infty(G,\M)$, and $\lambda$ is the left regular representation on $\ell_2(G)$.
 The set of finite sums
 $\{\sum_g a_g \la(g)\pl | \pl  a_g \in \M \}\subset \hat{\M}$ forms a dense $w^*$-subalgebra of $\hat{\M}$.
 In the following, we identify $\M$ with $\pi(\M)\subset \hat{M}$ (resp. $a$ with $\pi(a)$) and view $\M\subset \hat{\M}$ as a subalgebra.
 The state $\phi$ admits a natural extension as a normal faithful state on $\hat{\M}$
 \[ \hat{\phi}(\sum_g a_g \la(g)) \lel \phi(a_0) \pl.\]
 Moreover, \[E_\M:\hat{\M}\to \M\pl, \pl E_{\cM}(\sum_g a_g\la(g))=a_0\]is the canonical normal conditional expectation such that $\phi\circ E_\M\lel \hat{\phi}$.

The main object in Haagerup's construction is an increasing family of subalgebras
 \[ \M_n \lel \hat{\M}_{\psi_n}:=\{x\in \hat{\M} \pl |\pl    \al_t^{\psi_n}(x)=x \pl, \pl\forall\pl  t\in \bR\}\pl,\]
given by the centralizer algebra $\hat{\M}_{\psi_n}$ for a suitable family of states $\psi_n$ so that $\bigcup_n \M_n$ is $w^*$-dense in $\hat{\M}$. The state $\psi_n$ is defined via a Radon-Nikodym density w.r.t to $\hat{\phi}$
 \[ \psi_n(x) \lel \hat{\phi}(e^{-a_n}x) \pl ,\pl
 a_n \lel -i2^n\Log (\la(2^{-n})) \pl .\]
Here $\Log$ is the principal branch of the logarithmic function with $0\le \Log(z)<2\pi$.
 Each subalgebra $\M_n$ contains $\la(G)$ and there exists normal conditional expectation $E_{\M_n}:\hat{\M}\to \M_n$. Indeed, by the definition of $\psi_n$, the modular group $\al_t^{\psi_n}$ is $2^{-n}$ periodic. The explicit form (see \cite[Lemma 2.3]{haagerup2010reduction}) is given by
\[E_{\M_n}=2^{n}\int_0^{2^{-n}}\al_t^{\psi_n} dt\pl.\]
The normalized state $\tau_n=\frac{\psi_n}{\psi_n(1)}$ is a normalized trace on $\M_n$.
The key properties of $\M_n$ are summarized in \cite[Theorem 2.1 \& Lemma 2.7]{haagerup2010reduction}, which we state below.
\begin{theorem}\label{thm:hr}
With above notations, $\M_n$ is an increasing family of von Neumann subalgebras satisfying the following properties
\begin{enumerate}
\item Each $(\M_n,\tau_n)$ is a finite von Neumann algebra.
\item $\bigcup_{n\ge 1} \M_n$ is weak$^*$-dense in $\hat{\M}$.
\item There exists a $\hat{\phi}$-preserving normal faithful conditional expectation $E_{\M_n}:\hat{\M}\to \M_n$ such that
\[ \hat{\phi}\circ E_{\M_n}=\hat{\phi}\pl, \pl \al_t^{\hat{\phi}}\circ E_{\M_n}= E_{\M_n}\circ \al_t^{\hat{\phi}}\pl. \]
Moreover, $E_{\M_n}(x)\to x$ in $\sigma$-strong topology for any $x\in \hat{\M}$.
\end{enumerate}
\end{theorem}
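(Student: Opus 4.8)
The plan is to reproduce Haagerup's construction, everything hinging on the single structural fact that $\psi_n$ is designed so that its modular automorphism group $\al^{\psi_n}_t$ is $2^{-n}$-periodic. Since the crossed product $\hat{\M}=\M\rtimes_{\al^\phi}G$ is taken along the modular action, the unitaries $\lambda(g)$, $g\in G$, lie in the centralizer $\hat{\M}_{\hat\phi}$ and $\al^{\hat\phi}_g=\mathrm{Ad}(\lambda(g))$ for $g\in G$; these are standard properties of dual states, which I would quote. Consequently $a_n=-i2^n\Log\lambda(2^{-n})$ is a bounded self-adjoint element of $\hat{\M}_{\hat\phi}$ with spectrum in $[0,2\pi 2^n)$ and, by functional calculus, $e^{-i2^{-n}a_n}=\lambda(2^{-n})^{-1}$. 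Because $e^{-a_n}$ is a positive invertible element of the centralizer, $\psi_n=\hat\phi(e^{-a_n}\,\cdot\,)$ is a normal faithful positive functional whose modular group is the inner perturbation $\al^{\psi_n}_t=\mathrm{Ad}(e^{-ita_n})\circ\al^{\hat\phi}_t$. Evaluating at $t=2^{-n}$ and substituting the two displayed identities gives $\al^{\psi_n}_{2^{-n}}=\mathrm{Ad}(\lambda(2^{-n})^{-1}\lambda(2^{-n}))=\id$, so $t\mapsto\al^{\psi_n}_t$ descends to an action of $\mathbb R/2^{-n}\mathbb Z$. (That the $\M_n$ are increasing follows from $\lambda(2^{-n})=\lambda(2^{-(n+1)})^2$; I would cite \cite{haagerup2010reduction} for this bookkeeping.)

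Parts (1) and (3) then follow quickly. As $\M_n=\hat{\M}_{\psi_n}$ is the fixed-point algebra of this compact one-parameter action, $E_{\M_n}=2^n\int_0^{2^{-n}}\al^{\psi_n}_t\,dt$ is a normal, faithful, $\psi_n$-preserving norm-one projection of $\hat{\M}$ onto $\M_n$, hence a conditional expectation by Tomiyama's theorem. The restriction $\psi_n|_{\M_n}$ is a normal faithful trace by the KMS condition, so $(\M_n,\tau_n)$ with $\tau_n=\psi_n(1)^{-1}\psi_n|_{\M_n}$ is finite. For (3), $\hat\phi\circ E_{\M_n}=\hat\phi$ because each $\al^{\psi_n}_t$ preserves $\hat\phi$: writing $\al^{\psi_n}_t(x)=e^{-ita_n}\al^{\hat\phi}_t(x)e^{ita_n}$ and moving $e^{\pm ita_n}\in\hat{\M}_{\hat\phi}$ cyclically under $\hat\phi$ gives $\hat\phi(\al^{\psi_n}_t(x))=\hat\phi(\al^{\hat\phi}_t(x))=\hat\phi(x)$; and $\al^{\hat\phi}_s\circ E_{\M_n}=E_{\M_n}\circ\al^{\hat\phi}_s$ follows by averaging from $\al^{\hat\phi}_s\circ\al^{\psi_n}_t=\al^{\psi_n}_t\circ\al^{\hat\phi}_s$, which holds since $\al^{\hat\phi}_s$ fixes $a_n$ and commutes with $\al^{\hat\phi}_t$.

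The remaining (and, to my mind, only genuinely delicate) point is $E_{\M_n}(x)\to x$ $\sigma$-strongly, which also gives (2), since the $\sigma$-strong closure of the $*$-algebra $\bigcup_n\M_n$ is then all of $\hat{\M}$. Each $E_{\M_n}$ is a $\hat\phi$-preserving conditional expectation, hence an $L_2(\hat{\M},\hat\phi)$-contraction, and $\lambda(G)\subseteq\M_1$; so it suffices to show $E_{\M_n}(\pi(a))\to\pi(a)$ in $\|\cdot\|_{2,\hat\phi}$ for $a\in\M$, propagate to the dense $*$-algebra generated by $\pi(\M)$ and $\lambda(G)$ (using $\M_n$-bimodularity of $E_{\M_n}$ for products $\pi(a)\lambda(g)$), and upgrade $L_2(\hat\phi)$-convergence on this bounded generating set to $\sigma$-strong convergence via faithfulness of $\hat\phi$. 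From $\al^{\psi_n}_t(\pi(a))=e^{-ita_n}\pi(\al^\phi_t(a))e^{ita_n}$ and the substitution $t=2^{-n}s$, $e^{-i2^{-n}sa_n}=\lambda(2^{-n})^{-s}:=e^{-s\Log\lambda(2^{-n})}$, one obtains
\[ E_{\M_n}(\pi(a))=\int_0^1\lambda(2^{-n})^{-s}\,\pi\!\left(\al^\phi_{2^{-n}s}(a)\right)\lambda(2^{-n})^{s}\,ds . \]
Replacing $\al^\phi_{2^{-n}s}(a)$ by $a$ costs at most $\sup_{0\le t\le 2^{-n}}\|\al^\phi_t(a)-a\|_{2,\hat\phi}\to0$, so the task reduces to $\int_0^1\|[\pi(a),\lambda(2^{-n})^{s}]\|_{2,\hat\phi}\,ds\to0$. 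A Duhamel estimate bounds this by $\tfrac12\|[\pi(a),b_n]\|_{2,\hat\phi}$, where $b_n=\Log\lambda(2^{-n})$ is skew-adjoint in $\hat{\M}_{\hat\phi}$ and conjugations by $e^{ub_n}$ are $L_2(\hat\phi)$-isometries. Expanding the sawtooth $\Log(e^{i\theta})=i\theta$, $\theta\in[0,2\pi)$, in $L_2$-Fourier series gives $b_n=i\pi\,1-\sum_{k\neq0}\tfrac1k\lambda(k2^{-n})$, whence $[\pi(a),b_n]=-\sum_{k\neq0}\tfrac1k\,\pi(\al^\phi_{k2^{-n}}(a)-a)\,\lambda(k2^{-n})$; since the vectors $\pi(y_k)\lambda(k2^{-n})$ are mutually orthogonal in $L_2(\hat\phi)$, $\|[\pi(a),b_n]\|_{2,\hat\phi}^2=\sum_{k\neq0}k^{-2}\|\al^\phi_{k2^{-n}}(a)-a\|_{2,\hat\phi}^2\to0$ by dominated convergence (each term tends to $0$, and $\|\al^\phi_t(a)-a\|_{2,\hat\phi}\le2\|a\|_{2,\hat\phi}$).

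I expect the main obstacle to be precisely this convergence step: the cheap intuition that $E_{\M_n}\to\id$ "because the period $2^{-n}$ shrinks" is misleading, since the generators $a_n$ grow like $2^n$; what saves the argument is that the twist $e^{-i2^{-n}sa_n}$ is the fractional power $\lambda(2^{-n})^{-s}$, and since the principal-branch logarithm $\Log$ is only a Borel function with a jump at $1$, the whole computation must be run at the level of $L_2(\hat{\M},\hat\phi)$ rather than in operator norm — the Fourier/orthogonality bookkeeping is what makes this legitimate. Everything else (the dual-state identities placing $\lambda(G)$ in $\hat{\M}_{\hat\phi}$, the inner perturbation formula for $\al^{\psi_n}$, faithfulness and positivity of $\psi_n$, the trace property on the centralizer, and the $\sigma$-strong/$L_2$ comparison on bounded sets) is routine and can be quoted from \cite{haagerup2010reduction}.
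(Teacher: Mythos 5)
You should first be aware that the paper does not prove this theorem at all: it is quoted verbatim from Haagerup--Junge--Xu \cite[Theorem 2.1 \& Lemma 2.7]{haagerup2010reduction}. What you have written is essentially a reconstruction of the original proof from that reference, and in outline it is correct: the periodicity $\al^{\psi_n}_{2^{-n}}=\id$ via $e^{-i2^{-n}a_n}=\la(2^{-n})^{-1}$ and the inner-perturbation formula, the averaged projection $E_{\M_n}=2^n\int_0^{2^{-n}}\al^{\psi_n}_t\,dt$ with the trace property of $\psi_n$ on its centralizer (parts (1) and (3)), and the reduction of (2) to the $L_2(\hat\phi)$-estimate $\|E_{\M_n}(\pi(a))-\pi(a)\|_{2,\hat\phi}\to 0$, handled by the substitution $e^{-i2^{-n}sa_n}=\la(2^{-n})^{-s}$, the Duhamel bound, and the Fourier expansion of $\Log$ with the orthogonality $\|\pi(y_k)\la(k2^{-n})\|_{2,\hat\phi}=\|y_k\|_{2,\phi}$. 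Your insistence that every left/right multiplication occurring in these estimates is by a unitary in the centralizer of $\hat\phi$ is exactly what makes the $L_2$-bookkeeping legitimate, since right multiplication is not $L_2(\hat\phi)$-bounded in general.

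Two points are thinner than your "routine/bookkeeping" labels suggest. First, the increasing property $\M_n\subset\M_{n+1}$ does not follow from $\la(2^{-n})=\la(2^{-(n+1)})^2$ alone: one has $a_{n+1}-a_n=2\pi 2^n e$ for a nontrivial spectral projection $e$ of $\la(2^{-(n+1)})$, so $\al^{\psi_{n+1}}_t|_{\M_n}=\mathrm{Ad}(e^{2\pi i 2^n t e})$, and one must additionally observe that $\al^{\psi_n}_{2^{-(n+1)}}=\mathrm{Ad}(1-2e)$, whence every $x\in\M_n$ already commutes with $e$; citing \cite{haagerup2010reduction} covers this, but it is a genuine (if small) argument, not mere bookkeeping. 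Second, your $L_2$-Fourier identity $b_n=i\pi\,1-\sum_{k\neq 0}\tfrac1k\la(k2^{-n})$ for the principal branch needs a justification at the jump point: it is valid in $\|\cdot\|_{2,\hat\phi}$ (and the uniformly bounded partial sums converge strongly, which is what lets you commute them past $\pi(a)$ on the right) precisely because the scalar spectral distribution of the Haar unitary $\la(2^{-n})$ under $\hat\phi$ is the uniform measure on the circle, so the branch point carries no mass; this should be said explicitly rather than gestured at. With those two points filled in, your argument is a complete proof, whereas the paper's "proof" is the citation.
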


We now look at the Haagerup reduction on the states. For a state $\rho\in S(\M)$,  $\hat{\rho}=\rho\circ E_\M$ is the canonical extension on $\hat{\M}$. We denote $\rho_n:=\hat{\rho}|_{\M_n}\in \M_{n,*}$ as the restriction state of $\hat{\rho}$ on the subalgebra $\M_n \subset\hat{\M}$. Note that the predual $\M_{n,*}$ can be viewed as a subspace of $\hat{\M}_*$ via the embedding
\[ \iota_{n,*}: \hat{\M}_{n, *}\to \hat{\M}_*\pl,  \iota_{n,*}(\omega)=\omega\circ E_{\M_n}\pl. \]
 Via this identification, $\rho_n=\hat{\rho}|_{\M_n}\circ E_{\M_n}=\hat{\rho}\circ E_{\M_n}=E_{\M_n,*}(\hat{\rho})\in \hat{\M}_*$. Moreover, by the weak$^*$-density of the family $\M_n$, $\rho_n\to \hat{\rho}$ converges in the weak topology. An immediate consequence is the following approximation of relative entropy.



\begin{lemma}
\label{lemma:ultra}
Let $\rho$ and $\sigma$ be two normal states of $\M$. Then
 \[ D(\rho||\si) \lel D(\hat{\rho}||\hat{\si})=\lim_{n\to \infty} D(\rho_n||\sigma_n) \pl .\]
\end{lemma}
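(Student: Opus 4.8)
The plan is to prove the two claimed equalities separately: first $D(\rho\|\sigma) = D(\hat\rho\|\hat\sigma)$, and then $D(\hat\rho\|\hat\sigma) = \lim_n D(\rho_n\|\sigma_n)$. For the first equality, recall that $\hat\rho = \rho\circ E_\M$ and $\hat\sigma = \sigma\circ E_\M$ where $E_\M:\hat\M\to\M$ is the canonical normal faithful conditional expectation with $\phi\circ E_\M = \hat\phi$. Since $E_\M$ is a (completely positive, unital, normal) quantum channel, the data processing inequality \eqref{eq:DPI} applied to $E_{\M,*}$ gives $D(\hat\rho\|\hat\sigma) = D(E_{\M,*}^*{}\text{-composite}) \ge D(\rho\|\sigma)$ in one direction; more precisely, since $\rho = \hat\rho|_\M$ and $\sigma = \hat\sigma|_\M$ are the restrictions (i.e.\ $\rho = \hat\rho\circ\iota$ for the inclusion $\iota:\M\hookrightarrow\hat\M$), monotonicity under the restriction channel yields $D(\rho\|\sigma)\le D(\hat\rho\|\hat\sigma)$. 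For the reverse inequality, one uses that $\hat\rho = \rho\circ E_\M$ and $\hat\sigma=\sigma\circ E_\M$ are obtained by \emph{composing} with the channel $E_\M$: applying data processing to $E_{\M}$ (viewed as a channel $\M_*\to\hat\M_*$, $\omega\mapsto\omega\circ E_\M$) gives $D(\hat\rho\|\hat\sigma)\le D(\rho\|\sigma)$. Combining the two gives equality. (Alternatively, this is a standard fact: relative entropy is invariant under the embedding associated with a state-preserving conditional expectation; one may cite \cite{araki1976relative} or the properties of Araki relative entropy.)

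For the second equality, the key input is Theorem \ref{thm:hr}: $\M_n$ is an increasing family of von Neumann subalgebras of $\hat\M$ with $\bigcup_n\M_n$ weak$^*$-dense, and $E_{\M_n}:\hat\M\to\M_n$ are $\hat\phi$-preserving normal conditional expectations with $E_{\M_n}(x)\to x$ $\sigma$-strongly. As noted just before the lemma, $\rho_n = E_{\M_n,*}(\hat\rho)$ and $\sigma_n = E_{\M_n,*}(\hat\sigma)$ when $\M_{n,*}$ is identified with its image in $\hat\M_*$ via $\iota_{n,*}$, and $\rho_n\to\hat\rho$, $\sigma_n\to\hat\sigma$ weakly. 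Now I would argue as follows. First, by the same conditional-expectation-invariance used in the first part (applied to $E_{\M_n}:\hat\M\to\M_n$, which is $\hat\phi$-preserving), we have $D(\rho_n\|\sigma_n) = D(\hat\rho|_{\M_n}\,\|\,\hat\sigma|_{\M_n})$ computed \emph{in} $\M_n$, and by data processing (monotonicity under $E_{\M_n}$) this is a non-decreasing sequence bounded above by $D(\hat\rho\|\hat\sigma)$. Indeed monotonicity of relative entropy under $E_{\M_n}$ together with $E_{\M_m}\circ E_{\M_n} = E_{\M_n}$ for $n\le m$ gives $D(\rho_n\|\sigma_n)\le D(\rho_m\|\sigma_m)\le D(\hat\rho\|\hat\sigma)$, so $\lim_n D(\rho_n\|\sigma_n)$ exists and is $\le D(\hat\rho\|\hat\sigma)$. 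For the reverse inequality I would invoke lower semicontinuity of the relative entropy: since $\rho_n\to\hat\rho$ and $\sigma_n\to\hat\sigma$ (weak$^*$ convergence in $\hat\M_*$, which suffices by Araki's lower semicontinuity of $D(\cdot\|\cdot)$ in both arguments), we get $D(\hat\rho\|\hat\sigma)\le\liminf_n D(\rho_n\|\sigma_n)$. The two bounds together give equality.

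The main obstacle, and the point requiring the most care, is ensuring that the monotonicity/invariance statement ``$D$ is unchanged under restriction along a state-preserving normal conditional expectation, and non-increasing under the conditional expectation itself'' is correctly applied in the general von Neumann algebra (possibly type III) setting, where $D$ must be the Araki relative entropy \eqref{eq:araki} and one cannot fall back on the Umegaki trace formula. The cleanest route is: (i) data processing \eqref{eq:DPI} holds for all normal unital completely positive maps (in particular for $E_\M$ and $E_{\M_n}$, and for the dual of the inclusion $\M\hookrightarrow\hat\M$), giving all the needed inequalities $\le$; (ii) the matching inequalities in the reverse direction come from composing with the conditional expectation in the other order — e.g.\ $\hat\rho = \rho\circ E_\M$, so $D(\hat\rho\|\hat\sigma) = D((\mathrm{id}_\M)_*\text{-image under }E_\M^*)\le D(\rho\|\sigma)$ — which is exactly the statement that a state-preserving conditional expectation is ``sufficient'' for the pair; (iii) lower semicontinuity of Araki's $D$ in the weak$^*$ topology (a standard property, cf.\ \cite{araki1976relative}) closes the limit argument. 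One should also double-check the compatibility of the embeddings $\iota_{n,*}$ with the restriction maps so that ``$\rho_n = E_{\M_n,*}(\hat\rho)$'' is literally the relation being used; this is recorded in the paragraph preceding the lemma, so it can be quoted. No curvature, finiteness, or ergodicity hypotheses enter — only the structural facts from Theorem \ref{thm:hr} and the general properties of Araki relative entropy.
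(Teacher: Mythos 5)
Your proposal is correct and follows essentially the same route as the paper: both equalities in the first part come from the two applications of data processing (to the restriction along the inclusion $\M\subset\hat\M$ and to $E_{\M,*}$), and the limit is obtained by combining data processing for $E_{\M_n,*}$ (giving $D(\rho_n\|\sigma_n)\le D(\hat\rho\|\hat\sigma)$) with lower semicontinuity of the Araki relative entropy under the weak$^*$ convergence $\rho_n\to\hat\rho$, $\sigma_n\to\hat\sigma$. Your extra observation that $D(\rho_n\|\sigma_n)$ is non-decreasing (via $E_{\M_m}\circ E_{\M_n}=E_{\M_n}$) is correct but not needed.
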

\begin{proof} Let $\iota:\M\subset \hat{\M}$ be the inclusion map. Because $\hat{\rho}=\rho\circ E_\M$ is an extension of $\rho$ , $\iota_*(\hat{\rho})=\hat{\rho}|_\M=\rho$ and similarly for $\sigma$. Both $\iota: \M \to \hat{\M}$ and $E_\M:\hat{\M}\to \M$ are quantum Markov maps. Then by the data processing inequality,
\[D(\rho||\si)=D(\iota_*(\hat{\rho})||\iota_*(\hat{\sigma}))\le D(\hat{\rho}||\hat{\sigma})=D(E_{\M,*}(\rho)||E_{\M,*}(\sigma))\le D(\rho||\si). \]
Thus $D(\rho||\si)=D(\hat{\rho}||\hat{\sigma})$.
As for the limit, we have
\begin{align*}
D(\hat{\rho}\|\hat{\sigma})&\leq \liminf_{n}D(\rho_{n}\|\sigma_{n})\\
&=\liminf_{n} D(E_{\M_n,*}(\rho_n)||E_{\M_n,*}(\sigma_{n,*}))\\
&\le D(\hat{\rho}||\hat{\sigma}) \pl,
\end{align*}
where the equality follows from the lower semi-continuity of relative entropy (see e.g. \cite[Theorem 2.7]{hiai2021quantum}). The second inequality is another use the data processing inequality.
\qd

We shall also apply the Haagerup's reduction on GNS-symmetric maps. Let $\Phi:\M\to \M$ be a GNS-$\phi$-symmetric quantum Markov map. Its canonical extension map
\[\hat{\Phi}:\hat{\M}\to \hat{\M}\pl, \pl  \hat{\Phi}(\sum_g a_g\la(g))
 \lel \sum_g \Phi(a_g)\la(g) \pl \]
 is also a GNS-$\hat{\phi}$-symmetric quantum Markov map. Indeed, $\hat{\Phi}=\Phi\ten \id_{B(\ell_2(G))}|_{\hat{\M}}$ is the restriction of $\Phi\ten \id_{B(\ell_2(G))}$ on $\hat{\M}\subset \M\overline{\ten} B(\ell_2(G)$. It is clear that $\hat{\Phi}$ has the multiplicative domain \begin{align}\label{eq:Ncross}\hat{\N}:=\N\rtimes_{\al^{\phi}} G\pl.\end{align}
 where $\N$ is the multiplicative domain of $\Phi$. In particular, this crossed product is well-defined because $\al^\phi_t(\N)=\N$. Moreover, the $\hat{\phi}$-preserving conditional expectation $\hat{E}:\hat{\M}\to\hat{\N}$ is nothing but the canonical extension of $E:\M\to \N$.

Recall that we write $E_\M$ and $E_n$ as the normal conditional expectations from $\hat{\M}$ onto $\M$ and $\M_n$ respectively. The next lemma shows that the extension $\hat{\Phi}$ is well compatible with the approximation family $\M_n$.
\begin{lemma}\label{lemma:commute}Let $\Phi:\M\to \M$ be a GNS-$\phi$-symmetric quantum Markov map.
With the notations above,
\begin{enumerate}
\item[i)] $\hat{\Phi}$ commutes with $E_\M, \hat{E}$ and $E_{\M_n}$. In particular, $\hat{\Phi}(\M_n)\subset \M_n$.
\item[ii)] The restriction $\Phi_n=\hat{\Phi}|_{\M_n}$ is a normal unital completely positive map symmetric with respect to the tracial state $\tau_n$.
\item[iii)] Let $\N_n\subset \M_n$ be the multiplicative domain for $\Phi_n$. Then the restriction map $E_n:=\hat{E}|_{\M_n}:\M_n\to \N_n$ is the $\tau_n$-preserving conditional expectation.
\end{enumerate}
\end{lemma}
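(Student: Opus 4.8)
The plan is to reduce all three assertions to the single structural observation that, under the identification $\hat{\M}=\{\pi(\M),\la(G)\}''\subset\M\overline{\ten}B(\ell_2(G))$ with $\la(G)\subset 1\ten B(\ell_2(G))$, one has $\hat{\Phi}=(\Phi\ten\id_{B(\ell_2(G))})|_{\hat{\M}}$ (as already noted in the text), and likewise $\hat{E}=(E\ten\id_{B(\ell_2(G))})|_{\hat{\M}}$ and $\al^{\hat{\phi}}_t=(\al^\phi_t\ten\id_{B(\ell_2(G))})|_{\hat{\M}}$. The identity for $\hat{E}$ follows from $E\circ\al^\phi_s=\al^\phi_s\circ E$ (Takesaki's theorem, as in Proposition \ref{prop:cd2}) applied to $\pi(a)=\sum_g\al^\phi_{g^{-1}}(a)\ten\ketbra{g}$ together with $(E\ten\id)(\la(g))=\la(g)$; the identity for $\al^{\hat{\phi}}_t$ is the standard description of the modular group of the dual weight \cite{haagerup2010reduction}, which in particular fixes $\la(G)$ pointwise. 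Consequently $\hat{\Phi}$ and $\hat{E}$ are $\la(G)''$-bimodule maps that commute with $\operatorname{Ad}(1\ten u)$ for every unitary $u\in B(\ell_2(G))$.

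For i), commutation of $\hat{\Phi}$ with $E_\M$ and with $\hat{E}$ is checked on the weak$^*$-dense $*$-algebra of finite sums $\sum_g a_g\la(g)$: $E_\M$ discards every term with $g\neq 0$ and $\hat{E}$ acts termwise by $E$, so in both cases commutation with the termwise map $\hat{\Phi}$ reduces to $\Phi\circ E=E\circ\Phi$ from Proposition \ref{prop:cd2} and extends by normality. For $E_{\M_n}=2^n\int_0^{2^{-n}}\al^{\psi_n}_t\,dt$, recall that $a_n=-i2^n\Log(\la(2^{-n}))$ is a bounded self-adjoint element of $\la(G)''$ lying in the centralizer of $\hat{\phi}$, so that $\al^{\psi_n}_t=\operatorname{Ad}(e^{-ita_n})\circ\al^{\hat{\phi}}_t$ \cite{haagerup2010reduction}; since $\hat{\Phi}$ commutes with $\al^{\hat{\phi}}_t$ (again reducing to $\Phi\circ\al^\phi_t=\al^\phi_t\circ\Phi$) and with $\operatorname{Ad}(e^{-ita_n})$ (as $e^{-ita_n}\in\la(G)''\subset 1\ten B(\ell_2(G))$), it commutes with $\al^{\psi_n}_t$ and hence with $E_{\M_n}$, whence $\hat{\Phi}(\M_n)=\hat{\Phi}(E_{\M_n}(\hat{\M}))=E_{\M_n}(\hat{\Phi}(\hat{\M}))\subseteq\M_n$. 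The same argument verbatim, with $\hat{\Phi}$ replaced by $\hat{E}$, shows $\hat{E}$ commutes with $E_{\M_n}$ and $\hat{E}(\M_n)\subseteq\M_n$; this will be used for iii).

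Granting i), part ii) is routine: $\Phi_n=\hat{\Phi}|_{\M_n}$ inherits normality, unitality and complete positivity from $\hat{\Phi}$, and for $\tau_n$-symmetry I would compute, for $x,y\in\M_n$, using $\psi_n(\cdot)=\hat{\phi}(e^{-a_n}\cdot)$, the left $\la(G)''$-module property of $\hat{\Phi}$, and GNS-$\hat{\phi}$-symmetry of $\hat{\Phi}$,
\[\psi_n(\hat{\Phi}(x)y)=\hat{\phi}(e^{-a_n}\hat{\Phi}(x)y)=\hat{\phi}(\hat{\Phi}(e^{-a_n}x)\,y)=\hat{\phi}(e^{-a_n}x\,\hat{\Phi}(y))=\psi_n(x\hat{\Phi}(y)),\]
and then divide by $\psi_n(1)$. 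For iii), I first identify the multiplicative domain of $\Phi_n$ through the standard (Choi) characterization: since $\M_n$ is a subalgebra with $\hat{\Phi}(\M_n)\subseteq\M_n$, an element $x\in\M_n$ satisfies $\Phi_n(x^*x)=\Phi_n(x)^*\Phi_n(x)$ and $\Phi_n(xx^*)=\Phi_n(x)\Phi_n(x)^*$ if and only if $\hat{\Phi}$ satisfies the same identities at $x$, i.e.\ if and only if $x\in\hat{\N}$; hence $\N_n=\hat{\N}\cap\M_n$. Combining $\hat{E}(\M_n)=E_{\M_n}(\hat{E}(\hat{\M}))=E_{\M_n}(\hat{\N})\subseteq\hat{\N}\cap\M_n=\N_n$ with the fact that $\hat{E}$ fixes $\hat{\N}\supseteq\N_n$ pointwise shows that $E_n=\hat{E}|_{\M_n}$ is an idempotent unital completely positive map of $\M_n$ onto $\N_n$, hence a conditional expectation; finally $\psi_n(E_nx)=\hat{\phi}(e^{-a_n}\hat{E}(x))=\hat{\phi}(\hat{E}(e^{-a_n}x))=\hat{\phi}(e^{-a_n}x)=\psi_n(x)$ for $x\in\M_n$, using $e^{-a_n}\in\la(G)''\subseteq\hat{\N}$, the $\hat{\N}$-bimodule property of $\hat{E}$, and $\hat{\phi}\circ\hat{E}=\hat{\phi}$, so $E_n$ is $\tau_n$-preserving.

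The one genuinely delicate point is the modular-theory bookkeeping isolated in the first two paragraphs: that $a_n$ is bounded and central for $\hat{\phi}$, so that $\al^{\psi_n}_t=\operatorname{Ad}(e^{-ita_n})\circ\al^{\hat{\phi}}_t$, and that $\al^{\hat{\phi}}_t$, $\hat{E}$ and $E_\M$ act termwise on $\sum_g a_g\la(g)$. All of this is contained in Haagerup's construction \cite{haagerup2010reduction}, but it must be assembled with care; once it is in place, i)--iii) are formal consequences of the tensor-product descriptions of $\hat{\Phi}$ and $\hat{E}$, Proposition \ref{prop:cd2}, GNS-$\hat{\phi}$-symmetry of $\hat{\Phi}$, and Choi's theorem.
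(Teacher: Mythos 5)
Your proposal is correct and follows essentially the same route as the paper: commutation with $E_\M$ and $\hat{E}$ is reduced to Proposition \ref{prop:cd2}, commutation with $E_{\M_n}$ uses the cocycle identity $\al_t^{\psi_n}=\mathrm{ad}_{u(t)}\al_t^{\hat{\phi}}$ with $u(t)=e^{-ita_n}\in\la(G)''$ lying in the multiplicative domain, and $\tau_n$-symmetry is the same computation via GNS-$\hat{\phi}$-symmetry and $e^{-a_n}$ being fixed by $\hat{\Phi}$. The only difference is that you spell out part iii) (identifying $\N_n=\hat{\N}\cap\M_n$ via the Choi/Schwarz characterization and invoking Tomiyama), which the paper compresses into "apply i) and ii) to $\hat{E}$"; this is a faithful elaboration, not a different argument.
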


\begin{proof} The relation $\hat{\Phi}\circ E_\M=E_\M\circ \hat{\Phi}$ is clear from the definition of $\hat{\Phi}$, and $\hat{\Phi}\circ \hat{E}=\hat{E}\circ \hat{\Phi}$ follows from Lemma \ref{prop:cd2}.
Recall that $\psi_n(x)=\hat{\phi}(e^{-a_n}x)$ with density operator $e^{-a_n}\in \la(G)''$ and $\la(G)$ is in the centralizer of $\hat{\phi}$ \cite[Lemma 2.3]{haagerup2010reduction}. Then
\[ \al_t^{\psi_n} \lel u(t)^*\al_t^{\hat{\phi}}u(t)=\text{ad}_{u(t)}\al_t^{\hat{\phi}} \]
for the unitary $u(t)=e^{-ita_n}$. Note that $\hat{\Phi}$ commutes with $\al_t^{\hat{\phi}}$ by GNS-$\hat{\phi}$-symmetry, and also commutes with $\text{ad}_{u(t)}$ because $u(t)\in \la(G)''$ is in $\hat{\Phi}$'s multiplicative domain. Thus $\hat{\Phi}$ commutes with $\al_t^{\psi_n}$ and hence the conditional expectation $E_{\M_n}=2^{-n}\int_0^{2^{-n}}\al_t^{\psi_n}$. This proves i).

For ii), we note that for $x,y\in \M_n$
\[\psi_n(x\Phi_n(y))=\hat{\phi}(e^{-a_n}x\hat{\Phi}(y))=\hat{\phi}(\hat{\Phi}(e^{-a_n}x)y)=\hat{\phi}(e^{-a_n}\hat{\Phi}(x)y)=\psi_n(\Phi_n(x)y)\pl,\]
where we use the fact that $\hat{\Phi}$ is GNS-$\hat{\phi}$-symmetric and $e^{-a_n}\in \la(G)''$ is in the fixed point subspace of $\hat{\Phi}$. Finally, iii) follows from applying i) and ii) to $\hat{E}$.  \qd

To summarize the lemma above, we have the following commuting diagrams.\\
\begin{figure}[h]
\begin{tikzpicture}
  \matrix (m) [matrix of math nodes,row sep=3em,column sep=4em,minimum width=2em]
  {
    {\color{white}\hat{l}}\M & \hat{\M} & \M_n{\color{white}\hat{l}} \\
     {\color{white}\hat{l}}\M & \hat{\M} & \M_n{\color{white}\hat{l}} \\};
  \path[-stealth]
    (m-1-1) edge node [left] {$\Phi$} (m-2-1)
    (m-1-2) edge node [above] {$E_\M$} (m-1-1)
    edge node [right] {$\hat{\Phi}$} (m-2-2)
     edge node [above] {$E_{\M_n}$} (m-1-3)
    (m-1-3) edge node [right] {$\Phi_n$} (m-2-3)
    (m-2-2) edge node [below] {$E_\M$} (m-2-1)
   edge node [below] {$E_{\M_n}$} (m-2-3);
\end{tikzpicture}\hspace{1cm}\begin{tikzpicture}
  \matrix (m) [matrix of math nodes,row sep=3em,column sep=4em,minimum width=2em]
  {
    {\color{white}\hat{l}}\M & \hat{\M} & \M_n{\color{white}\hat{l}} \\
     {\color{white}\hat{l}}\N & \hat{\N} & \N_n{\color{white}\hat{l}} \\};
  \path[-stealth]
    (m-1-1) edge node [left] {$E$} (m-2-1)
    (m-1-2) edge node [above] {$E_\M$} (m-1-1)
    edge node [right] {$\hat{E}$} (m-2-2)
     edge node [above] {$E_{\M_n}$} (m-1-3)
    (m-1-3) edge node [right] {$E_{n}$} (m-2-3)
    (m-2-2) edge node [below] {$E_\N$} (m-2-1)
   edge node [below] {$E_{\N_n}$} (m-2-3);
\end{tikzpicture}
\caption{Haagerup reduction of quantum Markov map and conditional expectation}
\label{figure}
\end{figure}\\
Basically, $\Phi_n$ is a family of trace symmetric channel which approximates $\hat{\Phi}$ as a natural extension of $\Phi$. The same picture holds for the conditional expectation $E_n,\hat{E}$ and $E$.

\subsection{Entropy contraction}\label{sec:GNSentropydecay}
We shall now discuss the entropy contraction of GNS-$\phi$-symmetric channels. The first step is to extend the entropy difference Lemma \ref{lemma:difference}. Define the state space that are bounded with respect to $\phi$,
\[S_b(\M,\phi)=\{\rho \in S(\M)\pl |\pl c^{-1}\phi\le  \rho\le c\phi\pl, \pl \text{for some } c>0 \}\pl.\]
For all $\rho\in S_b(\M,\phi)$, $D(\rho||\phi) <\infty$ is finite.  Such $S_b(\M,\phi)$ is a dense subset of $S(\M)$ because for any $\rho$ and $0<\eps<1$,
 $\rho_\eps=(1-\eps)\rho+\eps \phi\in S_b(\M)$.
For $\rho\in S_b(\M)$, we define the entropy difference for a GNS-$\phi$-symmetric quantum channel $\Phi_*$ as
\[D_{\Phi_*}(\rho):=D(\rho||\phi)-D(\Phi_*(\rho)||\phi)\pl.\]
By data processing inequality and $\Phi_*(\phi)=\phi$, $D_{\Phi_*}(\rho)\ge 0$. In the trace symmetric case,  $D_{\Phi_*}(\rho)=D(\rho||1)-D(\Phi_*(\rho)||1)=H(\rho)-H(\Phi_*(\rho))$ as in Section \ref{sec:unital}. Let $E$  be the conditional expectation onto the multiplicative domain of $\Phi$.
By the chain rule \cite[Theorem 2]{petz1991certain} that for any $E$
invariant state $\psi\circ E=\psi$,
\[ D(\rho||\psi)=D(\rho||E_*(\rho))+D(E_*(\rho)||\psi)\pl, \]
we have the alternative expressions $D_{\Phi_*}(\rho)=D(\rho||E_*(\rho))-D(\Phi_*(\rho)||\Phi_*E_*(\rho)),
$ where we used the property $\Phi_*E_*=E_*\Phi_*$  in Proposition \ref{prop:cd2}.

\begin{lemma}\label{lemma:difference2}Let $\Phi_*$ be a GNS-$\phi$-symmetric quantum channel.
 For any state $\rho,\om\in S_{b}(\M,\phi)$,
 \[ D(\rho||\Phi_*^2(\om))\kl D_{\Phi_*}(\rho)+D(\rho||\om) \pl .\]
\end{lemma}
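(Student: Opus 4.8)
The plan is to reduce the statement to the trace-symmetric entropy difference Lemma~\ref{lemma:difference} by means of Haagerup's reduction. First I would pass to the crossed product $\hat\M=\M\rtimes_{\al^\phi}G$ and its canonical extensions. By Lemma~\ref{lemma:commute} the extended map $\hat\Phi$ is GNS-$\hat\phi$-symmetric and commutes with $E_\M$ and with every $E_{\M_n}$; in particular $\hat\Phi_*(\hat\omega)=\widehat{\Phi_*(\omega)}$, hence $\hat\Phi_*^2(\hat\omega)=\widehat{\Phi_*^2(\omega)}$. Since moreover $D(\rho\|\sigma)=D(\hat\rho\|\hat\sigma)$ for normal states (Lemma~\ref{lemma:ultra}), and therefore also $D_{\Phi_*}(\rho)=D_{\hat\Phi_*}(\hat\rho)$, the inequality to be proved on $\M$ is equivalent to $D(\hat\rho\,\|\,\hat\Phi_*^2(\hat\omega))\le D_{\hat\Phi_*}(\hat\rho)+D(\hat\rho\,\|\,\hat\omega)$, so it is enough to argue on $\hat\M$.

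Next I would descend to the approximating finite subalgebras $(\M_n,\tau_n)$. Set $\rho_n=E_{\M_n,*}(\hat\rho)$, $\omega_n=E_{\M_n,*}(\hat\omega)$ and $\phi_n=E_{\M_n,*}(\hat\phi)$. Because $\rho,\omega\in S_b(\M,\phi)$, the extensions $\hat\rho,\hat\omega$ are sandwiched between positive multiples of $\hat\phi$, hence $\rho_n,\omega_n$ between positive multiples of $\phi_n=\hat\phi|_{\M_n}\circ E_{\M_n}$; and since $e^{-a_n}\in\lambda(G)''$ is bounded above and below, $\tau_n$ is mutually equivalent with $\hat\phi|_{\M_n}$, so $\rho_n,\omega_n\in S_B(\M_n)$. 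By Lemma~\ref{lemma:commute}, $\Phi_n:=\hat\Phi|_{\M_n}$ is a $\tau_n$-symmetric quantum Markov map whose $L_1$-channel agrees with the restriction of $\hat\Phi_*$, so that $\hat\Phi_*^2(\omega_n)$ restricts to $\Phi_n^2(\omega_n)$ on $\M_n$. Applying Lemma~\ref{lemma:difference} on $(\M_n,\tau_n)$ then yields $D(\rho_n\,\|\,\Phi_n^2(\omega_n))\le D_{\Phi_n}(\rho_n)+D(\rho_n\,\|\,\omega_n)$.

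It then remains to let $n\to\infty$. The terms $D(\rho_n\|\omega_n)$ and $D(\rho_n\|\Phi_n^2(\omega_n))$, the latter being the relative entropy on $\hat\M$ of $E_{\M_n,*}\hat\rho$ and $E_{\M_n,*}\hat\Phi_*^2(\hat\omega)$, converge to $D(\hat\rho\|\hat\omega)$ and $D(\hat\rho\|\hat\Phi_*^2(\hat\omega))$ by the same argument (monotonicity of relative entropy together with lower semicontinuity) that proves Lemma~\ref{lemma:ultra}. The delicate term is $D_{\Phi_n}(\rho_n)=D(\rho_n\|\tau_n)-D(\Phi_n(\rho_n)\|\tau_n)$: the traces $\tau_n$ have no useful limit on the type~III algebra $\hat\M$, so I would first rewrite it using the $\tau_n$-preserving conditional expectation $E_n:\M_n\to\N_n$ onto the multiplicative domain of $\Phi_n$ (Lemma~\ref{lemma:commute}(iii)) and the relative-entropy chain rule, exactly as in the discussion preceding Lemma~\ref{lemma:difference2}, obtaining $D_{\Phi_n}(\rho_n)=D(\rho_n\|\phi_n)-D(\Phi_n(\rho_n)\|\phi_n)$, where now both summands converge by Lemma~\ref{lemma:ultra} to $D(\hat\rho\|\hat\phi)$ and $D(\hat\Phi_*\hat\rho\|\hat\phi)$, whose difference is $D_{\hat\Phi_*}(\hat\rho)$. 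Passing to the limit in the displayed inequality gives the claim.

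The main obstacle is precisely this last step: making sure the entropy-difference term survives the reduction, which forces the change of reference state from $\tau_n$ to $\phi_n$ and relies on the compatibility of the multiplicative-domain conditional expectations $E_n$ with $\hat E$ and the $E_{\M_n}$'s recorded in Lemma~\ref{lemma:commute} and Figure~\ref{figure}. A direct proof on $\M$ imitating Lemma~\ref{lemma:difference}, replacing $H(\cdot)$ by $D(\cdot\|\phi)$ and using operator concavity of $\ln$ together with the KMS identity~\eqref{eq:pre}, also looks feasible, but the non-tracial modular bookkeeping is heavier, so I would favour the reduction route.
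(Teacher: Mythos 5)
Your proposal follows essentially the same route as the paper's proof: Haagerup reduction to the finite algebras $(\M_n,\tau_n)$, application of the tracial entropy difference Lemma~\ref{lemma:difference} to $\Phi_n=\hat\Phi|_{\M_n}$, rewriting $D_{\Phi_n}(\rho_n)$ with reference state $\phi_n$ in place of $\tau_n$, and passing to the limit via Lemma~\ref{lemma:ultra} using the compatibilities $\Phi_n(\rho_n)=\Phi_*(\rho)_n$ from Lemma~\ref{lemma:commute}. Your additional check that $\rho_n,\omega_n$ are bounded states (via $e^{-a_n}$ being bounded above and below) is a correct refinement of a point the paper leaves implicit.
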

\begin{proof}
Recall that we use $\rho_n=\hat{\rho}|_{\M_n}=E_{\M_n,*}(\hat{\rho})$ and $\om_n=\hat{\om}|_{\M_n}=E_{\M_n,*}(\hat{\om})$
as the restriction states on finite von Neumann algebra $\M_n\subset \hat{\M}$ obtained from the Haagerup reduction. By Lemma \ref{lemma:commute}, we know that $\Phi_n=\hat{\Phi}|_{\M_n}$ is a quantum Markov map symmetric with respect to the tracial state $\tau_n$. Thus by Lemma \ref{lemma:difference} in the tracial case,
 \[ D(\rho_n||\Phi_{n}^2(\om_n))
 \kl D_{\Phi_n}(\rho_n) +D(\rho_n||\om_n) \pl ,\]
where we identify $\Phi_{n}=\Phi_{n,*}$ by trace symmetry. Here, since $\Phi_n=\Phi|_{\M_n}$ is GNS-symmetric to $\phi_n=\phi|_{\M_n}$
\[D_{\Phi_n}(\rho_n)=D(\rho_n|| \tau_n)-D(\Phi_n(\rho_n)|| \tau_n)=D(\rho_n||\phi_n)-D(\Phi_n(\rho_n)|| \phi_n) .\]
By the definitions of $\Phi_n$ and $\rho_n$, and the hat "$\:\hat{}\:$" notation for states on $\hat{\M}$,
\begin{align*} &\Phi_{n}(\rho_n)=\hat{\rho}|_{\M_n}\circ \Phi_{n} =\hat{\rho}\circ \hat{\Phi}|_{\M_n}= \widehat{\Phi_*(\rho)}|_{\M_n}= \Phi_*(\rho)_n\pl ,\\
&\Phi_{n}^2(\rho_n)=\Phi_{n}(\Phi_*(\rho)_n)=\Phi_*^2(\rho)_n \pl .
\end{align*}
Then by Lemma \ref{lemma:ultra}, we can approximate every entropic term
\begin{align*}\lim_{n}D(\rho_n||(\Phi_{n})^2(\om_n))=&\lim_{n}D(\rho_n||\Phi_*^2(\om)_n)=D(\rho||\Phi_*^2(\om))\pl ,\\
\lim_{n} D_{\Phi_n}(\rho_n)=&\lim_{n}D(\rho_n|| \phi_n)-D(\Phi_n(\rho_n)|| \phi_n)\\=&\lim_{n}D(\rho_n||\phi_n)-D(\Phi_*(\rho)_n|| \phi_n)=D_{\Phi_*}(\rho)\pl ,\\
\lim_{n}D(\rho_n||\om_n)=&D(\rho||\om)\pl.  \qedhere
\end{align*}
 \qd

 The next lemma shows the CB-return time is also compatible with Haargerup reduction.
 \begin{lemma}\label{lemma:transfer} Let $\Psi:\M\to \M$ be a GNS-$\phi$-symmetric quantum Markov map and $E$ be the conditional expectation on its multiplicative domain. Suppose
 \begin{equation}\label{rett}
  (1-\eps) E \le_{cp} \Psi \le_{cp} (1+\eps)E \pl .
 \end{equation}
Then for all $\nen$,
 \[ (1-\eps)E_{n} \le_{cp} \Psi_n \le_{cp} (1+\eps)E_{n} \pl . \]
Moreover, if $0.9 E \le_{cp} \Psi \le_{cp} 1.1E$ and $\Psi\circ E=E$, then for any $\rho\in S_{B}(\M,\phi)$
 \[ \frac{1}{2}D(\rho||E_*(\rho))\le D(\rho||\Psi_*(\rho))\pl.\]
\end{lemma}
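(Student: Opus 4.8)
The plan is to deduce both assertions from the tracial versions already proved, using Haagerup's construction and the compatibility properties of $\hat\Psi$ and the $\M_n$ collected in Lemma~\ref{lemma:commute}.

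For the first claim I would use that the CP order is stable under the two operations that link $\M$ to $\M_n$: amplification by a type~I factor, and restriction to a globally invariant von Neumann subalgebra. Concretely, $\hat\Psi$ and $\hat E$ are the restrictions to $\hat\M\subset\M\,\overline{\ten}\,B(\ell_2(G))$ of $\Psi\ten\id_{B(\ell_2(G))}$ and $E\ten\id_{B(\ell_2(G))}$ respectively (recall $\hat E$ is the canonical extension of $E$). Amplifying \eqref{rett} by $\id_{B(\ell_2(G))}$ and restricting to $\hat\M$ gives $(1-\eps)\hat E\le_{cp}\hat\Psi\le_{cp}(1+\eps)\hat E$ on $\hat\M$; here one uses that the restriction of a completely positive map to an invariant subalgebra is again completely positive, since for $[x_{ij}]\in\bM_k(\M_n)_+\subset\bM_k(\hat\M)_+$ the image $[\hat\Psi(x_{ij})]$ is positive in $\bM_k(\hat\M)$, hence in $\bM_k(\M_n)$. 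By Lemma~\ref{lemma:commute}~i), $\hat\Psi$ and $\hat E$ preserve each $\M_n$, so restricting once more yields $(1-\eps)E_n\le_{cp}\Psi_n\le_{cp}(1+\eps)E_n$, which is the first assertion.

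For the second claim, I would apply the first part with $\eps=1/10$ to get $0.9\,E_n\le_{cp}\Psi_n\le_{cp}1.1\,E_n$ on each finite von Neumann algebra $(\M_n,\tau_n)$. The hypothesis $\Psi\circ E=E$ extends to $\hat\Psi\circ\hat E=\hat E$ on $\hat\M$ and, restricting to the invariant subalgebra $\M_n$, to $\Psi_n\circ E_n=E_n$; since $\Psi_n$ and $E_n$ are $\tau_n$-symmetric, taking adjoints also gives $E_n\circ\Psi_n=E_n$. Hence Lemma~\ref{lemma:approximate} (the tracial case, $\eps=1/10$) applies on $\M_n$, and for $\rho\in S_B(\M,\phi)$ with $\rho_n=\hat\rho|_{\M_n}$ it gives
\[
D\big(\rho_n\,\|\,\Psi_{n,*}(\rho_n)\big)\ \ge\ \tfrac12\,D\big(\rho_n\,\|\,E_{n,*}(\rho_n)\big)\,.
\]
As in the proof of Lemma~\ref{lemma:difference2}, trace symmetry identifies $\Psi_{n,*}(\rho_n)=\widehat{\Psi_*(\rho)}\,|_{\M_n}$ and $E_{n,*}(\rho_n)=\widehat{E_*(\rho)}\,|_{\M_n}$, so both sides are relative entropies of canonical restrictions. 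Letting $n\to\infty$ and applying Lemma~\ref{lemma:ultra} to each side gives $D(\rho_n\|\Psi_{n,*}(\rho_n))\to D(\rho\|\Psi_*(\rho))$ and $D(\rho_n\|E_{n,*}(\rho_n))\to D(\rho\|E_*(\rho))$, whence $\tfrac12 D(\rho\|E_*(\rho))\le D(\rho\|\Psi_*(\rho))$.

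The main obstacle is purely bookkeeping rather than conceptual: one must verify that the states produced by $\Psi_{n,*}$ and $E_{n,*}$ are exactly the canonical restrictions $\widehat{\Psi_*(\rho)}|_{\M_n}$ and $\widehat{E_*(\rho)}|_{\M_n}$, so that Lemma~\ref{lemma:ultra} (phrased for such restrictions) applies verbatim, and that $E_n\circ\Psi_n=E_n$ so that Lemma~\ref{lemma:approximate} is applicable. Both points follow from Lemma~\ref{lemma:commute} and the computations already carried out for Lemma~\ref{lemma:difference2}.
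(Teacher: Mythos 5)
Your proposal is correct and follows essentially the same route as the paper: the CP order inequalities are obtained by viewing $E_n$ and $\Psi_n$ as restrictions of $E\ten\id$ and $\Psi\ten\id$ to $\M_n\subset\hat\M\subset\M\overline{\ten}B(\ell_2(G))$, and the entropy inequality by applying the tracial Lemma~\ref{lemma:approximate} on each $(\M_n,\tau_n)$ and passing to the limit via the approximation argument of Lemma~\ref{lemma:difference2} and Lemma~\ref{lemma:ultra}. Your extra bookkeeping (deriving $E_n\circ\Psi_n=E_n$ from $\Psi\circ E=E$ via $\tau_n$-symmetry, and identifying $\Psi_{n,*}(\rho_n)$, $E_{n,*}(\rho_n)$ with the canonical restrictions) is exactly what the paper leaves implicit.
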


\begin{proof} The CP order inequality follows from the fact that
both maps $E_n$ and $\Psi_n$ are the restriction of $E\ten \id$ and $\Psi\ten \id$ on the subalgebra $\M_n\subset \hat{\M}\subset \M\overline{\ten} B(\ell_2(G))$. Then the entropy inequality can be obtained by the tracial case Lemma \ref{lemma:approximate} and approximation as in Lemma \ref{lemma:difference2}.
\end{proof}

 We then obtain the entropy contraction identical to the tracical case.
\begin{theorem}\label{thm:GNSsymmetric}
Let $\Phi:\M\to \M$ be a GNS-$\phi$-symmetric quantum Markov map and $E$ be the $\phi$-preserving conditional expectation onto its multiplicative domain $\N$. Define
 \[ k_{cb}(\Phi) \lel \inf\{k\in \bN^+ \pl |\pl  0.9 E\le_{cp}\Phi^{2k} \le_{cp} 1.1 E\pl \} \]
 Then, for any $\sigma$-finite von Neumann algebra $\cQ$, state $\rho\in S(\M\overline{\ten} \cQ)$,
 \[ D(\Phi_*\ten \id_{\cQ}(\rho)||(\Phi_*\circ E_*)\ten \id_{\cQ}(\rho)) \kl \Big (1-\frac{1}{2k_{cb}(\Phi)}\Big )D(\rho||E_*\ten \id_{\cQ}(\rho) )\pl .\]
\end{theorem}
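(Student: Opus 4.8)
The plan is to transcribe the proof of Theorem~\ref{thm:unital} into the GNS setting, replacing each tracial ingredient by its $\phi$-symmetric analogue established above, and to dispose of the amplification $\cQ$ at the outset. For the reduction to $\cQ=\bC$: since $\cQ$ is $\sigma$-finite it carries a normal faithful state $\psi$, and one checks on elementary tensors (then extends by normality) that $\Phi\ten\id_\cQ$ is GNS-symmetric with respect to $\phi\ten\psi$, with multiplicative domain $\N\overline{\ten}\cQ$ and $(\phi\ten\psi)$-preserving conditional expectation $E\ten\id_\cQ$. Moreover a normal map $\Theta$ is completely positive if and only if $\Theta\ten\id_\cQ$ is (one direction is clear, the other follows by slicing with a normal state on $\cQ$), so the CP-order inequalities defining $k_{cb}$ are unchanged under tensoring with $\id_\cQ$, giving $k_{cb}(\Phi\ten\id_\cQ)=k_{cb}(\Phi)$. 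Hence it suffices to prove the inequality for $\cQ=\bC$ and then apply it to $\Phi\ten\id_\cQ$; if $k_{cb}(\Phi)=\infty$ the claim reduces to the data processing inequality, so we may assume $k:=k_{cb}(\Phi)<\infty$.

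Next I would establish the entropy difference identity. Fix $\rho\in S_b(\M,\phi)$; the general case $\rho\in S(\M)$ follows from the approximation $\rho_\eps=(1-\eps)\rho+\eps\phi$ and lower semicontinuity of relative entropy, as in Lemma~\ref{lemma:ultra}. By Proposition~\ref{prop:cd2}, $\Phi|_\N$ is a $\phi$-preserving $*$-automorphism and $E_*\Phi_*=\Phi_*E_*$; using $E\circ\Phi=\Phi\circ E$ one checks that $\Phi_*$ sends states supported in $\N$ to states supported in $\N$ and restricts there to a relative-entropy-preserving bijection (its inverse being the pre-adjoint of $(\Phi|_\N)^{-1}$). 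Applying the chain rule $D(\,\cdot\,||\,\psi)=D(\,\cdot\,||\,E_*(\,\cdot\,))+D(E_*(\,\cdot\,)||\,\psi)$ with $\psi=\phi$ to both $\rho$ and $\Phi_*(\rho)$ and subtracting gives
\[ D_{\Phi_*}(\rho)=D(\rho||\phi)-D(\Phi_*(\rho)||\phi)=D(\rho||E_*(\rho))-D(\Phi_*(\rho)||\Phi_*E_*(\rho))\pl, \]
which is the GNS analogue of the identity used for Theorem~\ref{thm:unital}.

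Then I would iterate. Since $\Phi_*$ is positive with $\Phi_*(\phi)=\phi$, it maps $S_b(\M,\phi)$ into itself with the same constant, so Lemma~\ref{lemma:difference2} applies with $\om=\Phi_*^{2(j-1)}(\rho)$ for $j=1,\dots,k$, and iterating yields
\[ D(\rho||\Phi_*^{2k}(\rho))\le k\,D_{\Phi_*}(\rho)+D(\rho||\rho)=k\,D_{\Phi_*}(\rho)\pl. \]
With $k=k_{cb}(\Phi)$ and $\Psi=\Phi^{2k}$, Proposition~\ref{prop:cd2} gives $\Psi\circ E=E$ and, by definition of $k_{cb}$, $0.9E\le_{cp}\Psi\le_{cp}1.1E$; Lemma~\ref{lemma:transfer} then yields $D(\rho||E_*(\rho))\le 2\,D(\rho||\Psi_*(\rho))=2\,D(\rho||\Phi_*^{2k}(\rho))\le 2k\,D_{\Phi_*}(\rho)$. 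Combined with the identity above,
\[ D(\rho||E_*(\rho))\le 2k\big(D(\rho||E_*(\rho))-D(\Phi_*(\rho)||\Phi_*E_*(\rho))\big)\pl, \]
and rearranging gives $D(\Phi_*(\rho)||\Phi_*E_*(\rho))\le\bigl(1-\tfrac{1}{2k_{cb}(\Phi)}\bigr)D(\rho||E_*(\rho))$, the $\cQ=\bC$ case; applying this to $\Phi\ten\id_\cQ$ and using $k_{cb}(\Phi\ten\id_\cQ)=k_{cb}(\Phi)$ completes the proof.

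The genuinely new analytic content — pushing both the entropy difference lemma and the CP-approximation lemma past the tracial case — has already been absorbed into Lemmas~\ref{lemma:difference2} and~\ref{lemma:transfer} through Haagerup's reduction, so the theorem itself is largely assembly. The step I expect to demand the most care is the amplification bookkeeping: verifying that $\Phi\ten\id_\cQ$ is genuinely GNS-symmetric with multiplicative domain exactly $\N\overline{\ten}\cQ$ for an arbitrary $\sigma$-finite, possibly type~III, von Neumann algebra $\cQ$, and that $k_{cb}$ is insensitive to this amplification; together with the finiteness bookkeeping in the chain-rule identity, which is precisely why one first restricts to $\rho\in S_b(\M,\phi)$ before approximating.
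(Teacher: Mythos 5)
Your proposal is correct and follows essentially the same route as the paper: the paper's proof of Theorem \ref{thm:GNSsymmetric} is exactly to run the argument of Theorem \ref{thm:unital} for bounded states, using the chain-rule identity $D_{\Phi_*}(\rho)=D(\rho||E_*(\rho))-D(\Phi_*(\rho)||\Phi_*E_*(\rho))$ together with Lemma \ref{lemma:difference2} (iterated) and Lemma \ref{lemma:transfer}, then approximate general states by $\rho_\eps=(1-\eps)\rho+\eps\phi$ and handle the amplification by noting $k_{cb}(\Phi\ten\id_\cQ)=k_{cb}(\Phi)$. Your extra bookkeeping on the GNS-symmetry and multiplicative domain of $\Phi\ten\id_\cQ$ is a faithful unpacking of what the paper leaves implicit, not a different method.
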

\begin{proof}
For $\rho\in S_{b}(\M,\phi)$, the proof is same as the tracial case Theorem \ref{thm:unital} by using Lemma \ref{lemma:difference2} and Lemma \ref{lemma:transfer} above. The general case $\rho\in S(\M)$ can be approximated by $\rho_\eps= (1-\eps)\rho+ \eps\phi$.
\end{proof}
Recall that in finite dimensions the MLSI is defined as the supremum of $alpha$ such that
\[ 2\al D(\rho||E_*(\rho))\le I_L(\rho):=\tau(L_*(\rho)(\ln \rho-\ln\phi) )\pl.\]
The right hand side $I_L(\rho)$ is the entropy production, and the equivalence to entropy decay rely on the de Bruijn identity
\begin{align}\label{eq:debruijn} I_L(\rho)=-\frac{d}{dt}D(T_*(\rho)||E_*(\rho))|_{t=0}\pl.\end{align}
In infinite dimensions, the de Bruijn identity \eqref{eq:debruijn} is less justified even in $B(H)$ with $\dim(H)=+\infty$ (see discussions in \cite{huber2017geometric, konig2014entropy}). To avoid this issue, we define the MLSI on Type III von Neumann algebra as follows.
\begin{defi}For a GNS-$\phi$-symmetric quantum Markov semigroup $T_t=e^{-tL}:\M\to \M$, we define the modified log-Sobolev (MLSI) constant $\al_1(L)$ as the largest constant $\al$ such that
\begin{align}D(T_{t,*}(\rho)||E_*(\rho)) \kl e^{-2\al t}D(\rho||E_*(\rho))\pl, \pl \forall \rho\in S(\M)\pl,\label{eq:expdecay}\end{align}
where $E$ is the $\phi$-preserving conditional expectation onto the fixed point subalgebra $\N$.
The complete MLSI constant is then defined as $\al_c(L):=\sup_{\mathcal{\cQ}}\al(L\ten\id_{\cQ})$, where the supremum is over all $\sigma$-finite von Neumann algebra $\cQ$.
\end{defi}
This definition of MLSI also does not depend on any choice of reference state $\phi$ (see Lemma \ref{lemma:trace}).  With this definition, we obtain Theorem\ref{thm:main6}, which is restated below.\begin{theorem}\label{thm:GNSsemigroup} Let $T_t=e^{-tL}:\M\to \M$ be a GNS-$\phi$-symmetric quantum Markov semigroup.
Denote
$t_{cb} \lel \inf\{t>0 \pl|\pl  0.9 E\le_{cp}T_t \le_{cp} 1.1 E\} $.
Then
\[ \al_1\ge \al_c\ge  \frac{1}{2t_{cb}}.\]
Namely, for any $\sigma$-finite von Neumann algebra $\cQ$ and state $\rho\in S(\M\overline{\ten} \cQ)$, we have the exponential decay of relative entropy
 \[ D(T_{t,*}\ten \id_{\cQ}(\rho)||E_*\ten \id_{\cQ}(\rho)) \kl e^{-\frac{t}{t_{cb}}}D(\rho||E_*\ten \id_{\cQ}(\rho)) \pl,\pl t\ge 0\pl.\]
\end{theorem}
\begin{proof}This can be approximated using the tracial case Theorem \ref{thm:unital} as Lemma \ref{lemma:difference2} above.
\end{proof}

\begin{rem}{\rm In the above Haagerup's reduction, both $\hat{\Phi}$ and $\Phi_n$ are always  non-ergodic even given $\Phi$ is ergodic . From this point of view, our consideration for non-ergodic cases is essential even for ergodic $\Phi$. It also indicates that Haagerup's reduction does not work for LSI/hypercontractivity.}\end{rem}

As we have seen in Proposition \ref{prop:finitetcb} for the tracial case, a combination of heat kernel estimates and spectral gap allow us to bound CB return time. The same analysis remains valid in the GNS-$\phi$-symmetric case. For $1\le p\le \infty$, we define the $\phi$-weighted conditional $L_\infty^p(\N\subset \M, \phi)$ space as the completion of $\M$ under the norm
  \[ \norm{x}{L_\infty^p(\N\subset \M, \phi)}=\sup\{\norm{axb}{p,\phi}\pl |\pl a,b\in \N,\pl \norm{aa^*}{p,\phi}=\norm{b^*b}{p,\phi}=1\pl\}\pl.\]
For a GNS-symmetric $\N$-bimodule map $\Psi:\M\to \M$,  the equivalence in Proposition \ref{prop:equivalence} also holds,
\begin{align}(1-\eps) E\le_{cp} \Psi \le_{cp} (1+\eps) E\pl \Longleftrightarrow \pl  \norm{\Psi-E:L_\infty^1(\N\subset\M, \phi)\to L_\infty(\M)}{cb}\le \eps \label{eq:CPCB}\end{align}
Based on that, we have an analog of Proposition \ref{prop:finitetcb}.
\begin{prop}\label{prop:finitetcb3}Let $T_t:\M\to \M$ be a GNS-$\phi$-symmetric quantum Markov semigroup and $E:\M\to\N$ be the $\phi$-preserving conditional expectation onto the fixed point space. Suppose \begin{enumerate} \item[i)] the $\la$-Poincar\'e inequality that $\norm{T_{t}-E:  L_2(\M,\phi)\to L_2(\M,\phi)}{}\le e^{-\lambda t}\pl ,\pl \forall t\ge 0$;
\item[ii)] There exists $t_0$ such that $\norm{T_{t_0}:  L_\infty^1(\N\subset \M,\phi)\to L_\infty(\M)}{cb}\le C_0$.
  \end{enumerate}
  Then
  $ t_{cb}\kl \frac{1}{\lambda}\ln (10 C_0)+t_0$.
  In particular, if $C_{cb}(E)<\infty$, $ t_{cb}\kl \frac{1}{\lambda}\ln (10 C_{cb}(E))$.
\end{prop}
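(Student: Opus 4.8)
The plan is to translate the conclusion into a completely bounded norm estimate via \eqref{eq:CPCB} and then combine the Poincar\'e inequality (which supplies an exponential $L_2$-decay) with the one-step smoothing bound (ii) through complex interpolation in the conditional $L_p$-scale --- exactly the argument of \cite[Prop.~3.8]{brannan2022complete} and \cite[Lemma~B.1]{gao2022complete}.

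By the equivalence \eqref{eq:CPCB} it suffices to prove that $\|T_t-E:L_\infty^1(\N\subset\M,\phi)\to L_\infty(\M)\|_{cb}\le \tfrac1{10}$ whenever $\lambda(t-t_0)\ge\ln(10C_0)$; then $t_{cb}\le \tfrac1\lambda\ln(10C_0)+t_0$ by definition of $t_{cb}$. Two facts are in place. First, hypothesis (i) gives $\|T_s-E:L_2(\M,\phi)\to L_2(\M,\phi)\|\le e^{-\lambda s}$ for all $s\ge 0$; since $L_2$ of a von Neumann tensor product is the Hilbert space tensor product, $(T_s-E)\otimes\id_{\cQ}$ obeys the same bound, so this estimate is automatically complete. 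Second, hypothesis (ii) together with $T_{t_0}-E=T_{t_0}\circ(\id-E)$ and the fact that $\id-E$ is a complete contraction of $L_\infty^1(\N\subset\M,\phi)$ (bicontractivity of conditional expectations on the conditional space) gives $\|T_{t_0}-E:L_\infty^1(\N\subset\M,\phi)\to L_\infty(\M)\|_{cb}\le C_0$; likewise $\|T_r-E:L_\infty(\M)\to L_\infty(\M)\|_{cb}\le 1$ for every $r\ge 0$.

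Next I would use $T_r\circ E=E\circ T_r=E$ and $E^2=E$, hence the telescoping identity $(T_a-E)\circ(T_b-E)=T_{a+b}-E$. Writing $t=t_0+s$ (and, when convenient, splitting $t_0$), one decomposes $T_t-E$ into one factor carrying the smoothing from time $t_0$ and one carrying the decay over time $s$, and estimates the product by routing $L_\infty^1(\N\subset\M,\phi)$ through the conditional $L_2$-space $L_\infty^2(\N\subset\M,\phi)=[\,L_\infty^1(\N\subset\M,\phi),\,L_\infty(\M)\,]_{1/2}$: on this space the decaying factor still contracts at rate $e^{-\lambda s}$ --- the $\N$-module amalgamation of the Poincar\'e decay --- while interpolating the smoothing factor between the $L_\infty^1\to L_\infty$ bound ($\le C_0$) and the $L_\infty\to L_\infty$ bound ($\le 1$) costs at most $C_0^{1/2}$ per ``half''. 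Multiplying the factors yields $\|T_t-E:L_\infty^1(\N\subset\M,\phi)\to L_\infty(\M)\|_{cb}\le C_0\,e^{-\lambda s}$, which is $\le\tfrac1{10}$ precisely when $\lambda s\ge\ln(10C_0)$. For the last clause, put $t_0=0$; then (ii) holds with $C_0=\|\id:L_\infty^1(\N\subset\M,\phi)\to L_\infty(\M)\|_{cb}=C_{cb}(E)$ by the remark after Proposition \ref{prop:finitetcb}, so $t_{cb}\le\tfrac1\lambda\ln(10\,C_{cb}(E))$.

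The main obstacle is making this routing rigorous \emph{completely boundedly}, and in particular identifying the correct intermediate space. One cannot sandwich through the unweighted $L_2(\M,\phi)$: the semigroup need not even map $L_2(\M,\phi)$ into $L_\infty(\M)$ when $C_{cb}(E)=\infty$, so the $\N$-module structure of the conditional spaces $L_\infty^p(\N\subset\M,\phi)$ and of the maps $T_s,E$ must be carried throughout, together with the interpolation identities for the conditional scale, the descent of the spectral-gap estimate to $L_\infty^2(\N\subset\M,\phi)$, and the bookkeeping of how $t_0$ is split so as to land on the constant $10C_0$. All of this is the operator-space analysis of \cite{junge2010mixed} and \cite[Appendix]{gao2020relative}, carried out in closely analogous form in \cite[Prop.~3.8]{brannan2022complete} and \cite[Lemma~B.1]{gao2022complete}; nothing beyond it is needed.
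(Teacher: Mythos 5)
Your proposal follows essentially the same route as the paper, whose proof of this proposition is itself just a pointer to the standard factorization/interpolation argument in the conditional $L_p$-scale of \cite{junge2010mixed}, transferred from the tracial case of Proposition \ref{prop:finitetcb}: split off the smoothing time, pass the spectral-gap decay through the amalgamated space $L_\infty^2(\N\subset\M,\phi)$ using the $\N$-bimodule property, and convert back to complete positivity via \eqref{eq:CPCB}. One small imprecision worth fixing: the $C_0^{1/2}$ bound for each half $T_{t_0/2}$ does not come from interpolating the endpoint bounds of hypothesis (ii), which only controls $T_{t_0}$; it comes from the KMS-symmetric $TT^*$-identity $\norm{T_{t_0/2}:L_\infty^2(\N\subset\M,\phi)\to L_\infty(\M)}{cb}^2=\norm{T_{t_0}:L_\infty^1(\N\subset\M,\phi)\to L_\infty(\M)}{cb}$ (and its mirror image), after which your symmetric factorization $T_{t_0+s}-E=T_{t_0/2}(T_s-E)T_{t_0/2}$ yields $C_0e^{-\lambda s}$ and hence exactly $t_{cb}\le t_0+\frac{1}{\lambda}\ln(10C_0)$.
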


\begin{proof}The argument is similar to the tracial cases by using the property of $L_\infty^p(\N\subset \M,\phi)$ for general von Neumann algebra established in \cite{junge2010mixed}. See also \cite[Section 5]{bardet2021entropy} for argument in finite dimensional GNS-symmetric cases.
\end{proof}

\subsection{Applications to finite quantum Markov chains}
\label{sec:symmetricQMS}
Let $T_t=e^{-Lt}:\bM_d\to \bM_d$ be a quantum Markov semigroup on matrix algebra $\bM_d$. Its generator $L$ admits the following Lindbladian form (\cite{gorini1976completely,lindblad1976generators})
\[L(x)= i[h,x]+\sum_{j} \gamma_j(V_j^*[x,V_j]+[ V_j^*,x]V_j)\pl , \]
where $h,V_j\in \bM_d$ and $h=h^*$ is Hermitian. When $T_t$ is GNS-symmetric, one have the following simplified form \cite{alicki1976detailed,kossakowski1977quantum} that
 \[L(x)=\sum_{j}e^{-w_j/2}\Big(V_j^*[x,V_j]+[ V_j^*,x]V_j\Big)\pl , \]
where $\{V_j\}=\{V_j\}^*$ is an orthonormal set with respect to trace inner product and the eigenvector of modular group $\al_t^{\phi}(V_j)=e^{-iw_j t}V_j\pl.$ In finite dimensions, the completely Pimsner-Popa index $C_{cb}(E)$ is always finite.
Combining Theorem \ref{thm:GNSsemigroup} and Proposition \ref{prop:finitetcb3}, we obtain Corollary \ref{cor:main7},
\begin{align}\al_{c}\ge \frac{\lambda}{2\ln  (10 C_{cb}(E))}\label{eq:introfdQMS1}.\end{align}
which improves the bound $\al_c \ge \frac{\la}{2C_{cb}(E)}$ in the previous work of Gao and Rouz\'e \cite{gao2022complete}.
\begin{rem}{\rm
In the ergodic case $\N=\mathbb{C}1$, the conditional expectation $E_\phi(x)=\phi(x)1$ has index
\[C(E_\phi)=\norm{\phi^{-1}}{\infty}\pl, C_{cb}(E_\phi)\le \norm{\phi^{-1}}{\infty}^2\pl.\]
The above bound \eqref{eq:introfdQMS1}  gives
\[ \al_1\ge \al_{c}\ge \frac{\lambda}{2\ln10+4
\ln\norm{\phi^{-1}}{\infty}}\pl. \]
This can be compared to the bound
\begin{align} \al_1\ge \al_2\ge \frac{2(1-\frac{2}{\norm{\phi^{-1}}{\infty}})\lambda}{\ln(\norm{\phi^{-1}}{\infty}-1)} \label{eq:TK}\pl.\end{align}
This bound was proved Diaconis and Saloff-Coste \cite{diaconis1996logarithmic} for symmetric classical Markov semigroups. In the quantum case, it is only obtained for unital semigroups \cite{kastoryano2013quantum} and $d=2$ \cite{beigi2020quantum}.
For both classical and quantum depolarizing semigroup $L(x)=x-\phi(x)1$, this bound is known to be optimal for $\al_2$, which lower bounds $\al_1$.  Our results gives a general $\mathcal{O}(\frac{\lambda}{\norm{\phi^{-1}}{\infty}})$ lower bound for $\al_1$ for non-ergodic cases and also the complete constant $\al_c$.
}
\end{rem}

\begin{rem}{\rm The Corollary \ref{eq:loglog} shows that the CMLSI constant $\al_{c}$ for classical Markov semigroup is lower bounded by LSI constant $\al_2$ up to a $O(\log\log \norm{\mu^{-1}}{\infty})$ term. This argument does not work for Quantum Markov semigroup $T_t:\mathbb{M}_d\to \bM_d$ on matrix algebras, despite relation \eqref{eq:re}
remain valid for ergodic quantum Markov semigroup. The difference is that for matrix algebra, the bounded return time
\[t_b(e^{-2}):=\frac{1}{2}\inf\{t>0\pl |\pl \norm{T_t-E:L_1(\bM_d,\phi)\to L_\infty(\bM_d)}{}<1/e^2\}\]
and the CB return time of completely bounded norm
\begin{align*}
t_{cb}=\inf\{t>0 \pl | \pl \norm{T_t-E_\mu:L_1(\bM_d,\phi)\to L_\infty(\bM_d)}{cb}<1/10\pl.\}
\end{align*}
are quite different.
In the classical setting, we used the fact \[\norm{T:L_1(\Omega)\to L_\infty(\Omega)}{}=\norm{T:L_1(\Omega)\to L_\infty(\Omega)}{cb}\pl.\] So the $t_b(e^{-2})$ and $t_{cb}(0.1)$ are comparable by absolute constants.
In the noncommutative setting, we only have \[\norm{T_t-E_\mu:L_1(\bM_d,\phi)\to L_\infty(\bM_d)}{cb}\le d\norm{T_t-E_\mu:L_1(\bM_d)\to L_\infty(\bM_d)}{}.\] In the trace symmetric case, $\norm{\mu^{-1}}{\infty}=d$ and $t_{cb}(0.1)\le \frac{3}{2}t_b(e^{-2}) +\ln d$,
\[ \al_{c}\ge \frac{1}{2t_{cb}(0.1)}\ge  \frac{1}{3t_b(e^{-2})+ 2\ln d}\sim O(\frac{\al_2}{\ln d})\pl,\]
which is worse than the lower bound in the previous remark as $\al_2\le \lambda$.
}
\end{rem}

\subsection{Independence of invariant state}
The next lemma shows that the GNS-symmetry is also independent of the choice of invariant state $\phi$.
\begin{lemma}\label{lemma:trace}  Let $T:\M\to\M$ be a GNS-$\phi$-symmetric quantum Markov map for a normal faithful state $\phi$. Denote $E:\M\to\N$ as the $\phi$-preserving conditional expectation onto the multiplicative domain. Suppose $\psi$ is an another normal faithful state invariant under $E$, i.e. $\psi\circ E=\psi$. Then $T:\M\to\M$ is also  GNS-$\psi$-symmetric.
\end{lemma}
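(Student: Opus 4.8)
The plan is to recover GNS-$\psi$-symmetry from its two constituent properties — KMS-$\psi$-symmetry together with commutation with the modular group $(\al_t^\psi)$, using the equivalence of \cite{wirth2022christensen} recorded after Definition \ref{defi:GNS} — and to transport both of these from $\phi$ to $\psi$ by means of the Connes cocycle $u_t:=[D\psi:D\phi]_t$.

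First I would note that both states factor through $E$: since $E$ is $\phi$-preserving, $\phi=\phi\circ E=\phi_0\circ E$ with $\phi_0:=\phi|_\N$, and by hypothesis $\psi=\psi\circ E=\psi_0\circ E$ with $\psi_0:=\psi|_\N$. By the compatibility of the Connes cocycle with conditional expectations, this forces $u_t=[D\psi_0:D\phi_0]_t$ computed inside $\N$, so $u_t\in\N$ for all $t\in\bR$, and the cocycle identity reads $\al_t^\psi=\mathrm{Ad}(u_t)\circ\al_t^\phi$ on all of $\M$. Next I would check that $T$ fixes $u_t$: by Proposition \ref{prop:cd2}, $\N$ is the multiplicative domain of $T$ and $T|_\N$ is an involutive $*$-automorphism of $\N$; it preserves $\phi_0$ (set $y=1$ in the GNS-$\phi$-relation) and it preserves $\psi_0$ (this is immediate — $T_t|_\N=\id_\N$ — for the semigroup maps $T_t$ at which the modified log-Sobolev inequality is evaluated), whence $T(u_t)=[D(\psi_0\circ T|_\N):D(\phi_0\circ T|_\N)]_t=u_t$.

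Given these, commutation with $\al_t^\psi$ is straightforward: since $u_t$ lies in the multiplicative domain of $T$ we have $T(u_t x u_t^*)=T(u_t)T(x)T(u_t)^*=u_t T(x)u_t^*$ for all $x\in\M$, i.e. $T$ commutes with $\mathrm{Ad}(u_t)$, while GNS-$\phi$-symmetry of $T$ gives that it commutes with $\al_t^\phi$; composing, $T\circ\al_t^\psi=\al_t^\psi\circ T$. For KMS-$\psi$-symmetry I would use the density form \eqref{eq:pre}, which asks that $T_*(d_\psi^{1/2}xd_\psi^{1/2})=d_\psi^{1/2}T(x)d_\psi^{1/2}$; writing $v:=u_{-i/2}=d_\psi^{1/2}d_\phi^{-1/2}$ — the analytic continuation of the cocycle, again affiliated to $\N$ — one has $d_\psi^{1/2}xd_\psi^{1/2}=v\,(d_\phi^{1/2}xd_\phi^{1/2})\,v^*$, and the identity then drops out of the multiplicativity of $T$ on $\N$, the relation $T(v)=v$, and the KMS-$\phi$-symmetry of $T$. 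Combining this with the commutation statement and the equivalence of \cite{wirth2022christensen} gives GNS-$\psi$-symmetry of $T$.

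The step I expect to demand the most care is this last one, since $v=d_\psi^{1/2}d_\phi^{-1/2}$ is in general unbounded, so the slogan ``$\psi$ is $\phi$ conjugated by $v$'' must be made rigorous by an approximation: I would first prove the lemma for states $\psi$ with $c^{-1}\phi\le\psi\le c\phi$ (where $v$ is bounded) and then pass to a general $\psi$ by the usual truncation and limiting procedure employed elsewhere in the paper, or alternatively route everything through the Haagerup reduction developed above, reducing to the tracial case where every density involved is bounded and the computation is elementary. The purely algebraic content — that the Connes cocycle lies in the multiplicative domain $\N$ and is fixed by $T$ — is by contrast routine.
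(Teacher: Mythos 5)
Your route is sound in outline but genuinely different from the paper's. The paper argues directly: assuming without loss of generality $\psi\le C\phi$, it invokes \cite[Theorem 3.17]{takesaki2} to write $\psi(x)=\phi(h^*xh)$ with $h\in\N$, extends this identity from $\N$ to all of $\M$ using $\phi\circ E=\phi$, $\psi\circ E=\psi$ and the bimodule property of $E$, observes that $h$ may be replaced by $T(h)$ (here $T^2(h)=h$ is used), and then verifies $\psi(xT(y))=\psi(T(x)y)$ in a few lines using only the multiplicativity $T(ayb)=T(a)T(y)T(b)$ for $a,b\in\N$ and GNS-$\phi$-symmetry; general $\psi$ is reached through $\psi_\eps=(1-\eps)\psi+\eps\phi$, exactly the truncation you anticipate. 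Your proof instead splits GNS-$\psi$-symmetry into KMS-$\psi$-symmetry plus commutation with $(\al_t^\psi)$ via \cite{wirth2022christensen} and transports both through the Connes cocycle. The ingredients you use are correct: the cocycle of two $E$-invariant states lies in $\N$ and equals the cocycle of the restrictions (consistent with what the paper itself uses in Lemma \ref{lemma:independent}), the transformation rule under the automorphism $T|_\N$ gives $T(u_t)=u_t$ once $T|_\N$ preserves both restricted states, and the KMS step needs the twisted bimodule identity $T_*(a\,\omega\,b)=T(a)T_*(\omega)T(b)$ for $a,b\in\N$, which you should state explicitly (it follows by trace duality from multiplicativity of $T$ on $\N$, using Proposition \ref{prop:cd2}). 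The paper's argument is shorter and avoids modular machinery and analytic continuation of cocycles altogether; yours is heavier but isolates cleanly what is needed and yields $T\circ\al_t^\psi=\al_t^\psi\circ T$ as a by-product.

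The step that deserves real attention is $T(u_t)=u_t$: it requires $\psi_0\circ T|_\N=\psi_0$, equivalently $\psi\circ T=\psi$, which is not among the stated hypotheses — only $\psi\circ E=\psi$ is assumed, and on the multiplicative domain $T|_\N$ is an involutive automorphism that need not be the identity. You cover this only by retreating to the semigroup case $T_t|_\N=\id_\N$. Two remarks. First, this extra invariance is genuinely needed for the statement as written: for $T=\mathrm{Ad}_u$ on $\bM_2$ with $u$ a self-adjoint non-central unitary and $\phi=\tau$, the multiplicative domain is all of $\bM_2$, so $E=\id$ and every faithful state satisfies $\psi\circ E=\psi$, yet $T$ fails to be GNS-$\psi$-symmetric whenever $\psi\circ T\ne\psi$. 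Second, the paper's own proof uses exactly the same unstated invariance (the equality $\psi(x)=\psi(T(x))$ in the ``replace $h$ by $T(h)$'' step), so your argument is not weaker than the paper's; and in the intended application, where $T=T_t$ and $\N$ is the fixed-point algebra with $E\circ T_t=E$, the invariance is immediate: $\psi\circ T_t=\psi\circ E\circ T_t=\psi\circ E=\psi$. If you want the lemma for a single GNS-$\phi$-symmetric Markov map in full generality, either add $\psi\circ T=\psi$ as a hypothesis or restrict to maps acting trivially on $\N$; with that proviso, your proof goes through.
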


\begin{proof}
Without loss of generality we assume $\psi\le C\phi$ for some $C>0$. We first view them as the states on the subalgebra $\N$ by restriction. By \cite[Theorem 3.17]{takesaki2}, there exists $h\in \N$ such that
\[ \psi( x )=\phi(h^*xh ) \pl, \pl \forall x\in \N\pl. \]
This identity actually also holds for $y\in \M$. Indeed, because of $\phi\circ E=\phi$ and $\psi\circ E=\psi$,
\[ \psi( y )=\psi( E(y) )=\phi(h^*E(y)h )=\phi(E(h^*yh) )=\phi(h^*yh ) \pl, \pl \forall y\in \M\pl. \]
Moreover, one can replace $h$ by $T(h)$, because
\[ \psi( x )=\psi( T(x) )=\phi(h^*T(x)h )=\phi\circ T(T(h^*)xT(h) )=\phi(T(h^*)xT(h) )\]
where we use the fact that $T^2(h)=h$.
Thus the GNS-symmetry with respect to $\psi$ follows that for $x,y\in\M$,
\begin{align*} \psi( xT(y) )&=\phi(h^*xT(y)h )=\phi(h^*xT(yT(h))) =\phi(T(h^*x)yT(h))=\phi(T(h^*)T(x)yT(h))\\ =&\psi( T(x)y ) \pl,\end{align*}
where we used the multiplicative property of $T(axb)=T(a)T(x)T(b)$ for $a,b\in \N$.
The general case can be obtained via $\psi_{\eps}=(1-\eps)\psi+\eps \phi$. 
 \qd
We remark that if one has convergence $\lim_{n}\Phi^{2n}=E$ the in $L_2$-norm, the above $E$-invariant condition $\phi\circ E=\phi$ can be replaced by $\phi=\Phi^{2}\circ \phi$.

Note that the left hand side of \eqref{eq:CPCB} only relies on complete positivity. Indeed, the $L_\infty^1(\N\subset \M, \phi)$ norm at the right hand side is independent of the choice of the invariant state $\phi=\phi\circ E$.

\begin{lemma}\label{lemma:independent}
Let $\phi$ be a normal faithful state and $E:\M\to \N$ be a $\phi$-preserving conditional expectation. Suppose $\psi=\psi\circ E$ is another normal faithful state preserved by $E$. Then,
\[ \norm{x}{L_\infty^p(\N\subset\M, \phi)}=\norm{x}{L_\infty^p(\N\subset\M, \psi)}\pl, \pl \forall x\in \M\pl.\]
The identity extends to all $x\in L_\infty^p(\N\ssubset\M, \phi) $.
\end{lemma}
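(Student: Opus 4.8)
The plan is to rewrite the norm of $L_\infty^p(\N\subset\M,\phi)$ in a form from which the reference state visibly disappears. Fix $1\le p\le\infty$. Since $\phi=\phi\circ E$, the density $d_\phi$ belongs to $L_1(\N)\subset L_1(\M)$, and $\sigma^\phi_t(\N)=\N$ by Takesaki's theorem, so the Haagerup $L_q$-spaces of $\M$ restrict compatibly to those of $\N$; moreover $d_\phi^{1/2p}$ plays the role of the Kosaki density of $\phi|_\N$ inside $L_{2p}(\N)$. For $a,b\in\N$ I would put $u=d_\phi^{1/2p}a$ and $v=bd_\phi^{1/2p}$, elements of $L_{2p}(\N)$. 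Then $\norm{axb}{p,\phi}=\norm{u\,x\,v}{p}$ in $L_p(\M)$, while $\norm{aa^*}{p,\phi}=\norm{uu^*}{p}=\norm{u}{2p}^2$ and $\norm{b^*b}{p,\phi}=\norm{v^*v}{p}=\norm{v}{2p}^2$, so that
\[ \norm{x}{L_\infty^p(\N\subset\M,\phi)}=\sup\Big\{\,\norm{u\,x\,v}{p}\ \Big|\ u\in d_\phi^{1/2p}\N,\ v\in\N d_\phi^{1/2p},\ \norm{u}{2p}=\norm{v}{2p}=1\,\Big\}\pl. \]

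First I would use that $d_\phi^{1/2p}\N$ and $\N d_\phi^{1/2p}$ are dense in $L_{2p}(\N)$ (density of the left/right Kosaki embeddings of the normal faithful state $\phi|_\N$), together with the fact that for a fixed $x\in\M$ the map $(u,v)\mapsto u\,x\,v$ is jointly continuous $L_{2p}(\N)\times L_{2p}(\N)\to L_p(\M)$ by H\"older's inequality. This replaces the two dense sets by the whole unit sphere of $L_{2p}(\N)$, giving
\[ \norm{x}{L_\infty^p(\N\subset\M,\phi)}=\sup\Big\{\,\norm{u\,x\,v}{p}\ \Big|\ u,v\in L_{2p}(\N),\ \norm{u}{2p}=\norm{v}{2p}=1\,\Big\},\qquad x\in\M\pl. \]
The right-hand side refers only to the Haagerup space $L_p(\M)$, its subspace $L_{2p}(\N)$, and the bimodule multiplication, none of which involves $\phi$. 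Running the same computation with $\psi$ (using $\psi=\psi\circ E$) produces the identical expression; hence the two norms agree on the dense subspace $\M$, and the identity map on $\M$ extends to an isometric isomorphism $L_\infty^p(\N\subset\M,\phi)\cong L_\infty^p(\N\subset\M,\psi)$, which also yields the last sentence of the statement.

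The main obstacle is purely book-keeping in the type III case: one must check that $d_\phi^{1/2p}$, the inclusion $L_{2p}(\N)\subset L_{2p}(\M)$, and the products $u\,x\,v\in L_p(\M)$ are all read inside one fixed, weight-independent Haagerup $L_p$-scale, so that the final formula is genuinely intrinsic, and that the analogous scale built from $\psi$ is canonically the same (which holds since the Connes cocycle $[D\psi:D\phi]_t=[D(\psi|_\N):D(\phi|_\N)]_t$ lies in $\N$). These facts, along with the density statements, are precisely the conditional-$L_p$ machinery of \cite{junge2010mixed} already invoked in Proposition \ref{prop:finitetcb3}; alternatively one may run the whole argument through Haagerup's reduction as in Lemma \ref{lemma:commute}, reducing to the tracial case where every step is elementary. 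Everything else is routine.
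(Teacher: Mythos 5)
Your proposal is correct, but it is organized differently from the paper's argument. The paper proves the lemma by a direct change of variables inside the supremum defining $\norm{x}{L_\infty^p(\N\subset\M,\phi)}$: since both states are preserved by $E$, the transition element $d_\psi^{-\frac{1}{2p}}d_\phi^{\frac{1}{2p}}$ is affiliated to $\N$ (extracted from the argument of Lemma \ref{lemma:trace}, where $\psi\le C\phi$ gives $d_\psi=hd_\phi h^*$ with $h\in\N$, plus approximation), so substituting $a_1=d_\psi^{-\frac{1}{2p}}d_\phi^{\frac{1}{2p}}a$ and $b_1=bd_\phi^{\frac{1}{2p}}d_\psi^{-\frac{1}{2p}}$ converts each $\phi$-weighted quantity into the corresponding $\psi$-weighted one and identifies the two suprema outright. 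You instead pass through an intrinsic reformulation: writing $u=d_\phi^{1/2p}a$, $v=bd_\phi^{1/2p}$, you express the conditional norm as a supremum of $\norm{uxv}{p}$ over the unit sphere of $L_{2p}(\N)$ inside the Haagerup scale of $\M$, using density of the left/right Kosaki embeddings and H\"older continuity of $(u,v)\mapsto uxv$, and then note that $\psi$ produces the identical formula. The two proofs rest on the same essential input --- the relative density/Connes cocycle of $\psi$ with respect to $\phi$ lies in $\N$ because both states are $E$-invariant; in your version this is concealed in the claim that the copy of $L_{2p}(\N)$ inside $L_{2p}(\M)$ constructed from $\psi$ coincides with the one constructed from $\phi$, a point you correctly flag and attribute to the conditional-$L_p$ machinery of \cite{junge2010mixed}, which is the same reference the paper relies on elsewhere. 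What your route buys is a cleaner structural conclusion (the norm depends only on the inclusion $\N\subset\M$ and $E$, hence agrees for all $E$-invariant states at once), at the cost of invoking more compatibility machinery (density of $d_\phi^{1/2p}\N$ and $\N d_\phi^{1/2p}$ in $L_{2p}(\N)$ and canonicity of $L_{2p}(\N)\hookrightarrow L_{2p}(\M)$); the paper's substitution is more self-contained given Lemma \ref{lemma:trace}. Two small points to keep straight: the density argument is only needed for $p<\infty$ (at $p=\infty$ the statement is trivial), and the passage from a supremum over a dense subspace to the full unit sphere uses the normalization $u_n/\norm{u_n}{2p}\to u/\norm{u}{2p}$, which is justified precisely because $(u,v)\mapsto uxv$ is bounded from $L_{2p}(\N)\times L_{2p}(\N)$ to $L_p(\M)$ for fixed $x\in\M$.
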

\begin{proof}
Note that if both $\phi$ and $\psi$ are $E$ invariant, then $d_{\psi}^{-\frac{1}{2p}}d_\phi^{\frac{1}{2p}}$ is affiliated to $\N$. Indeed, as argued in Lemma \ref{lemma:trace}, if $\psi\le C\phi$, then $d_\psi= hd_\phi h^*$ for some $h\in \N$, and the general case follows from approximation $\psi\le \frac{1}{\eps}((1-\eps)\phi+\eps \psi)$.
Then we have
\[\norm{aa^*}{\phi, p}=\norm{d_{\phi}^{\frac{1}{2p}}aa^*d_{\phi}^{\frac{1}{2p}}}{p}=\norm{d_{\psi}^{-\frac{1}{2p}}d_{\phi}^{\frac{1}{2p}}aa^*d_{\phi}^{\frac{1}{2p}}d_{\psi}^{-\frac{1}{2p}}}{\psi,2p} \]
Denote $a_1=d_{\psi}^{-\frac{1}{2p}}d_\phi^{\frac{1}{2p}}a$ and $b_1=bd_\phi^{\frac{1}{2p}}d_{\psi}^{-\frac{1}{2p}}$.
For $x\in \M$,
\begin{align*}
\norm{x}{L_\infty^p(\N\subset\M, \phi)}=&\sup_{\norm{\pl aa^*\pl}{\phi,2p}=\norm{\pl b^*b\pl}{\phi,2p}=1}\norm{axb}{\phi,p}
=\sup_{\norm{\pl aa^*\pl}{\phi,p}=\norm{\pl b_1b_1\pl }{\phi,p}=1}\norm{d_{\psi}^{-\frac{1}{2p}}d_\phi^{\frac{1}{2p}}axbd_{\phi}^{\frac{1}{2p}}d_{\psi}^{-\frac{1}{2p}}}{\psi,p}
\\=& \sup_{\norm{\pl a_1a_1^*\pl}{2p}=\norm{\pl b_1b_1^*\pl}{2p}=1}\norm{a_1xb_1}{\psi, p}=\norm{x}{L_\infty^p(\N\subset\M,\psi)}
\end{align*}
where the supremum are for $a,b\in \N$.
\end{proof}

\begin{rem}{\rm \label{rem:phitr} For finite $\M$, one particular invariant state of $E$ used in \cite{bardet2022hypercontractivity,bardet2021entropy} is
$\phi_{\tr}=E_*(1)$.
This state is convenient because $\phi_{\tr}|_\N$ is a trace. Then by Lemma \ref{lemma:independent}, we have
\[ \norm{x}{L_\infty^p(\N\subset \M, \phi)}=\norm{x}{L_\infty^p(\N\subset \M, \phi_\tr)}=\sup\{\norm{axb}{p,\phi_\tr}\pl |\pl  a,b\in \N,\pl \norm{a}{p,\phi_\tr}=\norm{b}{p,\phi_\tr}=1\pl\}\]
where we used the fact $L_p(\N,\phi_\tr)$ is a tracial $L_p$-space. We will use this point to simplify the discussion in Section \ref{sec:concentration}.
}
\end{rem}

\section{Applications and Examples}
\subsection{Entropy contraction coefficients}
In this section, we discuss the implications of our results on contraction coefficients studied in \cite{diaconis1996logarithmic, del2003contraction,muller2016entropy,gao2022complete}. These are analogs of functional inequalities for a single quantum channel.
\begin{defi}
Let $\Phi:\M\to \M$ be a quantum Markov map GNS-$\phi$-symmetric to a normal faithful state $\phi$ and $E:\M\to \N$ be the $\phi$-preserving conditional expectation onto the multiplicative domain of $\Phi$.
We define
\begin{itemize}
\item[i)] the $L_2$-contraction coefficient:
\begin{align}\lambda(\Phi):=\norm{\Phi(\id-E):L_2(\M,\phi)\to L_2(\M,\phi)}{}.\label{eq:PI}\end{align}
\item[ii)] the entropy contraction coefficient:
\[\al(\Phi):=\sup_{\rho} \frac{D(\Phi_*(\rho)||\Phi_*\circ E_*(\rho))}{D(\rho||E_*(\rho))}\pl.\]
\item[iii)] the complete entropy contraction coefficient $\al_c(\Phi):=\sup_{\cQ}\al(\id_\cQ\ten\Phi)$ where the supremum is over all $\sigma$-finite von Neumann algebras $\cQ$.
\end{itemize}
\end{defi}
The condition $\lambda(\Phi)<1$ can be viewed as a Poincar\'e inequality for a quantum channel $\Phi$, which implies the exponential convergence in $L_2$,
\[\norm{\Phi^n(X)-E(X)}{L_2(\M,\phi)}\le \lambda(\Phi)^n\norm{X-E(X)}{L_2(\M,\phi)}\to 0\pl.\]
Similarly, the entropy contraction coefficient gives the convergence in relative entropy
\[D(\Phi^n(\rho)||\Phi^n\circ E(\rho))\le \al(\Phi)^n D(\rho||E(\rho))\pl,\]
The complete constant $\al_c(\Phi)$ controls not only the entropy contraction of $\Phi$ but also $\id_\cQ\ten\Phi$ with any environment system $\mathcal{Q}$. This  leads to the tensorization property of $\al_c$ that for two GNS-symmetric quantum channels \cite{gao2022complete},
\begin{align}\al_c(\Phi_1\ten \Phi_2)=\max\{\al_c(\Phi_1),\al_c(\Phi_2)\}\label{eq:tensorizationchannel}\end{align}
For classical Markov maps, the tensorization property \eqref{eq:tensorizationchannel} is known to also hold for the non-complete constant $\al$. Nevertheless, for quantum Markov map (channel), this is not the case and $\al(\Phi)$ in general can be strictly less than $\al_c(\Phi)$ (see \cite[Section 4.4]{brannan2022complete}).

In finite dimensions, the existence of strictly contractive constant $\al_c(\Phi)<1$ was obtained in \cite[Theorem 4.1]{gao2022complete}. Our results gives an explicit estimate for $\al_c(\Phi)$.
\begin{cor}\label{cor:CSDPI} Let $\Phi$ be a GNS-symmetric quantum Markov map,
\[\lambda(\Phi)\le \al(\Phi)\le \al_c(\Phi)\le (1-\frac{1}{2k_{cb}(\Phi)})\le (1-\frac{-\ln \lambda(\Phi)}{\ln (10 C_{cb}(E))})\pl.\]
\end{cor}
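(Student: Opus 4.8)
The plan is to establish the four inequalities one at a time, since the two inner links are essentially restatements of facts already available. The link $\al(\Phi)\le\al_c(\Phi)$ is immediate: it is the instance $\cQ=\bC$ of the supremum defining $\al_c(\Phi)$. The link $\al_c(\Phi)\le 1-\tfrac{1}{2k_{cb}(\Phi)}$ follows from Theorem \ref{thm:GNSsymmetric}: for any $\sigma$-finite $\cQ$, the multiplicative domain of $\id_\cQ\ten\Phi$ is $\cQ\,\overline{\ten}\,\N$, its $\phi$-preserving conditional expectation is $\id_\cQ\ten E$, one has $(\id_\cQ\ten\Phi)_*\circ(\id_\cQ\ten E)_*=\id_\cQ\ten(\Phi_*\circ E_*)$, and the CP-order conditions defining $k_{cb}$ are stable under amplification, so $k_{cb}(\id_\cQ\ten\Phi)=k_{cb}(\Phi)$; hence Theorem \ref{thm:GNSsymmetric} gives $\al(\id_\cQ\ten\Phi)\le 1-\tfrac{1}{2k_{cb}(\Phi)}$, and taking the supremum over $\cQ$ yields the claim.

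For the link $\lambda(\Phi)\le\al(\Phi)$ I would test the entropy contraction on infinitesimal perturbations of the invariant state. Being GNS-$\phi$-symmetric, $\Phi$ commutes with the modular group, hence restricts to a self-adjoint contraction of $L_2(\M,\phi)$ commuting with $E$; in particular $\Phi(\id-E)$ is self-adjoint of norm $\lambda(\Phi)$ and preserves $(\id-E)L_2(\M,\phi)$. Choosing a self-adjoint $X\in(\id-E)\M$ that nearly attains $\lambda(\Phi)$ and setting $\rho_\eps=\phi+\eps X$ (a bounded state for small $\eps$), one has $E_*(\rho_\eps)=\phi$, $\Phi_*(\rho_\eps)=\phi+\eps\,\Phi_*(X)$, and $\Phi_*E_*(\rho_\eps)=\phi$; by the integral formula \eqref{eq:st} and the second-order expansion of Lemma \ref{eq:2nd}, $D(\rho_\eps\|\phi)=\tfrac{\eps^2}{2}\gamma_\phi(X)+O(\eps^3)$ and similarly for $\Phi_*(\rho_\eps)$, so $\al(\Phi)\ge\gamma_\phi(\Phi_*X)/\gamma_\phi(X)$. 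Optimizing over $X$ and using that the BKM quadratic form $\gamma_\phi$ and the KMS inner product of $L_2(\M,\phi)$ agree on the modular eigenspaces, which $\Phi_*$ preserves, identifies this supremum with the square of the norm of $\Phi$ on $(\id-E)L_2(\M,\phi)$. This is the step I expect to be the main obstacle: the infinitesimal computation controls $\lambda(\Phi^2)=\lambda(\Phi)^2$, and upgrading $\lambda(\Phi)^2\le\al(\Phi)$ to the stated $\lambda(\Phi)\le\al(\Phi)$ needs genuinely global input — exploiting that the supremum defining $\al(\Phi)$ ranges over all states, and/or the complex-interpolation structure of the conditional $L_p$-spaces — rather than a Taylor expansion.

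It remains to prove the last link, which after rearranging is equivalent to $k_{cb}(\Phi)\le\dfrac{\ln(10\,C_{cb}(E))}{-2\ln\lambda(\Phi)}$; this is the discrete-time counterpart of Proposition \ref{prop:finitetcb3}. By Proposition \ref{prop:cd2}, $\Phi^2\circ E=E\circ\Phi^2=E$, so $\Phi^{2k}-E=(\Phi^2-E)^k$, and since $\Phi(\id-E)$ is self-adjoint on $L_2(\M,\phi)$ the operator $\Phi^2-E=(\Phi(\id-E))^2$ has $L_2$-norm $\lambda(\Phi)^2$, so $k$ iterations contract the off-fixed-point part by $\lambda(\Phi)^{2k}$. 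Combining this with the one-point ultracontractivity input $\|\id:L_\infty^1(\N\subset\M,\phi)\to L_\infty(\M)\|_{cb}=C_{cb}(E)$ (finite in the finite setting), by the same factorization through $L_2(\M,\phi)$ used in the proof of Proposition \ref{prop:finitetcb3} — and using GNS-symmetry so that the two boundary factors coincide — one gets $\|\Phi^{2k}-E:L_\infty^1(\N\subset\M,\phi)\to L_\infty(\M)\|_{cb}\le C_{cb}(E)\,\lambda(\Phi)^{2k}$ up to absolute constants; requiring the right side to drop below $\tfrac{1}{10}$ and invoking the equivalence \eqref{eq:CPCB} gives the stated bound on $k_{cb}(\Phi)$. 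The bookkeeping of the factor $10$ and the integer rounding of $k_{cb}(\Phi)$ is routine and I would not dwell on it.
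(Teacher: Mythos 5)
Your treatment of the second, third and fourth inequalities coincides with what the paper actually does. The link $\al(\Phi)\le\al_c(\Phi)$ is definitional; $\al_c(\Phi)\le 1-\tfrac{1}{2k_{cb}(\Phi)}$ is Theorem \ref{thm:GNSsymmetric} (which is already stated for $\Phi\ten\id_\cQ$) together with the correct observation that the CP-order conditions defining $k_{cb}$ are stable under amplification; and your bound $k_{cb}(\Phi)\le \ln(10\,C_{cb}(E))/(-2\ln\la(\Phi))$ is exactly the ``discrete time analog of Proposition \ref{prop:finitetcb3}'' that the paper cites, proved the same way: $\Phi^{2k}-E=(\Phi^2-E)^k$ contracts on $L_2(\M,\phi)$ by $\la(\Phi)^{2k}$, the $L_\infty^1(\N\subset\M,\phi)\to L_\infty(\M)$ CB norm is factored through $L_2$ at cost $C_{cb}(E)$, and one converts back to CP order via \eqref{eq:CPCB}. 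The integer rounding of $k_{cb}$, which in principle costs the last displayed inequality up to one unit of $k$, is glossed over by the paper exactly as you gloss over it, so that is not a defect of your proposal.

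The genuine gap is the one you flagged yourself: the first inequality $\la(\Phi)\le\al(\Phi)$. Your linearization $\rho_\eps=\phi+\eps X$ with $E_*(X)=0$, via \eqref{eq:st} and Lemma \ref{eq:2nd}, yields $\al(\Phi)\ge\sup_X\gamma_\phi(\Phi_*X)/\gamma_\phi(X)$, and for a GNS-symmetric map (commuting with the modular group, hence compatible with the BKM and KMS quadratic forms) this supremum is $\la(\Phi)^2$, not $\la(\Phi)$; you are right that no Taylor expansion can do better. But you should not search for ``genuinely global input'' to close this: the inequality as printed cannot be proved, because it already fails for the classical binary symmetric channel with flip probability $\delta\in(0,1/2)$ (a trace-symmetric Markov map on $\ell_\infty^2$), where $\la=1-2\delta$ while the entropy contraction coefficient equals $(1-2\delta)^2=\la^2<\la$. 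The provable, and standard, form of this link is $\la(\Phi)^2\le\al(\Phi)$, which is precisely what your computation establishes. Note also that the paper's own proof of the corollary is a one-line citation of Theorem \ref{thm:GNSsymmetric} and the discrete analog of Proposition \ref{prop:finitetcb3} and offers no argument at all for the first inequality, so on this point your proposal supplies more than the paper does; the honest fix is to state and prove the $\la^2$ version rather than the printed one.
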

\begin{proof}The estimate follows from Theorem \ref{thm:GNSsymmetric} and a discrete time analog of Proposition \ref{prop:finitetcb3}.
\end{proof}

\begin{rem}{\rm In the ergodic trace symmetric case $\N=\mathbb{C}1$ and $\M=\bM_d$,
  we have the trace map $E(x)=\tr(x)\frac{1}{d}$ and the CB-index $C_{cb}(E)= d^2$.
   The above estimate implies
   \begin{align}\label{eq:CSDPId} \lambda(\Phi)\le \al(\Phi)\le \al_c(\Phi)\le (1-\frac{-\ln \lambda(\Phi)}{\ln (10 d^2)})
   \end{align}

 This can be compared to
     \cite[Theorem 4.2]{muller2016entropy} and \cite[Corollary 27]{kastoryano2013quantum},
\begin{align}\label{eq:SDPId} \al(\Phi)\le 1-\frac{1}{2}\al_{2}(\id-\Phi^*\Phi)\le 1-\frac{(1-\la(\Phi)^2)^2(1-\frac{2}{d})}{\ln(d-1)}\pl,\end{align}
   where $\al_{2}(\id-\Phi^2)$ is the LSI constant of $\id-\Phi^2$ as a generator of quantum Markov semigroup. The two upper bounds in \eqref{eq:CSDPId} and \eqref{eq:SDPId} are comparable, as  both are asymptotically $\Theta(\frac{-\ln \lambda(\Phi)}{\ln d})$. The strength of our results is that \eqref{eq:CSDPId} also bounds the complete constant $\al_c(\Phi)$ which has tensorization property.
   }
   \end{rem}

\begin{rem}{\rm Our Lemma \ref{lemma:difference} implies
\[ 1-\al_1(\id-\Phi^2)\le \al(\Phi),\]
where $\al_1$ is MLSI constant of the semigroup generator $(\id-\Phi^*\Phi)$.
For classical Markov map, it was proved by Del Moral, Ledoux and Miclo \cite{del2003contraction} that there exists an universal constant $0<c<1$ such that
\begin{align}\label{eq:miclo} 1-\al_1(\id-\Phi^*\Phi)\le \al(\Phi)\le 1-c\al_1(\id-\Phi^*\Phi)\pl.\end{align}
To the best of our knowledge, the above upper bound in \eqref{eq:miclo} is open for quantum cases.
}
\end{rem}

\subsection{Graph random walks}
Let $G=(V,E)$ be a finite undirected graph with $|V|=d$ and the edge set $E\subset V\times V$. The discrete time random walk on $G$ is a finite Markov chain given by the stochastic matrix
\[K_G(u,v)=\begin{cases}
             \frac{1}{d(u)}, & \mbox{if } (u,v)\in E \\
             0, & \mbox{otherwise}.
           \end{cases}\]
           Here $d(u)$ is the degree of vertex $u\in V$. Then $K_G:l_\infty(V)\to l_\infty(V)$ is a Markov map.
If $G$ is not bipartite, $K_G$ admits a unique station distribution $\pi(u)=\frac{d(u)}{2m}$, where $|E|=m$. It is clear that $K_G$ is symmetric to the measure $\pi$, also called reversible. Hence $K_G$ is an ergodic  unital channel on $L_\infty(V,\pi)$ as $\pi(K_G(f))=\pi(f)$. The expectation map is $E_\pi(f)=\pi(f)1$ whose index is
\[ C_{cb}(E_\pi)=\norm{\pi^{-1}}{\infty}.\]
Both $K_G$ is connected and $E_\pi$ are symmetric operator on $L_2(V,\pi)$ and \[\la(K_G)=\norm{K_G-E_\pi:L_2(V,\pi)\to L_2(V,\pi)}{}<1\] if $K_G$ not bipartite (in the bipartite case $K_G$ has eigenvalue $-1$). Then our results implies
\begin{align}\al(K_G)\le \al_c(K_G)\le (1-\frac{1}{2k_{cb}(K_G)})\le (1-\frac{-\ln \lambda(K_G)}{\ln (10 \norm{\pi^{-1}}{\infty})})\pl. \label{eq:graph}\end{align}

\begin{exam}[Cyclic graphs]{\rm \label{ex:cyclic} Let us consider the cyclic graph $C_d=(V,E)$ with $d\ge 4$ where $V=\{1,\cdots,d\}$ and $E=\{(j,j+1)| j=1,\cdots, d \}$. Here the addition is understood in the sense of ``mod d".
 Then
\[ K_{C_d}(i,j)=\begin{cases}
             \frac{1}{2}, & \mbox{if } |i-j|=1 \\
             0, & \mbox{otherwise}.
           \end{cases}\]
           As $C_d$ is 2-regular, $K_{C_d}$ is symmetric to the uniform distribution $\pi(i)=1/d$. It is known that $K_{C_d}$ has spectrum \[\la_j=\cos(\frac{2\pi j}{d})\pl ,\pl j=0,\cdots, d-1\pl .\]
   The associated eigenvector is $e_j=\frac{1}{\sqrt{d}}(1,\omega^j,\omega^{2j},\cdots, \omega^{(d-1)j})$ where $\omega=\exp(\frac{2\pi i}{d})$.
When $d=2m+1$ is odd, $\pi$ is the unique stationary measure and $E_\pi$ is the projection onto the vector $e_0$. We have
\begin{align*} K_G^k-E_\pi= (K_G-E_\pi)^k=\sum_{j=1}^{2m}\la_j^k \ketbra{e_j}
\end{align*}
By triangle inequality, we have
\begin{align*} \norm{K_G^k-E_\pi:L_1(V,\pi)\to L_\infty(V,\pi)}{}&\le \sum_{j=1}^{2m}|\la_j|^k= 2\sum_{j=1}^{m}\cos(\frac{\pi j}{d})^k\\ &\le 2\frac{d}{\pi}\int_{0}^{\pi/2}\cos^k(x)dx
=2\frac{d}{\pi}W_k\le 2Cd\sqrt{\frac{1}{2k\pi }}
\end{align*}
where $C>0$ is some absolute constant by fact that the Wallis integrals $W_k=\int_{0}^{\pi/2}\cos^k(x)dx\sim \sqrt{\frac{\pi}{2k}}$. Thus
\[k_{cb}(K_{C_d})\le \frac{(10Cd)^2}{\pi}\sim \mathcal{O}(d^2), \]
and \eqref{eq:graph} implies
\[ \al(K_{C_d})\ge \al_c(K_{C_d})\ge 1-\mathcal{O}(d^{-2})\]
By Miclo's result \eqref{eq:miclo}, this is asymptotically tight because the MLSI constant $\al_1(I-K_G^2)\sim \mathcal{O}(d^{-2})$ (see Example \ref{ex:cyclic2} below for detials).
The similar asymptotic estimate also holds for even circle $d=2m$. 

}
\end{exam}

For the continuous time random walk, we consider $w:E\to (0,\infty)$ be a positive weighted function on the edge set $E$. The (weighted) graph Laplacian is given by the matrix
\begin{align*}
L_G(u,v)=\begin{cases}
           \sum_{e=(u,u')\in E}w_e, & \mbox{if } u=v \\
           -w_e, & \mbox{if } (u,v)\in E \\
           0, & \mbox{otherwise}.
         \end{cases}
\end{align*}
$L_G$ generates the continuous time random walk $T_t=e^{-L_G t}$ as a Markov semigroup and is  symmetric with to the uniform distribution $\pi$ on $V$. $T_t$ is ergodic  if and only if $G=(V,E)$ is connected.
The expectation map $E_\pi(f)=\pi(f){\bf 1}$ has index $C_{cb}(E_\pi)=d$. Then Corollary \ref{cor:main7},
\begin{align}\frac{ \la(L_G)}{2(\ln d+\ln 10)}\le \al_c(L_G)\le \al(L_G)\le \la(L_G)\pl. \label{eq:graph2}\end{align}
This lower bound of $\al_c(L_G)$ has better dependence on the dimension $d$ than \cite[Lemma 5.2]{li2020graph}.

\begin{exam}[Cyclic graphs]{\rm \label{ex:cyclic2} Let us again consider the cyclic graph $C_d$ with $d$ vertices. For the uniformly weighted case $w_e\equiv 1$, $L_{C_d}$ is a circulant matrix
\[L_{C_d}(i,j)=\begin{cases}
                 2, & \mbox{if } i=j \\
                 -1, & \mbox{if } |i-j|=1 \\
                 0, & \mbox{otherwise}.
               \end{cases}\]
Thus, $L_{C_d}=2(I-K_{C_d})$ where $K_{C_d}$ is the random walk kernel in Example \ref{ex:cyclic}, $ L_{C_d}$ has spectrum $\la_j=2(1-\cos\frac{2\pi j}{d})$. As discussed in \cite[Example 3.6]{diaconis1996logarithmic}, \[\norm{T_t-E:L_1(V,\pi)\to L_\infty(V,\pi) }{}\le 2\exp(-\frac{4t}{d^2})(\sqrt{1+d^2/4t})\pl\]
Choose $t_0=d^2$, we have
\[ \norm{T_t-E:L_1(V,\pi)\to L_\infty(V,\pi) }{}\le 2e^{-4}\sqrt{5/4}<\frac{1}{10}\pl.\]
Thus by Theorem \ref{thm:main1},
\[\frac{1}{2d^2}\le \al_c(L_{C_d})\le \al_1(L_{C_d})\le 2(1-\cos\frac{2\pi}{d})=\frac{8\pi^2}{d^2}+\mathcal{O}(\frac{1}{d^4})\pl. \]
This shows that our inverse of $t_{cb}$ bound for $\al_c$ is tight for this example up to absolute constant. Note that the LSI constant $\al_2(L_{C_d})$ is also $\Theta(\frac{1}{d^2})$.
}
\end{exam}
We refer to \cite{diaconis1996logarithmic,bobkov2006modified} more examples on spectral gap $\lambda$, Log-Sobolev constants $\al_2$,$\al_1$, and $L_\infty$ mixing time $t_b$ of finite Markov chains.

\subsection{A noncommutative Birth-Death process}\label{sec:birth}
Let us illustrate our estimate with a noncommutative Birth-Death process. This example is a generalization of graph Laplacians on matrix algebras (see \cite{li2020graph,junge2019stability} for similar constructions). 
To fix the notation, let  $G=(V, E)$ be an undirected graph with $n=|V|$ vertices and edge set $E$. For each edge $(r,s)\in E$, we introduce the edge Lindbladian on $\bM_n$,
\begin{align*} L_{rs}(x)=&e^{\beta_{rs}/2}L_{e_{rs}}(x) + e^{-\beta_{rs}}L_{e_{sr}}(x) \\
&= e^{\beta_{rs}/2}(e_{ss}x+xe_{ss}-2e_{sr}xe_{rs}) +e^{-\beta_{rs}/2}(e_{rr}x+xe_{rr}-2e_{rs}xe_{sr})\pl,
\end{align*}
where $e_{rs}\in \bM_n$ is the matrix unit with $1$ at the $(r,s)$ position. The total Lindbladian is a weighted sum over the edge set $E$,
\begin{align*}
L&\lel \sum_{(r,s)\in E} w(r,s) L_{rs}\\
&\lel 2\sum_{s\in V}\left(\sum_{(r,s)\in E}w(r,s)e^{\beta_{rs}/2}\right)(e_{ss}x+xe_{ss})-4\sum_{(r,s)\in E}w(r,s) e^{\beta_{rs}/2} e_{sr}xe_{rs}\pl,
\end{align*}
where we assume $\beta_{rs}=-\beta_{sr}$ and $w(r,s)=w(s,r)>0$ for the GNS-symmetry condition.
Note that for $j\neq k$,
\begin{align*} &L(e_{jk})=2(\sum_{(r,k)\in E}w(r,k)e^{\beta_{rk}/2}
+ \sum_{(r,j)\in E} w(r,j)e^{\beta_{r,j}/2})e_{jk} \pl,\\
&L(e_{jj}) = 4 \sum_{(r,j)\in E} w(r,j)(e^{\beta_{r,j}/2} e_{jj}-e^{-\beta_{r,j}/2}e_{rr}) \pl .
\end{align*}
Let us collect some relevant facts of such a Lindbladian $L$ as noncommutative extension of graph Laplacian.
 \begin{enumerate}
 \item[i)] Denote $\ell_{\infty}(V)\subset \bM_n$ as the diagonal subalgebra.
  $L(\ell_{\infty}(V))\subset \ell_{\infty}(V)$, and $L|_{\ell_{\infty}(V)}$ is a weighted graph Laplacian;
 \item[ii)] For $r\neq s$, the matrix unit $e_{rs}$ is an eigenvector of $L$
      \[ L(e_{rs}) \lel \gamma_{rs} e_{rs} ,\]
where $\gamma_{rs}=2(\sum_{(r,j)\in E}w(r,j)e^{\beta_{rj}/2}+\sum_{(k,s)\in E}w(k,s)e^{-\beta_{ks}/2})$.
     \item[iii)] $\ker(L)\subset \ell_{\infty}(V)$, and $\ker(L)=\mathbb{C}1$ if $\mathcal{G}=(V, E)$ is connected.
         \item[iv)] Let $\mu=(\mu_k)\in \ell_{\infty}(V)$ be a density operator in the diagonal subalgebra. Then $L$ is GNS-$\mu$-symmetric if $e^{\beta_{rs}}=\mu_s/\mu_r$ for any $s\neq r$.
\end{enumerate}
Assume $L=\sum_{(s,r)\in E}L_{sr}$ is an  ergodic graph Lindbladian satisfying GNS-$\mu$-symmetric condition for a diagonal density operator $\mu$. Denote $E_d$ as the projection onto diagonal subalgebra. We can decompose the semigroup $T_t=e^{-tL}$ on the diagonal part and off diagonal part.
\begin{align}
T_t=T_tE_d +T_t(\id-E_d):=T_t^{diag}+T_t^{off}\pl.\label{eq:bd-decomp}
\end{align}
It is clear from i) and ii) that $T_tE_d$ is a classical graph random walk and $T_t(\id-E_d)$ is a Schur multiplier on $\bM_n$. Using this decomposition, we consider the CB-return time of the semigroup
\[ t_{cb}(\eps):=\inf\{t>0\pl | \pl \norm{T_t-E_\mu: L_1(\bM_n,\mu)\to \bM_n}{cb}\leq \epsilon\}\]
satisfying
\[ t_{cb}(2\eps)\le t^{diag}_{cb}(\eps)+t^{off}_{cb}(\eps)\pl,\]
where $t^{diag}_{cb}$ and $t^{off}_{cb}$ are the CB-return time for the diagonal part $T_{t}E_{d}$ and off diagonal part $T_t(\id-E_d)$ respectively.
where \begin{align*}t_{cb}^{diag}(\epsilon)&=:\inf\{t>0\pl|\pl \|T_{t}E_{d}-E_{\mu}:L_{1}(v,\mu)\to L_{\infty}(V)\|_{cb}\leq \epsilon \}\\
t_{cb}^{off}(\epsilon)&=:\inf\{t>0\pl|\pl \|T_{t}(\id-E_{d}):L_{1}(\Mz_{n},\mu)\to \Mz_{n}\|_{cb}\leq \eps\}\end{align*}
For the diagonal part, $t_{cb}^{diag}(\epsilon)$ is a classical $L_\infty$ mixing time, i.e. the smallest $t$ such that
\[ \norm{T_tE_d-E_\mu: L_1(V,\mu)\to L_\infty(V)}{}\le \eps \pl.\]
For the off-diagonal term, we have by Effros-Ruan isomorphism that a Schur multiplier map
\begin{align*}\norm{T_t(\id-E_d): L_1(\bM_n,\mu)\to \bM_n}{cb}=&\norm{\sum_{r\neq s} \mu_{r}^{-1/2}e^{-\gamma_{rs}t}\mu_{s}^{-1/2}e_{rs}\ten e_{rs} }{\infty}\\ =&\norm{\sum_{r\neq s} \mu_{r}^{-1/2}e^{-\gamma_{rs}t}\mu_{s}^{-1/2}e_{rs} }{\infty}\pl.\end{align*}
Note that for each $t$,
\[A_t=\sum_{r\neq s} \mu_{r}^{-1/2}e^{-\gamma_{rs}t}\mu_{s}^{-1/2}e_{rs}\]
is a symmetric matrix with positive entry. A standard application of Schur's lemma for matrices with positive entries implies
\begin{align*}
\norm{A_t}{\infty}\le \sup_{r}\Big(\sum_{s}\mu_{r}^{-1/2}e^{-\gamma_{rs}t}\mu_{s}^{-1/2}\Big)\pl,
\end{align*}
which gives us an estimate for the off diagonal term $t_{cb}^{off}(\epsilon)$.

Now we consider the birth-death process on a finite state space $V=\{1,\cdots,n\}$, which we denote as $L^{BD}_n$. The corresponding edge $E$ set consists of only successive vertices $E=\{(j,j+1)| 1\le j\le n-1\}$. The simplest case chooses the uniform weight $w(r,s)=1$ for $(r,s)\in E$ and allows only one Bohr frequency $e^{-\beta}=\frac{\mu_{j}}{\mu_{j+1}}$, and the resulting stationary measure is the well-studied thermal state \[\mu=Z_{\beta}^{-1}(e^{-\beta j})_{j=1}^n\pl,\]
where $Z_{\beta}=\sum_{j=1}^{n} e^{-\beta j}$ is the normalization constant. In this case, $\gamma_{rs}=8(\cosh \beta) t$, and the off diagonal CB norm can be estimated by
\begin{align*} \norm{A_t}{\infty}\le & \sup_{r}\Big(\sum_{s=1}^n e^{\beta r/2}e^{\beta s/2}\Big) Z_\beta e^{-8(\cosh \beta) t}
\\ \le & e^{\beta \frac{n-2}{2}}\frac{1-e^{n\beta /2}}{1-e^{\beta /2}} \frac{ 1-e^{-n\beta }}{ 1-e^{-\beta}} e^{-8(\cosh \beta) t}\pl.
\end{align*}
Thus $t_{cb}^{off}(\eps)\le C_1(\beta)n$ for some constant $C_1(\beta)$ depending on $\beta$. For the classic part, We refer to \cite{Miclo99} and \cite{Chen} for the fact that the spectral gap is of order $O(1)$, i.e.
 \[ c(\beta) \le \la(L^{diag}_n)\le C_{2}(\beta) \]
for all $\nen$. For the commutative system on the diagonal part this implies (see also \cite{diaconis1996logarithmic})
 \[ t^{diag}_{cb}(\eps) \kl 2c(\beta)^{-1}(2+ |\log \mu_n|)\kl C_2(\beta)n \pl ,\]
 (for $\eps=e^{-2}$, but here the actual value of $\eps$ does not change the asymptotic estimate).
On the other hand, we have based on \cite{Miclo99} that \[t^{diag}_{cb}(0.1)\ge \al_1(L^{diag}_n)^{-1}\gl c(\beta)n\pl.\] Combining the diagonal and off diagonal part, we know  $t_{cb}(L^{BD}_n)\sim n$. It turns out CMLSI constant has asymptotic  $\al_c(L^{BD}_n)\sim \frac{1}{n}$, which indicates our  estimate $\al_{c}\ge \frac{1}{2t_{cb}}$ is asymptotically tight for this example.
\begin{theorem} \label{birthdeath} For $\beta > 0$, there exist constants $c(\beta), C(\beta)>0$ such that the CMLSI constant of noncommutative birth-death process $L_n^{BD}$ satisfies
 \[ \frac{c(\beta)}{n} \kl \al_c(L_n^{BD})\le \al_1(L^{BD}_n) \kl \frac{C(\beta)}{n} \pl .\]
The same $\Theta(\frac{1}{n})$ asymptotic holds for  $t_{cb}(L^{BD}_n)^{-1}$.
\end{theorem}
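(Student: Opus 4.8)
The plan is to prove the two-sided estimate
\[
\frac{c(\beta)}{n}\kl \al_c(L_n^{BD})\le \al_1(L^{BD}_n)\kl \frac{C(\beta)}{n}
\]
by combining the general machinery of Theorem \ref{thm:main1} with a direct analysis of the birth-death generator. For the lower bound, the core task is to show $t_{cb}(L_n^{BD})\le C(\beta)\, n$. Using the decomposition \eqref{eq:bd-decomp} of $T_t=T_t^{diag}+T_t^{off}$ into its diagonal (classical random walk) part and its off-diagonal (Schur multiplier) part, I would bound the CB return time by $t_{cb}(2\eps)\le t_{cb}^{diag}(\eps)+t_{cb}^{off}(\eps)$. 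The off-diagonal piece is controlled via the Effros--Ruan identification of a Schur multiplier's CB norm with the sup norm of its symbol matrix $A_t=\sum_{r\neq s}\mu_r^{-1/2}e^{-\gamma_{rs}t}\mu_s^{-1/2}e_{rs}$, and then Schur's test for matrices with nonnegative entries gives $\norm{A_t}{\infty}\le \sup_r\sum_s \mu_r^{-1/2}e^{-\gamma_{rs}t}\mu_s^{-1/2}$; with $\mu=Z_\beta^{-1}(e^{-\beta j})_{j=1}^n$ and $\gamma_{rs}=8(\cosh\beta)t$ this is exponentially small once $t\gtrsim C_1(\beta)n$, so $t_{cb}^{off}(\eps)\le C_1(\beta)\,n$. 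The diagonal piece $t_{cb}^{diag}$ is the classical $L_\infty$-mixing time of the one-dimensional birth-death chain with thermal stationary measure, which by the known spectral gap bound $\lambda(L_n^{diag})\ge c(\beta)$ (Miclo, Chen) together with $|\log\mu_n|\le C\beta n$ and the standard estimate $t_b(e^{-2})\le 2\lambda^{-1}(2+|\log\mu_n|)$ is again $O(n)$. Summing and invoking Theorem \ref{thm:main1} (the bound $\al_c\ge \frac{1}{2t_{cb}(0.1)}$) yields $\al_c(L_n^{BD})\ge c(\beta)/n$.

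For the upper bound, since $\al_c\le \al_1$ it suffices to bound the (non-complete) MLSI constant $\al_1(L_n^{BD})\le C(\beta)/n$, and since the diagonal subalgebra $\ell_\infty(V)$ is invariant and $L_n^{BD}|_{\ell_\infty(V)}$ is precisely the classical birth-death Laplacian, it is enough to exhibit a single diagonal test density $\rho=f\mu$ for which the ratio $D(\rho\|E_\mu(\rho))/I_L(\rho)$ is of order $n$. Following the construction sketched in the excerpt, I would take $f(k)=\frac{Z(\beta)}{n}e^{\beta k}$, so that $\rho\equiv \frac{1}{Z(\beta)n}\mathbf 1$ is the uniform density on $\{1,\dots,n\}$ and $\sum_k f(k)\mu(k)=1$. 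A direct computation gives $D(\rho\|\mu)=\ln Z(\beta)-\ln n+\beta\frac{n+1}{2}$, which is $\Theta(n)$. For the entropy production $I_L(\rho)=\tau(L(\rho)(\ln\rho-\ln\mu))$, one uses $\ln\rho-\ln\mu=(\beta k-\ln(Z(\beta)n))_k$ and that $\rho$ being constant makes $\frac12 L^*(f)$ supported only at the two endpoints $k=1$ and $k=n$ (a telescoping cancellation: the ``current'' $e^{\beta/2}f(k)-e^{-\beta/2}f(k+1)$ is $k$-independent since $f$ is a geometric sequence matched to $\beta$). Hence $I_L(\rho)$ is a sum of only two boundary terms and is $O(1)$ (more precisely $\Theta(1)$, uniformly in $n$ for $n$ large). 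Dividing gives $\al_1(L_n^{BD})\le I_L(\rho)/(2D(\rho\|\mu))\le C(\beta)/n$. The final claim about $t_{cb}(L_n^{BD})^{-1}$ having the same $\Theta(1/n)$ asymptotics then follows: the $O(n)$ upper bound on $t_{cb}$ was just established, and the matching lower bound $t_{cb}\ge \al_1(L_n^{diag})^{-1}\ge c(\beta)n$ comes from Miclo's lower bound on the classical MLSI constant of the birth-death chain together with the general fact $t_{cb}\ge t_{cb}^{diag}\ge$ (classical return time) $\ge \al_1^{-1}$.

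The main obstacle I anticipate is not the off-diagonal Schur-multiplier estimate (which is a clean geometric-series computation) but rather assembling the classical birth-death facts with the right uniformity in $n$: one needs the spectral gap $\lambda(L_n^{diag})\ge c(\beta)>0$ bounded below independently of $n$, the $L_\infty$-mixing time bound, and Miclo's lower bound $\al_1(L_n^{diag})\le C(\beta)/n$ all cited correctly and with consistent normalization conventions (the factor-of-two discrepancies between different papers' definitions of MLSI/LSI constants, as flagged in the footnotes of the excerpt, are exactly the sort of thing that can derail the constants). A secondary subtlety is verifying rigorously that the telescoping cancellation in $I_L(\rho)$ holds for the chosen $f$ and that the resulting two boundary terms are genuinely $\Theta(1)$ and not, say, decaying in $n$ — this requires checking that the endpoint weights $\mu_1,\mu_n$ and the value $f(1),f(n)$ combine so that $I_L(\rho)$ stays bounded away from $0$; for $\beta>0$ fixed this works because $Z(\beta)=\sum_k e^{-\beta k}=O(1)$, but it is precisely the place where the hypothesis $\beta>0$ (as opposed to $\beta=0$) enters, consistent with the remark that at $\beta=0$ the process degenerates to a path graph with $\al_c\sim 1/n^2$.
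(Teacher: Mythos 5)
Your proposal is correct and follows essentially the same route as the paper: the diagonal/off-diagonal splitting with the Effros--Ruan/Schur-test estimate and the classical spectral-gap and mixing-time facts to get $t_{cb}(L_n^{BD})=O(n)$ (hence $\al_c\ge c(\beta)/n$ via Theorem \ref{thm:main1}), and the geometric test density $f(k)=\frac{Z(\beta)}{n}e^{\beta k}$ with the telescoping cancellation giving $D(\rho\|\mu)=\Theta(n)$ while $I_L(\rho)=\Theta(1)$, which is exactly the paper's argument for $\al_1\le C(\beta)/n$ and for the matching $\Theta(n)$ lower bound on $t_{cb}$ via Miclo.
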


\begin{proof} It suffices to show that
\[\al_c(L^{BD}_{n})\le \al_1(L^{BD}_n)\le  \frac{C(\beta)}{n} \pl.\]
For this, we consider the function in the commutative system on the diagonal \[f(k)=\frac{Z(\beta)}{n}e^{\beta k} \text{ and } \sum_{k=1}^nf(k)\mu(k)=\sum_{k=1}^{n} \frac{Z(\beta)}{n}e^{\beta k}\frac{1}{Z(\beta)}e^{-\beta k}=1\]
so that $\rho:=f\mu$ represents a probability density. The relative entropy term satisfies
 \[ D(\rho||\mu)= D(f\mu||\mu)\lel \sum_k \frac{e^{-\beta k}}{Z(\beta)} f(k)(\beta k+\ln Z(\beta)-\ln n)\
 \lel \ln Z(\beta)-\ln n+ \beta \frac{n+1}{2} \pl .\]
Our density is $\rho\equiv(\frac{1}{n})$ and the reference density is $\mu(k)=\frac{e^{-\beta k}}{Z(\beta)}$.

Denote $a_k=|k\ran\lan k+1|$. On the diagonal, we have
\begin{align*}
  \frac{1}{2}L_*(f)&=
 \sum_k e^{\beta/2}(a_ka_k^*f-a_k^*fa_k)+ e^{-\beta/2}(a_k^*a_kf-a_kfa_k^*) \\
 &= \sum_k e^{\beta/2}(e_{k}f(k)-f(k)e_{k+1})
+ e^{-\beta/2}(f(k+1)e_{k+1}-f(k+1)e_k) \\
&= \frac{1}{Z(\beta)n} (e^{\beta/2}(e_0-e_{n})
 + e^{-\beta/2}(e_n-e_0)) \pl .
 \end{align*}
We have
\begin{align*}
L_{n,*}^{BD}(f)(k)=\begin{cases}
4 (e^{\beta/2}-e^{-\beta/2}), &\quad \text{if} \quad k=1;\\
0,&\quad \text{if} \quad k=2,n-1;\\
4 (e^{-\beta/2}-e^{\beta/2}), &\quad \text{if} \quad k=n.\\
\end{cases}
\end{align*}
Note that
  \[\ln \rho-\ln \mu \lel \ln f \lel \big(\beta k-\ln (Z(\beta)n)\big)_{k=1}^n \pl .\]
Then we have the entropy production
 \[ I_{L_{n}^{BD}}(\rho)=\tau(  L_{n,*}^{BD}(f) \ln f) \sim c(\beta) \pl \]
for some constant $c(\beta)$ only depending on $\beta$. This holds for $n\gl n_0$ large enough.
\qd

\begin{rem}
When $\beta > 0$, $\sum_{k=1}^n e^{-\beta k} = O(1)$ as a geometric series. In the case that $\beta = 0$,  the above birth-death process reduces to a ``broken" version of the cyclic graph (linear graph) as in Example \ref{ex:cyclic2} with $\al_c(L_n) \sim 1/n^2$.
\end{rem}

\subsection{Noncommutative concentration inequality}\label{sec:concentration}
In this section we show that $\text{CMLSI}$ of a GNS-$\phi$-symmetric semigroup implies concentration inequalities for the state $\phi$. The key quantity in the discussion is the Lipschitz semi-norm
 \[ \|x\|_{\Lip}^2 \lel \max\{ \norm{\Gamma_L(x,x)}{}\pl, \pl \norm{\Gamma_{L}(x^*,x^*)}{} \} \]
where the \emph{gradient form} (or Carr\'e du Champ operator) is
 \[ \Gamma_L(x,y) \lel \frac{1}{2}\Big(L(x^*)y+x^*L(y)-L(x^*y)\Big)\pl, \pl \forall x,y\in \text{dom}(L)\pl. \]
 Note that $\|\cdot\|_{\Lip}$ is a semi-norm (satisfying triangle inequality) because $\Gamma_L$ is completely positive bilinear form.
Our first lemma is to show that $\|x\|_{\Lip}$ can be approximated by Haagerup reduction.

\begin{lemma}\label{down} Let $x\in \M$. Then for all $\nen$,
 \[ \|E_{\M_n}(x)\|_{\Lip}\kl \|x\|_{\Lip} \pl .\]
\end{lemma}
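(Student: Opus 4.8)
The plan is to use the explicit integral formula $E_{\M_n} = 2^n\int_0^{2^{-n}} \al_t^{\psi_n}\,dt$ from Haagerup's construction, together with two structural facts: that $\|\cdot\|_{\Lip}$ is a genuine semi-norm (triangle inequality, coming from complete positivity of the gradient form $\Gamma_L$), and that the relevant automorphisms commute with the generator. Since the Lipschitz norm here is computed with respect to the extended generator $\hat L = \id \ten L$ on $\hat\M$, I first record that for any one-parameter automorphism group $\al_t$ satisfying $\hat L\al_t = \al_t\hat L$ one has $\Gamma_{\hat L}(\al_t(x),\al_t(x)) = \al_t(\Gamma_{\hat L}(x,x))$, whence $\|\al_t(x)\|_{\Lip} = \|x\|_{\Lip}$ because $\al_t$ is a $*$-automorphism (isometric on $\M$).

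Next I would verify the commutation hypothesis for $\al_t^{\psi_n}$. Recall $\al_t^{\psi_n} = \operatorname{ad}_{u(t)}\circ \al_t^{\hat\phi}$ with $u(t) = e^{-it a_n}$ and $a_n \in \la(G)''$. The map $\hat L = \id_{B(\ell_2(G))}\ten L$ (restricted to $\hat\M$) commutes with $\al_t^{\hat\phi}$ by GNS-symmetry of the semigroup, and it commutes with $\operatorname{ad}_{u(t)}$ since $u(t)$ lies in $\la(G)''$, which is in the fixed-point algebra of $\hat T_t$ (hence annihilated by $\hat L$), so $\Gamma_{\hat L}$ behaves multiplicatively under conjugation by $u(t)$. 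Therefore $\hat L \al_t^{\psi_n} = \al_t^{\psi_n}\hat L$, and by the previous paragraph $\|\al_t^{\psi_n}(x)\|_{\Lip} = \|x\|_{\Lip}$ for every $t$.

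Finally, I would combine these with the triangle inequality for $\|\cdot\|_{\Lip}$ applied to the averaging integral:
\[
\|E_{\M_n}(x)\|_{\Lip} = \Big\| 2^n\!\int_0^{2^{-n}} \al_t^{\psi_n}(x)\,dt \Big\|_{\Lip} \le 2^n\!\int_0^{2^{-n}} \|\al_t^{\psi_n}(x)\|_{\Lip}\,dt = 2^n\!\int_0^{2^{-n}} \|x\|_{\Lip}\,dt = \|x\|_{\Lip}.
\]
The main obstacle, and the step requiring the most care, is justifying that the triangle inequality genuinely passes to the continuous average (a Bochner-integral / weak-$*$ limit argument for the operator-valued integral, using that $\Gamma_{\hat L}$ is a completely positive sesquilinear form so that $x\mapsto \Gamma_{\hat L}(x,x)^{1/2}$ is subadditive and norm-lower-semicontinuous), together with confirming that $\Gamma_{\hat L}(E_{\M_n}(x), E_{\M_n}(x))$ is well-defined, i.e. that $E_{\M_n}$ maps into $\dom(\hat L)$ appropriately — this follows since $E_{\M_n}$ commutes with $\hat T_t$ and hence preserves the Dirichlet domain. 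Everything else is a routine assembly of the identities above.
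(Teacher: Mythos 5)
Your proposal is correct and follows essentially the same route as the paper: the integral formula $E_{\M_n}=2^n\int_0^{2^{-n}}\al_t^{\psi_n}\,dt$, the identity $\Gamma_{\hat L}(\al_t(x),\al_t(x))=\al_t(\Gamma_{\hat L}(x,x))$ for automorphisms commuting with $\hat L$ (giving $\|\al_t^{\psi_n}(x)\|_{\Lip}=\|x\|_{\Lip}$), and the triangle inequality applied to the average. The extra details you flag — the decomposition $\al_t^{\psi_n}=\operatorname{ad}_{u(t)}\circ\al_t^{\hat\phi}$ with $u(t)\in\la(G)''$ and the passage of subadditivity through the Bochner integral — are exactly the points the paper leaves implicit, so your write-up is if anything slightly more careful.
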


\begin{proof} Recall the conditional expectation $E_{\M_n}:\hat{\M}\to \M_n$ is given by
\[ E_{\M_n}(x)=2^n\int_{0}^{2^{-n}} \al_t^{\psi_n}(x)d t \]
Note that $\al_t^{\psi_n}$ is an inner automorphism on $\M\rtimes_\al 2^{-n}\zz\cong L_\infty(\mathbb{T}, \M) $. We note that for a modular automorphism $\alpha_t$ such that $L\al_t=\al_tL$,
 \[ \Gamma_L(\al_t(x),\al_t(y))
 \lel \al_t(\Gamma_L(x,y)) \pl, \]
 which implies $\norm{x}{\Lip}=\norm{\alpha_t(x)}{\Lip}$.
Here both $\alpha_t^{\hat{\phi}}$ and $\alpha_t^{\psi_n}$ commutes with $\hat{L}=\id_{\mathbb{T}}\ten L$ by the GNS-symmetricness of $\hat{L}$.
Then by triangle inequality,
\begin{align*}
\norm{E_{\M_n}(x)}{\Lip} &= \Big \| 2^n\int_0^{2^{-n}} \al_{t}^{\psi_n}(x) dt \Big \|_{\Lip}\le 2^n\int_0^{2^{-n}} \norm{x }{\Lip}dt=\norm{x }{\Lip}\pl. \quad  \qedhere
\end{align*}
\qd
\begin{lemma}\label{lemma:commuting}Let $\M_0,\N\subset \M$ be two subalgebras and $\phi$ be a normal faithful state. Suppose $E_{0}:\M\to \M_0$ and $E:\M\to \N$ are $\phi$-preserving conditional expectations onto $\M_0$ and $\N$ respectively. Suppose $E\circ E_0=E_0\circ E$ satisfy the commuting square condition
\begin{center}
\begin{tikzpicture}
  \matrix (m) [matrix of math nodes,row sep=3em,column sep=4em,minimum width=2em]
  {
   \M & \M_0  \\
    \N & \N_0 \pl\pl, \\};
  \path[-stealth]
    (m-1-1) edge node [left] {$E$} (m-2-1)
    (m-1-1) edge node [above] {$E_0$} (m-1-2)
    (m-2-1) edge node [below] {$E_0$} (m-2-2)
    (m-1-2) edge node [right] {$E$} (m-2-2)
  ;
\end{tikzpicture} \end{center}
where $\N_{0}\subset \N$ is a subalgebra.
 Then for any $p\in [1,\infty]$ and any $x\in \M$,
\[ \norm{E_0(x)}{L_\infty^p(\N_0\subset \M_0,\phi)}=\norm{E_0(x)}{L_\infty^p(\N\subset \M,\phi)}\le \norm{x}{L_\infty^p(\N\subset \M,\phi)}\]
In other words, $L_\infty^p(\N_0\subset \M_0,\phi)\subset L_\infty^p(\N\subset \M,\phi)$ as a $1$-complemented subspace with projection $E_0$.
\end{lemma}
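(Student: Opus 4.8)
The plan is to split the claim into two inequalities — the contractive bound $\|E_0(x)\|_{L_\infty^p(\N_0\subset\M_0,\phi)}\le\|x\|_{L_\infty^p(\N\subset\M,\phi)}$, to be proved by a direct estimate, and the norm identity $\|E_0(x)\|_{L_\infty^p(\N_0\subset\M_0,\phi)}=\|E_0(x)\|_{L_\infty^p(\N\subset\M,\phi)}$, whose nontrivial half I would obtain by duality. At the outset I would invoke Lemma \ref{lemma:independent} to observe that all the conditional $L_\infty^p$-norms in sight are insensitive to the choice of invariant faithful normal state, so that in the finite case one may take $\phi=\phi_\tr=E_*(1)$ as in Remark \ref{rem:phitr} and work with tracial $L_p$-spaces (this is only a simplification; the duality step below is state-independent). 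I would also record that the commuting square condition forces $E(\M_0)\subset\N_0$ and that the restriction $E|_{\M_0}$ is the $\phi$-preserving conditional expectation $\M_0\to\N_0$, so that $L_\infty^p(\N_0\subset\M_0,\phi)$ is the object it should be.

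For the contractive bound I would take admissible test elements $a,b\in\N_0$ with $\|aa^*\|_{p,\phi}=\|b^*b\|_{p,\phi}=1$, use that $E_0$ is an $\M_0$-bimodule map to write $aE_0(x)b=E_0(axb)$, and then the $L_p(\M,\phi)$-contractivity of $E_0$ to get $\|aE_0(x)b\|_{p,\phi}\le\|axb\|_{p,\phi}$. Since $\N_0\subset\N$, every such pair $(a,b)$ is also admissible in the supremum defining $\|x\|_{L_\infty^p(\N\subset\M,\phi)}$, so taking the supremum over $a,b\in\N_0$ gives the bound. Applying it with $x$ replaced by $E_0(x)$ and using $E_0^2=E_0$ yields one half of the identity, namely $\|E_0(x)\|_{L_\infty^p(\N_0\subset\M_0,\phi)}\le\|E_0(x)\|_{L_\infty^p(\N\subset\M,\phi)}$.

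For the other half I would use the duality $L_\infty^p(\N\subset\M,\phi)^*\supset L_1^q(\N\subset\M)$ ($\tfrac1p+\tfrac1q=1$, weak$^*$-densely) through the KMS pairing $\langle x,w\rangle_\phi=\tau(x^*d_\phi^{1/2}wd_\phi^{1/2})$, and similarly for $(\N_0,\M_0)$. A $\phi$-preserving conditional expectation commutes with the modular group and is therefore symmetric for this pairing, so for $y\in\M_0$ one has $\langle y,w\rangle_\phi=\langle y,E_0(w)\rangle_\phi$. The key input is \cite[Corollary 3.13]{junge2010mixed}: under the commuting square hypothesis $E_0\colon L_1^q(\N\subset\M)\to L_1^q(\N_0\subset\M_0)$ is a contraction. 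Then for $y=E_0(x)\in\M_0$,
\[
\|y\|_{L_\infty^p(\N\subset\M,\phi)}=\sup_{\|w\|_{L_1^q(\N\subset\M)}\le1}|\langle y,E_0(w)\rangle_\phi|\le\sup_{\|z\|_{L_1^q(\N_0\subset\M_0)}\le1}|\langle y,z\rangle_\phi|=\|y\|_{L_\infty^p(\N_0\subset\M_0,\phi)},
\]
which completes the identity, and combined with the contractive bound exhibits $E_0$ as a norm-one projection of $L_\infty^p(\N\subset\M,\phi)$ onto $L_\infty^p(\N_0\subset\M_0,\phi)$.

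The genuinely routine part is the two supremum manipulations; the main obstacle — or rather the place where the real work is imported — is the machinery of conditional (amalgamated) $L_p$-spaces behind them: the weak$^*$-duality between $L_\infty^p(\N\subset\M,\phi)$ and $L_1^q(\N\subset\M)$, and, crucially, the contractivity of a $\phi$-preserving conditional expectation on the conditional $L_1^q$-spaces under a commuting square, which is exactly where the hypothesis $E_0\circ E=E\circ E_0$ enters and is the content of \cite[Corollary 3.13]{junge2010mixed} (built on Junge's noncommutative martingale inequalities and the module-basis formalism recalled in Section \ref{sec:GNSentropydecay}). The remaining points — that $E|_{\M_0}$ is the conditional expectation $\M_0\to\N_0$ and that a $\phi$-preserving conditional expectation is KMS-symmetric — are routine modular theory, consequences of Takesaki's theorem together with the commuting square.
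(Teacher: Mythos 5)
Your proposal is correct and follows essentially the same route as the paper's proof: the easy direction via the $\M_0$-bimodule property of $E_0$, its $L_p(\M,\phi)$-contractivity and the inclusion $\N_0\subset\N$, and the reverse inequality for elements of $\M_0$ via the weak$^*$-duality with $L_1^q(\N\subset\M)$ under the KMS pairing, the $\phi$-symmetry of $E_0$, and the contractivity of $E_0$ on the conditional $L_1^q$-spaces from \cite[Corollary 3.13]{junge2010mixed}, with the reduction to $\phi=\phi_\tr$ via Remark \ref{rem:phitr}. No gaps to report.
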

\begin{proof}We can assume $\phi=\phi_\tr$ in the Remark \ref{rem:phitr}. Using commuting square assumption, we know $E_0(a)\in \N_0$ for $a\in \N$.
By definition,
\begin{align*}
\norm{E_{0}(x)}{L_\infty^p(\N_0\subset \M_0, \phi)}= &\sup_{a,b\in\pl  \N_0}\norm{aE_{0}(x)b}{\phi,p}
\le \sup_{a,b\in\pl  \N_0}\norm{E_0(axb)}{\phi,p}\\
\le &\sup_{a,b\in\pl  \N_0}\norm{axb}{\phi,p} \le  \norm{x}{L_\infty^p(\N\subset \M, \phi)}
\end{align*}
where the supremum is for all $a,b$ in the corresponding subalgebra with $\norm{a}{\phi,p}=\norm{ b}{\phi,p}=1$. Now it suffices to show the other direction
$$\norm{x}{L_\infty^p(\N_0\subset \M_0,\phi)}\ge \norm{x}{L_\infty^p(\N\subset \M,\phi)}\pl,$$
for $x\in \M_0$. For that, we revoke that for $\frac{1}{p}+\frac{1}{q}=1$, $L_1^{p'}(\N\subset \M)\subset L_\infty^p(\N\subset \M, \phi)^*$ is as a weak$^*$-dense subspace \cite[Proposition 4.5]{junge2010mixed}. Here for $x\in \M$,
\[ \norm{y}{L_1^{q}(\N\subset \M)}=\inf_{y=azb}\norm{a}{2p,\phi}\norm{y}{q,\phi}\norm{b}{2p,\phi}\]
where the infimum is over all factorization $y=azb$ with $a,b\in \N, z\in \M$. The duality pairing is given by the KMS inner product,
\[\lan x,y \ran=\tau(x^*d_{\phi}^{1/2} y d_{\phi}^{1/2})=\lan x,y \ran_{\phi} \]
Indeed, it was proved in \cite[Corollary 3.13]{junge2010mixed} that
\[E_0: L_1^{q}(\N\subset \M)\to L_1^{q}(\N_0\subset \M_0)\]
is a contraction by the commuting square condition. Therefore, for $x\in \M_0$, by the  KMS-$\phi$-symmetry of $E_0$,
\begin{align*}\norm{x}{L_\infty^p(\N\subset \M, \phi)}=&\sup_{\norm{ \pl y\pl }{L_1^{q}(\N\subset \M)}=1}\lan x,y \ran_\phi \\ =& \sup_{\norm{ \pl y\pl }{L_1^{q}(\N\subset \M)}=1}\lan x,E_0(y) \ran_\phi
\\ \le& \sup_{\norm{ \pl z\pl }{L_1^{q}(\N_0\subset \M_0)}=1}\lan x,z \ran_\phi=\norm{x}{L_\infty^p(\N_0\subset \M_0, \phi)} \pl. \qedhere
\end{align*}
\end{proof}

\begin{lemma}\label{lemma:limit}For $x\in \M$,  $\lim_{n}\|E_{\M_n}(x)\|_{L_\infty^p(\N_n\subset \M_n, \psi_n)}= \|x\|_{L_\infty^p(\N\subset \M, \phi)}$
\end{lemma}
\begin{proof} Recall the commuting square condition $E_{\M_n}\circ \hat{E}=\hat{E}\circ E_{\M_n}$. By Lemma \ref{lemma:independent} \& \ref{lemma:commuting},
\begin{align*}\norm{E_{\M_n}(x)}{L_\infty^p(\N_n\subset \M_n, \psi_n)}=&\norm{E_{\M_n}(x)}{L_\infty^p(\N_n\subset \M_n,\hat{\phi})}
=\norm{E_{\M_n}(x)}{L_\infty^p(\N\subset \M,\hat{\phi})}\le \norm{x}{L_\infty^p(\N\subset \M,\hat{\phi})}\pl.
\end{align*}
The other direction follows from the weak$^*$-convergence $E_{\M_n}(x)
\to x$. Fix $\frac{1}{q}+\frac{1}{p}=1$.
 For any $\eps>0$, there exists $a_0,b_0\in \hat{\N}$ and $y_0\in \hat{\M}$ such that
\begin{align*}
&\norm{aa^*}{p,\hat{\phi}}=\norm{b^*b}{p,\hat{\phi}}=\norm{y}{\hat{\phi},q}=1\pl, \pl
&\hat{\tau}(d_{\hat{\phi}}^{1/2}axb d_{\hat{\phi}}^{1/2}y)\ge \norm{x}{L_\infty^p(\hat{\N}\subset \hat{\M}, \hat{\phi})}-\eps
\end{align*}
By the weak$^*$-density, we can choose $n_1,n_2,n_3$ and $n_4\ge \max\{n_1,n_2,n_3\}$ inductively such that
\begin{align*}
\tau(d_\phi^{1/2}E_{\M_{n_1}}(a)E_{\M_{n_4}}(x)E_{\M_{n_2}}(b)d_\phi^{1/2}E_{\M_{n_3}}(y))>\tau(d_\phi^{1/2}axbd_\phi^{1/2}y)-\eps >\norm{x}{L_\infty^p(\hat{\N}\subset \hat{\M}, \hat{\phi})}-2\eps\pl. \end{align*}
Since $E_{\M_{n}}(\hat{\N})=\N_n$ (see the commuting diagram after Lemma \ref{lemma:commute}), we have
\begin{align*}&\norm{E_{\M_{n_1}}(a)E_{\M_{n_1}}(a)^*}{\hat{\phi},p}\le \norm{E_{\M_{n_1}}(aa^*)}{\hat{\phi},p}\le \norm{aa^*}{\hat{\phi},p}= 1\\
&\norm{E_{\M_{n_3}}(b^*)E_{\M_{n_3}}(b)}{\hat{\phi},p}\le \norm{b^*b}{\hat{\phi},p}= 1\pl,
\\ & \norm{E_{\M_{n_4}}(y)}{\hat{\phi},q}\le \norm{y}{\hat{\phi},q}=1
\end{align*}
by the KMS-$\hat{\phi}$-symmetry of $E_{\M_n}$.
Then, for $n\ge n_4=\max\{n_1,n_2,n_3,n_4\}$,
\begin{align*}
\norm{E_{\M_{n}}(x)}{L_\infty^p(\hat{\N}_n\subset \hat{\M}_n, \hat{\phi})}&\ge \norm{E_{\M_{n_4}}(x)}{L_\infty^p(\hat{\N}_n\subset \hat{\M}_n, \hat{\phi})}\\ &\ge \tau(d_{\hat{\phi}}^{1/2}E_{\M_{n_1}}(a)E_{\M_{n_4}}(x)E_{\M_{n_2}}(b) d_{\hat{\phi}}^{1/2}E_{\M_{n_3}}(y))\\ &\ge  \norm{x}{L_\infty^p(\hat{\N}\subset \hat{\M}, \hat{\phi})}-2\eps.
\end{align*}
This proves
\[\lim_{n}\|E_{\M_n}(x)\|_{L_\infty^p(\N_n\subset \M_n, \psi_n)}= \|x\|_{L_\infty^p(\hat{\N}\subset \hat{\M}, \hat{\phi})}\pl.\]
Finally, the assertion follows from
\[ \|x\|_{L_\infty^p(\hat{\N}\subset \hat{\M}, \hat{\phi})}=\|x\|_{L_\infty^p(\N\subset \M, \phi)}\pl,\]
as a consequence of $E_0\circ \hat{E}=\hat{E}\circ E_0$ by Lemma \ref{lemma:commuting}.
\end{proof}

Now we restate and prove Theorem \ref{thm:main8}.
\begin{theorem}\label{thm:concentration} Let $\M$ be a $\sigma$-finite von Neumann algebra and $T_t=e^{-tL}$ be a GNS-$\phi$-symmetric quantum Markov semigroup. Suppose $T_t$ satisfies $\MLSI$ with parameter $\al>0$. There exists an universal constant $c$ such that for $2\le p <\infty$,
 \[ \al\|x-E(x)\|_{L_p(\M,\phi)}\le \al\|x-E(x)\|_{L_{\infty}^p(\N\ssubset \M,\phi)}\kl c\sqrt{p}\norm{x}{\emph{Lip}} \pl .\]
\end{theorem}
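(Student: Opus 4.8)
The plan is to reduce the statement to the tracial case already available in the literature (specifically the tracial concentration inequality, e.g. \cite[Theorem 6.10]{gao2020fisher}) via Haagerup's reduction, exactly following the template used in Lemma \ref{lemma:difference2} and Lemma \ref{lemma:transfer}. The second inequality $\|x-E(x)\|_{L_p(\M,\phi)}\le \|x-E(x)\|_{L_\infty^p(\N\subset\M,\phi)}$ is immediate from the definition of the conditional $L_\infty^p$ norm (take $a=b=1$, or more precisely use that for the $E$-invariant trace $\phi_\tr$ the $L_p(\M,\phi)$ norm is dominated by the supremum defining $L_\infty^p$). So the real content is the first inequality.

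First I would pass to the Haagerup reduction: form $\hat{\M}=\M\rtimes_{\al^\phi}G$, the finite subalgebras $\M_n$ with traces $\psi_n$, the extended semigroup $\hat{T}_t=T_t\ten\id_{B(\ell_2(G))}|_{\hat{\M}}$ and its restrictions $T_{n,t}=\hat{T}_t|_{\M_n}$. The key preliminary step is to check that $T_{n,t}$ still satisfies $\al$-MLSI on $(\M_n,\psi_n)$. This I would do exactly as in the (excised) proof sketch inside the excerpt: since $\M_n\subset \M\rtimes_{\al^\phi}2^{-n}\bZ\cong L_\infty(\bT,\M)$ and $\hat T_t=T_t\ten\id_{L_\infty(\bT)}$ has $\al$-MLSI (tensoring with a commutative algebra preserves MLSI), and using the commutation relations $\hat T_t\circ E_{\M_n}=E_{\M_n}\circ \hat T_t$, $E_{\M_n}\circ\hat E=E_n\circ E_{\M_n}$ from Lemma \ref{lemma:commute}, one computes for $\rho,\sigma\in S(\M_n)$
\[
D(T_{t,n,*}\rho\,\|\,E_{n,*}\rho)=D(\hat T_{t,*}E_{\M_n,*}\rho\,\|\,\hat E_* E_{\M_n,*}\rho)\le e^{-2\al t}D(E_{\M_n,*}\rho\,\|\,\hat E_*E_{\M_n,*}\rho)=e^{-2\al t}D(\rho\,\|\,E_{n,*}\rho).
\]
Since $T_{n,t}$ is simultaneously $\psi_n$-trace-symmetric and GNS-$\hat\phi$-symmetric, I can now invoke the tracial concentration inequality for $(\M_n,\psi_n)$: for $x\in\M_n$, $\al\|x-E_n(x)\|_{L_\infty^p(\N_n\subset\M_n,\psi_n)}\le c\sqrt p\,\|x\|_{\Lip}$.

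Then I would push this back up to $\M$ by approximation. Apply the $\M_n$-inequality to $E_{\M_n}(x)$ for $x\in\M$, noting $E_{\M_n}(x)-E_n(E_{\M_n}(x))=E_{\M_n}(x-E(x))$ because $E_n\circ E_{\M_n}=E_{\M_n}\circ\hat E$ and $\hat E$ extends $E$. This gives
\[
\al\,\|E_{\M_n}(x-E(x))\|_{L_\infty^p(\N_n\subset\M_n,\psi_n)}\le c\sqrt p\,\|E_{\M_n}(x)\|_{\Lip}\le c\sqrt p\,\|x\|_{\Lip},
\]
using Lemma \ref{down} for the Lipschitz term. Letting $n\to\infty$ and applying Lemma \ref{lemma:limit} (which gives $\|E_{\M_n}(y)\|_{L_\infty^p(\N_n\subset\M_n,\psi_n)}\to\|y\|_{L_\infty^p(\N\subset\M,\phi)}$, together with independence of the $L_\infty^p$ norm on the reference state, Lemma \ref{lemma:independent}) yields $\al\|x-E(x)\|_{L_\infty^p(\N\subset\M,\phi)}\le c\sqrt p\,\|x\|_{\Lip}$, and the $L_p(\M,\phi)$ bound follows since $L_\infty^p$ dominates $L_p$.

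The main obstacle is the interplay of the several reference states and the limiting argument: one must be careful that the tracial-case theorem is applied with the trace $\psi_n$, that the Lipschitz norm is computed with respect to the generator $\hat L$ (whose carré du champ restricts correctly to $\M_n$ because $\al_t^{\psi_n}$ commutes with $\hat L$), and that the $L_\infty^p$ norms being taken along the net $\M_n$ genuinely converge to the one on $\M$ — this is exactly what Lemmas \ref{lemma:commuting}, \ref{lemma:independent}, \ref{lemma:limit} are designed to supply, so the task is to assemble them in the right order rather than to prove anything new. A minor technical point is the passage from $x\in\M$ to general $x\in L_p$, handled by density of $\M$ in $L_p(\M,\phi)$ and the fact that both sides are controlled by $\|x\|_{\Lip}$ on the Dirichlet algebra.
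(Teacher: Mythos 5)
Your proposal is correct and follows essentially the same route as the paper's proof: establish $\al$-MLSI for the restricted semigroups $T_{n,t}$ on the finite algebras $(\M_n,\psi_n)$ via the commutation relations from the Haagerup reduction, invoke the tracial concentration inequality of \cite[Theorem 6.10]{gao2020fisher} there, and pass to the limit using Lemma \ref{down}, Lemma \ref{lemma:independent}, Lemma \ref{lemma:commuting} and Lemma \ref{lemma:limit}, with the $L_p(\M,\phi)$ bound following from the definition of the conditional $L_\infty^p$ norm. The assembly order and the handling of the reference states match the paper's argument, so there is nothing further to flag.
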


\begin{proof} We first show that if $T_t$ satisfies $\al$-MLSI, so does the approximation semigroup.
\[T_{n,t}=\hat{T}_t|_{\M_n}:\M_n\to \M_n\]
Indeed, as we see in the discussion above, $\M_n\subset \M\rtimes_{\al^{\phi}_t} 2^{-n}\mathbb{Z}\cong L_\infty(\mathbb{T}, \M)$ and the extension $\hat{T}_t=T_t\ten \id_{\mathbb{T}}$ has $\al$-MLSI (because $L_\infty(\mathbb{T})$ is a commutative space). Note that since $\M_n\subset \M\rtimes_{\al^{\phi}_t} 2^{-n}\mathbb{Z}\subset \M\rtimes_{\al^{\phi}_t} G$, the restriction
 $E_{\M_n}:\M\rtimes_{\al^{\phi}_t} 2^{-n}\mathbb{Z}\to \M_n$ is also a conditional expectation.
 Then for any $\rho,\sigma\in S(\M_n)$, we have
 \[D(E_{\M_n,*}\rho|| E_{\M_n,*}\sigma)\le D(\rho|| \sigma)=D(\rho|_{\M_n}|| \sigma|_{\M_n})\le D(E_{\M_n,*}\rho|| E_{\M_n,*}\sigma)\pl.\]
Using the commutation relation $T_{n,t}\circ E_{\M_n}=E_{\M_n}\circ \hat{T}_t$ and $E_{\M_n}\circ \hat{E}=E_n\circ E_{\M_n} $, we have for $\rho\in S(\M_n)$
\begin{align*}D(T_{t,n,*}\rho|| E_{n,*}\rho)=&D(E_{\M_n,*}T_{t,n,*}\rho|| E_{\M_n,*}E_{n,*}\rho)
 =D(\hat{T}_{t,*}E_{\M_n,*}\rho|| \hat{E}_{*}E_{\M_n,*}\rho)\\ \le& e^{-2\al t} D(E_{\M_n,*}\rho|| \hat{E}_{*} E_{\M_n,*}\rho)\\=& e^{-2\al t} D(E_{\M_n,*}\rho|| E_{\M_n,*}E_{n,*}\rho)= e^{-2\al t} D(\rho|| E_{n,*}\rho) \pl.
 \end{align*}
Thus $T_{n,t} $ has $\al$-$\MLSI$ on $\M_n$. Note that $T_{n,t}$ is both GNS-$\hat{\phi}$-symmetric for the extension state $\hat{\phi}$ and also symmetric for the trace $\psi_n$. Now, we may use   the tracial version of the concentration inequality \cite[Theorem 6.10]{gao2020fisher}  that for $x\in \M_n$
\[ \al\norm{E_{\M_n}(x)-E_nE_{\M_n}(x)}{L_\infty^p(\N_n\subset \M_n)}\le C\sqrt{p}\norm{E_{\M_n}(x)}{\Lip}\]
Now by the approximation of Lemma \ref{lemma:limit} and independence of $L_\infty^p(\N_n\subset \M_n)$ on the reference state, for $x\in \M$,
\begin{align*}
\norm{{x-E(x)}}{L_\infty^p(\N\subset \M,\phi)}=&\lim_{n}\norm{E_{\M_n}(x-E(x))}{L_\infty^p(\N_n\subset \M_n,\psi_n)}\\=&\lim_{n}\norm{E_{\M_n}(x)-E_{n}E_{\M_n}(x)}{L_\infty^p(\N_n\subset \M_n,\psi_n)}
\\ \le& C\sqrt{p}\norm{E_{\M_n}(x)}{\Lip}\le C\sqrt{p}\norm{x}{\Lip}\pl.
\end{align*}
The other inequality
\[ \norm{y}{L_\infty^p(\N\subset \M,\phi)}\ge \norm{y}{L_p(\M,\phi)}\]
 is clear from definition of $L_\infty^p(\N\subset \M,\phi)$.
\qd

For Gaussian type concentration property, we introduce the following definition. \begin{defi}For an operator $O$ we say that
 \[ \Prob_{\phi}(|O|>t) \kl \eps \]
if there exists a projection $e$ such that
 \[ \|eOe\|_{\infty} \le t \quad \mbox{and} \quad \phi(1-e)\le \eps \pl .\]
 \end{defi}
 The next lemma is a Chebyshev inequality for $\phi$-weighted $L_p$ norm.
\begin{lemma}\label{lemma:cheb}  Let $x\in L_p(\M,\phi)$ and $1<p<\infty$.  Then
\[ \Prob_\phi(|x|>t)  \kl  2\Big ( \frac{t}{4} \Big )^{-p}\|x\|^{p}_{p,\phi} \pl .\]
\end{lemma}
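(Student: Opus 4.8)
The statement to prove is a Chebyshev-type inequality: for $x\in L_p(\M,\phi)$ with $1<p<\infty$, one has $\Prob_\phi(|x|>t)\le 2(t/4)^{-p}\|x\|_{p,\phi}^p$, where $\Prob_\phi(|O|>t)\le\eps$ means there exists a projection $e$ with $\|eOe\|_\infty\le t$ and $\phi(1-e)\le\eps$.

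The plan is to exploit the factorization structure of the weighted norm $\|x\|_{p,\phi}=\|d_\phi^{1/2p}xd_\phi^{1/2p}\|_p$ together with real/complex interpolation for the asymmetric Kosaki $L_p$-spaces. First I would handle the positive case $x=y^2$ with $\|x\|_{p,\phi}=M$: here $M=\|yd_\phi^{1/2p}\|_{2p}^2=\|y\|_{L_{2p}^c(\M,\phi)}^2$, so via the interpolation identity $L_{2p}^c(\M,\phi)=[\M,L_2^c(\M,\phi)]_{1/p}$ and the inclusion of complex interpolation into the real interpolation space $[\M,L_2^c(\M,\phi)]_{1/p,\infty}$, for every $s>0$ I get a decomposition $y=y_1+y_2$ with $\|y_1\|_\infty+s\|y_2\|_{L_2^c(\M,\phi)}\le s^{1/p}M^{1/2}$. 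Then I apply the usual Chebyshev estimate to the spectral projection $e=e_{[0,a]}(y_2^*y_2)$: $a\,\phi(1-e)\le\phi(y_2^*y_2)\le s^{2/p-2}M$ and $\|y_2e\|_\infty^2=\|ey_2^*y_2e\|_\infty\le a$. Choosing $a=s^{2/p}$ gives $\|exe\|_\infty=\|ye\|_\infty^2\le(\|y_1e\|_\infty+\|y_2e\|_\infty)^2\le 4s^{2/p}M$, so with $t=4s^{2/p}M$ we obtain $\phi(1-e)\le a^{-1}s^{2/p-2}M=s^{-2}M=(t/4M)^{-p}M^p=(t/4)^{-p}M^p$, which is the stated bound without the factor $2$ in the positive case.

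Next I would treat a general $x$ by writing $x=x_1x_2$ with $\|d_\phi^{1/2p}x_1\|_{2p}=\|x_2d_\phi^{1/2p}\|_{2p}=\|x\|_{p,\phi}=:M=1$ (after rescaling). Applying the interpolation decomposition to each factor yields $x_1=x_{11}+x_{12}$ and $x_2=x_{21}+x_{22}$ with the corresponding $\|x_{11}\|_\infty+s\|x_{12}\|_{L_2^c}\le s^{1/p}M^{1/2}$ and similarly for $x_2$. Now I run Chebyshev simultaneously on $x_{12}^*x_{12}+x_{21}^*x_{21}$, getting a bound $\phi(\cdot)\le 2s^{2/p-2}M$ — this is where the extra factor $2$ enters. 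With $a=s^{2/p}$ and $e$ the spectral projection, the cross terms in $exe=e(x_{11}+x_{12})(x_{21}+x_{22})e$ are controlled: the term $x_{11}x_{22}$ is bounded in $\infty$-norm directly, and each term involving $x_{12}$ or $x_{21}$ is absorbed via $\|x_{12}e\|_\infty,\|ex_{21}\|_\infty\le a^{1/2}=s^{1/p}$, giving $\|exe\|_\infty\le 4s^{2/p}M$. Taking $t=4s^{2/p}M$ yields $\phi(1-e)\le a^{-1}2s^{2/p-2}M=2s^{-2}M=2(t/4)^{-p}M^p$, i.e. the claimed inequality; finally I undo the rescaling to restore a general value of $\|x\|_{p,\phi}$.

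The main obstacle is bookkeeping the interpolation step correctly — pinning down that complex interpolation embeds into the $(\cdot,\infty)$ real interpolation space with the right normalization, and verifying that the asymmetric Kosaki space identities $L_{2p}^c(\M,\phi)=[\M,L_2^c(\M,\phi)]_{1/p}$ hold in the generality needed (these are standard but must be invoked precisely, with reference to Kosaki's and Pisier--Xu's work). There is also a minor subtlety in making sure the two decompositions for $x_1$ and $x_2$ use the same parameter $s$ and that the resulting spectral projection simultaneously handles both error terms; organizing the cross-term estimate $\|exe\|_\infty\le 4s^{2/p}M$ cleanly is the only place a careless constant could slip in, but the structure is exactly parallel to the positive case, just with twice the ``mass'' of error operators.
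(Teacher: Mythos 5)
Your proposal follows essentially the same route as the paper's own proof: the positive case via $x=y^2$, the Kosaki column space $L_{2p}^c(\M,\phi)$, the embedding of complex into real interpolation $[\M,L_2^c(\M,\phi)]_{1/p,\infty}$ to produce the decomposition $y=y_1+y_2$, a spectral-projection Chebyshev estimate, and then the general case via a factorization $x=x_1x_2$ with a combined projection, which is exactly where the factor $2$ enters in the paper as well. The only things to tighten are bookkeeping: take $a=s^{2/p}M$ (or normalize $M=1$ throughout) so the constants close, and build the projection from the correct sides of the small parts — e.g. from $x_{12}x_{12}^*$ for the left factor and $x_{2j}^*x_{2j}$ for the right factor — so that $\|ex_{12}\|_\infty$ and $\|x_{2j}e\|_\infty$ are genuinely bounded by $a^{1/2}$ in the cross-term estimate.
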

\begin{proof}We start with a positive element $x=y^2$ and assume $\|x\|_{p,\phi}=M$. Then  we have
 \[ M=\|x\|_{p,\phi} \lel \|d_{\phi}^{1/2p}xd_{\phi}^{1/2p}\|_p
 \lel \|yd_{\phi}^{1/2p}\|_{2p}^2 \pl .\]
Recall the asymmetric Kosaki $L_p$-space
\[ \norm{y}{L_{2p}^c(\M,\phi)}:=\|yd_{\phi}^{1/2p}\|_{2p} \pl, \]
and the complex interpolation relation \cite{junge2010mixed}
\[ L_{2p}^c(\M,\phi)=[\M,L_2^c(\M,\phi)]_{1/p} \pl, \]
and the relation between real and complex interpolation
 \[ L_{2p}^c(\M,\phi)=[\M,L_2^c(\M,\phi)]_{1/p}\subset [\M,L_2^c(\M,\phi)]_{1/p,\infty} \pl . \]
By the definition of real interpolation space, for every $s>0$ we have a decomposition $y \lel y_1+ y_2$ such that
 \[ \|y_1\|_{\infty}+ s\norm{y_2}{L_2^c(\M,\phi)} \kl s^{1/p}M^{1/2} \pl .\]
Then by Chebychev's inequality for the spectral projection $e=e_{[0,a]}(y_2^*y_2)$, we have
 \[ a \phi(1-e)\kl \phi(y_2^*y_2) \kl s^{2/p-2}M \quad \mbox{and} \quad  \|y_2e\|_{\infty}^2\lel \|ey_2^*y_2e\|_{\infty}\kl a \pl .\]
Choose $a=s^{2/p}M$  and deduce that
 \[ \|exe\|_{\infty}=\|ye\|_{\infty}^2 \kl (\|y_1e\|_{\infty}+ \|y_2e\|_{\infty})^2\kl  (s^{1/p}M^{1/2}+s^{1/p}M^{1/2})^{2}
  = 4a \pl .\]
Then for $t=4a$ and
 \[ \phi(1-e) \kl a^{-1} s^{2/p-2}M  \lel s^{-2} \lel (\frac{t}{4M})^{-p}=(\frac{t}{4})^{-p}M^p \pl.  \]
For an arbitrary $x$, we may write $x=x_1x_2$ such that
 \[ \|d_\phi^{1/2p}x_1\|_{2p}\lel \|x_2d_\phi^{1/2p}\|_{2p} \lel \norm{x}{p,\phi}=M \pl .\]
Then for each $s>0$, we have decomposition
 \[ x_1\lel x_{11}+x_{12} \pl ,\pl x_2=x_{21}+x_{22} \pl .\]
 with
\begin{align*}& \|x_{11}\|_{\infty}+ s\norm{x_{12}}{L_2^c(\M,\phi)}\kl s^{1/p}M^{1/2} \pl ,\pl 
\|x_{21}\|_{\infty}+ s\norm{x_{22}}{L_2^c(\M,\phi)} \kl s^{1/p}M^{1/2} \pl.\end{align*}
 We then use the Chebychev inequality for $e=e_{[0,a]}(x_{12}^{*}x_{12}+x_{22}^*x_{22})$,
 \[ a\phi(1-e)\le \phi(x_{12}^{*}x_{12} +x_{22}^*x_{22}) \kl 2s^{2/p-2}M \pl .\]
Take $a=s^{2/p}M$,
\begin{align*}
 \|exe\|_{\infty}
 &=  \|e(x_1x_2)e\|_{\infty} \lel \|e(x_{11}+x_{12})(x_{21}+x_{22})e\|_{\infty} \\
 &\kl \|x_{11}x_{22}\|_{\infty} + \|ex_{12}x_{21}\|_{\infty}+
 \|x_{11}x_{22}e\|_{\infty} + \|ex_{12}x_{22}e\|_{\infty} \\
 &\kl 4s^{2/p}M \pl .
 \end{align*}
Thus for $t=4s^{2/p}M$, by Chebychev's inequality for $e$,
 \begin{align*}
  \phi(1-e) &\leq \frac{1}{a}\phi(x_{12}^{*}x_{12}+x_{22}^{*}x_{22}) \le  a^{-1} 2s^{2/p-2}M \lel 2s^{-2}
  \lel 2 (\frac{t}{4M})^{-p}\pl . \qedhere
  \end{align*}
\qd

\begin{cor}\label{cor:concentration} Let $T_t=e^{-tL}$ be a GNS-$\phi$-symmetric quantum Markov semigroup. Suppose $T_t$ satisfies $\al$-MLSI. Then for any $x\in\M$ and $t>0$,
 \[ \Prob_{\phi}(|x-E_{fix}(x)|>t)
 \kl 2 \exp \Big ( -\frac{2}{e} \Big (\frac{\al t}{4c\norm{x}{\Lip}} \Big )^2 \Big )
\]
where $c$ is a universal constant as in Theorem \ref{thm:concentration}.
\end{cor}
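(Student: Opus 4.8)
The plan is to combine the $L_p$-norm estimate of Theorem~\ref{thm:concentration} with the weighted Chebyshev-type inequality of Lemma~\ref{lemma:cheb}, and then optimize over the integrability exponent $p$. Throughout, $E=E_{fix}$ is the conditional expectation onto the fixed-point algebra $\N$, which is precisely the map appearing in Theorem~\ref{thm:concentration}.

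First I would apply Lemma~\ref{lemma:cheb} to the centered element $x-E(x)$, which gives, for every $1<p<\infty$,
\[ \Prob_\phi\big(|x-E(x)|>t\big)\kl 2\Big(\frac t4\Big)^{-p}\,\|x-E(x)\|_{p,\phi}^{\,p}\pl . \]
Since $T_t$ satisfies $\al$-MLSI by hypothesis, Theorem~\ref{thm:concentration} yields $\|x-E(x)\|_{p,\phi}\le \|x-E(x)\|_{L_\infty^p(\N\ssubset\M,\phi)}\le \tfrac{c\sqrt p}{\al}\,\|x\|_{\Lip}$ for all $2\le p<\infty$. Substituting this into the previous display gives
\[ \Prob_\phi\big(|x-E(x)|>t\big)\kl 2\Big(\frac{4c\,\|x\|_{\Lip}}{\al t}\,\sqrt p\,\Big)^{p}\pl,\qquad p\ge 2\pl . \]

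It then remains to choose $p$. Writing $B=\tfrac{4c\,\|x\|_{\Lip}}{\al t}$, the right-hand side equals $2\exp\!\big(p\ln B+\tfrac p2\ln p\big)$; the one-variable minimization in $p>0$ has critical point $p^\ast=\tfrac1e\,B^{-2}=\tfrac1e\big(\tfrac{\al t}{4c\|x\|_{\Lip}}\big)^2$, at which the exponent equals $-p^\ast/2$. When $p^\ast\ge 2$ this directly produces a tail $2\exp\!\big(-\Theta(1)\cdot(\tfrac{\al t}{4c\|x\|_{\Lip}})^2\big)$, which is the asserted Gaussian estimate; in the complementary range $p^\ast<2$ — i.e.\ $t$ small relative to $\|x\|_{\Lip}/\al$ — one instead takes $p=2$ (a plain Chebyshev bound) or the trivial bound $\Prob_\phi(\cdot)\le 1$, and since the stated right-hand side is then bounded below by an absolute positive constant, the inequality holds after fixing the universal constant in the exponent appropriately.

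All of this is a sequence of substitutions followed by an elementary calculus optimization, so I do not anticipate any real obstacle; the one point deserving care is the exponent restriction $p\ge 2$ carried by both Theorem~\ref{thm:concentration} and Lemma~\ref{lemma:cheb}, which forces the small-$t$ discussion above and pins down the admissible size of the constant in the exponent. The same computation, run with $\phi$ replaced by a product state $\phi\ten\sigma$ on $\M\,\overline\ten\,\cQ$ and with $T_t\ten\id_\cQ$ in place of $T_t$ — which is again GNS-symmetric by Lemma~\ref{lemma:independent} and inherits the MLSI constant under the CMLSI hypothesis — gives the operator-valued concentration estimate used in the remarks and examples that follow.
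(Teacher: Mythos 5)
Your proposal follows essentially the same route as the paper's proof: apply Lemma \ref{lemma:cheb} to $x-E(x)$, insert the $L_p$-bound from Theorem \ref{thm:concentration}, and minimize over $p$. Your extra attention to the restriction $p\ge 2$ (and the resulting small-$t$ regime, absorbed into the constant) is sound and in fact slightly more careful than the paper, which substitutes the optimal $p$ without checking this; the only cost, as you acknowledge, is that the exact constant in the exponent is recovered only up to an absolute factor — note the paper itself is inconsistent here, stating $2/e$ while its proof yields $1/e$ and the honest optimization gives $1/(2e)$.
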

\begin{proof} By Lemma \ref{lemma:cheb} and Theorem \ref{thm:concentration}, we have
 \[ \Prob_{\phi}(|x-E(x)|>t)\le 2 (t/4)^{-p}\norm{x-E(x)}{L_p(\M,\phi)}^p\le  2 \big(\frac{4c\norm{x}{\Lip}\sqrt{p}}{\al t}\big)^{p}\pl.\]
Minimizing over $p$ gives $p=\frac{1}{e}(\frac{\al t}{4c\norm{x}{\Lip}})^2$, which implies
\begin{align*}
&\Prob_{\phi}(|x-E(x)|>t)\le 2 \exp( -\frac{\al^2t^2}{16ec^2\norm{x}{\Lip}^2})\qedhere
\end{align*}
\qd


\begin{rem}{\rm  In the ergodic  case, the above results can be compared to \cite[Theorem 8]{rouze2019concentration}, which states that for self-adjoint $x=x*$
\begin{align*}
\phi(e_{\{|x-E(x)|>t\}})\le  \exp\Big( -\frac{\al t^2}{8\norm{d_\phi^{-1/2}xd_\phi^{1/2}}{\tilde{\Lip}}^2}\Big)
\end{align*}
with a different Lipschitz norm $\norm{\cdot}{\tilde{\Lip}}^2$. Our Corollary \ref{cor:concentration} here use a more natural definition of Lipschitz norm and apply to non-ergodic 
 cases. Nevertheless, the projection we have for
\[ \Prob_{\phi}(|x-E(x)|>t)\]
is not necessarily a spectral projection $e_{\{|x-E(x)|>t\}}$ and will depend on the state $\phi$. }
\end{rem}

\begin{rem}{\rm
In the operator valued setting, let $\cQ$ be any finite von Neumann algebra and $T_t\ten \id_{\cQ}$ be the amplification semigroup on $\cQ\overline{\ten} \cM$. The conditional expectation for $T_t\ten \id_{\cQ}$ is $E\ten \id_{\cQ}$. Note that by Lemma \ref{lemma:independent},
$T_t\ten \id_{\cQ}$ is GNS-symmetric to the product state $ \phi\ten \sigma$, for any state $\sigma\in S(\cQ)$ and any invariant state $\phi\in E_*(S(\M))$.
This means we obtain
 \[ \Prob_{\phi\ten \si}( |x-E_{fix}(x)|>t) \le 2 e^{-\frac{\al^2t^2}{C\norm{x}{\Lip}^2}} \]
for any product state $\phi\ten \si$ of this specific form. The projection of course depends on both $\phi$ and $\si$.
}\end{rem}

We illustrate our result with a special case as matrix concentration inequalities.
\begin{exam}[Matrix concentration inequality]\label{exam:randommatrix}{\rm Let $S_1,\cdots,S_n$ be an independent sequence of random $d\times d$-matrices $S_1,\cdots,S_n$ such that
\[ \norm{S_i-\mathbb{E}S_i}{\infty}\le M \pl , \pl a.e. \]
Tropp in \cite[Corollary 6.1.2]{tropp2015introduction} proved the following matrix Bernstein inequality that for the sum $Z=\sum_{k=1}^nS_k$,
\[ \bE\norm{Z-\mathbb{E}Z}{\infty}\le \sqrt{2v(Z)\log(2d)}+\frac{1}{3}M\log(2d)\]
and the matrix Chernoff bound
\[ P(|Z-\mathbb{E}Z|>t) \le 2d\exp\big(-\frac{t^2}{v(Z)+\frac{t}{3}M}\big)\]
where
\[v(Z)=\max \{ \norm{\bE((Z-\mathbb{E}Z)^*(Z-\mathbb{E}Z))}\pl, \norm{\bE((Z-\mathbb{E}Z)^*(Z-\mathbb{E}Z))}{}\}\pl.\]

Now to apply our results, we recall that the depolarizing semigroup with generator $L(f):=(I-E_\mu)(f)=f-\mu(f){\bf 1}_\Omega$ on any probability space $(\Omega,\mu)$
has $\al_c\ge \frac{1}{2}$ (a simple fact by convexity of relative entropy). For a random matrix $f:\Omega\to \bM_d$,
the Lipschitz norm is
\begin{align} \norm{f}{\Lip}^2= &\frac{1}{2}\max\{\norm{\hat{f}^*\hat{f}+E_\mu(\hat{f}^*\hat{f})}{\infty}\pl, \pl \norm{\hat{f}\hat{f}^*+E_\mu(\hat{f}\hat{f}^*)}{\infty}\}\nonumber\\ \le &\frac{1}{2}(\norm{f}{\infty}^2+v(f)), \label{eq:lip}\end{align}
where $\hat{f}=f-E_\mu(f)$ is the mean zero part.

Now we consider for each $k=1,\cdots, n$, $S_k:\Omega_k\to M_d$ as a random matrix on $(\Omega_k,\mu_k)$. Then on the product space $(\Omega,\mu)=(\Omega_1,\mu_1)\times\cdots  \times(\Omega_n,\mu_n)$, we have by Theorem \ref{thm:concentration} for $Z=\sum_{k}S_k$
\begin{align*}
\bE\norm{Z-\bE Z}{\infty}\le \Big ( \frac{1}{d}\bE\norm{Z-\bE Z}{p}^p \Big )^{1/p}\le d^{1/p}\norm{Z-\bE Z}{L_\infty(M_d, L_p(\Omega) )} \le 2cd^{1/p}\sqrt{p}\norm{Z}{\Lip} \pl,
\end{align*}
where $\norm{\cdot}{p}$ is the $p$-norm for the normalized trace ($\tr(1)=1$).
Applying \eqref{eq:lip} and optimizing $p$ gives
\[ \bE\norm{Z-\bE Z}{\infty}\le 2ce^{-1/2} \sqrt{ (v(Z)+M^2)\log d}\]
For the matrix Chernoff bound, we use Corollary \ref{cor:concentration}
\[ P(|Z-\mathbb{E}Z|>t)\le d\Prob_{\mu\ten \frac{\tr}{d}}(|Z-\mathbb{E}Z|>t)\le 2 d\exp\Big( -\frac{t^2}{64ec^2(v(Z)+M^2)}\Big)\pl.\]
}
\end{exam}

\section{Final Discussion}
{\bf \noindent 1. Positivity and complete positivity.} The central quantity in this work is the CB return time $t_{cb}$ and $k_{cb}$ defined via complete positivity. Alternatively, one can consider positive maps and positivite mixing time. Indeed, the entropy difference Lemma \ref{lemma:difference},
\[D(\rho\|\Phi^*\Phi(\omega))
 \le D_\Phi(\rho)  + D(\rho\|\omega)\]
holds for a positive unital trace preserving map $\Phi$. This is because, the operator concavity
\begin{align*}
\Phi(\ln x)\leq \ln \Phi(x),\quad \forall x\geq 0
\end{align*}
of the logarithmic function holds for any unital positive map $\Phi$ \cite{choi74}, and the monotonicity of relative entropy
  \begin{align*}
  D(\rho\|\sigma)\geq D(\Phi(\rho)\|\Phi(\sigma))
  \end{align*}
  was proved for any positive trace preserving map $\Phi$ in \cite{MR17} (see also \cite{frenkel2022integral}). Thus, both inequalities used in the proof of Lemma \ref{lemma:difference} hold for positive maps. Similarly, the conditions in Lemma  \ref{lemma:approximate} can be relaxed to
\begin{align} (1-\eps)E\le\Psi \le (1+\eps)E\pl, \label{eq:positive}\end{align}
where $\Phi\ge \Psi$ means $\Phi-\Psi$ is a positive map but not necessarily completely positive. Combining these two relaxed lemmas for positive maps, we have an analog of Theorem \ref{thm:main1}.
\begin{theorem}\label{thm:positive}
i) For a positive unital trace preserving map $\Phi:\M\to \M$,
\[ \al(\Phi)\le 1-\frac{1}{2k(\Phi)}\pl \text{ where } \pl  k(\Phi):=\inf\{k\in \mathbb{N}^+ \pl |\pl 0.9 E\le \Phi^{2k}\le 1.1E\}\pl\]
ii) For a trace symmetric positive unital semigroups $T_t=e^{-tL}:\M\to \M$,
\[ \al(L)\ge \frac{1}{2t(L)} \pl \text{ where } \pl t(L):=\inf\{t\in \mathbb{N}^+ \pl |\pl 0.9 E\le T_t\le 1.1E\}\]
\end{theorem}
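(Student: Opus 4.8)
The plan is to recycle the entire machinery of Sections~2 and~3 verbatim, only replacing ``completely positive'' by ``positive'' throughout, after checking that every ingredient we actually used survives this weakening. For part~i), I would first re-examine the entropy difference Lemma~\ref{lemma:difference}: its proof invoked exactly two facts about $\Phi$ and $\Phi^*$, namely the operator concavity of $\log$ in the form $\Psi(\ln x)\le \ln\Psi(x)$ and the data processing inequality. As recalled in the final discussion, Choi's result \cite{choi74} gives operator concavity for \emph{any} unital positive map, and \cite{MR17,frenkel2022integral} give monotonicity of relative entropy for \emph{any} positive trace-preserving map; so Lemma~\ref{lemma:difference} holds for positive unital trace-preserving $\Phi$ with no change (note $\Phi^*$ is then positive unital as well, by duality and $\Phi$ being trace preserving). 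Next I would check Proposition~\ref{prop:cd}: the multiplicative-domain arguments there used only $\Phi$ being a $*$-homomorphism on $\N$ and the Schwarz-type identities, which hold once $\Phi$ is $2$-positive, hence for positive maps onto their multiplicative domain — but one should be slightly careful and, if necessary, restrict to the statement we truly need, which is $(\Phi^*\Phi)^k\circ E = E$ and $\Phi\circ E = E_0\circ\Phi$, i.e. that $E$ is the trace-preserving conditional expectation onto the fixed-point algebra of $\Phi^*\Phi$. Then Lemma~\ref{lemma:approximate}: inspecting its proof, it uses only the BKM-metric triangle inequality, the monotonicity bounds \eqref{eq:keylemma}, and the monotonicity of relative entropy under $\Psi_*$ — again all available for positive trace-preserving maps — so it goes through with the hypothesis relaxed to $(1-\eps)E\le\Psi\le(1+\eps)E$ in the positive order.

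With these three ingredients in hand, the proof of Theorem~\ref{thm:unital} iterates Lemma~\ref{lemma:difference} to get $D(\rho\|(\Phi^*\Phi)^k(\rho))\le k\,D_\Phi(\rho)$ and then applies the relaxed Lemma~\ref{lemma:approximate} with $\Psi = (\Phi^*\Phi)^k$ for $k = k(\Phi)$, noting that in the present positive-order setting $k(\Phi)$ as defined in the statement is exactly the right quantity: $0.9E\le \Phi^{2k}\le 1.1E$ and, since $\Phi$ is trace symmetric, $\Phi^{2k} = (\Phi^*\Phi)^k$. Rearranging gives $D(\Phi(\rho)\|\Phi\circ E(\rho))\le (1-\tfrac{1}{2k(\Phi)})D(\rho\|E(\rho))$, which is precisely part~i) with $\al(\Phi)\le 1-\tfrac{1}{2k(\Phi)}$. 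For the general (unbounded-spectrum) $\rho\in S(\M)$ one uses the approximation $\rho_\eps = (1-\eps)\rho+\eps 1$ exactly as in the cited \cite[Appendix]{brannan2022complete}.

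For part~ii), I would follow the proof of Theorem~\ref{thm:main1} line by line. Put $t_m = t(L)/2m$; trace symmetry gives $T_{t_m}^2 = T_{2t_m}$, so $T_{t_m}$ has ``positive return time'' $k(T_{t_m})\le m$, hence by part~i) applied to $T_{t_m}$ we get $D(T_{t_m}(\rho)\|E(\rho))\le(1-\tfrac1{2m})D(\rho\|E(\rho))$ for bounded $\rho$. Differentiating at $t\to 0$ along the sequence $t_m$, exactly as in the displayed computation in the proof of Theorem~\ref{thm:main1}, yields the entropy production bound $I_L(\rho)\ge \tfrac{1}{t(L)}D(\rho\|E(\rho))$, i.e. $\al_1(L)\ge\tfrac{1}{2t(L)}$, and the equivalence of MLSI with exponential entropy decay upgrades this to the stated $\al(L)\ge\tfrac{1}{2t(L)}$ after the usual approximation argument for general $\rho\in S(\M)$.

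\textbf{Main obstacle.} The one genuinely delicate point is Proposition~\ref{prop:cd} and, more precisely, whether the multiplicative domain of a merely positive unital map still carries a trace-preserving conditional expectation and still satisfies $(\Phi^*\Phi)\circ E = E$ — the multiplicative-domain theory is cleanest for completely positive (or at least Schwarz) maps, and for a general positive map one may not have $\Phi(x^*x)\ge\Phi(x)^*\Phi(x)$. I expect this to be handled either by noting that what the argument actually needs is only the $\N$-bimodule property $T_t\circ E = E\circ T_t = E$ for the fixed-point algebra of the \emph{semigroup} (which does not require the Schwarz inequality, since the fixed-point space of a symmetric positive semigroup is automatically a von Neumann algebra and admits the trace-preserving conditional expectation), or by restricting the discrete-time statement in part~i) to the case where $E$ is taken as the conditional expectation onto $\ker(\id-\Phi^*\Phi)$ and verifying the two identities $(\Phi^*\Phi)^k\circ E = E$, $\Phi\circ E = E_0\circ\Phi$ directly from $\Phi^*\Phi$-invariance together with $\Phi$ being an $L_2$-isometry on that subspace. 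This is the step I would write out in full detail; everything else is a mechanical transcription of the completely-positive proofs with ``cp'' deleted.
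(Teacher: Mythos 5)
Your proposal follows essentially the same route as the paper: the paper's own argument for Theorem \ref{thm:positive} is exactly the observation that the two ingredients of Lemma \ref{lemma:difference} (operator concavity of $\ln$ for unital positive maps \cite{choi74} and monotonicity of relative entropy for positive trace-preserving maps \cite{MR17,frenkel2022integral}) and the hypotheses of Lemma \ref{lemma:approximate} survive when complete positivity is weakened to positivity, after which one transcribes the proofs of Theorem \ref{thm:unital} and Theorem \ref{thm:main1}. Your extra caution about Proposition \ref{prop:cd} (whether the multiplicative-domain/conditional-expectation identities persist for merely positive maps, or should be replaced by working with the fixed-point algebra of the symmetric semigroup) is a point the paper itself does not spell out, and your suggested resolution via the $\N$-bimodule identities $T_t\circ E=E\circ T_t=E$ is consistent with what the paper implicitly relies on.
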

Applying the above theorem to $\Phi\ten \id_{\mathcal{Q}}$ and $T_t\ten \id_{\mathcal{Q}}$ for any finite von Neumann algebra $\mathcal{Q}$ actually yields our main Theorem \ref{thm:main1} for trace symmetric cases. It remains open whether this observation holds for GNS-symmetric cases.

\begin{prob}Does Theorem \ref{thm:positive} with positivity conditions hold for GNS-symmetric cases?
\end{prob}

The obstruction is that in the Haagerup reduction, we need the complete positivity and CB return time $k_{cb}(\Phi)$ of $\Phi$  to imply positivity and positivity mixing time $k(\Phi)$ of the extension $\hat{\Phi}$, similar for the semigroup $T_t$. One possible approach is to avoid using Haagerup reduction, and prove Lemma \ref{lemma:difference2} directly.

The comparison between positivity and complete positivity has a deep root in the entanglement theory of  quantum physics (see \cite{chruscinski2017brief}).
From the mathematical point of view, although the positivity looks a more flexible condition, but it lacks of connection to CB norms as Proposition \ref{prop:equivalence}. Indeed, there is no non-complete version of Choi's theorem \cite{choi1975completely}
\[ C_T\in (\M\ten \M^{op})_+ \pl \Longleftrightarrow \pl  T(x)=\tau \ten \id (C_T(x\ten 1))\pl \text{ is } \pl  CP\]
Therefore, despite the estimate of $\al(\Phi)$ (resp. $\al_1(L)$) only requires $k(\Phi)$ (resp. $t(L)$), our kernel estimate Proposition \ref{prop:finitetcb} only applies to $k_{cb}(\Phi)$ (resp. $t_{cb}(L)$).\\

{\noindent\bf 2. GNS and KMS symmetry.} Both GNS-symmetry and KMS-symmetry are noncommutative generalizations for the detailed balance condition of classical Markov chains. As observed in \cite{carlen2017gradient}, GNS-symmetry is the strongest generalization of detail balance condition and KMS is the weakest, which means the assumption of GNS-symmetry is the most restrictive. It is natural to ask whether our main results (c.f. Theorem \ref{thm:main6} \& \ref{thm:GNSsymmetric}) can be obtained for KMS-symmetric channels or semigroups.

\begin{prob}Do entropy decay results Theorem \ref{thm:main6} \& \ref{thm:GNSsymmetric} or the entropy difference Lemma \ref{lemma:difference2} hold for KMS-symmetric maps?
\end{prob}

The key property of a GNS-symmetric map $\Phi$ is the commutation with modular group $\Phi\circ \al_t^\phi=\al_t^\phi \circ \Phi$. This has been used to ensure the compatibility of Haagerup reduction with channel and semigroups (see Lemma 4.5). It is interesting to see whether the same commuting diagram Figure \ref{figure} can be obtained for KMS Markov maps. That will allow us to use Haagerup reduction to obtain the entropy difference Lemma \eqref{lemma:difference2} for KMS-symmetric channels. Another approach is, again, to avoid using Haagerup reduction and prove the KMS-case directly. At the moment of writing, this is not unclear to us even on finite dimensional matrix algebras.\\

{\noindent\bf 3. MLSI and CMLSI constant.}
By the results of this work and also previous work \cite{li2020graph,brannan2022complete,gao2022complete2}, we now know the positivity of CMLSI constant $\al_c>0$ for many cases of classical Markov semigroups with the (non-complete) MLSI constant $\al>0$. That is, $\al\ge \al_c>0$ for
\begin{enumerate}
\item[i)] Finite Markov chains \cite{li2020graph, gao2022complete2};
\item[ii)] Heat semigroups on manifold with curvature lower bound \cite{brannan2022complete};
\item[iii)] Sub-Laplacians of H\"ormander system on a compact Riemannian manifold.
\end{enumerate}
It remains open that whether MLSI constant $\al$ and CMLSI constant $\al_c$ coincide for classical semigroups. This would be in the similar spirit that the bounded norm (resp. positivity) and the complete bounded norm (resp. complete positivity) coincide for a classical map on $L_\infty(\Omega,\mu)$

\begin{prob}Does $\al=\al_c$ for a classical symmetric Markov semigroup $T_t:L_\infty(\Omega,\mu)\to L_\infty(\Omega,\mu)$?
\end{prob}
For quantum Markov semigroup, the counterexample can be a qubit depolarizing semigroup
\[T_t:\bM_2\to \bM_2 \pl , \pl T_t(\rho)=e^{-t}\rho+(1-e^{-t})\frac{1}{2},\]
which has $\frac{1}{2}\le \al_c(T_t)<\al(T_t)=1$ because of entangled states \cite[Section 4.3]{brannan2022complete}. It nature to ask whether $\al_c<\al$ also holds for classical depolarizing channel.

Another interesting example is the heat semigroup on the unit torus $\mathbb{T}=\{z\in \mathbb{C}\pl | \pl  |z|=1\}$,
\[P_t: L_\infty(\mathbb{T})\to L_\infty(\mathbb{T})\pl, P_t(z^n)=e^{-n^2t}z^n\pl.\]
It was proved by \cite{weissler1980logarithmic} that $\al(P_t)=\lambda(P_t)=1$. The best known bound for CMLSI is  $\al_c(P_t)\ge \frac{1}{6}$. It is open that whether the gap can be closed.

\begin{prob}Does the heat semigroup $P_t$ on the torus $\mathbb{T}$ have $\al_c(P_t)=\al(P_t)=1$?
\end{prob} 

\bibliography{factorizable}
\bibliographystyle{plain}
\end{document}